\documentclass{article}
\usepackage{hyperref}
\usepackage{amsmath}
\usepackage{amssymb}
\usepackage{amsmath}
\usepackage{amsthm}
\usepackage{enumitem}
\usepackage{graphicx}
\usepackage{subcaption}
\usepackage{tikz}
\usepackage{tkz-graph}
\usepackage{biblatex}
\usepackage[margin=1in]{geometry}
\usetikzlibrary{shapes}

\addbibresource{quantum_csp.bib}


\newtheorem{theorem}{Theorem}
\newtheorem*{theorem*}{Theorem}
\newtheorem{proposition}{Proposition}

\newtheorem{lemma}{Lemma}
\newtheorem{corollary}{Corollary}

\theoremstyle{remark}
\newtheorem{remark}{Remark}

\theoremstyle{definition}
\newtheorem{definition}{Definition} 

\renewcommand{\r}{\mathbb{R}}
\renewcommand{\c}{\mathbb{C}}
\DeclareMathOperator{\qut}{Qut}
\DeclareMathOperator{\arity}{arity}
\newcommand{\pn}{\mathcal{P}_{\fc}}
\renewcommand{\k}{\mathbf{K}}

\DeclareMathOperator{\s}{\mathbf{S}}
\DeclareMathOperator{\f}{\mathbf{F}}
\DeclareMathOperator{\e}{\mathbf{E}}
\DeclareMathOperator{\ii}{\mathbf{I}}
\DeclareMathOperator{\fc}{\mathcal{F}}
\DeclareMathOperator{\gc}{\mathcal{G}}
\renewcommand{\u}{\mathcal{U}}
\DeclareMathOperator{\q}{\mathcal{Q}}
\DeclareMathOperator{\gf}{\mathfrak{G}}
\DeclareMathOperator{\qk}{\mathfrak{Q}^{\cal P}_{\cal F}}
\DeclareMathOperator{\one}{\mathbf{1}}
\DeclareMathOperator{\spn}{span}
\newcommand{\tcwd}[1]{\langle #1 \rangle_{+, \circ, \otimes ,\dagger}}
\newcommand{\tcwdn}[1]{\langle #1 \rangle_{\circ, \otimes ,\dagger}}
\DeclareMathOperator{\vx}{\mathbf{x}}
\DeclareMathOperator{\vy}{\mathbf{y}}
\DeclareMathOperator{\vz}{\mathbf{z}}

\DeclareMathOperator{\vf}{\mathbf{f}}
\DeclareMathOperator{\vg}{\mathbf{g}}
\DeclareMathOperator{\vq}{\mathbf{Q}}

\newcommand{\vc}{\mathcal{V}}
\DeclareMathOperator{\plholant}{Pl-Holant}
\DeclareMathOperator{\holant}{Holant}
\DeclareMathOperator{\eq}{\mathcal{EQ}}

\newcommand\numberthis{\addtocounter{equation}{1}\tag{\theequation}}

\newcommand{\DEdge}[2]{
    \draw[thick] (#1) -- (#2) node[draw, fill=white, kite, kite vertex angles = 120, minimum size = 4pt, inner sep = 1pt, pos=0.3, sloped] {};
}

\title{Planar \#CSP Equality Corresponds to Quantum Isomorphism -- A Holant Viewpoint}

\author{Jin-Yi Cai\footnote{Department of Computer Sciences, University of Wisconsin-Madison}
        \hspace{4cm} 
        Ben Young\footnote{Department of Computer Sciences, University of Wisconsin-Madison}\\
    \texttt{\hspace{0.7cm}\href{mailto:jyc@cs.wisc.edu}{jyc@cs.wisc.edu}} \hspace{2.2cm}
\texttt{\href{mailto:benyoung@cs.wisc.edu}{benyoung@cs.wisc.edu}}}
\date{}

\begin{document}
\maketitle

\begin{abstract}
\vspace{.1in}
Recently, Man{\v c}inska and Roberson proved~\cite{planar} that two graphs $G$ and $G'$
are \emph{quantum isomorphic} if and only if
they admit the same number of homomorphisms from all \emph{planar} graphs.
We extend this result to planar \#CSP with any pair of sets $\fc$ and $\fc'$ of real-valued, arbitrary-arity constraint functions. Graph homomorphism is the
special case where each of $\fc$ and $\fc'$ contains a single symmetric 0-1-valued binary constraint function.
Our treatment uses the framework of planar Holant problems.
To prove that quantum isomorphic constraint function sets give the same value on any planar \#CSP instance, we apply a novel form of
\emph{holographic transformation} of Valiant~\cite{valiant},
using the quantum permutation matrix $\u$ defining the quantum isomorphism.
Due to the noncommutativity of $\u$'s entries, it turns out that this
form of holographic transformation 
is only applicable to planar Holant.
To prove the converse, we introduce the
quantum automorphism group $\qut(\fc)$ of a set of constraint functions $\fc$, 
and characterize the intertwiners of $\qut(\fc)$ as the signature matrices of planar 
$\holant(\fc\,|\,\eq)$ quantum gadgets.
Then we define a new notion of (projective) connectivity for constraint functions and 
reduce arity while preserving the quantum automorphism group. 
Finally,  to address the challenges
posed by generalizing from 0-1 valued to real-valued constraint functions, 
we adapt a  technique of Lov\'asz~\cite{lovasz} in the classical
setting for isomorphisms of real-weighted graphs 
to the setting of quantum isomorphisms.
\end{abstract}

\newpage
\section{Introduction}
\label{sec:introduction}
\paragraph*{Graph Homomorphism and \#CSP.}
A homomorphism from graph $K$ to graph $X$ is an edge-preserving map from the vertex set $V(K)$ of $K$
to the vertex set $V(X)$ of $X$. A well-studied problem in complexity theory is to count the number of
distinct homomorphisms from $K$ to $X$, which can be expressed as 
\[
    \sum_{\sigma: V(K) \to V(X)}\prod_{(u,v)\in E(K)} (A_X)_{\sigma(u), \sigma(v)},
\]
the value of the \emph{partition function} of $X$ evaluated on $K$, 
where $A_X$ is the adjacency matrix of $X$. From this perspective,
graph homomorphism naturally generalizes to a counting constraint satisfaction problem (\#CSP) by replacing 
$\{A_X\}$ with a set $\fc$ of $\r$ or $\c$-valued \emph{constraint functions}
on one or more inputs from a finite domain $V(\fc)$, and replacing $K$ with a set of
constraints and variables, where each constraint applies a constraint function to a sequence of variables.
The problem is to compute the partition function, which is the sum over all variable assignments of the product of the
constraint function evaluations. Letting $V(\fc) = V(X)$ and the constraint and variable sets be
$E(K)$ and $V(K)$, respectively, with each edge/constraint applying $A_X$ to its two endpoints,
we recover the special case of counting homomorphisms from $K$ to $X$.

Bulatov~\cite{bulatov_2013} proved that every problem
\#CSP$(\fc)$, parameterized by a finite set $\fc$ of 0-1-valued constraint functions, is either (1)
 solvable in
polynomial-time or (2) \#P-complete.
Dyer and Richerby \cite{dyer_richerby} proved that
this \emph{complexity dichotomy} has a decidable
criterion. This dichotomy was further extended
to nonnegative real-valued, and then to all complex-valued constraint functions~\cite{cai-chen-lu,cai-chen-complexity}.
When we restrict to planar
\#CSP instances (for which the bipartite constraint-variable incidence graph is planar), a further \emph{complexity trichotomy} is known for the Boolean domain (where $V(\fc) = \{0,1\}$)
\cite{guo}, that there are 
exactly three classes: (1) polynomial-time solvable; (2) \#P-hard for general instances but solvable in polynomial-time over planar structures; and (3) \#P-hard over planar structures. 
Furthermore, Valiant's holographic algorithm with matchgates~\cite{valiant} is universal 
for all problems in class (2):
\emph{Every} \#P-hard
\#CSP problem that is solvable
in polynomial-time in the planar setting
is solvable by this one algorithmic strategy.
However, for planar \#CSP
on domains of  size greater than 2, a full complexity classification is open. 


\paragraph*{Holant Problems.}
We carry out much of our work in the planar Holant framework from counting complexity, which
we find natural to this theory, and of which
planar \#CSP itself is a special case. Like a \#CSP problem, a Holant problem is parameterized by a
set $\fc$ of constraint functions.
The input to a planar Holant problem is a \emph{signature grid}, a planar graph
where each edge represents a variable and every vertex is assigned a constraint function from $\fc$.
A vertex's constraint function is applied to its incident edges. 
This is dual to the \#CSP view of graph homomorphism, where each edge
is a (necessarily binary) constraint and each vertex is a variable.
As with \#CSP, the computational problem is to compute the Holant value -- the sum over all
variable (edge) assignments, of the product of the evaluations of the constraint functions.
A (planar) \emph{gadget} is a (planar) Holant signature grid  with a number of \emph{dangling} edges,
representing external variables. Each gadget has an associated \emph{signature matrix}, which stores
the Holant value for each fixed assignment to the dangling edges. The study of Holant problems is
motivated by Valiant's \emph{holographic transformations} \cite{valiant}, which are certain Holant value-preserving
transformations of the constraint functions by invertible matrices.

\paragraph*{Classical and Quantum Isomorphism.}
As suggested above, one can view a $q$-vertex real-weighted graph $X$, via its adjacency matrix $A_X \in \r^{q \times q}$,
as an $\mathbb{R}$-valued binary (i.e. two input variables) constraint function.
Two $q$-vertex graphs $X$ and $Y$ are
isomorphic if one can apply a permutation to the rows and columns of $A_X$ to obtain $A_Y$. 
Equivalently, if we convert $A_X$ and $A_Y$ to vectors $a_X, a_Y \in \mathbb{R}^{q^2}$,
there is a permutation matrix $P$ satisfying $P^{\otimes 2} a_X = a_Y$.
For $n$-ary constraint functions $F, G \in \r^{[q]^n}$,
a natural generalization applies. $F$ and $G$ are isomorphic
if there is a permutation matrix $P$ satisfying
$P^{\otimes n}f = g$, where $f, g \in \r^{q^n}$ are the vector versions of $F$ and $G$. 

Quantum isomorphism of (undirected, unweighted) graphs, introduced in \cite{asterias}, is
a relaxation of classical isomorphism. Graphs $X$ and $Y$ are \emph{quantum
isomorphic} if there is a perfect winning strategy in a two-player \emph{graph isomorphism game}
in which the players share and can perform measurements on an entangled quantum state.
This condition is equivalent to the existence of a \emph{quantum permutation matrix} matrix $\u$ 
-- a relaxation of a permutation matrix whose entries do not necessarily commute -- 
satisfying $\u^{\otimes 2}a_X = a_Y$ \cite{lupini_nonlocal_2018}. Analogously to classical isomorphism, 
in this work we define $n$-ary constraint functions $F$ and $G$ to be 
\emph{quantum isomorphic} if there is a quantum permutation matrix
$\u$ satisfying $\u^{\otimes n}f = g$. Sets $\fc$ and $\gc$ of constraint functions of equal
cardinality are quantum isomorphic if there is a single quantum permutation matrix defining
a quantum isomorphism between every pair of corresponding functions in $\fc$ and $\gc$.

In \cite{lovasz_operations}, Lov\'asz proved that two graphs are isomorphic if and only if they admit the same number of
homomorphisms from every graph. Fifty years later, Man{\v c}inska and Roberson \cite{planar} proved that two graphs are \emph{quantum}
isomorphic if and only if they admit the same number of
homomorphisms from all \emph{planar} graphs. We
generalize this result to \#CSP and sets of constraint functions. We achieve this via graph combinatorics, results from quantum group theory, and a novel form of holographic transformation,
establishing new connections between
planar Holant, \#CSP, quantum permutation matrices, 
and quantum isomorphism. 

While quantum permutation matrices, quantum isomorphism, and other
quantum constructions in this paper are somewhat abstract and technical, we believe it is precisely these
concepts' abstractness that makes the connections we develop between them and the very concrete,
combinatorial concept of planarity so fascinating and potentially fruitful.
Our result that quantum isomorphism exactly captures planarity could lead to entirely novel, 
algebraic methods of studying the complexity of planar \#CSP and Holant.


\paragraph*{Our Results.}
Our main result is the following theorem, a broad extension of the
main result of Man{\v c}inska and Roberson \cite{planar}, recast into the well-studied Holant and
\#CSP frameworks.
\begin{theorem*}[\autoref{thm:result}, informal]
    Sets $\fc$ and $\gc$ of $\r$-valued constraint functions are quantum isomorphic 
    iff the partition function of every planar \#CSP$(\fc)$ 
    instance is preserved upon replacing every constraint function in $\fc$ with
    the corresponding function in $\gc$.
\end{theorem*}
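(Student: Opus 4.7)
The plan is to prove the two implications separately, using the planar Holant framework as a common language. I would first recast any planar \#CSP$(\fc)$ instance as a planar bipartite signature grid for $\holant(\fc\mid\eq)$, whose $\fc$-side vertices carry the constraint functions and whose $\eq$-side vertices carry equality constraints modeling the \#CSP variables. All subsequent arguments are then carried out at the level of these signature grids.

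For the ($\Rightarrow$) direction, suppose $\u$ is a quantum permutation matrix with $\u^{\otimes \arity(f_i)} f_i = g_i$ for every corresponding pair $(f_i,g_i)$. The idea is to execute a noncommutative analogue of Valiant's holographic transformation: on each edge, insert the identity in the form $\u^\dagger \u = I$, then collect the $\u$-factors at each $\fc$-vertex and the $\u^\dagger$-factors at each $\eq$-vertex. The former converts $f_i$ into $g_i$ by hypothesis, while the latter is absorbed because equality tensors are invariant under $(\u^\dagger)^{\otimes d}$, a consequence of the defining relations $\sum_k u_{ik}u_{jk} = \delta_{ij}$ of a quantum permutation matrix. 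The only delicate point is that entries of $\u$ do not commute, so the cancellations require a globally consistent linear ordering of the monomials appearing in the expanded partition function; such an ordering is furnished by the rotation system of the planar embedding, which is precisely why the argument breaks down in the non-planar case.

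For the ($\Leftarrow$) direction, I would follow the template of \cite{planar}, lifted from graphs to arbitrary sets of constraint functions. First, introduce the quantum automorphism group $\qut(\fc)$, and prove that for every $m,n$ its intertwiner space $\mathrm{Hom}_{\qut(\fc)}(V^{\otimes m}, V^{\otimes n})$ is spanned by the signature matrices of planar $\holant(\fc\mid\eq)$ quantum gadgets with $m$ input and $n$ output dangling edges. The hypothesis that all planar partition functions agree then forces the intertwiner categories of $\qut(\fc)$ and $\qut(\gc)$ to coincide in a strong, tensor-preserving sense, which via Tannaka-Krein-style reconstruction produces a quantum permutation matrix witnessing the quantum isomorphism of $\fc$ and $\gc$. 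Projective connectivity is used along the way to reduce arity without changing the quantum automorphism group, playing the role that connectedness of graphs plays in \cite{planar}.

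The principal obstacle will be the extension from $\{0,1\}$- to $\r$-valued constraint functions, since the argument of \cite{planar} exploits combinatorial features of adjacency matrices that fail for arbitrary real tensors. The plan is to adapt Lov\'asz's classical technique \cite{lovasz}: for each distinct real scalar appearing in the tensors of $\fc$ and $\gc$, construct an auxiliary $\{0,1\}$-valued indicator constraint function marking its occurrences, and show that the quantum automorphism group is preserved when $\fc$ is replaced by the union of these indicators. The delicate point is that Lov\'asz's original argument relies on commutativity of scalars with matrix operations, whereas entries of a quantum permutation matrix do not commute; each step must therefore be reformulated so as to respect the planar ordering dictated by the $\holant(\fc\mid\eq)$ framework.
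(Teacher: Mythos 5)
Your forward direction is essentially the paper's argument in spirit: a quantum holographic transformation justified by planarity. Be aware, though, that the paper does not insert $\u^\dagger\u$ edge-by-edge and ``collect'' factors at vertices (that is exactly the classical Valiant proof, which is not even well-defined here since a signature grid containing $\u$ has no specified multiplication order, and $(\u^{\otimes k})^\dagger \neq (\u^\dagger)^{\otimes k}$, so the dagger factors cannot be distributed to the equality vertices); instead it first proves a planar gadget decomposition writing the Holant value as an ordered matrix product of signature matrices, inserts $(\u^{\otimes k})^\dagger\u^{\otimes k}$ between consecutive factors, and needs rotation/reflection invariance lemmas because extracted constraint vertices may appear rotated or with clockwise orientation. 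Your ``rotation system furnishes the ordering'' is the right intuition but the decomposition is the actual mechanism.

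The genuine gaps are in the converse. First, ``the intertwiner categories of $\qut(\fc)$ and $\qut(\gc)$ coincide, so Tannaka--Krein reconstruction produces a quantum permutation matrix with $\u^{\otimes n}f_i=g_i$'' is not a proof: an abstract coincidence (or equivalence) of intertwiner categories does not by itself yield a single quantum permutation matrix intertwining each $f_i$ with $g_i$; making that leap rigorous is precisely the hard part. The paper instead works with the disjoint union: it characterizes orbits of $\qut(\fc'\oplus\gc')$ by $1$-labeled planar partition functions (via intertwiners $=$ signature matrices and the constant-on-orbits lemma), shows two auxiliary domain elements $0_F,0_G$ lie in a common orbit, and then runs a coherent/orbital-algebra argument (the generalization of Lupini--Man\v{c}inska--Roberson) to extract the block of $\u$ that witnesses $\fc'\cong_{qc}\gc'$, finally recovering $\fc\cong_{qc}\gc$. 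None of this machinery appears in your proposal, and without it (or a fully developed monoidal-equivalence/bi-Galois substitute) the reconstruction step is a gap. Second, your treatment of real weights by introducing $0$-$1$ indicator functions for each scalar value is not the paper's Lov\'asz adaptation and is unlikely to work: (a) the hypothesis gives equality of weighted partition functions only, and it is not clear how to deduce planar \#CSP equivalence of the indicator families from it; (b) the claim that $\qut$ is unchanged when $F$ is replaced by its level-set indicators is exactly a ``higher-order orbit'' statement, which the paper points out is not known to be well-defined for arity $>2$ (this is why it reduces to arity $2$ before using the orbital algebra). The paper's actual adaptation of Lov\'asz is different: it adjoins new domain elements $0_F,0_G$ connected to everything with a large weight $\gamma$ to force projective connectivity, proves $Z^{0_F}(K)=Z^{0_G}(K)$ for planar $1$-labeled instances by stratifying assignments, and uses the size of $\gamma$ (together with a Vandermonde argument inside the orbital-algebra step) to strip the auxiliary elements off and recover a quantum isomorphism of the original sets. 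Finally, note that projective connectivity is not what reduces arity; arity reduction is a separate summing lemma, and projective connectivity is the hypothesis ensuring the reduced binary function is a connected weighted graph so the block-splitting argument applies.
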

Our general constraint functions add significant complexity relative to the graph homomorphism special
case in \cite{planar}, since, unlike unweighted graph adjacency matrices,
they can be (1) asymmetric (i.e. permuting the argument order affects their value), (2) $n$-ary, for 
$n > 2$, and (3) arbitrary real-valued. Each of these three extensions adds intricacies and challenges not present in
\cite{planar}, which we address with novel approaches that reveal new, deeper connections between
quantum permutation matrices and planar graphs. 


First, in \autoref{sec:decomposition} we give a procedure for decomposing any planar Holant signature grid 
corresponding to a planar \#CSP instance into a small set of simple gadgets. Here arise the first
new complications associated with higher-arity signatures. The dangling edges of simple gadgets extracted from the
signature grid may not be oriented correctly, so we must use certain other gadgets to pivot them to
the correct orientation, respecting planarity.

With some preparation in \autoref{sec:invariance}, we prove the quantum Holant theorem in 
\autoref{sec:holanttheorem}. The forward direction of 
\autoref{thm:result} is a direct corollary, giving a more graphical and more intuitive proof than
that of the graph homomorphism special case in \cite{planar}.
The gadget decomposition gives an expression for the Holant value as a product of
the component gadgets' signature matrices. So,
assuming $\fc$ and $\gc$ are quantum isomorphic,
we use the quantum permutation matrix $\u$ defining the quantum isomorphism as a \emph{quantum holographic transformation}, inserting
tensor powers of $\u$ and its inverse between every pair of
signature matrices in the product without changing the Holant value.
Then a sequence of these holographic transformations 
converts every signature in $\fc$ to the corresponding signature in $\gc$.
The quantum holographic transformation does not work on general signature grids, since
viewing $\u$ itself as a constraint function in the signature grid is not in general well-defined, as 
$\u$'s entries do not commute and the partition function does not specify an order to multiply the
constraint function evaluations. However, the planarity of the signature grid and the resulting gadget 
decomposition and matrix product expression for the Holant value implicitly provide a multiplication
order.
Quantum holographic transformations apply to the planar version of the general Holant problem parameterized by a set $\fc$ of constraint functions (not just
the special case of \#CSP), and should be of independent interest.

The success of the quantum holographic transformation for asymmetric signatures is also dependent
on the fact that the holographic transformation action of a quantum permutation matrix is 
invariant under gadget rotations and reflections.
The asymmetry and rotation and reflection issues are only relevant in the context of planar signature 
grids, since in nonplanar grids, one can simply cross and twist the incident edges to achieve the 
desired input order. Hence this is another interesting connection between quantum permutation matrices
and the structural properties of planar graphs.

In \autoref{sec:backward}, to prove the reverse direction of \autoref{thm:result}, we turn to the theory of quantum groups
\cite{woronowicz_compact_1987, wang_quantum_1998}. We
introduce the quantum automorphism group $\qut(\fc)$ of a
set $\fc$ of signatures, an abstraction of the classical automorphism group satisfying many of the
same properties.
Using the planar gadget decomposition, we prove that
the signature matrices of planar Holant gadgets in the context of \#CSP$(\fc)$, a very concrete,
combinatorial concept, exactly
capture the abstract \emph{intertwiner space} of $\qut(\fc)$.

A natural approach to the rest of the proof breaks down for constraint functions of arity $> 2$.
Hence we introduce a method to reduce a constraint function's
arity while maintaining its inclusion in the original intertwiner space. Then
we say a constraint function is \emph{projectively connected} if this
procedure yields a connected graph upon reaching arity 2. 
Finally, we show that if $\fc$ and $\gc$ are projectively connected and the quantum automorphism group
of the disjoint union of $\fc$ and $\gc$ maps a `vertex' of $\fc$ to a `vertex' of $\gc$, then $\fc$ and $\gc$
are quantum isomorphic (analogous to the familiar classical fact for graphs).
For 0-1 valued functions, Man{\v c}inska and Roberson~\cite{planar} ensured projective connectivity 
by taking complements. 
However, for real-valued functions $F$ and $G$
this method does not work: we cannot take the complement to assume they are projectively connected. 
Instead, we adapt to the quantum setting
a technique of Lov\'asz~\cite{lovasz} in the classical
setting for real-weighted graphs, and extract a quantum isomorphism to complete the proof.

All of the above results extend to sets of constraint functions over $\mathbb{C}$
that are closed under conjugation and for which
the quantum isomorphism respects conjugation (both properties
are trivially satisfied by constraint functions over $\mathbb{R}$), and our proof is carried out in
this setting.

In \autoref{sec:connectivity}, we give an alternate approach 
for enforcing constraint function connectivity due to Roberson \cite{roberson}, which adds new binary
connected constraint functions to $\fc$ and $\gc$ rather than modify the existing constraint functions 
to be projectively connected.
In \autoref{sec:game}, we extend the connection between 
quantum isomorphism and nonlocal games. We define graph
isomorphism games for complex-weighted directed graphs and prove
the following generalization of a result in \cite{planar}: real-weighted graphs
$F$ and $G$ admit the same number of homomorphisms from all
planar graphs if and only if there is a perfect quantum commuting strategy for the $(F,G)$-isomorphism game. 
Finally, in \autoref{sec:category}, we discuss how pivoting dangling edges around a gadget and horizontally reflecting gadgets, graphical manipulations that arise naturally
throughout our work, correspond to the dual and adjoint operations in the pivotal dagger category
of gadgets.

We hope that our results, in particular the quantum holographic transformation technique in
\autoref{thm:quantumholant}, will lead to further applications of quantum group theory to the study of planar
\#CSP and Holant complexity.

\section{Preliminaries}
\label{sec:preliminaries}

For notational brevity, following 
\cite{petrescu, dudek}, and others working with $n$-ary structures,
write $x_1^r$ to mean
$x_1, \ldots, x_r$ or $x_1 \ldots x_r$, depending on context, and $x_r^1$ to mean
$x_r, \ldots, x_1$ or $x_r \ldots x_1$.
When the index range is clear, we simply write 
$\vx = x_1, \ldots, x_r$.

A \emph{graph} $X$ consists of a vertex set $V(X)$ and an edge
set $E(X)$ of subsets of $V(X)$ of size 1 or 2. That is, edges
are undirected and loops are allowed. The \emph{adjacency matrix} $A_X$ of $X$ is a symmetric matrix defined by
\[
    (A_X)_{x,y} = \begin{cases} 1 & \{x,y\} \in E(X) \\
    0 & \text{otherwise}
    \end{cases}.
\]
for $x,y \in V(X)$. If $(A_X)_{x,y} = 1$ for distinct vertices $x$ and $y$, we say $x$ and $y$
are \emph{adjacent}.
A \emph{multigraph} can have multiple (undirected) edges between
a single pair of vertices.
More generally, a \emph{$\c$-weighted graph} has directed edges
and loops
assigned values in $\c$. Any matrix in $\c^{q \times q}$
serves as the adjacency matrix of a $q$-vertex $\c$-weighted graph.

Following \cite{lupini_nonlocal_2018}, we refer to graphs and
$\c$-weighted graphs by the letters $X$, $Y$, and $Z$ (as
$G$ is reserved to denote a general constraint function).

Let $X$ be a graph with no loops, and $Y$ be any graph. A map $\varphi: V(X) \to V(Y)$ is a
\emph{homomorphism} from $X$ to $Y$ if $\varphi$ maps edges of
$X$ to edges of $Y$ -- that is,
if $x$ and $y$ are adjacent in $X$, then either $\varphi(x) \neq \varphi(y)$ and $\varphi(x)$ and $\varphi(y)$ are adjacent in $E(Y)$, or $\varphi(x) = \varphi(y)$ and this vertex has a loop
in $Y$.

$A^{\dagger}$ denotes the conjugate transpose of matrix $A$ with entries in $\c$.

\subsection{Constraint functions and \#CSP}
\begin{definition}[Constraint function, $V(F)$, $V(\fc)$, CC]
Think of a tensor $F \in \c^{[q]^n}$, for $q, n \geq 1$, as being a function on $n$ variables taking values in $[q]$ -- a
\emph{constraint function} of domain size $q$ and arity $n$. For $\vx \in [q]^n$, write
$F_{\vx} = F_{x_1\ldots x_n} = F(x_1,\ldots,x_n) \in \c$.

Motivated by the cases where $n = 2$ and $F$ is the adjacency
matrix of a $\c$-weighted graph, we often will write $V(F)$ instead of $[q]$ as the variable
domain -- that is, $F \in \c^{V(F)^n}$.
Whenever we specify a set $\fc$ of constraint functions, it is
assumed that all constraint functions in $\fc$ have the same
domain, which we call $V(\fc)$, with $|V(\fc)| = q$.

For a complex-valued constraint function $F$, let $\overline{F}$ be the entrywise conjugate of $F$ -- that is,
$\left(\overline{F}\right)_{\vx} = \overline{F_{\vx}}$. A set $\fc$ of
constraint functions is \emph{conjugate closed} (CC) if
$F \in \fc \iff \overline{F} \in \fc$.
\end{definition}
Clearly, any set of real-valued constraint functions is conjugate closed.

\begin{definition}[\#CSP, $Z$]
A \emph{\#CSP problem} \#CSP$(\fc)$ is parameterized by a set
$\fc$ of constraint functions.
For a single constraint function $F$, write \#CSP$(F)$ = \#CSP$(\{F\})$.
A \emph{\#CSP instance} $K$ in the context of \#CSP$(\fc)$ 
(called a \#CSP$(\fc)$ instance) is defined by a pair $(V,C)$, where $V$ is a 
set of variables and $C$ is a multiset of \emph{constraints}.
Each constraint consists of a constraint
function $F \in \fc$ and
an ordered tuple of variables to which $F$ is applied. 

Define the \emph{partition function} $Z$  on an input \#CSP$(\fc)$ instance $K = (V,C)$ by
\[
    Z(K) = \sum_{\sigma: V \to V(\fc)} \prod_{c \in C} F^c(\sigma|_c),
\]
where
$F^c(\sigma |_c) = F^c_{\sigma(v_1), \ldots, \sigma(v_n)}$ if $c$ specifies that $F^c \in \fc$ is applied
to variables $v_1,\ldots,v_n$.
\end{definition}

\begin{definition}[Compatible constraint function sets, $K_{\fc\to\gc}$]
\label{def:compatible}
For integer $t$, let $\fc = \{F_i\}_{i\in [t]}$, $\gc = \{G_i\}_{i\in [t]}$ be two sets of constraint functions.
We say $\fc$ and $\gc$ are \emph{compatible} if
there exists $q \geq 1$ and $n_i \ge 1$, such that $F_i, G_i \in \c^{[q]^{n_i}}$
for all $i \in [t]$ (in other words, all constraint functions
have the same domain size, and equal-indexed constraint functions
have the same arity). If $\fc$ and $\gc$ are CC, then we
further assume 
that $F_i = \overline{F_j} \iff G_i = \overline{G_j}$.
(For real-valued constraint functions
this requirement is trivial.)
We call $F_i$ and $G_i$ \emph{corresponding} constraint functions.

For compatible constraint function sets $\fc$ and $\gc$ and
any \#CSP$(\fc)$ instance $K$, define a \#CSP$(\gc)$ instance
$K_{\fc\to\gc}$ by replacing every constraint in $K$ with 
the corresponding constraint
function in $\gc$ applied to the same variable tuple.

For $F, G \in \c^{[q]^n}$, define $K_{F\to G} = K_{\{F\}\to\{G\}}$.
\end{definition}

\begin{definition}[$k$-labeled \#CSP instance, $Z^{\psi}$, $Z^v$]
    \label{def:labeledcsp}
    A \#CSP instance $K = (V,C)$ is \emph{$k$-labeled} if there is a $k$-tuple of variables
    $(v_1, \ldots, v_k) \in V^k$, not necessarily distinct, such that
    variable $v_i$ is labeled $i$, for $1 \leq i \leq k$.
    Let $K = (V,C)$ be a $k$-labeled instance and let $\psi: [k] \to [q]$ be a map fixing, or \emph{pinning}, the values 
    of the labeled variables. If $\psi$ assigns some multiply-labeled variable different values -- that
    is, there exist $i \neq j \in [q] \text{ such that } v_i = v_j \text{ but } \psi(i) \neq \psi(j)$,
    define $Z^{\psi}(K) = 0$. Otherwise, let
    \[
        Z^{\psi}(K) = \sum_{\sigma: V \to [q] \text{ extends } \psi} \prod_{c \in C} F^c(\sigma|_c),
    \]
    and $\sigma$ extends $\psi$ means $\sigma$ assigns value $\psi(i)$ to the variable labeled $i$.

    If $k=1$, then any $\psi: [1] \to [q]$ may be identified with $\psi(1)$, and we write
    $Z^{\psi(1)} = Z^{\psi}$.
\end{definition}
\autoref{def:labeledcsp} is a generalization of the definition
of a $k$-labeled graph in \cite{schrijver}. It is similar to
the definitions of a $k$-labeled graph in 
\cite{freedman_reflection, lovasz}, but in these definitions each variable has at most one label.
We allow multiple labels on a variable to make the definition of a $k$-labeled \#CSP instance match that of a
bi-labeled graph from \cite{planar}.

\begin{definition}[Constraint function isomorphism]
    Tensors/constraint functions $F, G \in \c^{[q]^n}$ are \emph{isomorphic} ($F \cong G$) if there is a $\varphi \in S_q$
    such that $G_{\varphi(x_1), \ldots, \varphi(x_n)} = F_{x_1,\ldots,x_n}$ for all $\vx \in [q]^n$.
    
    Compatible constraint function sets $\fc$ and $\gc$ of cardinality $s$ and common domain $[q]$ are
    isomorphic ($\fc \cong \gc$) if there is a $\varphi \in S_q$
    which is an isomorphism between $F_i$ and $G_i$ for all
    $i \in [s]$.
\end{definition}
Note that we require that every pair $F_i$ and $G_i$
be isomorphic via the same $\varphi$.

For a constraint function $F$ often it will be useful to `flatten' it into a matrix:
\begin{definition}[$F^{m,d}$, $f$]
    \label{def:flatten}
    For $F \in \c^{[q]^{n}}$ and any $m,d \geq 0$, $m+d = n$, let $F^{m,d} \in \c^{q^m \times q^d}$ be the $q^m \times q^d$ matrix
    defined by
    \[
        F^{m,d}_{x_1\ldots x_m, x_{n} \ldots x_{m+1}} = F_{\vx}
    \]
        for $\vx \in [q]^{m+d}$, where $x_1 \ldots x_m \in \mathbb{N}$ is the base-$q$ integer with
    the most significant digit $x_1$, and similarly for $x_{n} \ldots x_{m+1}$.
%
    We will frequently have $m = n$ and $d=0$, in which case we call $F^{n,0} \in \c^{q^n}$ the
    \emph{signature vector} of $F$ and denote it by a lowercase $f = F^{n,0}$.
\end{definition}
Observe that the column index bits are reversed. We do this to make the definition of a flattened
constraint function match \autoref{def:sigmatrix} of a signature matrix.


\subsection{Quantum permutation matrices and quantum tensor isomorphism}
We will also consider matrices with entries from an arbitrary (not necessarily commutative)
unital $C^*$-algebra. See \cite{cstaralgebra} for a reference
on $C^*$-algebras.
For such a matrix $\u = (u_{ij})$, let $\u^{\dagger}$ be the conjugate transpose of $\u$, whose
$(i,j)$ entry is $u^*_{ji}$. Let $\one$ be the $C^*$-algebra unit.
\begin{definition}[Quantum permutation matrix]
    A matrix $\u = (u_{ij})$ with entries from a unital $C^*$-algebra is called a \emph{quantum permutation matrix}
    if it satisfies the following for all $i,j$:
    \begin{itemize}
        \item $u_{ij}^2 = u_{ij} = u_{ij}^*$ (in other words, every entry of $\u$ is a \emph{projector}).
        \item $\sum_{j} u_{ij} = \sum_i u_{ij} = \one$.
    \end{itemize}
    It is well-known that these two conditions additionally imply 
    \begin{itemize}
        \item $u_{ij}u_{kj} = \delta_{ik} u_{ij}$ and $u_{ij}u_{ik} = \delta_{jk}u_{ij}$ for all $i,j,k$.
        \item $\u \u^{\dagger} = \u^{\dagger} \u = I \otimes \one$ ($\u$ is unitary).
    \end{itemize}
\end{definition}
For $q_1 \times q_2$ matrix $\u$ and $p_1 \times p_2$ matrix $\mathcal{V}$, both with entries from
a $C^*$-algebra,
let the Kronecker product $\u \otimes \mathcal{V}$ be the $q_1p_1 \times q_2p_2$ 
matrix defined by $(\u \otimes \mathcal{V})_{(x_1,x_2),(y_1,y_2)} = u_{x_1y_1} v_{x_2y_2}$. 
For $q_1 \times q_2$ matrix $\u$, define
the $q_1^n \times q_2^n$ matrix $\u^{\otimes n}$ inductively.
For base-$q_1$ integer $x_1\ldots x_n$ and base-$q_2$ integer
$y_1\ldots y_n$, index $(\u^{\otimes n})_{x_1^n,y_1^n} = u_{x_1y_1} \ldots u_{x_ny_n}$.

The following well-known identities for matrices 
hold:
For any appropriately-sized matrices 
$\u$, $\mathcal{V}$, and $M,N$ with entries from $\c$ (or scalar multiples of $\one$),
\begin{equation}
    \label{eq:tensordistribute}
    (\u \otimes \mathcal{V})(M \otimes N) = (\u M) \otimes (\mathcal{V} N),
    \qquad
    (M \otimes N)(\u \otimes \mathcal{V}) = (M \u) \otimes (N \mathcal{V}).
\end{equation}
If $\u$ is a quantum permutation matrix, then
\begin{equation}
    \u^{\otimes k}(\u^{\otimes k})^{{\dagger}} = I^{\otimes k} \otimes \one
    = (\u^{\otimes k})^{{\dagger}} \u^{\otimes k}.
    \label{eq:tensorinverse}
\end{equation}

The following concept was introduced in \cite{asterias}, and
expressed in this form in \cite{lupini_nonlocal_2018}.
\begin{definition}[Graph $\cong_{qc}$]
    \label{def:commutingiso}
    Graphs $X$ and $Y$ with adjacency matrices $A_X$ and $A_Y$ are \emph{quantum isomorphic} ($X \cong_{qc} Y$) if there is a quantum permutation matrix
    $\u$ satisfying $\u A_X = A_Y \u$.
\end{definition}

One can view a quantum permutation matrix as a generalization of a classical permutation matrix, which 
has rows and columns sum to 1, and each entry (0 or 1) is idempotent. In fact, if $\c$ is the
$C^*$-algebra over which $\u$ is defined, then $\u$ is a classical permutation matrix.

The following definition is motivated by the fact that $F,G \in \c^{[q]^n}$ are isomorphic if and
only if there is a permutation matrix $P \in \c^{q \times q}$ satisfying $P^{\otimes n}f = g$, and
by the discussion in the previous paragraph.
\begin{definition}[$\cong_{qc}$]
    \label{def:generalqc}
    $F,G \in \c^{[q]^n}$ are \emph{quantum isomorphic} ($F \cong_{qc} G)$ if there is a $q \times q$ quantum permutation matrix 
    matrix $\u$ satisfying $\u^{\otimes n}f = g$.
    
    Compatible sets $\fc$ and $\gc$ of constraint functions with cardinality $s$ and common domain $[q]$ are
    \emph{quantum isomorphic} ($\fc \cong_{qc} \gc$) if
    there is a $q \times q$ quantum permutation matrix $\u$
    satisfying $\u^{\otimes \arity(F_i)}f_i = g_i$ for every
    $i$.
\end{definition}
As with classical isomorphism, we require every corresponding pair of functions
in $\fc$ and $\gc$ to be quantum isomorphic via the same $\u$.
Throughout, we identify scalar multiples $\alpha \one$ of the $C^*$-algebra unit with the scalar
$\alpha \in \c$. In particular, we will write $\u^{\otimes n}f = g$ instead of 
$\u^{\otimes n}f = g \otimes \one$.
If $n=2$ and $F$ and $G$ are symmetric, 0-1 valued tensors,
then $F$ and $G$ are adjacency matrices of graphs.
For graphs $X$ and $Y$, by \autoref{lem:tensorcftog} below, we may
rewrite the condition $\u A_X = A_Y \u$ defining $X \cong_{qc} Y$ in \autoref{def:commutingiso} as $\u^{\otimes 2} A_X^{2,0} = A_Y^{2,0}$.
Thus \autoref{def:generalqc} is a generalization of
\autoref{def:commutingiso}.

\begin{definition}[$E_n, E^{m,d}$, $\eq$]
    \label{def:e}
    For fixed $q, n$, define the \emph{equality} constraint function $E_{q,n} \in \c^{[q]^n}$ by
    \[
        E_{q,n}(x_1,\ldots,x_n) = \begin{cases} 1 & x_1 = \ldots = x_n \\ 0 & \text{otherwise} \end{cases}
    \]
    for $\vx \in [q]^n$.
    We will almost always refer to $E_{q,n}$ as $E_n$, as $q$ is usually clear from context.
    When we flatten $E_n$ into a matrix $E^{m,d}_n$, we abbreviate $E^{m,d} := E_n^{m,d}$, as we
    must have $m+d = n$.

    For fixed $q$, define $\eq_q = \bigcup_n E_{q,n}$. We again will usually omit the $q$.
\end{definition}

\subsection{Holant, gadgets and signature matrices}
\label{sec:holant}
A Holant problem $\holant(\mathcal{F})$, like a  \#CSP problem,
is parameterized by a set $\mathcal{F}$ of constraint functions,
usually called \emph{signatures}.
The input to $\holant(\mathcal{F})$ is a \emph{signature grid} 
$\Omega$, which consists of an underlying multigraph $X$ with vertex set $V$ and edge set $E$. Each  vertex
$v \in V$ is assigned a signature $F_v \in \mathcal{F}$ of arity $\deg(v)$.
The incident edges  $E(v)$ of $v$ are labeled as input variables to $F_v$ taking values in $V(\fc)$.
We use $\plholant(\fc)$ to specify that input signature grids
must have planar underlying multigraphs. For planar Holant, the input variables of $F_v$
are labeled in cyclic order (either clockwise or counterclockwise) starting with
one particular edge:
$E(v) = (e_1^v,\ldots,e_{\deg(v)}^v)$.
$\Omega$ is said to be a signature grid \emph{in the context of} $\holant(\mathcal{F})$. 
The output on input $\Omega$ is
\begin{equation}\label{eqn:def-holant}
    \holant_{\Omega}(\mathcal{F}) = \sum_{\sigma: E \to V(\fc)} \prod_{v \in V} F_v(\sigma |_{E(v)}),
\end{equation}
where $F_v(\sigma |_{E(v)}) = (F_v)_{\sigma(e_1^v),\ldots,\sigma(e_{\deg(v)}^v)}$.
For example, counting perfect matchings 
is expressed by the {\sc Exact-One} function; counting proper edge colorings
is expressed by the {\sc Disequality} function.
For sets $\fc$ and $\gc$ of signatures, define the problem $\holant(\fc \,|\, \gc)$ as
follows. A signature grid in the context of $\holant(\mathcal{F} \,|\, \gc)$ has a bipartite underlying
multigraph with bipartition $V = V_1 \sqcup V_2$ such that the vertices in $V_1$ and $V_2$ are assigned
signatures from $\mathcal{F}$ and $\gc$, respectively.
Holant is likely a more expressive framework than \#CSP. We show in the next paragraph that 
\#CSP is expressible in the Holant framework, but it is unknown whether, 
for every signature set $\fc$, there is a signature set $\fc'$ such that the class of problems
$\holant(\fc)$ is equivalent to the class \#CSP$(\fc')$. This inexpressibility is proven when we restrict
from \#CSP to graph homomorphism, even with complex vertex and edge weights~\cite{lovasz,cai-govorov}. 

Let $\fc$ be a set of constraint
functions. To each \#CSP$(\fc)$ instance $K = (V,C)$ we associate a signature grid $\Omega_K$ in the context of
$\holant(\fc \,|\,  \mathcal{EQ})$ defined as follows: 
For every constraint $c \in C$, if $c$ applies function $F$ of arity $n$, create a degree-$n$ vertex assigned $F$, called a \emph{constraint vertex}.
For each variable $v \in V$, if $v$ appears in the multiset of constraints $C_v \subseteq C$, 
create a degree-$|C_v|$ vertex assigned $E_{|C_v|} \in \eq$, and edges $(v, c)$ for every $c \in C_v$.
If $c$'s variables are $v_1,\ldots,v_n$ (in order), draw the edges around the vertex corresponding to $c$ in
cyclic (either clockwise or counterclockwise) order of the edge to the vertex representing $v_1$, then the edge to the vertex
representing $v_2$, and so on until $v_n$.
Vertices assigned signatures in $\eq$ are called \emph{equality} vertices.
Each $\sigma$ assigns all incident edges of an equality vertex the same
value (or else the term corresponding to $\sigma$ is 0), so we can view $\sigma$ as variable assignent
in the \#CSP sense. Comparing the definitions of $Z(K)$ and $\holant_{\Omega_K}(\fc \mid \mathcal{EQ})$,
we have $Z(K) = \holant_{\Omega_K}(\fc \mid \mathcal{EQ})$.

For example, to compute the number of homomorphisms $K \to X$, consider a \#CSP$(A_X)$ instance $K'$ where the vertices of $K$ are variables and each edge of $K$ is a constraint
applying function $A_X \in \r^{V(X)^2}$ ($X$'s adjacency matrix) to the edge's two endpoints.
The corresponding Holant signature grid $\Omega_K$ starts with underlying graph $K$, with $K$'s vertices
assigned the appropriate equality signature from $\eq$, and we subdivide each of $K$'s edges by placing 
degree-2 constraint vertices, assigned signature $A_X$, to the labeled equality vertices. See
\autoref{fig:gadget_vs_bilabeled}.
We depict equality and constraint vertices as circles and squares, respectively.

\begin{figure}[ht!]
    \center
    \begin{tikzpicture}[scale=1]
    \GraphInit[vstyle=Classic]
    \SetUpEdge[style=-]
    \SetVertexMath

    \def\wirelen{0.8}
    \def\xshift{6}

    \Vertex[x=0,y=1,NoLabel]{a1}
    \Vertex[x=2,y=3,NoLabel]{b1}
    \Vertex[x=2,y=1,NoLabel]{c1}
    \Vertex[x=1,y=0,NoLabel]{d1}
    \Vertex[x=3,y=2,NoLabel]{e1}

    \Vertex[x=0+\xshift,y=1,L=E_3,Lpos=270]{a2}
    \Vertex[x=2+\xshift,y=3,L=E_2,Lpos=0]{b2}
    \Vertex[x=2+\xshift,y=1,L=E_4,Lpos=330,Ldist=-0.1cm]{c2}
    \Vertex[x=1+\xshift,y=0,L=E_2,Lpos=270]{d2}
    \Vertex[x=3+\xshift,y=2,L=E_1,Lpos=90]{e2}

    \foreach \xs/\num in {0/1,\xshift/2} {
        \Edges(a\num,b\num,c\num,d\num,a\num,c\num,e\num)
    };

    \tikzset{VertexStyle/.style = {shape=rectangle, fill=black, minimum size=5pt, inner sep=1pt, draw}}
    \Vertex[x=1+\xshift,y=2,NoLabel]{ax1}
    \Vertex[x=1+\xshift,y=1,NoLabel]{ax2}
    \Vertex[x=0.5+\xshift,y=0.5,NoLabel]{ax3}
    \Vertex[x=1.5+\xshift,y=0.5,NoLabel]{ax4}
    \Vertex[x=2+\xshift,y=2,NoLabel]{ax5}
    \Vertex[x=2.5+\xshift,y=1.5,NoLabel]{ax6}

    \Vertex[x=3+\xshift,y=0,L={=A_X}]{ax7}

    \node at (\xshift/1.34, 1.5) {$\rightsquigarrow$};

    \node at (0.1, 2.7) {$K$};
    \node at (0.1+\xshift, 2.7) {$\Omega_K$};

\end{tikzpicture}
    \caption{A graph $K$ and the corresponding
    $\holant(A_X \mid \eq)$ gadget $\Omega_K$ for computing the number of homomorphisms from $K$ to
$X$. Square vertices are assigned signature $A_X$.}
    \label{fig:gadget_vs_bilabeled}
\end{figure}
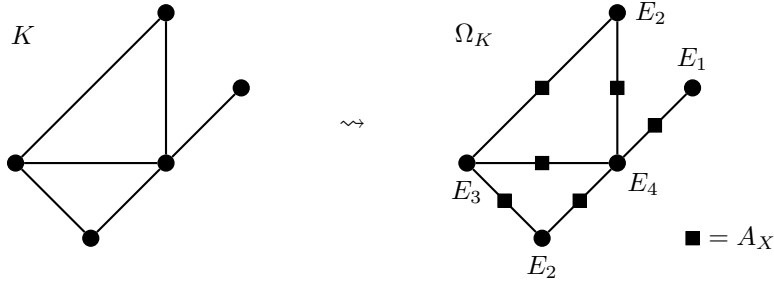

\begin{definition}[Planar, connected \#CSP instance]
    A \#CSP instance $K$ is \emph{planar/connected} if the underlying multigraph of the corresponding
    Holant signature grid is planar/connected.
\end{definition}

We now have the notation to state our main theorem.
\begin{theorem}
    \label{thm:result}
    Let $\fc$, $\gc$ be conjugate closed compatible sets of constraint functions.
    Then $\fc \cong_{qc} \gc$ if and only if $Z(K) = Z(K_{\fc\to\gc})$ for every planar \#CSP$(\fc)$ instance $K$.
\end{theorem}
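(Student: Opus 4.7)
The proof will proceed in two directions, with quite different techniques for each. For the forward direction ($\fc \cong_{qc} \gc \Rightarrow Z(K) = Z(K_{\fc\to\gc})$), the plan is to fix a quantum permutation matrix $\u$ witnessing $\fc \cong_{qc} \gc$, then use $\u$ as a \emph{quantum holographic transformation} on the Holant reformulation $Z(K) = \holant_{\Omega_K}(\fc \mid \eq)$. First I would apply the planar gadget decomposition from \autoref{sec:decomposition} to rewrite $\holant_{\Omega_K}$ as a product of signature matrices of the simple gadgets into which $\Omega_K$ decomposes, respecting the cyclic edge orderings. Between any two consecutive signature matrices along a wire of width $k$, I insert $I^{\otimes k} = \u^{\otimes k}(\u^{\otimes k})^{\dagger}$ via \eqref{eq:tensorinverse}; planarity is what guarantees that this insertion has a well-defined place despite the noncommutativity of the $u_{ij}$. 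Using \eqref{eq:tensordistribute}, each $\u^{\otimes k}$ (or its dagger) gets absorbed into an adjacent signature $F_i$, transforming its flattened matrix by conjugation; I need to verify that at every $F_i$ vertex the effect is exactly $f_i \mapsto \u^{\otimes \arity(F_i)} f_i = g_i$, and similarly that the equality vertices are invariant under $\u$ (using $\u \u^{\dagger} = I \otimes \one$ and idempotency of $u_{ij}$). The same argument handles both sides of the bipartite $\holant(\fc \mid \eq)$ grid, yielding $Z(K) = Z(K_{\fc\to\gc})$.

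The backward direction is substantially harder and follows the blueprint sketched for \autoref{sec:backward}. I would introduce the quantum automorphism group $\qut(\fc\sqcup\gc)$ of the disjoint union of $\fc$ and $\gc$ as an abstract compact quantum group, and identify its intertwiner spaces with signature matrices of planar $\holant(\fc\sqcup\gc \mid \eq)$ gadgets. The hypothesis $Z(K) = Z(K_{\fc\to\gc})$ for all planar \#CSP instances feeds directly into this characterization to produce quantum-automorphism-invariant vectors that ``witness'' identification of elements of $\fc$ with the corresponding elements of $\gc$. The classical analogue is: if two graphs admit the same homomorphism counts from all graphs, then some automorphism of their disjoint union swaps a vertex of one with a vertex of the other; promoting this to the quantum setting requires showing that $\qut(\fc\sqcup\gc)$ sends a ``vertex'' of $\fc$ to a ``vertex'' of $\gc$, which I extract into a quantum permutation matrix realizing $\fc \cong_{qc} \gc$.

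For this extraction to work in the abstract-arity, real-valued setting one must pass through a \emph{projective connectivity} assumption, as stated in the introduction. I would first formalize the arity-reduction procedure that collapses an $n$-ary constraint function into a binary (graph) object while keeping it inside the same intertwiner space and preserving $\qut$. Once at arity $2$, in the $0$-$1$ case one forces connectivity by taking graph complements, but this is unavailable for real-valued functions. Instead I would import Lov\'asz's weighted-graph technique from \cite{lovasz}: roughly, decompose a real-weighted constraint function spectrally or by its level sets, reducing the quantum isomorphism question to pieces where a connectivity-like property holds, and then reassemble. The conjugate-closedness hypothesis on $\fc, \gc$ is what lets the same argument run over $\c$ rather than just $\r$.

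The main obstacle, I expect, is the bookkeeping in the forward direction for \emph{asymmetric, higher-arity} signatures: the cyclic edge ordering around each vertex must be tracked through the gadget decomposition, and the ``pivoting'' of dangling edges between signature matrices (noted in \autoref{sec:decomposition}) must remain compatible with the action of $\u^{\otimes k}$. One has to show that rotations and reflections of a gadget do not disturb the quantum holographic transformation, which amounts to checking that $\u$ acts the same way on a signature regardless of where the cyclic ``starting edge'' is placed --- a genuinely planar phenomenon, since in a nonplanar setting one would just twist edges around. On the backward side, the analogous sticking point is proving the intertwiner characterization: ensuring that \emph{every} intertwiner of $\qut(\fc\sqcup\gc)$ arises as a planar gadget signature matrix, not merely those built from the constraint functions in an obvious way, which is where the planar gadget decomposition of \autoref{sec:decomposition} and the invariance results of \autoref{sec:invariance} must do the heavy lifting.
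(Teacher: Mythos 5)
Your forward direction is essentially the paper's proof: decompose the planar signature grid into a chain of gadget signature matrices, insert $(\u^{\otimes k})^{\dagger}\u^{\otimes k}$ between consecutive factors, absorb the $\u$-powers into adjacent signatures, and check invariance of the equality signatures and of the transformation under rotation/reflection of the asymmetric signatures. The backward direction also follows the paper's high-level skeleton (quantum automorphism group of the direct sum, intertwiners $=$ signature matrices of planar quantum gadgets, orbit criterion via $1$-labeled instances, arity reduction to reach the orbital algebra, projective connectivity), but the decisive steps are missing or mis-described.

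Concretely, there are two gaps. First, you never say how the hypothesis --- equality of partition functions on planar \#CSP$(\fc)$ instances --- actually yields two domain elements of $\fc$ and $\gc$ lying in the same orbit of $\qut(\fc\oplus\gc)$. The paper does this by augmenting each corresponding pair $F,G$ with new apex domain elements $0_F,0_G$ and a large weight $\gamma$ placed on inputs of the form $(0_F,\ldots,0_F,c)$, which simultaneously forces projective connectivity and allows a stratification argument (over the set $S=\sigma^{-1}(0_F)$, using that $F'\oplus G'$ vanishes on mixed inputs) showing $Z^{0_F}(K)=Z^{0_G}(K)$ for every planar $1$-labeled \#CSP$(\fc'\oplus\gc')$ instance; only then does the orbit criterion apply. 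Second, your substitute for the unavailability of complements --- ``decompose the real-weighted function spectrally or by level sets and reassemble'' --- is not Lov\'asz's technique as adapted here, and there is no indication quantum isomorphism respects such a decomposition; the actual adaptation is the apex construction above, followed by an extraction step in which one recovers a quantum isomorphism of the \emph{original} $\fc,\gc$ from $\fc'\cong_{qc}\gc'$ by choosing $\gamma$ larger than $\sum_{\vy}|f_{\vy}|+\max_{\vx}|g_{\vx}|$, which forces $u_{x\,0_F}=0$ and hence that the truncated matrix $\widehat{\u}$ is again a quantum permutation matrix with $\widehat{\u}^{\otimes n}f=g$. Without the transfer argument and this extraction (and, more minorly, the coherent-algebra manipulations --- Hadamard powers, symmetrization, a Vandermonde argument --- needed in the same-orbit lemma because the arity-reduced binary function is complex-valued and asymmetric, plus the separate treatment of purely unary $\fc,\gc$), the backward implication does not go through as proposed.
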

If $\fc = \{A_X\}$ and $\gc = \{A_Y\}$, then \autoref{thm:result} specializes to the result of
\cite{planar}: graphs $X$ and $Y$ are quantum isomorphic iff they admit the same number of homomorphisms
from every planar graph $K$.

The next construction is very useful, and is the reason why
we look through the Holant lens in this work.
\begin{definition}[Gadget]
    \label{def:gadget}
    A \emph{gadget} is a Holant signature grid equipped with an ordered set of dangling edges (edges with only one endpoint), defining external variables.
    
    Our gadgets will usually be in the context of 
    $\holant(\fc \mid \eq)$, the Holant problem equivalent to
    \#CSP$(\fc)$. In this case, we specify that all dangling
    edges must be attached to equality vertices.
\end{definition}

\begin{definition}[$M(\k)$, $M(\k)$]
    \label{def:sigmatrix}
    Let $\k$ be a gadget with $n$ dangling edges and containing signatures of domain size $q$. 
    For any $m, d \geq 0$, $m + d = n$,
    define $\k$'s $(m,d)$-\emph{signature matrix}
    $M(\k) \in \c^{q^{m} \times q^d}$ by letting $M(\k)_{\vx,\vy}$ be the Holant value (summing over all
    assignments to non-dangling edges)
    when the first $m$ dangling edges (called \emph{output} or \emph{row} dangling edges) are assigned $x_1,\ldots,x_m$ and the last $d$
    dangling edges (called \emph{input} or \emph{column} dangling edges) are assigned $y_{d},\ldots,y_1$. We draw the output
    dangling edges extending to the left of the gadget, and
    the input dangling edges to the right.
\end{definition}

\begin{definition}[$\gf(m,d)$, Gadget $\circ, \otimes, {\dagger}$]
    \label{def:gadgetops}
    For a gadget $\k$ with $m+d$ dangling edges, write
    $\k \in \gf(m,d)$ to mean we consider $\k$ with $m$ output and $d$ input dangling edges.
    
    \begin{itemize}
        \item Given $\mathbf{K_1} \in \gf(m,d), \mathbf{K_2} \in \gf(d, m)$, define the composition $\mathbf{K_1} \circ \mathbf{K_2}
        \in \gf(m, w)$ as the gadget formed by placing $\mathbf{K_2}$ to the right of $\mathbf{K_1}$, and connecting each
        input dangling edge of $\mathbf{K_1}$ with the corresponding output dangling edge of $\mathbf{K_2}$.
        \begin{itemize}
            \item If composition makes vertices assigned 
            $E_a, E_b \in \eq$ adjacent, contract the edge between them, merging them into a single vertex
            assigned $E_{a+b-2}$.
        \end{itemize}
        \item For gadgets $\mathbf{K_1} \in \gf(m_1, d_1), \mathbf{K_2} \in \gf(m_2, d_2)$, define the tensor product
        $\mathbf{K_1} \otimes \mathbf{K_2} \in \gf(m_1 + m_2, d_1 + d_2)$ as the gadget formed by 
        taking the disjoint union of the multigraphs underlying
        $\mathbf{K_1}$ and $\mathbf{K_2}$, with $\mathbf{K_1}$ placed above $\mathbf{K_2}$.
        Define the resulting order
        of dangling edges as $\mathbf{K_1}$'s output, then $\mathbf{K_2}$'s output,
        then $\mathbf{K_2}$'s input, then $\mathbf{K_1}$'s input dangling edges.
        \item For $\k \in \gf(m,d)$, define the conjugate transpose $\k^{\dagger} \in \gf(d, m)$ by reflecting the
        underlying multigraph of $\k$ horizontally, then replacing every signature $F$ with $\overline{F}$. Maintain the edge input order of each signature,
        so a clockwise-oriented signature becomes counterclockwise-oriented, and vice-versa.
    \end{itemize}
\end{definition}
See \autoref{fig:gadgetops}. Dangling edges are drawn lighter and thinner than internal edges.
\begin{figure}[ht!]
    \begin{center}
    \begin{subfigure}{0.8\textwidth}
        \centering
        \begin{tikzpicture}[scale=0.7]
    \GraphInit[vstyle=Classic]
    \SetUpEdge[style=-]
    \SetVertexMath

    \def\wirelen{0.7}
    \def\xsh{7}

    \draw[thin, color=gray] (-\wirelen,-0.4) .. controls (-\wirelen/2,-0.4) .. (0,0);
    \draw[thin, color=gray] (-\wirelen,0.4) .. controls (-\wirelen/2,0.4) .. (0,0);
    \draw[thin, color=gray] (-\wirelen,2) -- (2,2);
    \draw[thin, color=gray] (2,0) -- (3+\wirelen,0);
    \draw[thin, color=gray] (3,1) -- (3+\wirelen,1);

    \tikzset{VertexStyle/.style = {shape=circle, fill=black, minimum size=5pt, inner sep=1pt, draw}}
    \Vertex[x=0,y=0,NoLabel]{a2}
    \Vertex[x=2,y=2,NoLabel]{b2}
    \Vertex[x=2,y=0,NoLabel]{c2}
    \Vertex[x=3,y=1,NoLabel]{e2}

    \Edges(a2,b2,c2)
    \Edges(c2,e2)

    \tikzset{VertexStyle/.style = {shape=rectangle, fill=black, minimum size=4pt, inner sep=1pt, draw}}
    \Vertex[x=1,y=1,NoLabel]{ax1}
    \Vertex[x=2,y=1,NoLabel]{ax5}
    \Vertex[x=2.5,y=0.5,NoLabel]{ax6}

    \Edge(c2)(ax1)

    \node at (1.5,2.5) {$\mathbf{K}$};

    \def\ysh{0.5}
    \draw[thin, color=gray] (\xsh+1+\wirelen,0+\ysh) -- (\xsh+1,0+\ysh);
    \draw[thin, color=gray] (\xsh+-1,0+\ysh) -- (\xsh-1-\wirelen,0+\ysh);
    \draw[thin, color=gray] (\xsh+-1,1+\ysh) -- (\xsh-1-\wirelen,1+\ysh);

    \tikzset{VertexStyle/.style = {shape=circle, fill=black, minimum size=5pt, inner sep=1pt, draw}}
    \Vertex[x=\xsh+1,y=0+\ysh,NoLabel]{a3}
    \Vertex[x=\xsh+1,y=1+\ysh,NoLabel]{b3}
    \Vertex[x=\xsh-1,y=0+\ysh,NoLabel]{c3}
    \Vertex[x=\xsh-1,y=1+\ysh,NoLabel]{e3}

    \node at (\xsh, 1.5+\ysh) {$\mathbf{L}$};

    \tikzset{VertexStyle/.style = {shape=rectangle, fill=black, minimum size=4pt, inner sep=1pt, draw}}
    \Vertex[x=\xsh-1,y=0.5+\ysh,NoLabel]{ax3}
    \Vertex[x=\xsh+0,y=0.5+\ysh,NoLabel]{bx3}

    \Edges(e3,c3,b3)
    \Edge(e3)(a3)
\end{tikzpicture}
    \end{subfigure}

    \begin{subfigure}{0.4\textwidth}
        \begin{tikzpicture}[scale=0.7]
    \GraphInit[vstyle=Classic]
    \SetUpEdge[style=-]
    \SetVertexMath

    \def\wirelen{0.7}
    \def\xsh{4}

    \node at (\xsh-2, 2.7) {$\mathbf{K} \circ \mathbf{L}$};

    \draw[thin, color=gray] (-\wirelen,-0.4) .. controls (-\wirelen/2,-0.4) .. (0,0);
    \draw[thin, color=gray] (-\wirelen,0.4) .. controls (-\wirelen/2,0.4) .. (0,0);
    \draw[thin, color=gray] (-\wirelen,2) -- (2,2);

    \tikzset{VertexStyle/.style = {shape=circle, fill=black, minimum size=5pt, inner sep=1pt, draw}}
    \Vertex[x=0,y=0,NoLabel]{a2}
    \Vertex[x=2,y=2,NoLabel]{b2}
    \Vertex[x=2,y=0,NoLabel]{merge2}
    \Vertex[x=3,y=1,NoLabel]{merge1}

    \Edges(a2,b2,merge2)

    \tikzset{VertexStyle/.style = {shape=rectangle, fill=black, minimum size=4pt, inner sep=1pt, draw}}
    \Vertex[x=1,y=1,NoLabel]{ax1}
    \Vertex[x=2,y=1,NoLabel]{ax5}
    \Vertex[x=2.34,y=0.66,NoLabel]{ax6}

    \Edge(merge2)(ax1)

    \draw[thin, color=gray] (\xsh+1+\wirelen,0) -- (\xsh+1,0);

    \tikzset{VertexStyle/.style = {shape=circle, fill=black, minimum size=5pt, inner sep=1pt, draw}}
    \Vertex[x=\xsh+1,y=0,NoLabel]{a3}
    \Vertex[x=\xsh+1,y=1,NoLabel]{b3}

    \tikzset{VertexStyle/.style = {shape=rectangle, fill=black, minimum size=4pt, inner sep=1pt, draw}}
    \Vertex[x=\xsh-1.21,y=0.5,NoLabel]{ax3}
    \Vertex[x=\xsh-0.2,y=0.6,NoLabel]{bx3}

    \Edge[style={bend left}](merge1)(merge2)
    \Edge[style={bend right}](merge1)(merge2)
    \Edge(merge2)(b3)
    \Edge(merge1)(a3)
\end{tikzpicture}
        \vspace{0.5cm}
    \end{subfigure}
    \begin{subfigure}{0.25\textwidth}
        \begin{tikzpicture}[scale=0.7]
    \GraphInit[vstyle=Classic]
    \SetUpEdge[style=-]
    \SetVertexMath

    \def\wirelen{0.7}
    \def\xsh{1}
    \def\ysh{2}

    \node at (1, \ysh+2.7) {$\mathbf{K} \otimes \mathbf{L}$};

    \draw[thin, color=gray] (-\wirelen,\ysh-0.4) .. controls (-\wirelen/2,\ysh-0.4) .. (0,\ysh+0);
    \draw[thin, color=gray] (-\wirelen,\ysh+0.4) .. controls (-\wirelen/2,\ysh+0.4) .. (0,\ysh+0);
    \draw[thin, color=gray] (-\wirelen,\ysh+2) -- (2,\ysh+2);
    \draw[thin, color=gray] (2,\ysh+0) -- (3+\wirelen,\ysh+0);
    \draw[thin, color=gray] (3,\ysh+1) -- (3+\wirelen,\ysh+1);

    \tikzset{VertexStyle/.style = {shape=circle, fill=black, minimum size=5pt, inner sep=1pt, draw}}
    \Vertex[x=0,y=\ysh+0,NoLabel]{a2}
    \Vertex[x=2,y=\ysh+2,NoLabel]{b2}
    \Vertex[x=2,y=\ysh+0,NoLabel]{c2}
    \Vertex[x=3,y=\ysh+1,NoLabel]{e2}

    \Edges(a2,b2,c2)
    \Edges(c2,e2)

    \tikzset{VertexStyle/.style = {shape=rectangle, fill=black, minimum size=4pt, inner sep=1pt, draw}}
    \Vertex[x=1,y=\ysh+1,NoLabel]{ax1}
    \Vertex[x=2,y=\ysh+1,NoLabel]{ax5}
    \Vertex[x=2.5,y=\ysh+0.5,NoLabel]{ax6}

    \Edge(c2)(ax1)

    \draw[thin, color=gray] (\xsh+1+\wirelen+1,0) -- (\xsh+1,0);
    \draw[thin, color=gray] (\xsh+-1,0) -- (\xsh-1-\wirelen,0);
    \draw[thin, color=gray] (\xsh+-1,1) -- (\xsh-1-\wirelen,1);

    \tikzset{VertexStyle/.style = {shape=circle, fill=black, minimum size=5pt, inner sep=1pt, draw}}
    \Vertex[x=\xsh+1,y=0,NoLabel]{a3}
    \Vertex[x=\xsh+1,y=1,NoLabel]{b3}
    \Vertex[x=\xsh-1,y=0,NoLabel]{c3}
    \Vertex[x=\xsh-1,y=1,NoLabel]{e3}

    \tikzset{VertexStyle/.style = {shape=rectangle, fill=black, minimum size=4pt, inner sep=1pt, draw}}
    \Vertex[x=\xsh-1,y=0.5,NoLabel]{ax3}
    \Vertex[x=\xsh+0,y=0.5,NoLabel]{bx3}

    \Edges(e3,c3,b3)
    \Edge(e3)(a3)
\end{tikzpicture}
    \end{subfigure}
    \hspace{0.6cm}
    \begin{subfigure}{0.25\textwidth}
        \begin{tikzpicture}[scale=0.7]
    \GraphInit[vstyle=Classic]
    \SetUpEdge[style=-]
    \SetVertexMath

    \def\wirelen{0.7}

    \draw[thin, color=gray] (-1-\wirelen,0) -- (-1,0);
    \draw[thin, color=gray] (1,0) -- (1+\wirelen,0);
    \draw[thin, color=gray] (1,1) -- (1+\wirelen,1);

    \tikzset{VertexStyle/.style = {shape=circle, fill=black, minimum size=5pt, inner sep=1pt, draw}}
    \Vertex[x=-1,y=0,NoLabel]{a3}
    \Vertex[x=-1,y=1,NoLabel]{b3}
    \Vertex[x=1,y=0,NoLabel]{c3}
    \Vertex[x=1,y=1,NoLabel]{e3}

    \node at (0, 1.5) {$\mathbf{L}^{\dagger}$};

    \tikzset{VertexStyle/.style = {shape=rectangle, fill=black, minimum size=4pt, inner sep=1pt, draw}}
    \Vertex[x=1,y=0.5,NoLabel]{ax3}
    \Vertex[x=0,y=0.5,NoLabel]{bx3}

    \Edges(e3,c3,b3)
    \Edge(e3)(a3)
\end{tikzpicture}
        \vspace{1cm}
    \end{subfigure}
    \end{center}
    \caption{Operations on gadgets $\k$ and $\mathbf{L}$ in the context of $\plholant(\fc \mid \eq)$.}
    \label{fig:gadgetops}
\end{figure}

We will work with gadgets in the context
of $\holant(\fc \,|\, \eq)$. Recall from \autoref{def:gadget} that all dangling edges of such a gadget are attached to equality vertices. Normally, then, composing two such
gadgets would violate
bipartiteness, since the new edge formed by merging the dangling
edges would be between two equality vertices.
However, all edges incident to adjacent equalities must take the same value, so contracting the edge in the special case of
composition in \autoref{def:gadgetops} preserves the Holant
value, and preserves bipartiteness.

The next lemma shows that the above operations on gadgets correspond to the same operations on their
signature matrices. See e.g. \cite{cai_chen_2017}.
\begin{lemma}
    \label{lem:gadgetmatrix}
    \quad
    \begin{itemize}
        \item For $\mathbf{K_1} \in \gf(m,d), \mathbf{K_2} \in \gf(d,w)$, $M(\mathbf{K_1} \circ \mathbf{K_2}) = M(\mathbf{K_1})M(\mathbf{K_2})$;
        \item For $\mathbf{K_1} \in \gf(m_1, d_1), \mathbf{K_2} \in \gf(m_2, d_2)$, $M(\mathbf{K_1} \otimes \mathbf{K_2}) = M(\mathbf{K_1}) \otimes M(\mathbf{K_2})$;
        \item For $\k \in \gf(m,d)$, $M(\mathbf{K}^{\dagger}) = M(\mathbf{K})^{\dagger}$;
    \end{itemize}
\end{lemma}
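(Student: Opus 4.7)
The plan is to prove each of the three identities by directly unpacking the definitions of the signature matrix (\autoref{def:sigmatrix}) and of the Holant sum (\autoref{eqn:def-holant}), and matching terms. None of the three cases requires machinery beyond careful bookkeeping; the main obstacle in all three is tracking the ordering conventions for dangling edges (in particular the reversal convention for the column index, and, for composition, the reversal on $\k_1$'s input side induced by the signature matrix convention).

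For the composition identity, I would fix a row index $\vx \in [q]^m$ and column index $\vw \in [q]^w$ and expand $M(\k_1 \circ \k_2)_{\vx,\vw}$. The sum over assignments to non-dangling edges of $\k_1 \circ \k_2$ factors as a sum over assignments to the $d$ newly internalized edges (those formed by joining $\k_1$'s input dangling edges to $\k_2$'s output dangling edges), nested with the sums over the edges internal to $\k_1$ and $\k_2$ individually. Writing $\vy \in [q]^d$ for the assignment to the internalized edges, with the reversal convention so that $\k_1$'s $j$-th input gets $y_{d-j+1}$ and $\k_2$'s $j$-th output gets $y_j$, the inner product over $V(\k_1) \cup V(\k_2)$ splits as a product of the Holant over $\k_1$ (with outputs $\vx$, inputs $\vy$) and the Holant over $\k_2$ (with outputs $\vy$, inputs $\vw$). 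These are $M(\k_1)_{\vx,\vy}$ and $M(\k_2)_{\vy,\vw}$, and summing over $\vy$ yields matrix multiplication. A small side check is needed to verify that the special-case merging of two adjacent equality vertices $E_a, E_b$ into $E_{a+b-2}$ preserves the Holant value; this is immediate from the fact that $E_a$ forces all $a$ incident edges to agree, so contracting any edge between two equalities leaves exactly one equality constraint on the union of their remaining edges.

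For the tensor product, the underlying multigraph of $\k_1 \otimes \k_2$ is a disjoint union, so the sum in \autoref{eqn:def-holant} factors entirely as a product of the Holant over $\k_1$ times the Holant over $\k_2$. It remains to check that the dangling-edge ordering $\k_1$-out, $\k_2$-out, $\k_2$-in, $\k_1$-in in \autoref{def:gadgetops} is exactly the one that, together with the reversal convention for the column index, identifies the row index of $M(\k_1 \otimes \k_2)$ with concatenation $(\vx^{(1)}, \vx^{(2)})$ and the column index with concatenation $(\vy^{(1)}, \vy^{(2)})$ feeding $\k_1$ and $\k_2$ separately; this matches the definition of the Kronecker product and gives the second identity. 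This convention-matching is the reason the definition of $\otimes$ in \autoref{def:gadgetops} is stated the way it is.

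For the conjugate transpose, the effect of reflecting $\k$ horizontally is to swap left and right (so outputs and inputs swap roles) and to reverse the cyclic traversal order of the boundary, hence to reverse the list of dangling edges. Under the reversed ordering, the first $d$ dangling edges of $\k^\dagger$ are $\k$'s former input edges in reverse, and the last $m$ are $\k$'s former output edges in reverse; combining this reversal with the reversal built into the column-index convention of \autoref{def:sigmatrix} cancels out on each side, so the row index $\vy$ of $M(\k^\dagger)$ corresponds to the column index $\vy$ of $M(\k)$ (same tuple, no reversal), and similarly for $\vx$. The replacement $F \mapsto \overline{F}$ contributes a complex conjugate to each signature evaluation, hence to the entire Holant sum, so $M(\k^\dagger)_{\vy,\vx} = \overline{M(\k)_{\vx,\vy}} = (M(\k)^\dagger)_{\vy,\vx}$. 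The main subtlety throughout, and the only real thing to get right, is the interaction between the two reversals (boundary-traversal reversal from reflection, and the column-index reversal built into the signature matrix); verifying that they compose to the identity is what makes the conjugate transpose work out cleanly.
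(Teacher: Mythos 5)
Your proof is correct, and since the paper does not prove this lemma at all (it simply defers to the cited reference \cite{cai_chen_2017}), your definition-unpacking verification is exactly the standard argument being deferred to. You also correctly handle the only delicate points: the column-index reversal of \autoref{def:sigmatrix} interacting with the top-to-top edge matching in composition, the cancellation of the two reversals under $\dagger$, and the fact that merging adjacent equality vertices preserves the Holant value.
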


\begin{definition}
\label{def:planargadget}
    Gadget $\k$ is \emph{planar} if its underlying multigraph has an embedding such that no
    edges (dangling or not) cross, and the dangling edges are in cyclic order in the outer face.
\end{definition}
    
For a plane embedding of a planar gadget, its dangling edges are considered in counterclockwise
cyclic order. We draw an embedded gadget's
output dangling edges on the left in order from top to
bottom (hence the first dangling edge is drawn at the top left),
and its input dangling edges on the right in order from bottom
to top. When constructing a gadget's signature matrix (\autoref{def:sigmatrix}), we consider the
input dangling edges in reverse order, so from top to bottom. 
When we compose $\mathbf{K_1} \circ \mathbf{K_2}$, this input
reversal ensures we match $\mathbf{K_1}$'s top input with
$\mathbf{K_2}$'s top output and so on down to
$\mathbf{K_1}$'s bottom input with
$\mathbf{K_2}$'s bottom output, so $\circ$ preserves planarity.

$k$-labeled \#CSP$(\fc)$ instances $K$ correspond to
$\holant(\fc \mid \eq)$ gadgets $\k \in \gf(m,d)$, for any $m+d = k$.
Given $K$, with corresponding $\holant(\fc \mid \eq)$ signature grid $\Omega_K$, attach
a dangling edge to the equality vertex in $\Omega_K$ representing variable $v$ for every label on $v$, matching the
dangling edge order with $K$'s label order. Letting $m$ dangling edges be outputs and $d$ be 
inputs, this creates gadget $\k \in \gf(m,d)$. For 
$\vx \in V(\fc)^{m}$ and $\vy \in V(\fc)^{d}$, and $\psi: [k] \to V(\fc)$ defined by
$\psi(i) = x_i$ for $1 \leq i \leq m$ and $\psi(i) = y_i$ for $m < i \leq k$,
we have $M(\k)_{\vx,\vy} = Z^{\psi}(K)$. In particular, for $\k \in \gf(1,0)$ and $x \in V(\fc)$,
\begin{equation}
    Z^x(K) = M(\k)_x. \label{eq:onelabelequiv}
\end{equation}

\section{The Planar Gadget Decomposition}
\label{sec:decomposition}

Throughout, let $F \in \c^{[q]^n}$ denote a constraint function
in $\fc$, a set of constraint functions.
All gadgets will be in the context of $\plholant(\fc\,|\, \eq)$.

If $\fc = \{A_X\}$ for undirected, unweighted graph $X$, the $\holant(A_X \mid \eq)$ 
gadgets $\k$ corresponding to
$k$-labeled \#CSP$(\fc)$ instances as discussed at the end of \autoref{sec:preliminaries}
(see \autoref{fig:gadget_vs_bilabeled}, but with the addition of dangling edges) can be identified
with the ``bi-labeled graphs'' introduced in \cite{planar}. Furthermore, the ``homomorphism
matrices'' of these bi-labeled graphs are exactly the signature matrices of these gadgets, and the
three bi-labeled graph operations $\circ, \otimes, *$ correspond to our gadget operations
$\circ, \otimes, \dagger$ (all dangling edges are incident to equality vertices, hence are always
contracted during composition, matching bi-labeled graph composition). The one difference is that
bi-labeled graphs do not contain the 
degree-2 vertices assigned $A_X$ in our $\holant(A_X \mid \eq)$ gadgets. 
However, subdividing the edges of a bi-labeled graph to create the corresponding gadget
preserves planarity of the underlying
graph. Thus planar bi-labeled graphs (those satisfying \cite[{Definition 5.3}]{planar}) correspond to
planar gadgets. We will make implicit use of this fact throughout this section.

Generalizing graph homomorphism to \#CSP entails replacing $A_X$ with an arbitrary set $\fc$ of 
constraint functions, and replacing the degree-2 vertices assigned $A_X$ with arbitrary-degree
constraint vertices assigned signatures from $\fc$. This adds a layer of complexity to the below planar
gadget decomposition relative to the bi-labeled graph decomposition in \cite{planar}. The latter
decomposition treated the (hidden) degree-2 $A_X$ vertices as edges and extracted only equality
vertices, but higher-degree constraint vertices must now be extracted separately.

\begin{definition}[$\pn$, $\pn(m,d)$]
    Let $\pn$ be the set of all planar gadgets in the context of
    $\holant(\fc\,|\,\eq)$. Recall that all dangling edges of
    such a gadget are attached to vertices assigned signatures
    in $\eq$.

    For $m,d \geq 0$, let $\pn(m,d) \subseteq \pn$ be the subset of gadgets with $m$
    output dangling edges and $d$ input dangling edges.
\end{definition}

Unlike the special case of (undirected) graph homomorphism, our constraint functions can be
\emph{asymmetric}: their
value is not necessarily preserved under permutation of their inputs. 
This adds a layer of complexity to our planar gadget decomposition relative to the analogous
decomposition in \cite{planar}.
By planarity, permutations that cross
$F$'s input edges are not allowed, but the dihedral group actions -- rotations and reflection -- are
allowed, producing a new constraint function. 
\begin{definition}[$F^{(r)}$, $F^{\top}$, $F^{\dagger}$]
    For $F \in \c^{[q]^n}$, define $F^{(0)} = F$, and for $1 \leq r \leq n-1$ define the $r$th rotation
    $F^{(r)} \in \c^{[q]^n}$ by
    \[
        F^{(r)}_{x_1^n} = F_{x_{r+1}^n x_1^r}.
    \]
    Define the reflection $F^{\top}$ by $(F^{\top})_{x_1^n} = F_{x_n^1}$, and define $F^{\dagger} = \overline{F^{\top}}$
\end{definition}
Observe that $\left(F^{(r)}\right)^{(n-r)} = F$.
The following fact motivates the notation $F^{\top}$: 
For any $m+d = \arity(F)$ and $x_1^d, y_1^m$,
\[
    \left(F^{m,d}\right)^{\top}_{x_1^d,y_1^m}
    = \left(F^{m,d}\right)_{y_1^m,x_1^d}
    = F_{y_1^mx_d^1}
    = (F^{\top})_{x_1^dy_m^1}
    = (F^{\top})^{d,m}_{x_1^d,y_1^m},
\]
hence, for any $m+d = \arity(F)$,
\begin{equation}
    \label{eq:transposeequiv}
    \left(F^{m,d}\right)^{\top} = (F^{\top})^{d,m},
    \qquad
    \left(F^{m,d}\right)^{\dagger} = (F^{\dagger})^{d,m},
\end{equation}
where the second equation follows by taking the entrywise
conjugate of the first.

We now define some useful special gadgets for $\holant(\fc \,|\, \eq)$.
\begin{definition}[$\e^{m,d}$, $(\f^{(r)})^{m,d}$, $\f^{m,d}$]
    For $m,d \geq 0$, let $\e^{m,d}$ be the gadget consisting of a single vertex, assigned $E_{m+d}$, with
    $m$ output and $d$ input dangling edges.

    For $m,d \geq 0$ and $(m+d)$-ary signature function $F$, let $(\f^{(r)})^{m,d}$ be the gadget consisting of a central degree-$(m+d)$ vertex assigned $F$, adjacent
    to $m$ vertices $u_1, \ldots, u_m$ drawn from top to bottom on the left and $d$ vertices $v_1, \ldots, v_m$ drawn from bottom to top on the
    right. The central vertex's inputs proceed counterclockwise, and its first input is the edge to
    $u_r$ if $r \leq m$ or to $v_{r-m}$ if $r > m$.
    The $i$th output and input dangling edges are incident to $u_i$ and $v_i$, respectively,
    and each $u_i$ and $v_i$ is assigned $E_2$.
    Call the $u_i$ and $v_i$ vertices and their edges to
    the central vertex the \emph{arms} of $\f^{m,d}$.

    Define $\f^{m,d} = (\f^{(0)})^{m,d}$
    and $\ii = \e^{1,1}$.
\end{definition}

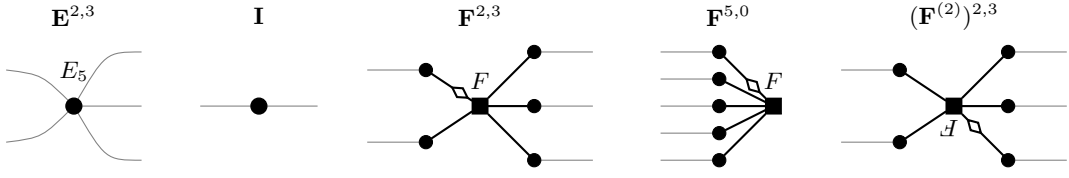
\begin{figure}[ht!]
    \center
    \begin{tikzpicture}[scale=.6]
    \tikzstyle{every node}=[font=\small]
    \GraphInit[vstyle=Classic]
    \SetUpEdge[style=-]
    \SetVertexMath

    \def\ary{1.2}
    \def\bry{0.4}
    \def\cry{-0.4}
    \def\dry{-1.2}

    \def\aly{0.8}
    \def\bly{0}
    \def\cly{-0.8}

    \def\sx{0}

    \def\xsh{9}

    \draw[thin, color=gray] (\sx, 0) .. controls (\sx+0.7, \ary) .. (\sx+1.5, \ary);
    \draw[thin, color=gray] (\sx, 0) .. controls (\sx+0.7, 0) .. (\sx+1.5, 0);
    \draw[thin, color=gray] (\sx, 0) .. controls (\sx+0.7, -\ary) .. (\sx+1.5, -\ary);

    \draw[thin, color=gray] (\sx-1.5, \aly) .. controls (\sx-0.7, \aly-0.1) .. (\sx, 0);
    \draw[thin, color=gray] (\sx-1.5, \cly) .. controls (\sx-0.7, \cly+0.1) .. (\sx, 0);

    \Vertex[x=\sx,y=0,L={E_5},Lpos=90,Ldist=0.1cm]{v22}

    \node at (\sx, \ary+0.8) {$\e^{2,3}$};

    \def\ix{2.8}
    \draw[thin, color=gray] (\ix, 0) -- (\ix+2.6, 0);
    \Vertex[x=\ix+1.3, y=0, NoLabel]{iv}

    \node at (\ix+1.3,\ary+0.8) {$\ii$};
    \draw[thin, color=gray] (\xsh+\sx+1.2, \ary) -- (\xsh+\sx+2.5, \ary);
    \draw[thin, color=gray] (\xsh+\sx+1.2, 0) -- (\xsh+\sx+2.5, 0);
    \draw[thin, color=gray] (\xsh+\sx+1.2, \dry) -- (\xsh+\sx+2.5, \dry);

    \draw[thin, color=gray] (\xsh+\sx-2.5, \aly) -- (\xsh+\sx-1.2, \aly);
    \draw[thin, color=gray] (\xsh+\sx-2.5, \cly) -- (\xsh+\sx-1.2, \cly);

    \tikzset{VertexStyle/.style = {shape=rectangle, fill=black, minimum size=6pt, inner sep=1pt, draw}}
    \Vertex[x=\xsh+\sx,y=0,L={F},Lpos=90]{v2}
    \tikzset{VertexStyle/.style = {shape=circle, fill=black, minimum size=5pt, inner sep=1pt, draw}}

    \Vertex[x=\xsh+\sx+1.2,y=\ary,NoLabel,Lpos=300,Ldist=-0.1cm]{f1}
    \Vertex[x=\xsh+\sx+1.2,y=0,NoLabel,Lpos=300,Ldist=-0.1cm]{f2}
    \Vertex[x=\xsh+\sx+1.2,y=\dry,NoLabel,Lpos=300,Ldist=-0.1cm]{f3}

    \Vertex[x=\xsh+\sx-1.2,y=\aly,NoLabel,Lpos=300,Ldist=-0.1cm]{l1}
    \Vertex[x=\xsh+\sx-1.2,y=\cly,NoLabel,Lpos=300,Ldist=-0.1cm]{l2}
    \Edge(v2)(f1)
    \Edge(v2)(f2)
    \Edge(v2)(f3)
    \DEdge{v2}{l1}
    \Edge(v2)(l2)

    \node at (\sx + \xsh, \ary+0.8) {$\f^{2,3}$};

    \def\fnx{15.5}

    \draw[thin, color=gray] (\fnx+\sx-2.5, \ary) -- (\fnx+\sx-1.2, \ary);
    \draw[thin, color=gray] (\fnx+\sx-2.5, 0) -- (\fnx+\sx-1.2, 0);
    \draw[thin, color=gray] (\fnx+\sx-2.5, \dry) -- (\fnx+\sx-1.2, \dry);

    \draw[thin, color=gray] (\fnx+\sx-2.5, \aly-0.2) -- (\fnx+\sx-1.2, \aly-0.2);
    \draw[thin, color=gray] (\fnx+\sx-2.5, \cly+0.2) -- (\fnx+\sx-1.2, \cly+0.2);

    \tikzset{VertexStyle/.style = {shape=rectangle, fill=black, minimum size=6pt, inner sep=1pt, draw}}
    \Vertex[x=\fnx+\sx,y=0,L={F},Lpos=90]{v3}
    \tikzset{VertexStyle/.style = {shape=circle, fill=black, minimum size=5pt, inner sep=1pt, draw}}

    \Vertex[x=\fnx+\sx-1.2,y=\ary,NoLabel,Lpos=300,Ldist=-0.1cm]{ff1}
    \Vertex[x=\fnx+\sx-1.2,y=0,NoLabel,Lpos=300,Ldist=-0.1cm]{ff2}
    \Vertex[x=\fnx+\sx-1.2,y=\dry,NoLabel,Lpos=300,Ldist=-0.1cm]{ff3}

    \Vertex[x=\fnx+\sx-1.2,y=\aly-0.2,NoLabel,Lpos=300,Ldist=-0.1cm]{ll1}
    \Vertex[x=\fnx+\sx-1.2,y=\cly+0.2,NoLabel,Lpos=300,Ldist=-0.1cm]{ll2}
    \DEdge{v3}{ff1}
    \Edge(v3)(ff2)
    \Edge(v3)(ff3)
    \Edge(v3)(ll1)
    \Edge(v3)(ll2)

    \node at (\sx + \fnx-1, \ary+0.8) {$\f^{5,0}$};

    \def\frx{\fnx+\sx+4}
    \draw[thin, color=gray] (\frx+1.2, \ary) -- (\frx+2.5, \ary);
    \draw[thin, color=gray] (\frx+1.2, 0) -- (\frx+2.5, 0);
    \draw[thin, color=gray] (\frx+1.2, \dry) -- (\frx+2.5, \dry);

    \draw[thin, color=gray] (\frx-2.5, \aly) -- (\frx-1.2, \aly);
    \draw[thin, color=gray] (\frx-2.5, \cly) -- (\frx-1.2, \cly);

    \tikzset{VertexStyle/.style = {shape=rectangle, fill=black, minimum size=6pt, inner sep=1pt, draw}}
    \Vertex[x=\frx,y=0,NoLabel]{v5}
    \node[label={[rotate=180]$F$}] at (\frx-0.1,-0.3) {};
    \tikzset{VertexStyle/.style = {shape=circle, fill=black, minimum size=5pt, inner sep=1pt, draw}}

    \Vertex[x=\frx+1.2,y=\ary,NoLabel,Lpos=300,Ldist=-0.1cm]{fff1}
    \Vertex[x=\frx+1.2,y=0,NoLabel,Lpos=300,Ldist=-0.1cm]{fff2}
    \Vertex[x=\frx+1.2,y=\dry,NoLabel,Lpos=300,Ldist=-0.1cm]{fff3}

    \Vertex[x=\frx-1.2,y=\aly,NoLabel,Lpos=300,Ldist=-0.1cm]{lll1}
    \Vertex[x=\frx-1.2,y=\cly,NoLabel,Lpos=300,Ldist=-0.1cm]{lll2}
    \Edge(v5)(fff1)
    \Edge(v5)(fff2)
    \DEdge{v5}{fff3}
    \Edge(v5)(lll1)
    \Edge(v5)(lll2)

    \node at (\frx, \ary+0.8) {$(\f^{(2)})^{2,3}$};
\end{tikzpicture}
    \caption{Examples of the fundamental gadgets $\e^{m,d}$, $\f^{m,d}$, $(\f^{(r)})^{m,d}$. The diamond indicates the
    first input to asymmetric $F$.}
    \label{fig:eandf}
\end{figure}

See \autoref{fig:eandf}. As with all gadgets, $(\f^{(r)})^{m,d}$'s inputs proceed counterclockwise for
the purpose of organizing its signature matrix, and the first input to $(\f^{(r)})^{m,d}$
is the top output dangling edge, attached to
vertex $u_1$. When $r=0$, this agrees with the defined input order of the central vertex of $(\f^{(0)})^{m,d} = \f^{m,d}$. As always, we may ignore
vertices assigned $E_2$ when computing the Holant value,
so $M(\f^{m,d}) = F^{m,d}$. When central vertex's orientation is rotated by $r$ edges, we have
$M\left((\f^{(r)})^{m,d}\right) = (F^{(r)})^{m,d}$.
$F^{(r)}$ is generally not the same
constraint function as $F$, so we will need to address this
subtlety in the proofs of \autoref{thm:generatepn} and
\autoref{thm:quantumholant}.

Since $E_{m+d}$ is completely symmetric in the order of its inputs, this subtlety does not apply to $\e^{m,d}$. We
have $M(\e^{m,d}) = E^{m,d}$ regardless of the orientation
of the central vertex.

The central vertex in the gadget $(\f^{m,d})^{\dagger} \in \pn(d,m)$ can be viewed either as hosting
constraint function $\overline{F}$ oriented clockwise, or $F^{\dagger}$ oriented counterclockwise.
See \autoref{fig:dagger}. However, recall that, for the purpose of organizing the signature matrix,
we always consider the \emph{gadget's} inputs on the dangling edges counterclockwise, which is reversed
relative to the clockwise
inputs to the \emph{central vertex} in the former case. So $M((\f^{m,d})^{\dagger}) = (F^{m,d})^\dagger
= (F^{\dagger})^{m,d}$ by \eqref{eq:transposeequiv}.

\begin{figure}[ht!]
    \center
    \begin{tikzpicture}[scale=.6]
    \tikzstyle{every node}=[font=\small]
    \GraphInit[vstyle=Classic]
    \SetUpEdge[style=-]
    \SetVertexMath

    \def\ary{1.2}
    \def\bry{0.4}
    \def\cry{-0.4}
    \def\dry{-1.2}

    \def\aly{0.8}
    \def\bly{0}
    \def\cly{-0.8}

    \def\sx{0}

    \def\xsh{9}

    \draw[thin, color=gray] (\xsh+\sx-2.5, \ary) -- (\xsh+\sx-1.2, \ary);
    \draw[thin, color=gray] (\xsh+\sx-2.5, 0) -- (\xsh+\sx-1.2, 0);
    \draw[thin, color=gray] (\xsh+\sx-2.5, \dry) -- (\xsh+\sx-1.2, \dry);

    \draw[thin, color=gray] (\xsh+\sx+1.2, \aly) -- (\xsh+\sx+2.5, \aly);
    \draw[thin, color=gray] (\xsh+\sx+1.2, \cly) -- (\xsh+\sx+2.5, \cly);

    \tikzset{VertexStyle/.style = {shape=rectangle, fill=black, minimum size=6pt, inner sep=1pt, draw}}
    \Vertex[x=\xsh+\sx,y=0,L={\reflectbox{$\overline{F}$}},Lpos=90]{v2}
    \tikzset{VertexStyle/.style = {shape=circle, fill=black, minimum size=5pt, inner sep=1pt, draw}}

    \Vertex[x=\xsh+\sx-1.2,y=\ary,NoLabel,Lpos=300,Ldist=-0.1cm]{f1}
    \Vertex[x=\xsh+\sx-1.2,y=0,NoLabel,Lpos=300,Ldist=-0.1cm]{f2}
    \Vertex[x=\xsh+\sx-1.2,y=\dry,NoLabel,Lpos=300,Ldist=-0.1cm]{f3}

    \Vertex[x=\xsh+\sx+1.2,y=\aly,NoLabel,Lpos=300,Ldist=-0.1cm]{l1}
    \Vertex[x=\xsh+\sx+1.2,y=\cly,NoLabel,Lpos=300,Ldist=-0.1cm]{l2}
    \Edge(v2)(f1)
    \Edge(v2)(f2)
    \Edge(v2)(f3)
    \DEdge{v2}{l1}
    \Edge(v2)(l2)

    \node at (\sx + \xsh, \ary+0.8) {$(\f^{2,3})^\dagger$};

    \def\frx{\sx+9+\xsh}
    \draw[thin, color=gray] (\frx-2.5, \ary) -- (\frx-1.2, \ary);
    \draw[thin, color=gray] (\frx-2.5, 0) -- (\frx-1.2, 0);
    \draw[thin, color=gray] (\frx-2.5, \dry) -- (\frx-1.2, \dry);

    \draw[thin, color=gray] (\frx+1.2, \aly) -- (\frx+2.5, \aly);
    \draw[thin, color=gray] (\frx+1.2, \cly) -- (\frx+2.5, \cly);

    \tikzset{VertexStyle/.style = {shape=rectangle, fill=black, minimum size=6pt, inner sep=1pt, draw}}
    \Vertex[x=\frx,y=0,L={F^\dagger},Lpos=90]{v5}
    \tikzset{VertexStyle/.style = {shape=circle, fill=black, minimum size=5pt, inner sep=1pt, draw}}

    \Vertex[x=\frx-1.2,y=\ary,NoLabel,Lpos=300,Ldist=-0.1cm]{fff1}
    \Vertex[x=\frx-1.2,y=0,NoLabel,Lpos=300,Ldist=-0.1cm]{fff2}
    \Vertex[x=\frx-1.2,y=\dry,NoLabel,Lpos=300,Ldist=-0.1cm]{fff3}

    \Vertex[x=\frx+1.2,y=\aly,NoLabel,Lpos=300,Ldist=-0.1cm]{lll1}
    \Vertex[x=\frx+1.2,y=\cly,NoLabel,Lpos=300,Ldist=-0.1cm]{lll2}
    \Edge(v5)(fff3)
    \Edge(v5)(fff2)
    \DEdge{v5}{fff1}
    \Edge(v5)(lll1)
    \Edge(v5)(lll2)

    \node at (\frx, \ary+0.8) {$(\f^{\dagger})^{2,3}$};

    \node at (\sx+4.5+\xsh, 0) {$=$};
\end{tikzpicture}
    \caption{Equivalence between the gadgets $(\f^{2,3})^\dagger$ and $(\f^\dagger)^{2,3}$ 
        (the latter referring to the gadget created from the signature $F^{\dagger}$). The central
        vertices are oriented clockwise and counterclockwise, respectively.}
    \label{fig:dagger}
\end{figure}
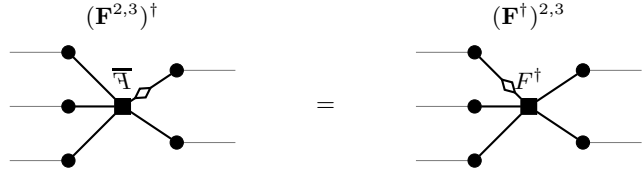

For a set of gadgets $\gf$ or matrices $\mathcal{M}$ and a set of gadget or matrix operations $\mathcal{O}$, let $\langle \gf \rangle_{\mathcal{O}}$ or $\langle \mathcal{M} \rangle_{\mathcal{O}}$ denote the closure of $\gf$ or $\mathcal{M}$ under $\mathcal{O}$, respectively.
$\e^{m,d}$ is the gadget corresponding to the bi-labeled graph called $\mathbf{M}^{m,d}$ in \cite{planar}.
Since gadget operations are equivalent to bi-labeled graph operations, we have the following.
\begin{lemma}[{\cite[{Lemma 3.18}]{planar}}]
    \label{lem:mclosure}
    For all $m,d \geq 0$, $\e^{m,d} \in \tcwdn{\e^{1,0}, \e^{1,2}}$.
\end{lemma}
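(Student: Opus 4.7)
The plan is to induct on $n = m + d$, constructing $\e^{m,d}$ from $\e^{1,0}$ and $\e^{1,2}$ using the operations $\circ, \otimes, \dagger$. I would organize the argument around three building blocks: the identity $\ii = \e^{1,1}$, the ``fans'' $\e^{1,d}$, and finally the general $\e^{m,d}$.

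First, I would establish $\ii = \e^{1,1}$ as $\e^{1,2} \circ (\e^{1,2})^{\dagger}$. Geometrically this glues two $E_3$-vertices along two parallel edges; by the contraction rule of \autoref{def:gadgetops}, one edge contracts to produce a single $E_{3+3-2} = E_4$-vertex, and the remaining parallel edge becomes a self-loop that forces two of its inputs equal, effectively reducing the arity to $E_2$. A direct matrix check confirms $E^{1,2} \cdot E^{2,1} = I$. Next, I would build $\e^{1,d}$ for all $d \geq 0$ by induction, with base cases $d=0$ (given), $d=1$ ($=\ii$), and $d=2$ (given). For $d \geq 3$, I take
\[
    \e^{1,d} \;=\; \e^{1,2} \circ \bigl(\ii \otimes \e^{1,d-1}\bigr).
\]
Using \autoref{lem:gadgetmatrix} and \eqref{eq:tensordistribute}, the signature matrix is
\[
    E^{1,2} \cdot (I \otimes E^{1,d-1})_{x, uv_1\ldots v_{d-1}}
    \;=\; \sum_{yz} [x=y=z][y=u][z=v_1=\cdots=v_{d-1}]
    \;=\; [x=u=v_1=\cdots=v_{d-1}],
\]
which is $E^{1,d}$. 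Planarity is preserved since the construction is a rooted binary tree of $E_3$-vertices.

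Having $\e^{1,m}$ for all $m$, I apply $\dagger$ to obtain $\e^{m,1} = (\e^{1,m})^{\dagger}$, using that $E_{m+1}$ is symmetric so the reflection does not change the signature. Then for $m, d \geq 1$ I set
\[
    \e^{m,d} \;=\; \e^{m,1} \circ \e^{1,d},
\]
which glues a single pair of dangling edges; by the contraction rule, the resulting $E_{m+1}$- and $E_{d+1}$-vertices merge into one $E_{m+d}$-vertex, and a quick product computation $E^{m,1} \cdot E^{1,d} = E^{m,d}$ confirms correctness. The remaining boundary cases are handled by $\e^{m,0} = \e^{m,1} \circ \e^{1,0}$ for $m \geq 1$ and $\e^{0,d} = (\e^{d,0})^{\dagger}$ for $d \geq 1$.

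The only real subtlety is bookkeeping the dangling-edge ordering induced by $\otimes$ (top-output, bottom-output, bottom-input, top-input per \autoref{def:gadgetops}), and confirming that any reordering of inputs does not affect the final signature: this is automatic because $E_n$ is completely symmetric in its arguments. Planarity is preserved at every step since each intermediate gadget is a tree. Thus every $\e^{m,d}$ lies in $\tcwdn{\e^{1,0}, \e^{1,2}}$, completing the proof.
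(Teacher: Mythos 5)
Your construction is correct, but note that the paper does not actually reprove this statement: it identifies the gadget $\e^{m,d}$ with the bi-labeled graph $\mathbf{M}^{m,d}$ of \cite{planar}, observes that gadget $\circ,\otimes,\dagger$ coincide with the bi-labeled graph operations, and imports their Lemma 3.18 wholesale. Your explicit induction -- first $\ii=\e^{1,2}\circ(\e^{1,2})^{\dagger}$, then the fans $\e^{1,d}=\e^{1,2}\circ(\ii\otimes\e^{1,d-1})$, then $\e^{m,1}=(\e^{1,m})^{\dagger}$ and $\e^{m,d}=\e^{m,1}\circ\e^{1,d}$ -- is essentially the construction underlying the cited result, carried out directly in the gadget formalism; what it buys is a self-contained membership certificate, at the cost of the gadget-level bookkeeping that the bi-labeled picture hides. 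Two small points of that bookkeeping deserve care. First, in $\e^{1,2}\circ(\e^{1,2})^{\dagger}$ the two equality vertices become adjacent along \emph{two} parallel edges, so after the contraction rule of \autoref{def:gadgetops} you are left with an $E_4$ vertex carrying a self-loop; your signature computation $E^{1,2}E^{2,1}=I$ is right, but to conclude the resulting gadget is literally $\ii=\e^{1,1}$ you need the (natural, implicitly used) convention that a self-loop on an equality vertex is absorbed, lowering $E_k$ to $E_{k-2}$ -- in the bi-labeled-graph formalism this issue never arises because composition identifies vertices rather than creating edges, which is one reason the paper leans on \cite{planar}. Second, your case analysis omits the degenerate pair $m=d=0$, which (if one wants it at all) is obtained as $\e^{0,1}\circ\e^{1,0}$ under the same merging rule. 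Neither point affects the correctness of your argument for the cases used elsewhere in the paper, and your appeal to the total symmetry of $E_n$ to dismiss dangling-edge ordering matches the paper's own remark that $M(\e^{m,d})=E^{m,d}$ regardless of orientation.
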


We next show that we can apply arbitrary rotations to
$\f^{m,d}$ and pivot any number of its dangling edges between input and output.
\begin{lemma}
    \label{lem:rotategadget}
    For any $n$-ary constraint function $F$,
    $(\f^{(r)})^{m_2,d_2} \in \tcwdn{\e^{1,0}, \e^{1,2}, \f^{m_1,d_1}}$ 
    for all $r,m_1,d_1,m_2,d_2 \geq 0$ with $m_1+d_1 = n$, $m_2+d_2 = n$.
\end{lemma}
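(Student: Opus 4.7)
The approach is to leverage that cups $\e^{2,0}$, caps $\e^{0,2}$, and identity wires $\ii = \e^{1,1}$ all belong to $\tcwdn{\e^{1,0}, \e^{1,2}}$ by \autoref{lem:mclosure}. Hence any planar composition or tensor product of these with $\f^{m_1, d_1}$ stays inside $\tcwdn{\e^{1,0}, \e^{1,2}, \f^{m_1, d_1}}$, so it suffices to exhibit a sequence of such operations whose signature matrix equals $(F^{(r)})^{m_2, d_2}$.

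First I would verify two elementary pivot moves by directly computing signature matrices via \autoref{lem:gadgetmatrix}. The first, $(\f^{m, d} \otimes \ii) \circ (\ii^{\otimes (d-1)} \otimes \e^{2, 0})$, takes $\f^{m,d}$ and pivots its bottom input around to become a new bottom output, yielding a gadget with signature matrix $F^{m+1, d-1}$ (so the central vertex is \emph{not} rotated). The second, $(\e^{0, 2} \otimes \ii^{\otimes (m-1)}) \circ (\ii \otimes \f^{m, d})$, pivots the top output up and around to become a new top input; a careful Kronecker-product computation shows that the resulting signature matrix is $(F^{(n-1)})^{m-1, d+1}$, so this move also rotates the central vertex by $-1 \pmod{n}$. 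Taking daggers of these two moves provides four pivot operations in total: two that change the arity split $(m,d)$ by $\pm 1$ without rotating the central vertex, and two that do the same while shifting the rotation index by $\pm 1 \pmod{n}$.

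Second, I would combine the moves: beginning from $\f^{m_1, d_1}$, first apply non-rotating pivots to reach some convenient arity split; then apply rotation-inducing pivots the appropriate number of times (interleaving with non-rotating pivots as needed to keep the arity split admissible at each step) to accumulate the desired rotation $r$; and finally apply non-rotating pivots again to reach arity split $(m_2, d_2)$. Because every constituent move uses only $\tcwdn{\e^{1,0}, \e^{1,2}}$-closure elements together with $\f^{m_1, d_1}$, the resulting gadget lies in $\tcwdn{\e^{1,0}, \e^{1,2}, \f^{m_1, d_1}}$ as desired.

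The main obstacle is the signature-matrix bookkeeping. The bit-reversed column-index convention of \autoref{def:flatten}, combined with the counterclockwise dangling-edge ordering of \autoref{def:planargadget} and the specific placement of each cup or cap within the tensor products, means that a given cup-cap insertion can induce either a ``clean'' arity pivot or one that simultaneously rotates the central vertex, depending on where it is placed. Verifying exactly which rotation $r$ is produced by each pivot, and ensuring planarity is preserved throughout (no crossings, with dangling edges maintained in cyclic order in the outer face), will require systematic case checking of the Kronecker-structured matrix products against the definition of $F^{(r)}$.
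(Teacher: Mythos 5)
Your construction is essentially the paper's: both arguments use \autoref{lem:mclosure} to place $\e^{2,0}$, $\e^{0,2}$, $\ii$ inside the closure and then pivot the dangling edges of $\f^{m_1,d_1}$ one at a time with cups and caps, where pivots around the bottom leave the central vertex's orientation unchanged and pivots around the top shift the rotation index. Your two explicit moves are correct, including the value $(F^{(n-1)})^{m-1,d+1}$ for the top pivot.

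The one step that fails as written is ``taking daggers of these two moves.'' The gadget operation ${\dagger}$ of \autoref{def:gadgetops} conjugates every signature, so the dagger of, say, $(\f^{m,d}\otimes\ii)\circ(\ii^{\otimes(d-1)}\otimes\e^{2,0})=\f^{m+1,d-1}$ is $(\f^{m+1,d-1})^{\dagger}$, whose central signature is $\overline{F}$ (equivalently $F^{\dagger}$), not $F$; since the lemma concerns an arbitrary complex-valued $F$ with no conjugate-closure hypothesis, these dagger images are not of the form $(\f^{(r)})^{m',d'}$ and cannot serve as the two missing pivots. The repair is immediate and is exactly what the paper does: write the mirrored compositions directly, e.g.\ $(\ii^{\otimes(m-1)}\otimes\e^{0,2})\circ(\f^{m,d}\otimes\ii)=\f^{m-1,d+1}$ for the bottom output-to-input pivot, and the analogous top version with $\e^{2,0}$, all built only from elements of $\tcwdn{\e^{1,0},\e^{1,2}}$. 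Alternatively, your two verified moves already suffice: the bottom cup raises $m$ by one without rotating and the top cap lowers $m$ by one while rotating by $-1$, so alternating them realizes every rotation $r$ (as $-1$ generates $\mathbb{Z}_n$) and every split $(m_2,d_2)$, with the usual care when $m=0$ or $d=0$.
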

\begin{proof}
    By \autoref{lem:mclosure}, $\e^{2,0}, \e^{0,2}, \ii \in \tcwdn{\e^{1,0}, \e^{1,2}}$. 
    If $d_1 > d_2$ and $m_2 > m_1$, we use $d_1-d_2 = m_2-m_1$
    horizontally nested copies of $\e^{2,0}$ to pivot the
    bottom $m_2-m_1$ input dangling edges to become outputs. Indeed,
    \begin{align*}
        \f^{m_2,d_2} = (\f^{m_1,d_1} \otimes \ii^{\otimes d_1-d_2})
        &\circ (\ii^{\otimes d_1-1} \otimes \e^{2,0} \otimes \ii^{\otimes d_1-d_2-1})\\
        &\circ \ldots
        \circ (\ii^{\otimes d_1-(d_1-d_2-1)} \otimes \e^{2,0} \otimes \ii^{\otimes 1})
        \circ (\ii^{\otimes d_2} \otimes \e^{2,0}).
    \end{align*}
    See \autoref{fig:reversewire} for an illustration. Similarly, if $m_1 > m_2$ and $d_2 > d_1$, 
    we apply horizontally reflected reasoning, using $\e^{2,0}$, to pivot the
    bottom $d_2-d_1$ output dangling edges to become inputs.
    \begin{align*}
        \f^{m_2,d_2} = (\ii^{\otimes m_2} \otimes \e^{0,2})
        &\circ (\ii^{\otimes d_1-(m_1-m_2-1)} \otimes \e^{0,2} \otimes\ii^{\otimes 1}) \\
        &\circ \ldots
        \circ (\ii^{\otimes m_1-1} \otimes \e^{0,2} \otimes \ii^{\otimes m_1-m_2-1})
        \circ (\f^{m_1,d_1} \otimes \ii^{\otimes m_1-m_2}).
    \end{align*}
    Applying the vertical reflection of the reasoning for
    the above two cases, we see that we may also pivot
    the top $m$ outputs to become inputs, and the top $m$
    inputs to be outputs, for any $m$. This moves the
    center vertex's first input from the edge incident to the top output
    dangling edge, inducing $(r)$-type rotation. 
    Combining these four operations -- rotating inputs to outputs or outputs to inputs on the top or bottom -- we obtain arbitrary rotations of the gadget's arms, giving any $(r)$, and, given any such
    rotation, arbitrary partitions of the arms into
    contiguous output and input dangling edges, giving
    any $m_2$ and $d_2$.
\end{proof}

\begin{figure}[ht!]
    \center
    \begin{tikzpicture}[scale=.68]
    \GraphInit[vstyle=Classic]
    \SetUpEdge[style=-]
    \SetVertexMath

    \def\wirelen{1}
    \def\fwirelen{2.5}
    \def\wiregap{0.8}
    \def\xa{0}
    \def\xb{\xa + 2*\fwirelen + \wiregap}
    \def\xc{\xb + 2*\wirelen + \wiregap}
    \def\xd{\xc + 2*\wirelen + \wiregap}

    \def\ygap{0.8}
    \def\ya{0}
    \def\yb{\ya + \ygap}
    \def\yc{\yb + \ygap}
    \def\yd{\yc + \ygap}
    \def\ye{\yd + \ygap}
    \def\yf{\ye + \ygap}
    \def\yg{\yf + \ygap}

    \def\twirelen{2 * \wirelen}
    \def\evertexy{\yc + \ygap/2}

    \draw[thin, color=gray] (\xa, \ya) -- ({\xa + (2*\fwirelen)}, \ya);
    \draw[thin, color=gray] (\xa, \yb) -- ({\xa + (2*\fwirelen)}, \yb);
    \draw[thin, color=gray] (\xa, \yc) -- ({\xa + (2*\fwirelen)}, \yc);
    \Vertex[x=\xa+\fwirelen,y=\ya,NoLabel]{xaa}
    \Vertex[x=\xa+\fwirelen,y=\yb,NoLabel]{xab}
    \Vertex[x=\xa+\fwirelen,y=\yc,NoLabel]{xac}
    \draw[thin, color=gray] ({\xa+\fwirelen/2},{\yf+\ygap/2}) -- ({\xa},{\yf+\ygap/2});
    \draw[thin, color=gray] ({\xa+\fwirelen/2},{\yd+\ygap/2}) -- ({\xa},{\yd+\ygap/2});
    \draw[thin, color=gray] ({\xa+3*\fwirelen/2},\yd) -- ({\xa+\fwirelen*2},\yd);
    \draw[thin, color=gray] ({\xa+3*\fwirelen/2},\ye) -- ({\xa+\fwirelen*2},\ye);
    \draw[thin, color=gray] ({\xa+3*\fwirelen/2},\yf) -- ({\xa+\fwirelen*2},\yf);
    \draw[thin, color=gray] ({\xa+3*\fwirelen/2},\yg) -- ({\xa+\fwirelen*2},\yg);
    \tikzset{VertexStyle/.style = {shape=rectangle, fill=black, minimum size=5pt, inner sep=1pt, draw}}
    \Vertex[x=\xa+\fwirelen,y={\ye+\ygap/2},L={\f^{2,4}},Lpos=90,Ldist=0.1cm]{e}
    \tikzset{VertexStyle/.style = {shape=circle, fill=black, minimum size=5pt, inner sep=1pt, draw}}
    \Vertex[x={\xa+\fwirelen/2},y={\yf+\ygap/2},NoLabel]{el1}
    \Vertex[x={\xa+\fwirelen/2},y={\yd+\ygap/2},NoLabel]{el2}
    \Vertex[x={\xa+3*\fwirelen/2},y=\yd,NoLabel]{er1}
    \Vertex[x={\xa+3*\fwirelen/2},y=\ye,NoLabel]{er2}
    \Vertex[x={\xa+3*\fwirelen/2},y=\yf,NoLabel]{er3}
    \Vertex[x={\xa+3*\fwirelen/2},y=\yg,NoLabel]{er4}
    \DEdge{e}{el1}
    \Edge(e)(el2)
    \Edge(e)(er1)
    \Edge(e)(er2)
    \Edge(e)(er3)
    \Edge(e)(er4)

    \draw[thin, color=gray] (\xb, \ya) -- ({\xb + (2*\wirelen)}, \ya);
    \draw[thin, color=gray] (\xb, \yb) -- ({\xb + (2*\wirelen)}, \yb);
    \draw[thin, color=gray] (\xb, \ye) -- ({\xb + (2*\wirelen)}, \ye);
    \draw[thin, color=gray] (\xb, \yf) -- ({\xb + (2*\wirelen)}, \yf);
    \draw[thin, color=gray] (\xb, \yg) -- ({\xb + (2*\wirelen)}, \yg);
    \draw[thin, color=gray] (\xb, \yc) .. controls (\xb+\wirelen,\yc+0.05) .. ({\xb + (2*\wirelen)}, \evertexy);
    \draw[thin, color=gray] (\xb, \yd) .. controls (\xb+\wirelen,\yd-0.05) .. ({\xb + (2*\wirelen)}, \evertexy);
    \Vertex[x=\xb+\wirelen,y=\ya,NoLabel]{xba}
    \Vertex[x=\xb+\wirelen,y=\yb,NoLabel]{xbb}
    \Vertex[x=\xb+\twirelen,y=\evertexy,L={\e^{2,0}},Lpos=300,Ldist=-0.2cm]{xbm}
    \Vertex[x=\xb+\wirelen,y=\ye,NoLabel]{xbe}
    \Vertex[x=\xb+\wirelen,y=\yf,NoLabel]{xbf}
    \Vertex[x=\xb+\wirelen,y=\yg,NoLabel]{xbg}

    \draw[thin, color=gray] (\xc, \ya) -- ({\xc + (2*\wirelen)}, \ya);
    \draw[thin, color=gray] (\xc, \yf) -- ({\xc + (2*\wirelen)}, \yf);
    \draw[thin, color=gray] (\xc, \yg) -- ({\xc + (2*\wirelen)}, \yg);
    \draw[thin, color=gray] (\xc, \yb) .. controls (\xc+\wirelen,\yb+0.2) .. ({\xc + (2*\wirelen)}, \evertexy);
    \draw[thin, color=gray] (\xc, \ye) .. controls (\xc+\wirelen,\ye-0.2) .. ({\xc + (2*\wirelen)}, \evertexy);
    \Vertex[x=\xc+\wirelen,y=\ya,NoLabel]{xca}
    \Vertex[x=\xc+\twirelen,y=\evertexy,L={\e^{2,0}},Lpos=300,Ldist=-0.2cm]{xcm}
    \Vertex[x=\xc+\wirelen,y=\yf,NoLabel]{xcf}
    \Vertex[x=\xc+\wirelen,y=\yg,NoLabel]{xcg}

    \draw[thin, color=gray] (\xd, \yg) -- ({\xd + (2*\wirelen)}, \yg);
    \draw[thin, color=gray] (\xd, \ya) .. controls (\xd+\wirelen,\ya+0.2) .. ({\xd + (2*\wirelen)}, \evertexy);
    \draw[thin, color=gray] (\xd, \yf) .. controls (\xd+\wirelen,\yf-0.2) .. ({\xd + (2*\wirelen)}, \evertexy);
    \Vertex[x={\xd+\twirelen},y=\evertexy,L={\e^{2,0}},Lpos=300,Ldist=-0.2cm]{xdm}
    \Vertex[x=\xd+\wirelen,y=\yg,NoLabel]{xdg}

    \def\xr{\xd+3.5}
    \draw[thin, color=gray] ({\xr+\fwirelen/2},\yb) -- (\xr,\yb);
    \draw[thin, color=gray] ({\xr+\fwirelen/2},\yc) -- (\xr,\yc);
    \draw[thin, color=gray] ({\xr+\fwirelen/2},\yd) -- (\xr,\yd);
    \draw[thin, color=gray] ({\xr+\fwirelen/2},\ye) -- (\xr,\ye);
    \draw[thin, color=gray] ({\xr+\fwirelen/2},\yf) -- (\xr,\yf);
    \draw[thin, color=gray] ({\xr+3*\fwirelen/2},\yd) -- (\xr+2*\fwirelen,\yd);
    \Vertex[x=\xr+\fwirelen/2,y=\yb,NoLabel]{xra}
    \Vertex[x=\xr+\fwirelen/2,y=\yc,NoLabel]{xrb}
    \Vertex[x=\xr+\fwirelen/2,y=\yd,NoLabel]{xrc}
    \Vertex[x=\xr+\fwirelen/2,y=\ye,NoLabel]{xrd}
    \Vertex[x=\xr+\fwirelen/2,y=\yf,NoLabel]{xre}
    \tikzset{VertexStyle/.style = {shape=rectangle, fill=black, minimum size=5pt, inner sep=1pt, draw}}
    \Vertex[x=\xr+\fwirelen,y=\yd,L={\f^{5,1}},Lpos=60,Ldist=0.1cm]{xr}
    \tikzset{VertexStyle/.style = {shape=circle, fill=black, minimum size=5pt, inner sep=1pt, draw}}
    \Vertex[x=\xr+3*\fwirelen/2,y=\yd,NoLabel]{xrr}
    \Edge(xr)(xra)
    \Edge(xr)(xrb)
    \Edge(xr)(xrc)
    \Edge(xr)(xrd)
    \DEdge{xr}{xre}
    \Edge(xr)(xrr)

    \node at (\xb - \wiregap/2, \yd) {$\circ$};
    \node at (\xc - \wiregap/2, \yd) {$\circ$};
    \node at (\xd - \wiregap/2, \yd) {$\circ$};
    \node at (\xr - 0.8, \yd) {$=$};

\end{tikzpicture}
    \caption{$(\f^{2,4} \otimes \ii^{\otimes 3})
    \circ (\ii^{\otimes 3} \otimes \e^{2,0} \otimes \ii^{\otimes 2}) \circ (\ii^{\otimes 2} \otimes \e^{2,0} \otimes \ii)
    \circ (\ii \otimes \e^{2,0}) = \f^{5,1}$.}
    \label{fig:reversewire}
\end{figure}

Next, we show that we can construct all gadgets in $\pn$
using only the elementary gadgets $\e^{1,0}, \e^{1,2}$, and $\{\f^{n,0} \mid F \in \fc\}$. This is a generalization
of \cite[{Theorem 6.7}]{planar}, which states that the class $\mathcal{P}$ of planar bi-labeled graphs
is generated by $\e^{1,0}, \e^{1,2}$, and the $(1,1)$-bi-labeled graph $\mathbf{A}$ with underlying graph
$K_2$, and the output wire attached to one vertex and the input wire to the other. The gadget corresponding
to $\mathbf{A}$ is $(\mathbf{A_X})^{1,1}$, as it has an extra vertex subdividing the only edge. Since we can convert
$(\mathbf{A_X})^{2,0}$ to $(\mathbf{A_X})^{1,1}$ and vice-versa using \autoref{lem:rotategadget}, \cite[{Theorem 6.7}]{planar} 
is analogous to the $n=2$ case of the following \autoref{thm:generatepn}.
The proof of \autoref{thm:generatepn} will closely follow the proof of this $n=2$ case,
so we need the following result, a restatement of \cite[{Corollary 6.5}]{planar}.
\begin{lemma}
    \label{lem:consecutive}
    Let $\k$ be a planar gadget with underlying multigraph $K$. There is some $v \in K$ such that the 
    dangling ends of all dangling edges incident to $v$ occur consecutively in the cyclical 
    ordering of dangling edges around the outside of $\k$.
\end{lemma}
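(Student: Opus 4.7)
The plan is to reduce the claim to a non-crossing partition argument from planarity. Fix a planar embedding of $\k$ as in \autoref{def:planargadget} and redraw it inside a closed disk $D$ so that the dangling edges terminate at distinct points $p_1, \ldots, p_d$ on $\partial D$ in cyclic order. For each $i$, let $v(i) \in V(K)$ be the vertex incident to the dangling edge ending at $p_i$; the fibers of $v(\cdot)$ define a partition $\pi$ of $\{1, \ldots, d\}$ (taken cyclically) whose blocks are precisely the sets of indices of dangling edges at each vertex of $K$. The lemma asserts that $\pi$ has a block forming a cyclically consecutive arc.

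The main step is to show that $\pi$ is \emph{non-crossing}, i.e., there do not exist cyclic indices $i < j < k < \ell$ with $v(i) = v(k) \neq v(j) = v(\ell)$. Suppose otherwise and set $v = v(i) = v(k)$ and $v' = v(j) = v(\ell)$. The two dangling edges ending at $p_i$ and $p_k$, together with the arc of $\partial D$ from $p_i$ to $p_k$ passing through $p_j$, form a simple closed curve $\gamma$ in $D$. By the Jordan curve theorem, $\gamma$ separates $D \setminus \gamma$ into two connected components, with $p_j$ on the boundary of one (via the $\partial D$-arc of $\gamma$) and $p_\ell$ on the boundary of the other. The vertex $v'$ lies in exactly one of these components, so at least one of the dangling edges from $v'$ to $p_j$ or $p_\ell$ must cross $\gamma$; but $\gamma$ consists entirely of parts of the embedding of $\k$ and of $\partial D$, which no other edge may cross, a contradiction.

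Once non-crossing is established, the conclusion follows by a short induction on $d$. Among the blocks of $\pi$, choose one whose shortest enclosing cyclic arc $A$ has minimum length, and call it $B$. If $B = A$, then $B$ is already a consecutive arc and we are done. Otherwise there is some $p \in A \setminus B$; let $B'$ be its block. The endpoints of $A$ lie in $B$, so by the non-crossing property $B'$ cannot contain any point outside $A$, hence $B' \subseteq A \setminus \{\text{endpoints of } A\}$. Thus the shortest arc enclosing $B'$ is strictly shorter than $A$, contradicting the minimality of $B$. The block $B$ therefore corresponds to a vertex $v \in V(K)$ whose dangling edges are exactly the consecutive arc of dangling edges of $\k$.

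The main obstacle is making the separating-curve argument fully rigorous: one should thicken the two dangling edges near $v$ slightly, or work with the closed faces adjacent to $v$, to guarantee that $\gamma$ is a simple closed curve even when the two dangling edges share more than just the endpoint $v$ locally, and check the degenerate cases (e.g., when $v$ itself sits on $\partial D$ because of adjacent equality vertices with only dangling neighbors). Once the non-crossing property is in hand, the remaining combinatorics is standard.
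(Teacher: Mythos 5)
Your proof is correct in substance, but it takes a genuinely different route from the paper: the paper does not prove this lemma at all, instead importing it as a restatement of Corollary 6.5 of Man{\v c}inska and Roberson \cite{planar} for planar bi-labeled graphs, via the correspondence between $\holant(\fc \mid \eq)$ gadgets and bi-labeled graphs set up at the start of the section. You instead give a self-contained argument: embed the gadget in a disk with dangling ends on the boundary, observe that the partition of the boundary points by incident vertex is non-crossing (a Jordan-curve argument using the two dangling edges at $v$ plus a boundary arc as the separating curve), and then extract a consecutive block by taking a block whose minimal enclosing cyclic arc is shortest. Both steps are sound: the separating curve is simple because two dangling edges at the same vertex meet only at that vertex, an edge ending at a boundary point interior to the enclosed arc must approach it through the enclosed region, and the minimality argument correctly forces the chosen block to fill its arc (the degenerate cases $d \le 1$ and a single block are trivial). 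What your approach buys is independence from the bi-labeled-graph machinery -- it works directly for multigraph gadgets with arbitrary-arity signatures -- at the cost of the topological bookkeeping you flag at the end (arranging vertices in the open disk, edges meeting $\partial D$ only at dangling ends), which is routine. What the paper's approach buys is brevity and consistency, since the same correspondence to \cite{planar} is reused immediately afterwards for the decomposition cases in the proof of the generation theorem.
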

For constraint function $F$, let $n_F = \arity(F)$.
\begin{theorem}
    \label{thm:generatepn}
    For any conjugate closed constraint function set $\fc$,
    $\pn = \tcwdn{\e^{1,0}, \e^{1,2}, \{\f^{n_F,0} \mid F \in \fc\}}$.
\end{theorem}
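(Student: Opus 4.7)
The plan is to prove both inclusions separately. The inclusion $\supseteq$ is immediate: every generator lies in $\pn$, and the operations $\circ, \otimes, \dagger$ preserve both planarity and the requirement that dangling edges attach only to equality vertices (composition contracts any adjacent equalities $E_a, E_b$ into $E_{a+b-2}$, maintaining bipartiteness in the context of $\plholant(\fc\,|\,\eq)$).

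For $\subseteq$, I would adapt the proof of \cite[Theorem 6.7]{planar}, which handles the $n=2$ bi-labeled graph case corresponding to $\fc = \{A_X\}$, by induction on the total number of vertices of $\k \in \pn(m,d)$. In the base case, $\k$ contains only equality vertices; the $E_aE_b \to E_{a+b-2}$ contraction lets us collapse each connected component into a single $\e^{m_i,d_i}$, so $\k$ is a tensor product of such gadgets in the cyclic order compelled by planarity, and each $\e^{m_i,d_i}$ is in the closure by \autoref{lem:mclosure}.

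For the inductive step, I would apply \autoref{lem:consecutive} to obtain a vertex $v$ of $\k$ whose incident dangling edges occur consecutively around $\k$'s outer face. If $v$ is an equality vertex carrying $s \geq 1$ consecutive dangling edges and $t$ internal edges to neighboring constraint vertices, I extract $v$ as an $\e^{s,t}$-type gadget (using \autoref{lem:mclosure} to build any auxiliary pivot pieces), leaving a smaller $\k'$ obtained by deleting $v$, with $\ii = \e^{1,1}$ vertices inserted on any new dangling ends to preserve the bipartite attachment convention. If \autoref{lem:consecutive} only yields a constraint vertex (which trivially satisfies the consecutivity condition by having no dangling edges), I instead use the planar embedding of $\k$ to locate a constraint vertex $v$ on the outer face whose equality neighbors, serving as the arms of its $\f^{n_F,0}$-gadget, lie on the outer face in cyclic order. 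I extract this vertex together with its arms as a gadget of the form $(\f^{(r)})^{m,d}$, where the partition $m+d = n_F$ and rotation $r$ are forced by the local structure of the embedding; by \autoref{lem:rotategadget}, such a gadget lies in the closure. The residual $\k'$ has strictly fewer vertices (or at least fewer constraint vertices), so is in the closure by induction, and $\k$ is reassembled from the pieces via $\circ$ and $\otimes$ with appropriate identity tensor factors.

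The main obstacle will be the bookkeeping of cyclic order and rotation when extracting a constraint vertex, a subtlety absent in the bi-labeled graph setting of \cite{planar} where each constraint is a symmetric binary edge. Because a general $F \in \fc$ is asymmetric, the extracted central vertex may appear in any of its $n_F$ rotations in the embedding, and compositions through $\dagger$ can introduce a reflected orientation. \autoref{lem:rotategadget} is precisely the tool that absorbs these rotations and reflections, asserting that every $(\f^{(r)})^{m,d}$ is expressible from $\e^{1,0}, \e^{1,2}$, and $\f^{n_F,0}$. A secondary technicality is verifying, after each extraction, that $\k'$ remains a valid gadget in $\pn$ whose dangling edges attach only to equality vertices; this is handled uniformly by inserting $\ii$'s on each new dangling end.
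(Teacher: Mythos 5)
Your overall strategy matches the paper's (induct, extract a vertex located by \autoref{lem:consecutive}, absorb rotations via \autoref{lem:rotategadget}), but the constraint-vertex step has a genuine gap. In $\pn$ every dangling edge is attached to an equality vertex, so \autoref{lem:consecutive} applied directly to $\k$ can only ever identify equality vertices as extractable (a constraint vertex satisfies the condition vacuously, having no dangling edges, but there is nothing to extract from it), and your fallback --- ``locate a constraint vertex on the outer face whose equality neighbors lie on the outer face in cyclic order'' --- is neither shown to exist nor sufficient: what the extraction $(\ii^{\otimes r}\otimes \f^{m,d}\otimes \ii^{\otimes t})\circ \k'$ actually requires is that the dangling edges reachable from $v$ through its $E_2$ arms be \emph{consecutive in the cyclic order of all dangling edges of $\k$}, not merely that $v$ and some neighbors touch the outer face. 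The paper resolves exactly this by applying \autoref{lem:consecutive} not to $\k$ but to the auxiliary gadget $\widehat{\k}$ obtained by deleting every $E_2$ vertex that carries a dangling edge and reattaching that edge to its constraint-vertex neighbor; the lemma then yields a $v$ that is either an $E_k$ with $k>2$ or a constraint vertex whose arm-dangling edges are consecutive, and the $E_2$'s deliberately left behind by earlier equality extractions become the arms of the extracted $\f^{m,d}$. Without this device the recursion can stall: after an equality extraction you keep re-finding the freshly inserted $E_2$'s and never reach a constraint vertex. Relatedly, your induction measure (total vertex count) is not well-founded, since each equality extraction inserts new $E_2$ vertices and can increase it; the paper inducts on the number of constraint vertices plus equality vertices of arity greater than $2$. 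You also do not separate the cases where the consecutive run of dangling edges mixes outputs and inputs (the analogues of the ``top''/``bottom'' cases), which need the $\e^{m,d+1}$-with-$\e^{1,r}$, respectively $\f^{m,r+d}$-with-$\ii^{\otimes r}$, variants.

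Second, your claim that \autoref{lem:rotategadget} ``absorbs these rotations and reflections'' is wrong on the reflection half: that lemma only yields the cyclic rotations $(\f^{(r)})^{m,d}$ and arbitrary pivots of dangling edges between output and input; it cannot reverse a central vertex whose inputs are embedded clockwise. Such a vertex must be extracted through $\dagger$, which conjugates its signature, so the extracted piece is (a rotation of) $\overline{\f}^{\,m,d}$, and this lies in the closure only because $\fc$ is conjugate closed --- the hypothesis your argument never invokes. (The same point is already needed in the easy inclusion $\supseteq$, since gadget $\dagger$ replaces every signature $F$ by $\overline{F}$.) As written, your proof silently drops the only place where conjugate closedness is used.
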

\begin{proof}
    It is easy to see that $\e^{1,0}, \e^{1,2}, \f^{n_F,0} \in \pn$ for every $F \in \fc$. \cite[{Lemmas 5.12-5.14}]{planar}
    show that $\otimes, \circ, {\dagger}$ preserve planarity for bi-labeled graphs, hence for gadgets.
    $\otimes$ and ${\dagger}$ clearly preserve bipartiteness and the property that dangling edges are only
    incident to equality vertices. $\circ$ also preserves these properties because all $\plholant(\fc \mid \eq)$ gadgets
    have dangling edges incident only to equality vertices,
    so $\circ$ only connects adjacent equality vertices, which
    are then merged. 
    ${\dagger}$ additionally replaces every signature $F \in \fc$ with $\overline{F}$, but $\overline{F} \in \fc$ as well
    because $\fc$ is CC.
    Thus $\pn \supseteq \tcwdn{\e^{1,0}, \e^{1,2}, \{\f^{n_F,0} \mid F \in \fc\}}$.

    To show the reverse inclusion, the idea is to decompose an arbitrary $\k \in \pn$ into
    copies of $\e^{m,d}$ and appropriate $\f^{m,d}$. Let $v$ satisfy the 
    condition in \autoref{lem:consecutive}. Roughly, if $v$ is an equality vertex, we extract 
    $\e^{m,d}$, and if $v$ is a constraint vertex assigned $F$, we extract $\f^{m,d}$. The remaining gadget $\k'$
    is also in $\pn$, so the result follows inductively.
    
    A vertex $v$ assigned an $F \in \fc$ may have arbitrary orientation in $\k$ -- its first input could be on any incident edge, and its remaining inputs could proceed clockwise or counterclockwise from this edge.
    In the counterclockwise case, we extract a gadget
    $(\f^{(r)})^{m,d}$ for some $r,m,d$. By \autoref{lem:rotategadget}, $(\f^{(r)})^{m,d} \in \tcwdn{\e^{1,0}, \e^{1,2}, \{\f^{n_F,0} \mid F \in \fc\}}$. However, \autoref{lem:rotategadget} does not address the
    clockwise/counterclockwise reflection issue. So if $v$'s
    inputs are oriented clockwise, we apply ${\dagger}$ to the
    extracted gadget with $v$ as the center vertex. Recall
    from \autoref{def:gadgetops} that this flips $v$'s
    inputs to a counterclockwise orientation, so we obtain
    a gadget $(\overline{\f}^{(r)})^{m,d}$ for some $r$.
    $\overline{F} \in \fc$, as $\fc$ is CC, so
    $(\overline{\f}^{(r)})^{m,d} \in \tcwdn{\e^{1,0}, \e^{1,2}, \{\f^{n,0} \mid F \in \fc\}}$ 
    as well. The rest of the proof only uses properties of a gadget's underlying graph, so
    we may assume WLOG that every constraint vertex is extracted as the center vertex of a gadget
    $\f^{m,d}$ for some $F \in \fc$ and $m,d \geq 0$.

    For the purpose of finding the vertex $v$ satisfying the condition in
    \autoref{lem:consecutive}, we consider the gadget $\widehat{\k}$ created from $\k$ by removing every
    vertex assigned $E_2$ with one incident dangling edge,
    and reassigning the dangling edge to the removed vertex's neighbor.
    This is necessary to extract $\f^{m,d}$s (by
    bipartiteness, any vertex assigned $E_2$ lies on an arm of an $\f^{m,d}$ gadget).
    Removing vertices in this way preserves planarity, so \autoref{lem:consecutive} applies to $\widehat{\k}$, giving
    a $v$ that is either a constraint vertex, or assigned $E_k$ for $k > 2$.

    $\s^{m,d}$ is a bi-labeled graph defined in \cite{planar} that has the same underlying graph
    as an $\f^{m,d}$, but with no $E_2$ vertices on the output wires.
    $\s^{m,d}_L$ and $\s^{m,d}_R$ have an extra input dangling edge attached to the central constraint
    vertex above and below all other input dangling edges, respectively.
    \cite[{Lemma 6.6}]{planar} states that, for $\k \in \mathcal{P}$, there exists a 
    $\k' \in \mathcal{P}$ with one fewer vertex and $n,m,d,r,t \geq 0$ satisfying one of the following:
    \begin{align*}
        &\k = (\ii^{\otimes r} \otimes \s^{m,d} \otimes \ii^{\otimes t}) \circ \k' 
        \numberthis \label{eq:left} \\
        &\k = \k' \circ (\ii^{\otimes r} \otimes \s^{m,d} \otimes \ii^{\otimes t})^{\dagger}
        \numberthis \label{eq:right} \\
        &\k = (\s^{m,d}_L \otimes \ii^{\otimes t}) \circ (\e^{1,r} \otimes \k')
        \numberthis \label{eq:top} \\
        &\k = (\ii^{\otimes t} \otimes \s^{m,d}_R) \circ (\k' \otimes \e^{1,r})
        \numberthis \label{eq:bottom} \\
        &\k = \s^{0,n} \circ \k'.
        \numberthis \label{eq:nodangling}
    \end{align*}
    (there are four other cases involving vertices with loops, but any $\k \in \pn$ is bipartite,
    hence loopless).
    $\s^{m,d}$ is constructed using $\mathbf{A}$, to which we do not have access for $n > 2$.
    We will replace it with $\e^{m,d}$ or an appropriate $\f^{m,d}$. In case \eqref{eq:left}, all of
    $v$'s consecutive dangling edges are outputs.
    Suppose $v$ has $m$ such output dangling edges and $d$ edges to other vertices in $\widehat{\k}$. 
    Also let there be $r$ output dangling edges above those incident to $v$, and $t$ below.
    We have two subcases. First, suppose $v$ is assigned $E_{m+d}$. 
    To obtain \eqref{eq:left}, we
    create $\k'$ from $\k$ by deleting $v$ and replacing
    the edges to $v$'s neighbors with output dangling edges attached to those neighbors in the proper
    order. See \autoref{fig:evss}, where $\k$ is in the upper left, with $v$ assigned $E_5$, and the
    decomposition in \eqref{eq:left} is in the bottom right. This has two problems from the gadget
    perspective. First, we cannot construct $\s^{m,d}$, and second, it leaves dangling edges incident
    to constraint vertices. We solve both problems as follows. Instead of extracting the vertices
    that end up on the input arms of $\s^{m,d}$, leave them in $\k'$ as $E_2$'s. So we extract
    $\e^{m,d}$ in place of $\s^{m,d}$, and the new output dangling edges in $\k'$ are left attached to
    the $E_2$'s. See the upper right of \autoref{fig:evss}.

    \begin{figure}[ht!]
        \center
        \begin{tikzpicture}[scale=.64]
    \tikzstyle{every node}=[font=\small]
    \GraphInit[vstyle=Classic]
    \SetUpEdge[style=-]
    \SetVertexMath

    \def\ary{1.2}
    \def\bry{0.4}
    \def\cry{-0.4}
    \def\dry{-1.2}

    \def\aly{0.8}
    \def\bly{0}
    \def\cly{-0.8}

    \def\sx{2}
    \def\kpx{10}

    \def\akvy{2.2}
    \def\bkvy{1.7}
    \def\ckvy{0.6}
    \def\ekvy{-1.1}
    \def\fkvy{-1.8}

    \filldraw[color=black!70, fill=blue!8] (\kpx,0) circle (2.5);

    \draw[thin, color=gray] (\kpx-3.5, \akvy) -- (\kpx, \akvy);
    \draw[thin, color=gray] (\kpx-3.5, \bkvy) -- (\kpx-0.7, \bkvy);
    \draw[thin, color=gray] (\kpx-3.5, \fkvy) -- (\kpx, \fkvy);

    \draw[thin, color=gray] (\kpx-3.5, \aly) .. controls (\kpx-2.2, \aly-0.1) .. (\kpx-1.5, 0);
    \draw[thin, color=gray] (\kpx-3.5, \cly) .. controls (\kpx-2.2, \cly+0.1) .. (\kpx-1.5, 0);

    \Vertex[x=\kpx-1.5,y=0.03,L={E_5},Lpos=270]{e23}

    \tikzset{VertexStyle/.style = {shape=rectangle, fill=black, minimum size=5pt, inner sep=1pt, draw}}
    \Vertex[x=\kpx+1,y=\ckvy,L={F_i},Ldist=-0.1cm]{u43}
    \Vertex[x=\kpx+1,y=\ekvy,L={F_j},Ldist=-0.1cm]{u63}
    \tikzset{VertexStyle/.style = {shape=circle, fill=black, minimum size=5pt, inner sep=1pt, draw}}

    \Vertex[x=\kpx,y=\akvy,L={E_*},Lpos=320,Ldist=-0.2cm]{e43}
    \Vertex[x=\kpx-0.7,y=\bkvy,L={E_*},Lpos=320,Ldist=-0.2cm]{e53}
    \Vertex[x=\kpx,y=\fkvy,L={E_*},Ldist=-0.1cm]{e63}

    \Edge[style={bend right}](u43)(e23)
    \Edge[style={bend left}](u43)(e23)
    \Edge(u63)(e23)


    \def\xsh{15}
    \node at (\xsh-1,0) {$=$};
    \node at (\xsh+5.7,0) {$\circ$};

    \draw[thin, color=gray] (\xsh+\sx, 0) .. controls (\xsh+\sx+1.2, \ary) .. (\xsh+\sx+3, \ary);
    \draw[thin, color=gray] (\xsh+\sx, 0) .. controls (\xsh+\sx+1.2, 0) .. (\xsh+\sx+3, 0);
    \draw[thin, color=gray] (\xsh+\sx, 0) .. controls (\xsh+\sx+1.2, -\ary) .. (\xsh+\sx+3, -\ary);

    \draw[thin, color=gray] (\xsh+\sx-1.5, \aly) .. controls (\xsh+\sx-0.7, \aly-0.1) .. (\xsh+\sx, 0);
    \draw[thin, color=gray] (\xsh+\sx-1.5, \cly) .. controls (\xsh+\sx-0.7, \cly+0.1) .. (\xsh+\sx, 0);

    \Vertex[x=\xsh+\sx,y=0,L={\e^{2,3}},Lpos=90,Ldist=0.1cm]{v22}

    \draw[thin, color=gray] (\xsh+\sx-1.5, \akvy) -- (\xsh+\sx+3, \akvy);
    \draw[thin, color=gray] (\xsh+\sx-1.5, \bkvy) -- (\xsh+\sx+3, \bkvy);
    \draw[thin, color=gray] (\xsh+\sx-1.5, \fkvy) -- (\xsh+\sx+3, \fkvy);
    \Vertex[x=\xsh+\sx,y=\akvy,L={\ii^{\otimes 2}},Lpos=90]{m12}
    \Vertex[x=\xsh+\sx,y=\bkvy, NoLabel]{m22}
    \Vertex[x=\xsh+\sx,y=\fkvy,L={\ii},Lpos=270]{m32}


    \filldraw[color=black!70, fill=blue!8] (\xsh+\kpx,0) circle (2.5);
    \draw[thin, color=gray] (\xsh+\kpx+1, \ckvy) .. controls (\xsh+\kpx-2, \ary) .. (\xsh+\kpx-3.5, \ary);
    \draw[thin, color=gray] (\xsh+\kpx+1, \ckvy) .. controls (\xsh+\kpx-1.5, 0) .. (\xsh+\kpx-3.5, 0);
    \draw[thin, color=gray] (\xsh+\kpx+1, \ekvy) .. controls (\xsh+\kpx-2, \dry) .. (\xsh+\kpx-3.5, \dry);

    \draw[thin, color=gray] (\xsh+\kpx-3.5, \akvy) -- (\xsh+\kpx, \akvy);
    \draw[thin, color=gray] (\xsh+\kpx-3.5, \bkvy) -- (\xsh+\kpx-0.7, \bkvy);
    \draw[thin, color=gray] (\xsh+\kpx-3.5, \fkvy) -- (\xsh+\kpx, \fkvy);

    \Vertex[x=\xsh+\kpx-1.2,y=\ckvy+0.4,L={E_2},Lpos=350,Ldist=-0.2cm]{e1}
    \Vertex[x=\xsh+\kpx-1.5,y=0.03,L={E_2},Lpos=355,Ldist=-0.2cm]{e2}
    \Vertex[x=\xsh+\kpx-1,y=\ekvy-0.06,L={E_2},Lpos=30,Ldist=-0.2cm]{e3}

    \tikzset{VertexStyle/.style = {shape=rectangle, fill=black, minimum size=5pt, inner sep=1pt, draw}}
    \Vertex[x=\xsh+\kpx+1,y=\ckvy,L={F_i},Ldist=-0.1cm]{u42}
    \Vertex[x=\xsh+\kpx+1,y=\ekvy,L={F_j},Ldist=-0.1cm]{u62}
    \tikzset{VertexStyle/.style = {shape=circle, fill=black, minimum size=5pt, inner sep=1pt, draw}}

    \Vertex[x=\xsh+\kpx,y=\akvy,L={E_*},Lpos=320,Ldist=-0.2cm]{e42}
    \Vertex[x=\xsh+\kpx-0.7,y=\bkvy,L={E_*},Lpos=320,Ldist=-0.2cm]{e52}
    \Vertex[x=\xsh+\kpx,y=\fkvy,L={E_*},Ldist=-0.1cm]{e62}

    \Edge(u42)(e1)
    \Edge(u42)(e2)
    \Edge(u62)(e3)
\end{tikzpicture}
        \caption{Extracting $\e^{m,d}$ as in \eqref{eq:newleft1}
        with $m = 2$, $d = 3$, $r=2$, $t=1$.}
        \label{fig:evss}
    \end{figure}
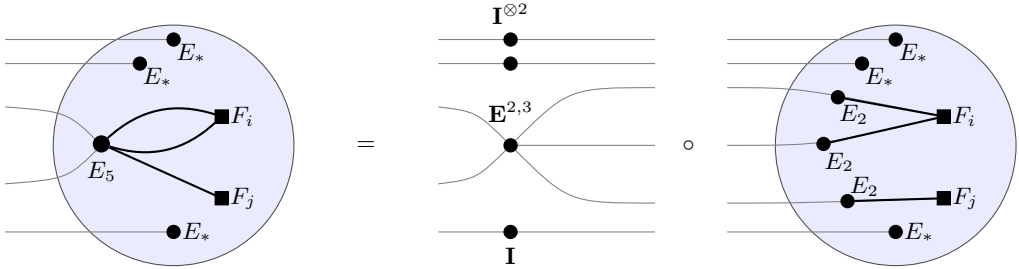

    The $E_2$'s left in $\k'$ are relevant to the other subcase, where $v$ is a constraint vertex,
    assigned an $(m+d)$-ary signature $F \in \fc$. In this case, $v$ is adjacent to equality
    vertices, so when we apply \eqref{eq:left}, the new dangling edges are correctly incident to 
    equalities. This solves the second problem in the first subcase. The first problem is also
    solved: recall that we ignored the $E_2$ vertices on constraint vertex $v$'s dangling edges 
    (these are the $E_2$ vertices left from the first subcase). We now remember their
    existence, and see that they are on the output dangling edges of the extracted $\s^{m,d}$, which
    turns the $\s^{m,d}$ into $\f^{m,d}$. Since $m+d = n$, \autoref{lem:rotategadget} says we can
    construct $\f^{m,d}$. See \autoref{fig:extractf}.
    
    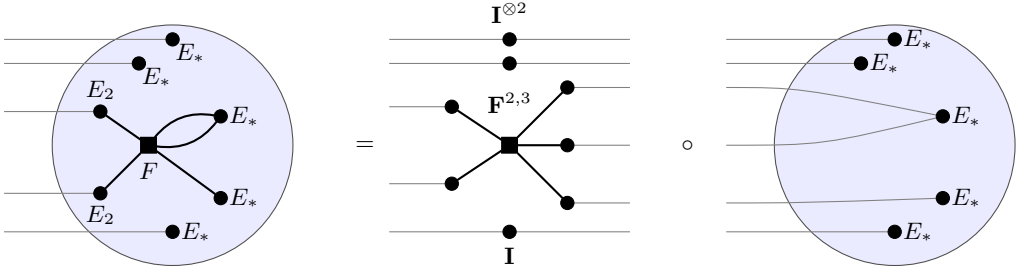
\begin{figure}[ht!]
        \center
        \begin{tikzpicture}[scale=.64]
    \tikzstyle{every node}=[font=\small]
    \GraphInit[vstyle=Classic]
    \SetUpEdge[style=-]
    \SetVertexMath

    \def\ary{1.2}
    \def\bry{0.4}
    \def\cry{-0.4}
    \def\dry{-1.2}

    \def\aly{0.8}
    \def\bly{0}
    \def\cly{-0.8}

    \def\sx{2}
    \def\kpx{10}

    \def\akvy{2.2}
    \def\bkvy{1.7}
    \def\ckvy{0.6}
    \def\ekvy{-1.1}
    \def\fkvy{-1.8}

    \filldraw[color=black!70, fill=blue!8] (\kpx,0) circle (2.5);

    \draw[thin, color=gray] (\kpx-3.5, \akvy) -- (\kpx, \akvy);
    \draw[thin, color=gray] (\kpx-3.5, \bkvy) -- (\kpx-0.7, \bkvy);
    \draw[thin, color=gray] (\kpx-3.5, \fkvy) -- (\kpx, \fkvy);

    \draw[thin, color=gray] (\kpx-3.5, 0.7) -- (\kpx-1.5, 0.7);
    \draw[thin, color=gray] (\kpx-3.5, -1) -- (\kpx-1.5, -1);

    \tikzset{VertexStyle/.style = {shape=rectangle, fill=black, minimum size=6pt, inner sep=1pt, draw}}
    \Vertex[x=\kpx-0.5,y=0,L={F},Lpos=270]{e23}
    \tikzset{VertexStyle/.style = {shape=circle, fill=black, minimum size=5pt, inner sep=1pt, draw}}

    \Vertex[x=\kpx-1.5,y=0.7,L={E_2},Lpos=90,Ldist=-0.05cm]{e21}
    \Vertex[x=\kpx-1.5,y=-1,L={E_2},Lpos=270,Ldist=-0.05cm]{e22}

    \Vertex[x=\kpx+1,y=\ckvy,L={E_*},Ldist=-0.1cm]{u43}
    \Vertex[x=\kpx+1,y=\ekvy,L={E_*},Ldist=-0.1cm]{u63}

    \Vertex[x=\kpx,y=\akvy,L={E_*},Lpos=320,Ldist=-0.2cm]{e43}
    \Vertex[x=\kpx-0.7,y=\bkvy,L={E_*},Lpos=320,Ldist=-0.2cm]{e53}
    \Vertex[x=\kpx,y=\fkvy,L={E_*},Ldist=-0.1cm]{e63}

    \Edge[style={bend right}](u43)(e23)
    \Edge[style={bend left}](u43)(e23)
    \Edge(u63)(e23)
    \Edge(e23)(e21)
    \Edge(e23)(e22)

    \def\xsh{15}
    \def\ysh{0}
    \node at (\xsh-1,\ysh) {$=$};
    \node at (\xsh+5.7,\ysh) {$\circ$};

    \draw[thin, color=gray] (\xsh+\sx+1.2, \ary+\ysh) -- (\xsh+\sx+2.5, \ary+\ysh);
    \draw[thin, color=gray] (\xsh+\sx+1.2, 0+\ysh) -- (\xsh+\sx+2.5, 0+\ysh);
    \draw[thin, color=gray] (\xsh+\sx+1.2, \dry+\ysh) -- (\xsh+\sx+2.5, \dry+\ysh);

    \draw[thin, color=gray] (\xsh+\sx-2.5, \aly+\ysh) -- (\xsh+\sx-1.2, \aly+\ysh);
    \draw[thin, color=gray] (\xsh+\sx-2.5, \cly+\ysh) -- (\xsh+\sx-1.2, \cly+\ysh);

    \tikzset{VertexStyle/.style = {shape=rectangle, fill=black, minimum size=6pt, inner sep=1pt, draw}}
    \Vertex[x=\xsh+\sx,y=0+\ysh,L={\f^{2,3}},Lpos=90,Ldist=0.2cm]{v2}
    \tikzset{VertexStyle/.style = {shape=circle, fill=black, minimum size=5pt, inner sep=1pt, draw}}

    \Vertex[x=\xsh+\sx+1.2,y=\ary+\ysh,NoLabel,Lpos=300,Ldist=-0.1cm]{f1}
    \Vertex[x=\xsh+\sx+1.2,y=0+\ysh,NoLabel,Lpos=300,Ldist=-0.1cm]{f2}
    \Vertex[x=\xsh+\sx+1.2,y=\dry+\ysh,NoLabel,Lpos=300,Ldist=-0.1cm]{f3}

    \Vertex[x=\xsh+\sx-1.2,y=\aly+\ysh,NoLabel,Lpos=300,Ldist=-0.1cm]{l1}
    \Vertex[x=\xsh+\sx-1.2,y=\cly+\ysh,NoLabel,Lpos=300,Ldist=-0.1cm]{l2}
    \Edge(v2)(f1)
    \Edge(v2)(f2)
    \Edge(v2)(f3)
    \Edge(v2)(l1)
    \Edge(v2)(l2)

    \draw[thin, color=gray] (\xsh+\sx-2.5, \akvy+\ysh) -- (\xsh+\sx+2.5, \akvy+\ysh);
    \draw[thin, color=gray] (\xsh+\sx-2.5, \bkvy+\ysh) -- (\xsh+\sx+2.5, \bkvy+\ysh);
    \draw[thin, color=gray] (\xsh+\sx-2.5, \fkvy+\ysh) -- (\xsh+\sx+2.5, \fkvy+\ysh);
    \Vertex[x=\xsh+\sx,y=\akvy+\ysh,L={\ii^{\otimes 2}},Lpos=90]{m1}
    \Vertex[x=\xsh+\sx,y=\bkvy+\ysh, NoLabel]{m2}
    \Vertex[x=\xsh+\sx,y=\fkvy+\ysh,L={\ii},Lpos=270]{m3}


    \filldraw[color=black!70, fill=blue!8] (\xsh+\kpx,0+\ysh) circle (2.5);
    \draw[thin, color=gray] (\xsh+\kpx+1, \ckvy+\ysh) .. controls (\xsh+\kpx-2, \ary+\ysh) .. (\xsh+\kpx-3.5, \ary+\ysh);
    \draw[thin, color=gray] (\xsh+\kpx+1, \ckvy+\ysh) .. controls (\xsh+\kpx-1.5, 0+\ysh) .. (\xsh+\kpx-3.5, 0+\ysh);
    \draw[thin, color=gray] (\xsh+\kpx+1, \ekvy+\ysh) .. controls (\xsh+\kpx-2, \dry+\ysh) .. (\xsh+\kpx-3.5, \dry+\ysh);

    \draw[thin, color=gray] (\xsh+\kpx-3.5, \akvy+\ysh) -- (\xsh+\kpx, \akvy+\ysh);
    \draw[thin, color=gray] (\xsh+\kpx-3.5, \bkvy+\ysh) -- (\xsh+\kpx-0.7, \bkvy+\ysh);
    \draw[thin, color=gray] (\xsh+\kpx-3.5, \fkvy+\ysh) -- (\xsh+\kpx, \fkvy+\ysh);

    \Vertex[x=\xsh+\kpx+1,y=\ckvy+\ysh,L={E_*},Ldist=-0.1cm]{u4}
    \Vertex[x=\xsh+\kpx+1,y=\ekvy+\ysh,L={E_*},Ldist=-0.1cm]{u6}

    \Vertex[x=\xsh+\kpx,y=\akvy+\ysh,L={E_*},Ldist=-0.1cm]{e4}
    \Vertex[x=\xsh+\kpx-0.7,y=\bkvy+\ysh,L={E_*},Ldist=-0.1cm]{e5}
    \Vertex[x=\xsh+\kpx,y=\fkvy+\ysh,L={E_*},Ldist=-0.1cm]{e6}

\end{tikzpicture}
        \caption{Extracting $\f^{m,d}$ as in \eqref{eq:newleft2}
        with $m = 2$, $d = 3$, $r=2$, $t=1$.}
        \label{fig:extractf}
    \end{figure}

    In summary, we replace \eqref{eq:left} with the following two decompositions, 
    where $v$ is a signature and constraint vertex, respectively.
    \begin{align*}
        &\k = (\ii^{\otimes r} \otimes \e^{m,d} \otimes \ii^{\otimes t}) \circ \k' 
        \numberthis \label{eq:newleft1} \\
        &\k = (\ii^{\otimes r} \otimes \f^{m,d} \otimes \ii^{\otimes t}) \circ \k' 
        \text{ for some $(m+d)$-ary $F \in \fc$}
        \numberthis \label{eq:newleft2}
    \end{align*}

    In case \eqref{eq:right}, all of $v$'s consecutive incident dangling edges are inputs. The
    horizontal reflection of all the reasoning in case \eqref{eq:left} applies.
    Rather than apply ${\dagger}$ to $\f^{m,d}$, we can always obtain $\f^{d,m}$ by
    \autoref{lem:rotategadget} instead, and of course $\ii$ is symmetric and we can use $\e^{d,m}$ 
    instead of $\e^{m,d}$. So we have
    \begin{align*}
        &\k = \k' \circ (\ii^{\otimes r} \otimes \e^{d,m} \otimes \ii^{\otimes t})
        \numberthis \label{eq:newright1} \\
        &\k = \k' \circ (\ii^{\otimes r} \otimes \f^{d,m} \otimes \ii^{\otimes t})
        \text{ for some $(m+d)$-ary $F \in \fc$}
        \numberthis \label{eq:newright2}
    \end{align*}
    as the reflections of \eqref{eq:newleft1} and \eqref{eq:newleft2}, respectively.

    In case \eqref{eq:top}, $v$'s consecutive incident dangling edges consist of the top $m$ output
    and top $r$ input danglng edges, and $v$ has $d$ edges to vertices in $\k'$.
    The extra top input dangling edge incident to 
    $\s_L^{m,d}$'s center vertex gets attached to the $\e^{1,r}$ to ensure $\k$ has $r$ input dangling edges.
    The reasoning is the same as \eqref{eq:left} for the $m$ input dangling edges and $d$ 
    edges into $\k'$. If $v$ is an equality vertex, we simply use $\e^{m,d+1}$ instead of
    $\s_L^{m,d}$ and use the extra input dangling edge for the same purpose as the extra dangling edge attached to $\s_L^{m,d}$'s center vertex.
    If $v$ is a constraint vertex assigned $(m+d+r)$-ary $F \in \fc$, we already have $E_2$ vertices on
    $v$'s top $r$ input arms (as well as the rest of $v$'s arms), so we can use
    $\f^{m, r+d}$ and attach the input
    dangling edges directly via $\ii^{\otimes r}$ instead of via $\e^{1,r}$. So we have
    \begin{align*}
        &\k = (\e^{m,d+1} \otimes \ii^{\otimes t}) \circ (\e^{1,r} \otimes \k')
        \numberthis \label{eq:newtop1} \\
        &\k = (\f^{m,r+d} \otimes \ii^{\otimes t}) \circ (\ii^{\otimes r} \otimes \k')
        \text{ for some $(m+d+r)$-ary $F \in \fc$}
        \numberthis \label{eq:newtop2}
    \end{align*}
    if $v$ is an equality or constraint vertex, respectively.
    See \autoref{fig:bothsides}.
    
    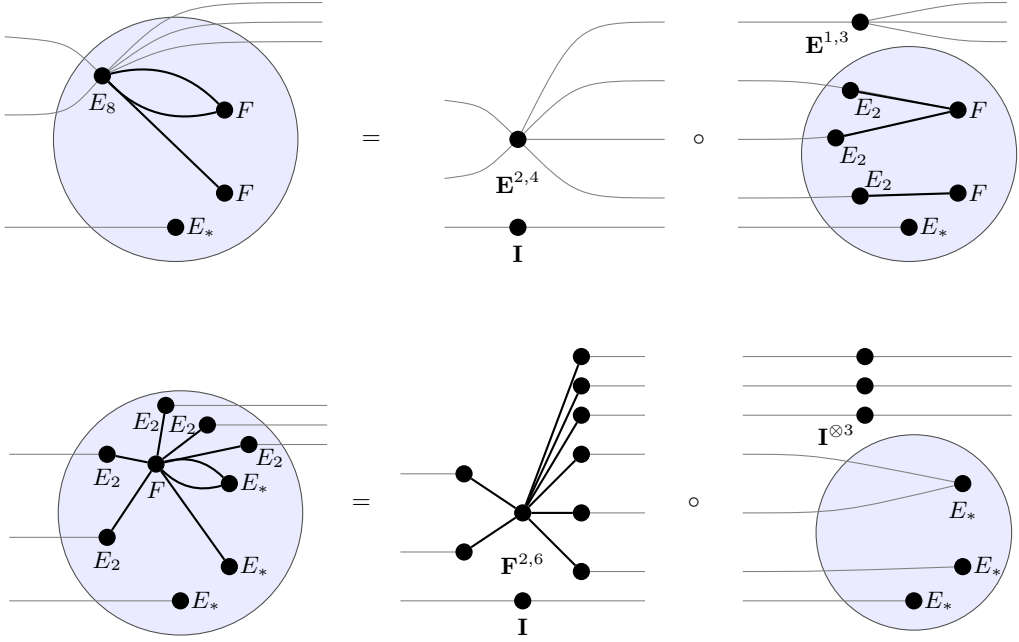
\begin{figure}[ht!]
        \center
        \begin{tikzpicture}[scale=.65]
    \tikzstyle{every node}=[font=\small]
    \GraphInit[vstyle=Classic]
    \SetUpEdge[style=-]
    \SetVertexMath

    \def\ary{1.2}
    \def\bry{0.4}
    \def\cry{-0.4}
    \def\dry{-1.2}

    \def\aly{0.8}
    \def\bly{0}
    \def\cly{-0.8}

    \def\sx{2}
    \def\kpx{10}

    \def\akvy{2.2}
    \def\bkvy{1.7}
    \def\ckvy{0.6}
    \def\ekvy{-1.1}
    \def\fkvy{-1.8}

    \def\xsh{15}

    \filldraw[color=black!70, fill=blue!8] (\kpx,0) circle (2.5);

    \draw[thin, color=gray] (\kpx-3.5, \fkvy) -- (\kpx, \fkvy);

    \draw[thin, color=gray] (\kpx-3.5, \aly+1.3) .. controls (\kpx-2.2, \aly+1.2) .. (\kpx-1.5, 1.3);
    \draw[thin, color=gray] (\kpx-3.5, \cly+1.3) .. controls (\kpx-2.2, \cly+1.3) .. (\kpx-1.5, 1.3);

    \draw[thin, color=gray] (\kpx-1.5, 1.3) .. controls (\kpx, 2) .. (\xsh-2, 2);
    \draw[thin, color=gray] (\kpx-1.5, 1.3) .. controls (\kpx, 2.4) .. (\xsh-2, 2.4);
    \draw[thin, color=gray] (\kpx-1.5, 1.3) .. controls (\kpx, 2.8) .. (\xsh-2, 2.8);

    \Vertex[x=\kpx-1.5,y=1.3,L={E_8},Lpos=270]{e23}

    \Vertex[x=\kpx+1,y=\ckvy,L=F,Ldist=-0.1cm]{u43}
    \Vertex[x=\kpx+1,y=\ekvy,L=F,Ldist=-0.1cm]{u63}

    \Vertex[x=\kpx,y=\fkvy,L={E_*},Ldist=-0.1cm]{e63}

    \Edge[style={bend right}](u43)(e23)
    \Edge[style={bend left}](u43)(e23)
    \Edge(u63)(e23)


    \node at (\xsh-1,0) {$=$};
    \node at (\xsh+5.7,0) {$\circ$};

    \draw[thin, color=gray] (\xsh+\sx, 0) .. controls (\xsh+\sx+1.2, \ary+1.2) .. (\xsh+\sx+3, \ary+1.2);
    \draw[thin, color=gray] (\xsh+\sx, 0) .. controls (\xsh+\sx+1.2, \ary) .. (\xsh+\sx+3, \ary);
    \draw[thin, color=gray] (\xsh+\sx, 0) .. controls (\xsh+\sx+1.2, 0) .. (\xsh+\sx+3, 0);
    \draw[thin, color=gray] (\xsh+\sx, 0) .. controls (\xsh+\sx+1.2, -\ary) .. (\xsh+\sx+3, -\ary);

    \draw[thin, color=gray] (\xsh+\sx-1.5, \aly) .. controls (\xsh+\sx-0.7, \aly-0.1) .. (\xsh+\sx, 0);
    \draw[thin, color=gray] (\xsh+\sx-1.5, \cly) .. controls (\xsh+\sx-0.7, \cly+0.1) .. (\xsh+\sx, 0);

    \Vertex[x=\xsh+\sx,y=0,L={\e^{2,4}},Lpos=270,Ldist=0.2cm]{v22}

    \draw[thin, color=gray] (\xsh+\sx-1.5, \fkvy) -- (\xsh+\sx+3, \fkvy);
    \Vertex[x=\xsh+\sx,y=\fkvy,L={\ii},Lpos=270]{m32}

    \filldraw[color=black!70, fill=blue!8] (\xsh+\kpx,-0.3) circle (2.2);
    \draw[thin, color=gray] (\xsh+\kpx+1, \ckvy) .. controls (\xsh+\kpx-2, \ary) .. (\xsh+\kpx-3.5, \ary);
    \draw[thin, color=gray] (\xsh+\kpx+1, \ckvy) .. controls (\xsh+\kpx-1.5, 0) .. (\xsh+\kpx-3.5, 0);
    \draw[thin, color=gray] (\xsh+\kpx+1, \ekvy) .. controls (\xsh+\kpx-2, \dry) .. (\xsh+\kpx-3.5, \dry);

    \draw[thin, color=gray] (\xsh+\kpx-3.5, \fkvy) -- (\xsh+\kpx, \fkvy);

    \draw[thin, color=gray] (\xsh+\kpx-3.5, \ary+1.2) -- (\xsh+\kpx-1, \ary+1.2);
    \draw[thin, color=gray]  (\xsh+\kpx-1, \ary+1.2) .. controls (\xsh+\kpx+1, \ary+1.6) .. (\xsh+\kpx+2, \ary+1.6);
    \draw[thin, color=gray]  (\xsh+\kpx-1, \ary+1.2) .. controls (\xsh+\kpx+1, \ary+1.2) .. (\xsh+\kpx+2, \ary+1.2);
    \draw[thin, color=gray]  (\xsh+\kpx-1, \ary+1.2) .. controls (\xsh+\kpx+1, \ary+0.8) .. (\xsh+\kpx+2, \ary+0.8);
    \Vertex[x=\xsh+\kpx-1,y=\ary+1.2,L={\e^{1,3}},Lpos=230,Ldist=-0.1cm]{ex}

    \Vertex[x=\xsh+\kpx-1.2,y=\ckvy+0.4,L={E_2},Lpos=350,Ldist=-0.2cm]{e1}
    \Vertex[x=\xsh+\kpx-1.5,y=0.03,L={E_2},Lpos=355,Ldist=-0.2cm]{e2}
    \Vertex[x=\xsh+\kpx-1,y=\ekvy-0.06,L={E_2},Lpos=30,Ldist=-0.2cm]{e3}

    \Vertex[x=\xsh+\kpx+1,y=\ckvy,L=F,Ldist=-0.1cm]{u42}
    \Vertex[x=\xsh+\kpx+1,y=\ekvy,L=F,Ldist=-0.1cm]{u62}

    \Vertex[x=\xsh+\kpx,y=\fkvy,L={E_*},Ldist=-0.1cm]{e62}

    \Edge(u42)(e1)
    \Edge(u42)(e2)
    \Edge(u62)(e3)
\end{tikzpicture}
        
        \vspace{1cm}
        
        \begin{tikzpicture}[scale=.65]
    \tikzstyle{every node}=[font=\small]
    \GraphInit[vstyle=Classic]
    \SetUpEdge[style=-]
    \SetVertexMath

    \def\ary{1.2}
    \def\bry{0.4}
    \def\cry{-0.4}
    \def\dry{-1.2}

    \def\aly{0.8}
    \def\bly{0}
    \def\cly{-0.8}

    \def\sx{2}
    \def\kpx{10}

    \def\akvy{2.2}
    \def\bkvy{1.7}
    \def\ckvy{0.6}
    \def\ekvy{-1.1}
    \def\fkvy{-1.8}

    \def\xsh{15}
    \def\ysh{0}
    \node at (\xsh-1.3,\ysh+0.2) {$=$};
    \node at (\xsh+5.5,\ysh+0.2) {$\circ$};

    \filldraw[color=black!70, fill=blue!8] (\kpx,0) circle (2.5);

    \draw[thin, color=gray] (\kpx-3.5, \fkvy) -- (\kpx, \fkvy);

    \draw[thin, color=gray] (\kpx-3.5, 1.2) -- (\kpx-1.5, 1.2);
    \draw[thin, color=gray] (\kpx-3.5, -0.5) -- (\kpx-1.5, -0.5);

    \draw[thin, color=gray] (\kpx-0.3, 2.2) -- (\xsh-2, 2.2);
    \draw[thin, color=gray] (\kpx+0.55, 1.8) -- (\xsh-2, 1.8);
    \draw[thin, color=gray] (\kpx+1.4, 1.4) -- (\xsh-2, 1.4);

    \Vertex[x=\kpx-0.5,y=1,L={F},Lpos=270]{e23}

    \Vertex[x=\kpx-1.5,y=1.2,L={E_2},Lpos=270,Ldist=-0.05cm]{e21}
    \Vertex[x=\kpx-1.5,y=-0.5,L={E_2},Lpos=270,Ldist=-0.05cm]{e22}

    \Vertex[x=\kpx-0.3,y=2.2,L={E_2},Lpos=210,Ldist=-0.18cm]{e24}
    \Vertex[x=\kpx+0.55,y=1.8,L={E_2},Lpos=180,Ldist=-0.1cm]{e25}
    \Vertex[x=\kpx+1.4,y=1.4,L={E_2},Lpos=300,Ldist=-0.2cm]{e26}

    \Vertex[x=\kpx+1,y=\ckvy,L={E_*},Ldist=-0.1cm]{u43}
    \Vertex[x=\kpx+1,y=\ekvy,L={E_*},Ldist=-0.1cm]{u63}

    \Vertex[x=\kpx,y=\fkvy,L={E_*},Ldist=-0.1cm]{e63}

    \Edge[style={bend right}](u43)(e23)
    \Edge[style={bend left}](u43)(e23)
    \Edge(u63)(e23)
    \Edge(e23)(e21)
    \Edge(e23)(e22)
    \Edge(e23)(e24)
    \Edge(e23)(e25)
    \Edge(e23)(e26)

    \draw[thin, color=gray] (\xsh+\sx+1.2, \ary+\ysh) -- (\xsh+\sx+2.5, \ary+\ysh);
    \draw[thin, color=gray] (\xsh+\sx+1.2, 0+\ysh) -- (\xsh+\sx+2.5, 0+\ysh);
    \draw[thin, color=gray] (\xsh+\sx+1.2, \dry+\ysh) -- (\xsh+\sx+2.5, \dry+\ysh);

    \draw[thin, color=gray] (\xsh+\sx-2.5, \aly+\ysh) -- (\xsh+\sx-1.2, \aly+\ysh);
    \draw[thin, color=gray] (\xsh+\sx-2.5, \cly+\ysh) -- (\xsh+\sx-1.2, \cly+\ysh);

    \draw[thin, color=gray] (\xsh+\sx+1.2, \ary+2) -- (\xsh+\sx+2.5, \ary+2);
    \draw[thin, color=gray] (\xsh+\sx+1.2, \ary+1.4) -- (\xsh+\sx+2.5, \ary+1.4);
    \draw[thin, color=gray] (\xsh+\sx+1.2, \ary+0.8) -- (\xsh+\sx+2.5, \ary+0.8);

    \Vertex[x=\xsh+\sx,y=0+\ysh,L={\f^{2,6}},Lpos=270,Ldist=0.3cm]{v2}
    \Vertex[x=\xsh+\sx+1.2,y=\ary+\ysh,NoLabel,Lpos=300,Ldist=-0.1cm]{f1}
    \Vertex[x=\xsh+\sx+1.2,y=0+\ysh,NoLabel,Lpos=300,Ldist=-0.1cm]{f2}
    \Vertex[x=\xsh+\sx+1.2,y=\dry+\ysh,NoLabel,Lpos=300,Ldist=-0.1cm]{f3}

    \Vertex[x=\xsh+\sx-1.2,y=\aly+\ysh,NoLabel,Lpos=300,Ldist=-0.1cm]{l1}
    \Vertex[x=\xsh+\sx-1.2,y=\cly+\ysh,NoLabel,Lpos=300,Ldist=-0.1cm]{l2}

    \Vertex[x=\xsh+\sx+1.2,y=\ary+2,NoLabel]{f4}
    \Vertex[x=\xsh+\sx+1.2,y=\ary+1.4,NoLabel]{f5}
    \Vertex[x=\xsh+\sx+1.2,y=\ary+0.8,NoLabel]{f6}

    \Edge(v2)(f1)
    \Edge(v2)(f2)
    \Edge(v2)(f3)
    \Edge(v2)(l1)
    \Edge(v2)(l2)
    \Edge(v2)(f4)
    \Edge(v2)(f5)
    \Edge(v2)(f6)

    \draw[thin, color=gray] (\xsh+\sx-2.5, \fkvy+\ysh) -- (\xsh+\sx+2.5, \fkvy+\ysh);
    \Vertex[x=\xsh+\sx,y=\fkvy+\ysh,L={\ii},Lpos=270]{m3}

    \draw[thin, color=gray] (\xsh+\kpx-3.5, \ary+2) -- (\xsh+\kpx+2, \ary+2);
    \draw[thin, color=gray] (\xsh+\kpx-3.5, \ary+1.4) -- (\xsh+\kpx+2, \ary+1.4);
    \draw[thin, color=gray] (\xsh+\kpx-3.5, \ary+0.8) -- (\xsh+\kpx+2, \ary+0.8);
    \Vertex[x=\xsh+\kpx-1,y=\ary+2,NoLabel]{ex1}
    \Vertex[x=\xsh+\kpx-1,y=\ary+1.4,NoLabel]{ex2}
    \Vertex[x=\xsh+\kpx-1,y=\ary+0.8,L={\ii^{\otimes 3}},Lpos=210,Ldist=-0.1cm]{ex3}

    \filldraw[color=black!70, fill=blue!8] (\xsh+\kpx,\ysh-0.4) circle (2);
    \draw[thin, color=gray] (\xsh+\kpx+1, \ckvy+\ysh) .. controls (\xsh+\kpx-2, \ary+\ysh) .. (\xsh+\kpx-3.5, \ary+\ysh);
    \draw[thin, color=gray] (\xsh+\kpx+1, \ckvy+\ysh) .. controls (\xsh+\kpx-1.5, 0+\ysh) .. (\xsh+\kpx-3.5, 0+\ysh);
    \draw[thin, color=gray] (\xsh+\kpx+1, \ekvy+\ysh) .. controls (\xsh+\kpx-2, \dry+\ysh) .. (\xsh+\kpx-3.5, \dry+\ysh);

    \draw[thin, color=gray] (\xsh+\kpx-3.5, \fkvy+\ysh) -- (\xsh+\kpx, \fkvy+\ysh);

    \Vertex[x=\xsh+\kpx+1,y=\ckvy+\ysh,L={E_*},Lpos=270]{u4}
    \Vertex[x=\xsh+\kpx+1,y=\ekvy+\ysh,L={E_*},Ldist=-0.1cm]{u6}

    \Vertex[x=\xsh+\kpx,y=\fkvy+\ysh,L={E_*},Ldist=-0.1cm]{e6}

\end{tikzpicture}
        \caption{In illustration of \eqref{eq:newtop1} (top) and \eqref{eq:newtop2} (bottom) for $m = 2, d=3, r = 3, t=1$.}
        \label{fig:bothsides}
    \end{figure}

    Case \eqref{eq:bottom} is the vertical reflection of \eqref{eq:top} ($v$'s dangling edges consist of
    the bottom output and input dangling edges), so similar reasoning applies.
    We have
    \begin{align*}
        &\k = (\ii^{\otimes t} \otimes \e^{m,d+1}) \circ (\k' \otimes \e^{1,r})
        \numberthis \label{eq:newbottom1} \\
        &\k = (\ii^{\otimes t }\otimes \f^{m,r+d}) \circ (\k' \otimes \ii^{\otimes r})
        \text{ for some $(m+d+r)$-ary $F \in \fc$}
        \numberthis \label{eq:newbottom2}
    \end{align*}
    if $v$ is an equality or constraint vertex, respectively.

    Together, \eqref{eq:newleft1}-\eqref{eq:newbottom2} cover all cases with a $v$ with a nonzero 
    quantity of consecutive incident dangling edges. 
    Case \eqref{eq:nodangling} occurs when  $\k$ has no dangling edges. In this case, choose an arbitrary 
    constraint vertex $v$ on the outer face (such a vertex must exist unless $\k = \e^{m,d}$, in which case
    we're already done) assigned $n$-ary $F \in \fc$ and construct $\k'$ from $\k$ by deleting $v$ and replacing the edges between
    $v$ and its neighbors with output dangling edges attached to the appropriate neighbors. Then
    we have 
    \begin{equation}
        \label{eq:newnodangling}
        \k = \f^{0,n} \circ \k'
        \text{ for some $n$-ary $F \in \fc$}.
    \end{equation}

    The $\k'$ in \eqref{eq:newleft2}, \eqref{eq:newright2}, \eqref{eq:newtop2}, \eqref{eq:newbottom2},
    and \eqref{eq:newnodangling}
    (the constraint vertex cases) is exactly the same as the $\k'$ in \eqref{eq:left}, \eqref{eq:right}, \eqref{eq:top},
    \eqref{eq:bottom}, and \eqref{eq:nodangling}, respectively
    (\eqref{eq:newtop2} and \eqref{eq:newbottom2} have structural
    differences, but they involve the new input dangling edges and have nothing to do with $\k'$).
    Hence we can apply the proofs in \cite{planar} that $\k'$ is still planar. Since
    we extracted a constraint vertex, which is only adjacent to equality vertices,
    $\k'$ has all dangling edges attached to equality vertices, and clearly is still bipartite.
    Thus $\k' \in \pn$.

    The $\k'$ in \eqref{eq:newleft1}, \eqref{eq:newright1}, \eqref{eq:newtop1}, and \eqref{eq:newbottom1},
    (the equality vertex cases) is the same as the $\k'$ in \eqref{eq:left}, \eqref{eq:right}, \eqref{eq:top}, and
    \eqref{eq:bottom}, respectively, up to the addition of $E_2$ vertices on new dangling edges.
    Edge subdivision preserves planarity, however, so the planarity proofs in \cite{planar} again
    still apply. By construction, all new edges dangling from $\k'$ are incident to equality vertices, and
    $\k'$ is still bipartite, so $\k' \in \pn$.

    In both cases, by \autoref{lem:consecutive}, $\widehat{\k'}$ contains a vertex $v$ whose dangling edges
    are consecutive, so set $\k$ to be $\k'$ and repeat.
    At each step, we always make progress towards decreasing the number of constraint vertices in $\k$
    by one.
    In cases \eqref{eq:newleft2}, \eqref{eq:newright2}, \eqref{eq:newtop2}, and \eqref{eq:newbottom2},
    and \eqref{eq:newnodangling}, $\k'$ has one fewer constraint vertex
    than $\k$. 
    In cases \eqref{eq:newleft1}, \eqref{eq:newright1}, \eqref{eq:newtop1}, and \eqref{eq:newbottom1},
    $\k'$ has one fewer vertex assigned $E_k$, $k > 2$ than $\k$ (we replaced this vertex with some
    number of vertices assigned $E_2$). If we keep extracting equality vertices, then eventually
    $\k$ will have no more vertices assigned $E_k$ for $k > 2$. Since we consider $\widehat{\k}$, ignoring vertices assigned $E_2$ with attached dangling edges, when choosing the next vertex $v$ to extract,
    $v$ must be a constraint vertex.

    Therefore by induction on the number of constraint vertices and equality vertices with arity $>2$,
    we assume $\k' \in \tcwdn{\e^{1,0}, \e^{1,2}, \{\f^{n_F,0} \mid F \in \fc\}}$, giving $\k \in \tcwdn{\e^{1,0}, \e^{1,2}, \{\f^{n_F,0} \mid F \in \fc\}}$.
    If $\k$ has only one such vertex, it is either $\e^{m,d}$ or some $\f^{m,d}$ by bipartiteness ($E_2$
    vertices can't appear in the first case, and can only appear adjacent to the constraint vertex, hence as part of $\f^{m,d}$, in the
    second case) and we have seen that these are both in
    $\tcwdn{\e^{1,0}, \e^{1,2}, \{\f^{n_F,0} \mid F \in \fc\}}$. Thus $\pn \subseteq \tcwdn{\e^{1,0}, \e^{1,2}, \{\f^{n_F,0} \mid F \in \fc\}}$.
\end{proof}

\section{The Quantum Holant Theorem}
\subsection{Gadgets and quantum permutation matrices}
\label{sec:invariance}
The quantum Holant theorem, proved in \autoref{sec:holant}, 
stems from viewing the quantum permutation matrix $\u$ itself as a signature
in a Holant signature grid, indicated by a triangle vertex $\blacktriangle$. 
An immediate corollary of this theorem is one half of our main result \autoref{thm:result}:
Planar \#CSP instances with quantum isomorphic signature sets have the same partition function value.
Due to the asymmetry of our constraint functions, the quantum Holant theorem depends on the
following three lemmas, which show that quantum isomorphism is invariant under constraint function
flattening, rotation, and reflection, respectively. See \autoref{fig:invariance}. For $F = G$,
the three lemmas follow directly from \autoref{lem:rotategadget} and results in 
\autoref{sec:quantumpermutationgroups}. See \autoref{rem:frobenius}.

\begin{figure}[ht!]
    \center
    \input{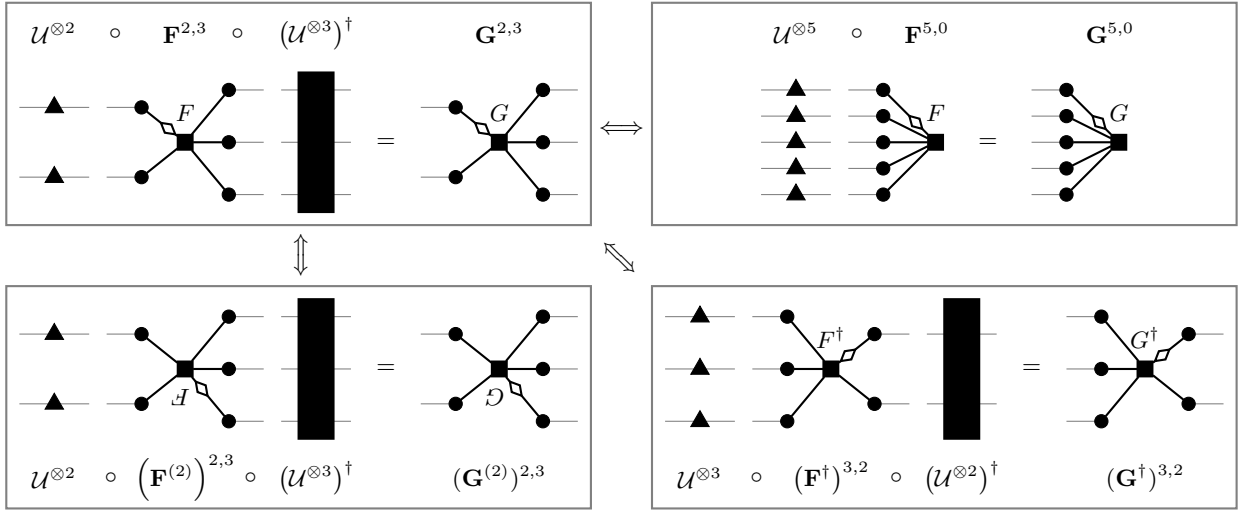}
    \caption{
    Illustrations of, clockwise from top right, Lemmas \ref{lem:tensorcftog}, 
    \ref{lem:conjugatecommute}, and \ref{lem:rotateinput}.
        $\left(\u^{\otimes n}\right)^{\dagger}$
        is drawn as a black box, since, due to noncommutativity, $\left(\u^{\otimes n}\right)^{\dagger}
        \neq \left(\u^{\dagger}\right)^{\otimes n}$ in general.}
    \label{fig:invariance}
\end{figure}

\begin{lemma}
    \label{lem:tensorcftog}
    Let $F,G \in \c^{[q]^n}$ and $\u$ be a $q \times q$ quantum permutation matrix.
    Then for all $m_1,d_1,m_2,d_2 \geq 0$ such that $m_1+d_1=m_2+d_2 = n$, we have
    \[
        \u^{\otimes m_1} F^{m_1,d_1} = G^{m_1,d_1}\u^{\otimes d_1}
        \iff
        \u^{\otimes m_2} F^{m_2,d_2} = G^{m_2,d_2}\u^{\otimes d_2}
    \]
    where we let $\u^{\otimes 0} = \one$.
\end{lemma}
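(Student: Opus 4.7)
The plan is to reduce the general equivalence to showing it for neighboring splittings and then iterate. Concretely, I will show that if $\u^{\otimes m} F^{m,d} = G^{m,d} \u^{\otimes d}$ and $d \geq 1$, then the corresponding identity holds with $(m+1, d-1)$ in place of $(m,d)$; together with a symmetric claim for $(m-1, d+1)$, this connects any two admissible splittings by a sequence of such single-step moves.

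The main tool is a gadget identity implicit in the proof of \autoref{lem:rotategadget}: $\f^{m+1, d-1} = (\f^{m,d} \otimes \ii) \circ (\ii^{\otimes d-1} \otimes \e^{2,0})$. By \autoref{lem:gadgetmatrix} this becomes the matrix identity
\[
    F^{m+1, d-1} = (F^{m,d} \otimes I)(I^{\otimes d-1} \otimes E^{2,0}),
\]
and analogously for $G$. The second ingredient is the \emph{cup identity} $\u^{\otimes 2} E^{2,0} = E^{2,0}$, which follows directly from the quantum permutation relations: the $(x_1, x_2)$ entry of the left side is $\sum_{y} u_{x_1 y} u_{x_2 y} = \delta_{x_1 x_2} \sum_y u_{x_1 y} = \delta_{x_1 x_2} \one$, where we invoke $u_{x_1 y} u_{x_2 y} = \delta_{x_1 x_2} u_{x_1 y}$, the column sum $\sum_y u_{x_1 y} = \one$, and the identification $\alpha \one \equiv \alpha$.

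Given these, the inductive step from $(m,d)$ to $(m+1, d-1)$ is a direct computation using \eqref{eq:tensordistribute}:
\begin{align*}
\u^{\otimes(m+1)} F^{m+1, d-1}
&= (\u^{\otimes m} \otimes \u)(F^{m,d} \otimes I)(I^{\otimes d-1} \otimes E^{2,0}) \\
&= \bigl((\u^{\otimes m} F^{m,d}) \otimes \u\bigr)(I^{\otimes d-1} \otimes E^{2,0}) \\
&= \bigl((G^{m,d} \u^{\otimes d}) \otimes \u\bigr)(I^{\otimes d-1} \otimes E^{2,0}) \\
&= (G^{m,d} \otimes I)(\u^{\otimes(d-1)} \otimes \u^{\otimes 2})(I^{\otimes d-1} \otimes E^{2,0}) \\
&= (G^{m,d} \otimes I)\bigl(\u^{\otimes(d-1)} \otimes E^{2,0}\bigr) \\
&= (G^{m,d} \otimes I)(I^{\otimes d-1} \otimes E^{2,0})\, \u^{\otimes(d-1)} \\
&= G^{m+1, d-1}\, \u^{\otimes(d-1)},
\end{align*}
where the penultimate step uses $\u^{\otimes(d-1)} \otimes E^{2,0} = (I^{\otimes d-1} \otimes E^{2,0})(\u^{\otimes(d-1)} \otimes 1)$, another instance of \eqref{eq:tensordistribute}. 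The symmetric direction $(m, d) \mapsto (m-1, d+1)$ runs analogously using the dual \emph{cap identity} $E^{0,2} \u^{\otimes 2} = E^{0,2}$ together with the reflected gadget decomposition $F^{m-1, d+1} = (I^{\otimes m-1} \otimes E^{0,2})(F^{m,d} \otimes I)$.

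The main obstacle is purely bookkeeping in the presence of the noncommuting entries of $\u$: one cannot collapse $(\u^{\otimes n})^\dagger$ to $(\u^\dagger)^{\otimes n}$ or freely reorder factors, so every manipulation must respect the tensor-slot structure. The cup and cap identities are exactly what make the argument go through, since they absorb the extra $\u^{\otimes 2}$ created when the hypothesis is passed across the tensor product, and every intermediate rewriting acts on a single tensor slot at a time so that noncommuting entries never need to commute past one another.
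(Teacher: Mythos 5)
Your proposal is correct, but it takes a genuinely different route from the paper's proof. The paper argues by a direct entrywise computation: it multiplies on the right by $(\u^{\otimes d})^{\dagger}$ and, expanding indices with the column-reversal convention of \autoref{def:flatten} and the formula $((\u^{\otimes d})^{\dagger})_{\ell_1\ldots\ell_d,\,j_1\ldots j_d}=u_{j_d\ell_d}\cdots u_{j_1\ell_1}$, shows every $(m,d)$-identity is equivalent to the one canonical identity $\u^{\otimes n}f=g$. You instead pivot one slot at a time via cup/cap intertwiners: the decompositions $F^{m+1,d-1}=(F^{m,d}\otimes I)(I^{\otimes d-1}\otimes E^{2,0})$ and $F^{m-1,d+1}=(I^{\otimes m-1}\otimes E^{0,2})(F^{m,d}\otimes I)$ (which do follow from \autoref{lem:rotategadget} and \autoref{lem:gadgetmatrix}, or from a short index check), the relations $\u^{\otimes 2}E^{2,0}=E^{2,0}$ and $E^{0,2}\u^{\otimes 2}=E^{0,2}$, and only the scalar-against-quantum instances of \eqref{eq:tensordistribute}. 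This is essentially the Frobenius-duality argument the paper gestures at in \autoref{rem:frobenius} for the case $F=G$, carried out for a pair $(F,G)$: it is more structural, never touches $(\u^{\otimes d})^{\dagger}$, and isolates exactly which algebra relations are used, at the cost of importing the gadget identities of \autoref{sec:decomposition}; the paper's version is self-contained and also fixes the index conventions that are reused later (e.g.\ in \autoref{thm:quantumholant}). Your forward chain $(m,d)\to(m+1,d-1)$ is verified line by line and is fine.

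One detail worth spelling out in the ``analogous'' step $(m,d)\to(m-1,d+1)$: the literal mirror image stalls on a Kronecker-block mismatch, since $(\u^{\otimes(m-1)}\otimes E^{0,2})(F^{m,d}\otimes I)$ pairs blocks $(m-1,2)$ against $(m,1)$ and \eqref{eq:tensordistribute} does not apply directly. The fix is to use the cap identity in the other direction, $E^{0,2}=E^{0,2}\u^{\otimes 2}$, to rewrite $\u^{\otimes(m-1)}\otimes E^{0,2}=(I^{\otimes m-1}\otimes E^{0,2})(\u^{\otimes(m-1)}\otimes\u^{\otimes 2})$; then merging $\u^{\otimes(m-1)}\otimes\u^{\otimes 2}=\u^{\otimes m}\otimes\u$, applying the hypothesis, and reassembling with the $G$-decomposition yields $G^{m-1,d+1}\u^{\otimes(d+1)}$ exactly as in your forward computation. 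With that rearrangement (every regrouping keeps a scalar matrix on one side), the reverse step goes through and the iteration over neighboring splittings gives the full equivalence.
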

\begin{proof}
    By \eqref{eq:tensorinverse}, we have 
    \[
        \u^{\otimes m} F^{m,d} = G^{m,d}\u^{\otimes d} \iff
        \u^{\otimes m} F^{m,d} (\u^{\otimes d})^{{\dagger}} = G^{m,d}.
    \]
    We will show $\u^{\otimes m} F^{m,d} (\u^{\otimes d})^{{\dagger}} = G^{m,d}$ is equivalent to
    $\u^{\otimes m+d}f = g$ (the $m=n$, $d=0$ case) for every
    $m,d \geq 0$, $m+d=n$.
Recall that $F^{m,d}(k_1, \ldots, k_m, \ell_1, \ldots, \ell_d) = f_{k_1\ldots k_m \, \ell_d \ldots \ell_1}$ (notice the reversal of second half of the
indices by definition), and note that 
\[( (\u^{\otimes d})^{\dagger} )_{\ell_1 \ldots \ell_d, j_1  \ldots j_d}
= ( (\u^{\otimes d})_{j_1  \ldots j_d, \ell_1 \ldots \ell_d} )^{\dagger}
=( u_{j_1, \ell_1}  \cdots u_{j_d,\ell_d} )^* 
= u_{j_d,\ell_d} \cdots u_{j_1, \ell_1}.
\]
For $md > 0$ we have
    \begin{align*}
        (\u^{\otimes m} F^{m,d} (\u^{\otimes d})^{{\dagger}})_{i_1^m,j_1^d}
        &= \sum_{k_1^m} \sum_{\ell_1^d} (\u^{\otimes m})_{i_1^m, k_1^m}
        F^{m,d}(k_1^m, \ell_1^d)
        ((\u^{\otimes d})^{\dagger} )_{\ell_1^d, j_1^d}\\
        &= \sum_{k_1^m} \sum_{\ell_1^d} u_{i_1,k_1}  \ldots u_{i_m,k_m} 
        u_{j_d,\ell_d}  \ldots u_{j_1,\ell_1}  f_{k_1^m, \ell_d^1} \\
       &= (\u^{\otimes m+d} f)_{i_1^mj_d^1},
    \end{align*}
    matching $G^{m,d}_{i_1^m,j_1^d} = g_{i_1^m j_d^1}$.
    If $m=0$ and $d=n$ then $F^{m,d}$ is a length-$q^d$ row vector and we have
    \begin{align*}
        (F^{0,d} (\u^{\otimes d})^{{\dagger}})_{j_1^d}
        &= \sum_{\ell_1, \ldots,  \ell_d} F^{0,d}(\mbox{-},\ell_1, \ldots, \ell_d) u_{j_d,\ell_d}  \cdots u_{j_1,\ell_1} \\
        &= \sum_{\ell_1^d} u_{j_d,\ell_d} \cdots  u_{j_1,\ell_1}  f_{\ell_d^1}  \\
        &= (\u^{\otimes d} f)_{j_d^1},
    \end{align*}
matching $G^{0,d}_{\mbox{-},j_1^d} = g_{j_d^1} = g_{j_d \ldots j_1}$.
\end{proof}

\begin{lemma}
    \label{lem:rotateinput}
    Let $F, G \in \c^{[q]^n}$ and $\u$ be a $q \times q$ quantum permutation matrix.
    For any $r \in [n-1]$ and $m,d \geq 0$, $m+d= n$,
    \[
        \u^{\otimes m} \left(F^{(r)}\right)^{m,d} = \left(G^{(r)}\right)^{m,d}
        \u^{\otimes d}
        \iff 
        \u^{\otimes m} F^{m,d} = G^{m,d}
        \u^{\otimes d}.
    \]
\end{lemma}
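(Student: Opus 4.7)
The plan is to reduce the iff to \autoref{lem:tensorcftog} together with a direct entrywise computation using the orthogonal-projector identities of a quantum permutation matrix. First, I apply \autoref{lem:tensorcftog} to the unrotated pair $F, G$ to convert the right-hand condition of the iff into the same equation at an alternative splitting; focusing on the case $r \leq m$, I take $(m', d') = (m - r,\, d + r)$, so it suffices to show
\[
\u^{\otimes (m-r)} F^{m-r,\,d+r} = G^{m-r,\,d+r} \u^{\otimes (d+r)}
\iff
\u^{\otimes m} (F^{(r)})^{m,d} = (G^{(r)})^{m,d}\u^{\otimes d}.
\]
The case $r > m$ is handled symmetrically, moving $n - r$ dangling-edge indices to the other side of the split instead.

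To prove the forward direction, unwinding definitions gives the entry-level identity $(F^{(r)})^{m,d}_{a_1^m,\,b_1^d} = F^{m-r,\,d+r}_{a_{r+1}^m,\; a_r^1 b_1^d}$ (and similarly for $G$), which is just a consequence of the flattening convention of \autoref{def:flatten} together with the cyclic shift in $F^{(r)}$. Taking the alternative-splitting equation at row index $y_{r+1}^m$ and column index $a_r^1 z_1^d$, left-multiplying by $\prod_{i=1}^r u_{y_i a_i}$, and summing over $a_1^r$, the left-hand side reorganizes to exactly the $(y_1^m,\,z_1^d)$-entry of $\u^{\otimes m}(F^{(r)})^{m,d}$. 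The crux of the argument is the collapse of the resulting sum $\sum_{a_1^r} u_{y_1 a_1}\cdots u_{y_r a_r}\, u_{f_1 a_r}\cdots u_{f_r a_1}$ appearing on the right. Using the column-orthogonality relation $u_{yx} u_{fx} = \delta_{yf}\, u_{yx}$ (a consequence of $\u$ being a quantum permutation matrix) together with the row-sum $\sum_x u_{yx} = \one$, I evaluate the innermost sum $\sum_{a_r} u_{y_r a_r} u_{f_1 a_r} = \delta_{y_r, f_1}\,\one$, then $\sum_{a_{r-1}} u_{y_{r-1} a_{r-1}} u_{f_2 a_{r-1}} = \delta_{y_{r-1}, f_2}\,\one$, and so on outward. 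This yields $\prod_{j=1}^r \delta_{y_{r-j+1},\, f_j}$, which pins $f_1^r = y_r^1$, and the surviving terms reassemble into the $(y_1^m,\,z_1^d)$-entry of $(G^{(r)})^{m,d}\u^{\otimes d}$ as desired.

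The reverse direction follows by applying the forward direction to $F^{(r)}, G^{(r)}$ in place of $F, G$ with rotation parameter $n - r$, since $(F^{(r)})^{(n-r)} = F$. The main obstacle is the careful bookkeeping of the reversed-order index conventions (the $a_r^1$ versus $a_1^r$ orderings arising from the flattening) and the case split $r \leq m$ vs.\ $r > m$; but once the column-orthogonality collapse identity is in hand, the rest of the argument is essentially routine index manipulation.
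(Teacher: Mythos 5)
Your proposal is correct, and it rests on the same two ingredients as the paper's own proof: \autoref{lem:tensorcftog} to move between splittings, and the quantum permutation relations $u_{ij}u_{kj}=\delta_{ik}u_{ij}$ and $\sum_j u_{ij}=\one$ used in a multiply-and-sum step. The organization differs: the paper first passes to the vectorized $(n,0)$ splitting, reduces to $r=1$, performs two successive multiply-and-sum steps there, and gets the converse by iterating the $r=1$ implication; you keep a general splitting, shift it by $r$ via \autoref{lem:tensorcftog}, and collapse all $r$ index pairs in one telescoping pass -- which works precisely because the column-index reversal in \autoref{def:flatten} places the factors $u_{y_r a_r}u_{\ell_1 a_r}$, $u_{y_{r-1}a_{r-1}}u_{\ell_2 a_{r-1}},\dots$ adjacent to one another. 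Two small points you should make explicit. First, the case split $r\le m$ versus $r>m$ is avoidable: by \autoref{lem:tensorcftog} applied to the pair $(F^{(r)},G^{(r)})$ one may first move to the $(n,0)$ splitting, where $m=n\ge r$ always holds, which is effectively what the paper does. Second, your converse step silently uses \autoref{lem:tensorcftog} again, now for the rotated pair, to reach the shifted splitting required when the forward direction is invoked with parameter $n-r$; this is legitimate (that lemma holds for arbitrary $F,G\in\c^{[q]^n}$), but it needs to be stated, since when $n-r>m$ the splitting $(m-(n-r),\,d+(n-r))$ you would naively want does not exist and you must either use the symmetric case of your forward argument or first re-flatten.
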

\begin{proof}
Iterating $r$ times the transformation from $F$ to $F^{(1)}$ we obtain $F^{(r)}$.
So it suffices to prove the case for $r=1$.
    By \autoref{lem:tensorcftog}, it is sufficient to show
    $\u^{\otimes n} f^{(1)} = g^{(1)} \iff \u^{\otimes n} f = g$, where $f^{(1)} = \left(F^{(1)}\right)^{n,0}$
    and $g^{(1)} = \left(G^{(1)}\right)^{n,0}$. Assume $\u^{\otimes n} f = g$,
    so for all $\vx \in [q]^n$, $\sum_{\vy} u_{x_1y_1}
    \ldots u_{x_{n-1}y_{n-1}} u_{x_ny_n}f_{\vy} = g_{\vx}$.
    For fixed $a$ and $x_1^{n-1}$, multiply both sides on the right
    by $u_{x_n a}$ and sum the resulting equation over all values
    of $x_n$. Since $\sum_{x} u_{xy_n} u_{xa} = \delta_{y_na}$, we get
    \begin{equation}
        \sum_{y_1^{n-1}} u_{x_1y_1} \ldots u_{x_{n-1}y_{n-1}}
        f_{y_1^{n-1}a}  
        = \sum_{x_n} u_{x_na} g_{\vx}.
        \label{eq:rotproof1}
    \end{equation}
    Now, for any  $b$,  multiply both sides of \eqref{eq:rotproof1} by $u_{ba}$ 
    on the left, and sum over all values of
    $a$. Since $\sum_{a} u_{ba} u_{xa} = \delta_{bx}$, we get
    \[
        (\u^{\otimes n} f^{(1)})_{b x_1^{n-1}} =
        \sum_{a,y_1^{n-1}} u_{ba} u_{x_1y_1} \ldots u_{x_{n-1}y_{n-1}} f_{y_1^{n-1}a}
        = g_{x_1^{n-1}b} = g^{(1)}_{b x_1^{n-1}}.
    \]
    This holds for all $b$ and $x_1^{n-1}$, so we have
    $\u^{\otimes n} f^{(1)} = g^{(1)}$. Now the $(\Rightarrow)$
    direction  (for $r=1$ case)  
    follows by repeating the above $n-1$ times. 
\end{proof}
We would like an analogue of \autoref{lem:rotateinput} for the reflection $F^{\top}$. 
However,  due to $\u$'s noncommutativity (in particular, the inverse
of $\u^{\otimes n}$ is $(\u^{\otimes n})^{\dagger}$, not $(\u^{\otimes n})^{\top}$, for $n > 1$), reflection introduces a conjugation.
\begin{lemma}
    \label{lem:conjugatecommute}
    Let $F, G \in \c^{[q]^n}$ and $\u$ be a $q \times q$ quantum permutation matrix.
    For any $m,d \geq 0$, $m+d = n$,
    \[
        \u^{\otimes m} (F^{\dagger})^{m,d}
        = (G^{\dagger})^{m,d} \u^{\otimes d}
        \iff
        \u^{\otimes m} F^{m,d}
        = G^{m,d} \u^{\otimes d}.
    \]
\end{lemma}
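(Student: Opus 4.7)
The plan is to reduce the iff to the unitarity of $\u^{\otimes k}$ from \eqref{eq:tensorinverse} by taking conjugate transpose of both sides of the intertwining equation, using \autoref{lem:tensorcftog} and \eqref{eq:transposeequiv} as preparation. The key observation is that, unlike the rotation case handled in \autoref{lem:rotateinput} which required an index-level calculation, the reflection/conjugation case admits a purely algebraic manipulation because $\dagger$ of a matrix product swaps the factor order in a way that pairs naturally with unitarity.

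First I would invoke \autoref{lem:tensorcftog} with the pair $(F^\dagger, G^\dagger)$ in place of $(F,G)$ to rewrite the left-hand-side equation $\u^{\otimes m}(F^\dagger)^{m,d} = (G^\dagger)^{m,d}\u^{\otimes d}$ as the equivalent statement with flattening $(d,m)$ in place of $(m,d)$, namely $\u^{\otimes d}(F^\dagger)^{d,m} = (G^\dagger)^{d,m}\u^{\otimes m}$. Next I would apply \eqref{eq:transposeequiv}, which gives $(F^\dagger)^{d,m} = (F^{m,d})^\dagger$ and similarly for $G$, transforming this into $\u^{\otimes d}(F^{m,d})^\dagger = (G^{m,d})^\dagger\u^{\otimes m}$. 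Taking the conjugate transpose of both sides then yields $F^{m,d}(\u^{\otimes d})^\dagger = (\u^{\otimes m})^\dagger G^{m,d}$. Finally I would left-multiply by $\u^{\otimes m}$ and right-multiply by $\u^{\otimes d}$, and apply \eqref{eq:tensorinverse} to collapse the $\u\u^\dagger$ factors on each side, producing $\u^{\otimes m}F^{m,d} = G^{m,d}\u^{\otimes d}$, which is the RHS of the iff. Each of these steps is reversible (taking $\dagger$ again, multiplying by the appropriate adjoints, re-applying \autoref{lem:tensorcftog} and \eqref{eq:transposeequiv}), so the converse direction follows by running them backwards.

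The main subtlety, rather than a serious obstacle, is to verify that $\dagger$ behaves correctly on products of $\c$-valued matrices like $F^{m,d}$ with $C^*$-algebra-valued matrices like $\u^{\otimes d}$. The identity $(AB)^\dagger = B^\dagger A^\dagger$ holds for any matrices over a $*$-algebra, and complex scalars, identified with scalar multiples of $\one$, satisfy $(\alpha\one)^* = \bar\alpha\one$, so the ``conjugate, intertwine, simplify'' strategy carries through cleanly. Once this is noted, the proof is a short algebraic chain that does not require any direct entrywise computation in the noncommutative $C^*$-algebra.
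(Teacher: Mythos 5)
Your proposal is correct and follows essentially the same route as the paper's proof: rewrite via \eqref{eq:transposeequiv}, take the conjugate transpose of the intertwining equation (valid since $(AB)^\dagger = B^\dagger A^\dagger$ over the $*$-algebra), and cancel using the unitarity of $\u^{\otimes k}$, with \autoref{lem:tensorcftog} handling the change of flattening. The only difference is cosmetic — you invoke \autoref{lem:tensorcftog} to pass to the $(d,m)$ flattening before transposing, whereas the paper transposes first and absorbs the flattening swap at the end.
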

\begin{proof}
    By \eqref{eq:transposeequiv}, we have
    $(F^{\dagger})^{m,d} = (F^{d,m})^{\dagger}$ and $(G^{\dagger})^{m,d} = (G^{d,m})^{\dagger}$. $F$ and $G$ are scalar-valued, so their
    entries commute with $\u$'s. Hence taking the conjugate transpose of both sides of the first equation gives 
    \[
        F^{d,m}(\u^{\otimes m})^{\dagger} = (\u^{\otimes d})^{\dagger} G^{d,m}.
    \] 
    The result follows from multiplying on the right by $\u^{\otimes m}$ and on
    the left by $\u^{\otimes d}$.
\end{proof}

\subsection{Quantum holographic transformations}
\label{sec:holanttheorem}
We now prove the first half of our main result: If $\fc \cong_{qc} \gc$, then $Z(K) = Z(K_{\fc\to\gc})$ for
every planar \#CSP$(\fc)$ instance $K$. The proof will be by a technique inspired by 
\emph{holographic transformation}.
Holographic transformations, introduced by Valiant in \cite{valiant}, transform a Holant
signature grid into another grid on the same underlying graph with different signatures, resulting in the same Holant value.
Fix domain size $q$. For a set $\mathcal{F}$ of signatures and an invertible 
$T \in \c^{q \times q}$, write $T\mathcal{F} = \{T^{\otimes k}f \mid F \in \mathcal{F} \text{ has arity $k$}\}$.
Define $\mathcal{F}T$ similarly.
Valiant's Holant Theorem in \cite{valiant} states that, for any signature grid $\Omega$ and
sets of signatures $\mathcal{F}, \mathcal{G}$,
\[
    \holant_{\Omega}(\mathcal{F} \mid \mathcal{G}) = 
    \holant_{\Omega'}(T\mathcal{F} \mid \mathcal{G}T^{-1}),
\]
where $\Omega'$ is constructed from $\Omega$ by replacing every signature in 
$\mathcal{F}$ or $\mathcal{G}$ with
the corresponding transformed signature in $T\mathcal{F}$ or $\mathcal{G}T^{-1}$, respectively.

The signature grids $\Omega_K$ and $\Omega_{K_{\fc\to\gc}}$ satisfying 
$$Z(K) = \plholant_{\Omega_K}(\fc \mid \eq) \mbox{~~~~and~~~~}
Z(K_{\fc\to\gc}) = \plholant_{\Omega_{K_{\fc\to\gc}}}(\gc \mid \eq)$$ 
are the same, up to every signature $F \in \fc$ being replaced by the corresponding signature $G \in \gc$.
Assuming $\fc \cong_{qc} \gc$, there is a quantum permutation matrix $\u$ satisfying $\u^{\otimes n}f = g$ for every $F \in \fc$ and corresponding $G \in \gc$. This suggests
that we perform a \emph{quantum} holographic transformation using
$\u$. The following calculation supports
this idea:
for any $q \times q$ quantum permutation matrix $\u$, and any $k \geq 1$, $\vx \in [q]^k$,
\[
    (\u^{\otimes k} E^{k,0})_{\vx}
    = \sum_{\vy} u_{x_1y_1} \ldots u_{x_ky_k} E^{k,0}_{\vy}
    = \sum_y u_{x_1y} \ldots u_{x_ky}
    = \begin{cases}
        \sum_y u_{x_1y} = \one & \mbox{if } x_1 = \ldots = x_k \\
        ~~~~~~{\bf 0} & \text{otherwise}
    \end{cases}\]
which is 
    $E^{k,0}_{\vx}$ (or more technically $E^{k,0}_{\vx} \one$).
Thus by \autoref{lem:tensorcftog},
\begin{equation}
    \label{eq:ue}
    \u^{\otimes m} E^{m,d} = E^{m,d} \u^{\otimes d}
    \text{ for every $m,d \geq 0$.}
\end{equation}
In particular, since $\u^{-1} = \u^{\dagger}$ is also a quantum permutation matrix, $(\eq)\u^{-1} = \eq$. Now the Holant
theorem with $T$ set to $\u$ seems to give
\begin{equation}
    \label{eq:incorrectholant}
    Z(K) = \holant_{\Omega_K}(\fc \mid \eq) = \holant_{\Omega_{K_{\fc\to\gc}}}(\gc \mid \eq) = Z(K_{\fc\to\gc})
\end{equation}
for any, not necessarily planar, \#CSP$(\fc)$ instance $K$. However, this cannot be true. If $\fc = \{F\}$ and $\gc = \{G\}$, where 
$F$ and $G$ are symmetric, binary, 0-1 valued, \eqref{eq:incorrectholant} implies that the graphs with adjacency matrices $F$ and $G$
admit the same number of homomorphisms from any graph, giving $F \cong G$, a classical result of Lov\'asz~\cite{lovasz_operations}. In other words, any quantum isomorphic graphs are classically isomorphic.
But this is known to be false -- see e.g. \cite{asterias}.
This discrepancy is due to the proof of the Holant theorem breaking down when the  
entries of the holographic transformation matrix $T$ are noncommutative.
The proof of the Holant theorem goes roughly as follows (see \cite{cai_chen_2017}): 
Since $TT^{-1} = I$, we can subdivide
each edge of $\Omega$ into three edges by introducing two arity-2 vertices, assigned $T$ and $T^{-1}$,
without changing the Holant value. Since $\Omega$ is bipartite, partitioned according to
whether a vertex is
assigned a signature from $\mathcal{F}$ or $\mathcal{G}$, we can subdivide every edge in such a way
that every $u$ assigned $F \in \mathcal{F}$ or $v$ assigned $G \in \mathcal{G}$ is now adjacent to only vertices 
assigned $T$ or $T^{-1}$, respectively. Then we can associate $u$ and its surrounding $T$ vertices
into a single `gadget' with signature $T^{\otimes \text{deg}(u)}f$, and similarly associate $v$ and
its surrounding $T^{-1}$ vertices into $g^{T}(T^{-1})^{\otimes \text{deg}(v)}$, giving the result.
Back to  \eqref{eq:incorrectholant}, even though the signatures and the ultimate Holant values  are scalars
(technically scalar multiples of $\one$, such values  commute multiplicatively), the second equality in
\eqref{eq:incorrectholant} does not hold because it has a hidden dependence on the Holant value of
the signature grid in the proof, 
expressed as sum of products in the $C^*$-algebra,
in which $\u$ and $\u^{\dagger}$ with noncommutative entries appear as
signatures. Without further specification, the Holant value for such a signature grid is not even well-defined: the product
$\prod_{v \in V} F_v (\sigma \mid_{E(v)})$ 
in (\ref{eqn:def-holant}) does not specify an ordering of $V$, but when $F_v$ can
take noncommutative values, different orderings give different products.

When $\Omega_K$ is planar, however, the decomposition procedure for $\Omega_K$ in the proof of 
\autoref{thm:generatepn}
produces a sequence of gadgets whose signature matrices multiply to the Holant value, in a sense
defining an ordering of $\Omega_K$'s vertices. We will use $\u$ as a `quantum holographic transformation'
by inserting $\u^{\otimes k}$ and its inverse $(\u^{\otimes k})^{\dagger}$ between every pair of these
gadgets, converting every $F \in \fc$ to the corresponding $G \in \gc$ and preserving $\eq$.

\begin{theorem}[Quantum Holant Theorem]
    \label{thm:quantumholant}
    Let $\u$ be a $q \times q$ quantum permutation matrix, and
    let ${\cal F}$ and $\u\fc$ be compatible CC sets of domain-$q$ complex-valued signatures. Then for every
    $\plholant({\cal F})$ signature grid $\Omega$,
    \[
        \plholant_{\Omega}({\cal F}) = 
        \plholant_{\Omega'}(\u{\cal F}),
    \]
    where $\Omega'$ is constructed from $\Omega$ by replacing
    every signature in ${\cal F}$ with the corresponding
    signature in $\u{\cal F}$.
\end{theorem}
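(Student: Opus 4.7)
The plan is to use the planar gadget decomposition of \autoref{thm:generatepn} to express $\Omega$ as a structured combination of three simple building blocks, and then show by structural induction that the desired equality propagates through each building block and through each of the gadget operations $\circ$, $\otimes$, $\dagger$.

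First I would reduce to the $\plholant(\fc \mid \eq)$ setting by subdividing each edge of $\Omega$ with a degree-2 equality vertex; since $E_2$ acts as the identity, this preserves the Holant value and yields a bipartite $\plholant(\fc \mid \eq)$ signature grid, which I view as a gadget in $\gf(0,0)$. By \autoref{thm:generatepn}, this gadget can be written as a finite expression in the base gadgets $\e^{1,0}$, $\e^{1,2}$, and $\{\f^{n_F,0} : F \in \fc\}$ using $\circ$, $\otimes$, and $\dagger$. The corresponding expression with each $\f^{n_F,0}$ replaced by the gadget $\mathbf{G}^{n_F,0}$ built from the counterpart signature $G = \u^{\otimes n_F} f \in \u\fc$ equals, by \autoref{lem:gadgetmatrix}, the analogous subdivided gadget for $\Omega'$.

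The key technical step is the inductive claim: for every $\mathbf{K} \in \pn(m,d)$ built from the base gadgets as above, letting $\mathbf{K}'$ denote the analogous gadget over $\u\fc$,
\[
    \u^{\otimes m} M(\mathbf{K}) = M(\mathbf{K}') \u^{\otimes d},
\]
which specializes at $m=d=0$ to the scalar equality $M(\Omega) = M(\Omega')$ that we want. The three base cases are immediate: $\u E^{1,0} = E^{1,0}$ from $\sum_y u_{xy} = \one$; $\u E^{1,2} = E^{1,2}\u^{\otimes 2}$ from \eqref{eq:ue}; and $\u^{\otimes n_F}f = g$ by the definition of the counterpart. The inductive step for $\circ$ is a direct concatenation of IH equations via \autoref{lem:gadgetmatrix}. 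For $\otimes$, the entries of $M(\mathbf{K}_i)$ are scalars in $\c$ and so commute freely past the noncommutative entries of $\u$, which gives the entrywise factorization $\u^{\otimes m_1+m_2}(M(\mathbf{K}_1) \otimes M(\mathbf{K}_2)) = (\u^{\otimes m_1}M(\mathbf{K}_1)) \otimes (\u^{\otimes m_2}M(\mathbf{K}_2))$; applying the IH to each factor and then refactoring by the same scalar commutativity recombines the result into $(M(\mathbf{K}_1') \otimes M(\mathbf{K}_2'))\u^{\otimes d_1+d_2}$.

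The $\dagger$ case is the main obstacle, since it interacts simultaneously with the noncommutative $C^*$-algebra involution and with the signature-replacement operation. Taking $\dagger$ of the IH and using \eqref{eq:tensorinverse} to multiply by $\u^{\otimes d}$ on the left and $\u^{\otimes m}$ on the right yields $\u^{\otimes d} M(\mathbf{K}^\dagger) = M(\mathbf{K}')^\dagger \u^{\otimes m}$. This closes the induction provided $M(\mathbf{K}')^\dagger = M((\mathbf{K}^\dagger)')$; this is precisely what the compatibility of the CC sets $\fc$ and $\u\fc$ guarantees, because the counterpart of $\overline F$ in $\u\fc$ must be $\overline G$, so the operations ``apply $\prime$'' and ``apply $\dagger$'' commute at the level of gadget labels. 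Planarity is indispensable throughout: the decomposition of \autoref{thm:generatepn} supplies the natural ordering of noncommutative $\u$-entries across the grid that is absent in the general case, and this is exactly why the quantum holographic transformation is peculiar to planar Holant.
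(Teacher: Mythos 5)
Your proposal is correct, but it is organized differently from the paper's proof, and the comparison is worth recording. The paper does not run a structural induction over $\circ,\otimes,\dagger$: it treats $\Omega$ directly (re-examining the extraction procedure of \autoref{thm:generatepn} rather than invoking it as a black box), obtains a \emph{linear chain} decomposition $\plholant_\Omega = V_1^f\bigl(\prod_i V_i^f\bigr)V_p^f$ with each factor of the form $E_2^{\otimes r_i}\otimes F_i^{m_i,d_i}\otimes E_2^{\otimes t_i}$, handles arbitrarily rotated and clockwise-oriented constraint vertices explicitly via \autoref{lem:rotateinput} and \autoref{lem:conjugatecommute}, and then inserts $(\u^{\otimes k})^{\dagger}\u^{\otimes k}$ between consecutive factors and reassociates. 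You instead subdivide edges to land honestly in $\pn(0,0)$, take \autoref{thm:generatepn} as given, and prove by induction on the closure expression the stronger two-sided intertwining statement $\u^{\otimes m}M(\k)=M(\k')\u^{\otimes d}$ for every gadget $\k$ in the closure, where $\k'$ is the $\u\fc$-counterpart; the rotation issue is then absorbed into \autoref{thm:generatepn} (where rotations are implemented by composing with $\e^{2,0}$/$\e^{0,2}$, which your $\circ$ case handles generically), and the reflection/conjugation issue is absorbed into your $\dagger$ case, whose closure under the replacement $\fc\to\u\fc$ is exactly the CC-compatibility hypothesis ($\overline F\mapsto\overline G$) -- you correctly identified this as the one place compatibility is indispensable. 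Your $\otimes$ and $\dagger$ steps also correctly sidestep the sole noncommutativity trap: you only ever distribute $\u^{\otimes k}$ (never $(\u^{\otimes k})^{\dagger}$) across a tensor product, and the recombination $(M_1'\u^{\otimes d_1})\otimes(M_2'\u^{\otimes d_2})=(M_1'\otimes M_2')\u^{\otimes d_1+d_2}$ is legitimate because the $M_i'$ have scalar entries. What each route buys: the paper's chain form makes the ``quantum holographic transformation'' picture explicit and parallels Valiant's original Holant-theorem argument, pinpointing where planarity supplies the multiplication order; your route yields a reusable general statement for $(m,d)$-gadgets (essentially the extension the paper later needs in \autoref{lem:rob2}, and a two-signature-set analogue of \autoref{thm:intertwinersigmatrix}) without re-opening the decomposition proof, at the cost of having to check that the replacement operation commutes with $\circ$, $\otimes$, and $\dagger$ at the level of gadgets, which your appeal to compatibility settles.
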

\begin{proof}
    Since Holant is multiplicative over connected components of a signature grid
    (and scalar multiples of $\one$ commute multiplicatively), we may assume 
    $\Omega$ is connected. We treat $\Omega \in \pn(0,0)$ as a gadget
    with no danging edges.
    This proof uses a similar decomposition in the proof of \autoref{thm:generatepn}, but we may ignore some subtleties in that
    proof. The vertices in our $\Omega$ are
    not exclusively assigned signatures in $F$ or $\eq$, but
    the proof of \autoref{thm:generatepn} does not use any
    properties of these signatures. Here, we ignore
    vertices assigned $E_2$, since they do not affect the Holant
    value. Therefore, since the only difference between the
    $\e$ and $\f$-cases of extraction in the proof of
    \autoref{thm:generatepn} is the placement of vertices assigned $E_2$, we do not distinguish these cases. 
    
    Recall from the proof of \autoref{thm:generatepn} that,
    since constraint vertices can have arbitrary orientation
    in the signature grid,
    every extracted $\f^{m,d}$ gadget is really
    $(\f^{(r)})^{m,d}$ for some $r$, or the corresponding
    clockwise version. In the first case, \autoref{lem:rotateinput} states that, if $G$ is the
    signature satisfying $\u^{\otimes n}f = g$,
    $\u^{\otimes m}F^{m,d} = G^{m,d}\u^{\otimes d}$ is
    equivalent to $\u^{\otimes m} (F^{(r)})^{m,d}
    = (G^{(r)})^{m,d} \u^{\otimes d}$. Since $G$ inherits $F$'s orientation when we replace $F$ with $G$
    in the signature grid, we may assume below that every
    $\f$-type gadget is $\f^{m,d}$, with no rotation.
    If the extracted constraint vertex is oriented clockwise,
    the extracted gadget has signature
    matrix
    $((F^{(r)})^{d,m})^{\top}$ for some $r$, since transposing the extracted gadget reverses its
    central vertex's
    orientation in the signature grid from clockwise to counterclockwise and transposes its signature matrix.
    By \eqref{eq:transposeequiv},
    $\left(\left(F^{\left(r\right)}\right)^{d,m}\right)^{\top} = \left(\left(\overline{F}^{\left(r\right)}\right)^{\dagger}\right)^{m,d}$. Since $\fc$ and
    $\gc = \u\fc$ are compatible, we have
    $\u^{\otimes m}F^{m,d} = G^{m,d}\u^{\otimes d} \iff \u^{\otimes m} \overline{F}^{m,d}
    = \overline{G}^{m,d} \u^{\otimes d}$. Now by
    \autoref{lem:rotateinput} and \autoref{lem:conjugatecommute},
    \[
        \u^{\otimes m}F^{m,d} = G^{m,d}\u^{\otimes d} \iff
        \u^{\otimes m}\left(\left(\overline{F}^{(r)}\right)^{\dagger}\right)^{m,d}
        = \left(\left(\overline{G}^{(r)}\right)^{\dagger}\right)^{m,d}\u^{\otimes d}.
    \]
    Thus we may again assume the extracted gadget is
    $\f^{m,d}$ for the calculation below.
    
    Consider a decomposition of $\Omega$ in $p$ steps,
    choosing vertex $v_i$, assigned signature $F_i$, on step $i$ as the vertex with consecutive dangling edges.
    Since $\Omega$ initially has no dangling edges, we apply \eqref{eq:newnodangling} after choosing
    $v_1$, extracting a gadget
    with signature matrix $F_1^{0,n}$
    and producing only output dangling edges.
    Thus after choosing $v_1$, we apply 
    \eqref{eq:newleft1}/\eqref{eq:newleft2},  
    This rule leaves no input dangling
    edges in $\k'$, so after choosing every $v_i$ we always apply \eqref{eq:newleft1}/\eqref{eq:newleft2}.
    The extracted gadgets form a chain of compositions.
    Upon extracting all $p$ vertices from $\Omega'$, we find that there exist integers
    $m_2^p$, $r_1^{p-1}$, $t_1^{p-1}$, and $d_1^{p-1}$ satisfying
    \begin{enumerate}
        \item $m_i > 0$ for $2 \leq i \leq p$ and $r_i,t_i,d_i \geq 0$ for $1 \leq i \leq p-1$.
        \item $r_1 = t_1 = 0$
        \item $r_i + d_i + t_i = r_{i+1} + m_{i+1} + t_{i+1} > 0$ for $1 \leq i \leq p-2$
        \item $m_{p} = r_{p-1} + d_{p-1} + t_{p-1}$.
    \end{enumerate}
    such that the extracted gadgets have signature matrices
    \begin{align*}
        &V_1^f := F_1^{0,d_1} \\
        & V_i^f := E_2^{\otimes r_i} \otimes F_i^{m_i,d_i}  
            \otimes E_2^{\otimes t_i}
        \quad \text{for } 2 \leq i \leq p-1, \\
        &V_p^f := F_p^{m_p,0}.
    \end{align*}
    $\Omega$ is a gadget with no dangling edges, so its
    signature matrix is a $q^0 \times q^0 = 1 \times 1$ matrix, which we view as a scalar storing $\plholant_{\Omega}$.
    We have
    \begin{equation}
        \label{eq:holantdecompf}
        \plholant_{\Omega} = V_1^f \left(\prod_{i=2}^{p-1} V_i^f\right) V_p^f,
    \end{equation}

    Define $V^g_1$, $V^g_i$, and $V^g_p$ 
    by replacing replacing $F_i$ in $V^f_i$, $1 \leq i \leq p$,
    by $G_i$ satisfying $g_i = \u^{\otimes n}f_i$. We similarly have
    \begin{equation}
        \label{eq:holantdecompg}
        \plholant_{\Omega'} = V_1^g \left(\prod_{i=2}^{p-1} V_i^g\right) V_p^g.
    \end{equation}

    By assumption, \autoref{lem:tensorcftog},
    \eqref{eq:tensorinverse}, and \eqref{eq:tensordistribute}, we have, for any $2 \le i \le p-1$, (for brevity we suppress the
subscript $i$ below)
  \begin{eqnarray*}
    & &   \u^{\otimes r+m+t}
        \left(E_2^{\otimes r}
        \otimes F^{m,d} 
        \otimes E_2^{\otimes t}\right) \\
        &= &\left(\u^{\otimes r} E_2^{\otimes r}\right)
        \otimes \left(\u^{\otimes m} F^{m,d} \right)
        \otimes \left(\u^{\otimes t} E_2^{\otimes t}\right)\\
        &= & \left(E_2^{\otimes r} \u^{\otimes r}\right)
        \otimes \left( G^{m,d} \u^{\otimes d}\right)
        \otimes \left(E_2^{\otimes t} \u^{\otimes t}\right)\\
        &= & \left(E_2^{\otimes r}
            \otimes G^{m,d} 
        \otimes E_2^{\otimes t}\right)
        \u^{\otimes r+d+t}.
    \end{eqnarray*}
    Multiplying on the right by $(\u^{\otimes r+d+t})^{\dagger}$ gives
    \begin{equation}
        \label{eq:ftogcsp}
        \u^{\otimes r+m+t}
        \left(E_2^{\otimes r}
        \otimes F^{m,d} 
        \otimes E_2^{\otimes t}\right) (\u^{\otimes r+d+t})^{\dagger}
        = \left(E_2^{\otimes r}
        \otimes G^{m,d} 
        \otimes E_2^{\otimes t}\right).
    \end{equation}
    
    $(\u^{\otimes k})^{\dagger} \neq (\u^{\dagger})^{\otimes k}$ in general due to the noncommutativity of $\u$'s entries
    (one can view this as another reason why the Holant theorem fails in this case), 
    so we cannot distribute $(\u^{\otimes k})^{\dagger}$ over
    another tensor product like we do with $\u^{\otimes k}$. This is why we show
    $(\u^{\otimes k})^{\dagger}$ as a blackbox signature in \autoref{fig:insertu}, rather than as a separate signature
    on each wire as with $\u^{\otimes k}$. The above calculation, however, shows this doesn't matter;
    it's enough to have only the $\u^{\otimes r+m+t}$ distribute over each gadget.

    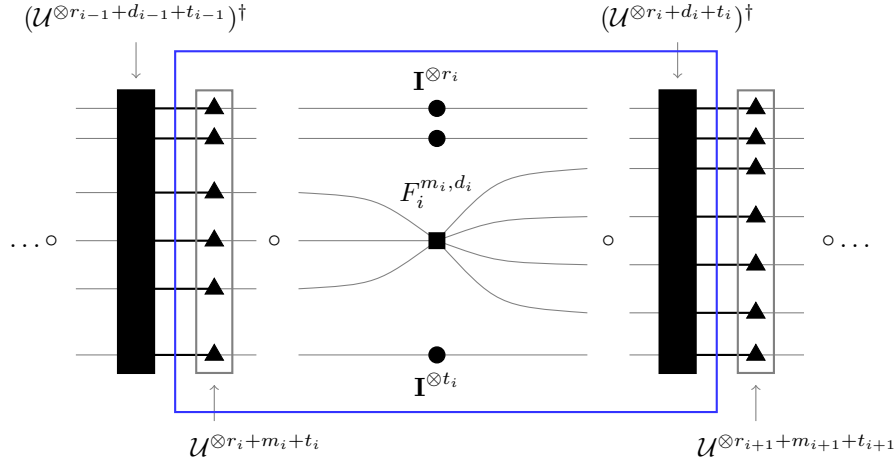
\begin{figure}[ht!]
        \center
        \begin{tikzpicture}[scale=0.8]
    \GraphInit[vstyle=Classic]
    \SetUpEdge[style=-]
    \SetVertexMath

    \def\ary{1.2}
    \def\bry{0.4}
    \def\cry{-0.4}
    \def\dry{-1.2}

    \def\aly{0.8}
    \def\bly{0}
    \def\cly{-0.8}

    \def\sx{8}

    \def\akvy{2.2}
    \def\bkvy{1.7}
    \def\fkvy{-1.9}

    \def\agx{\sx+3.5}
    \def\ulx{\sx-5}
    \def\urx{\sx+4}

    \def\ewirel{\ulx+2.7}
    \def\ewirer{\urx-1.5}

    \draw[thin, color=gray] (\sx, 0) .. controls (\sx + 1, \ary/1.1) .. (\ewirer, \ary);
    \draw[thin, color=gray] (\sx, 0) .. controls (\sx + 1, \bry/1.1) .. (\ewirer, \bry);
    \draw[thin, color=gray] (\sx, 0) .. controls (\sx + 1, \cry/1.1) .. (\ewirer, \cry);
    \draw[thin, color=gray] (\sx, 0) .. controls (\sx + 1, \dry/1.1) .. (\ewirer, \dry);

    \draw[thin, color=gray] (\ewirel, \aly) .. controls (\sx-1, \aly/1.1) .. (\sx, 0);
    \draw[thin, color=gray] (\ewirel, \bly) -- (\sx, 0);
    \draw[thin, color=gray] (\ewirel, \cly) .. controls (\sx-1, \cly/1.1) .. (\sx, 0);

    \node at (\ulx+2.3,0) {$\circ$};
    \node at (\urx-1.15,0) {$\circ$};
    \node at (\ulx-1.7,0) {$\ldots\circ$};
    \node at (\urx+2.8,0) {$\circ\ldots$};

    \draw[thin, color=gray] (\ewirel, \akvy) -- (\ewirer, \akvy);
    \draw[thin, color=gray] (\ewirel, \bkvy) -- (\ewirer, \bkvy);
    \draw[thin, color=gray] (\ewirel, \fkvy) -- (\ewirer, \fkvy);
    \Vertex[x=\sx,y=\akvy,L={\ii^{\otimes r_i}},Lpos=90]{m1}
    \Vertex[x=\sx,y=\bkvy,NoLabel]{m2}
    \Vertex[x=\sx,y=\fkvy,L={\ii^{\otimes t_i}},Lpos=270]{m3}

    \tikzset{VertexStyle/.style = {shape=rectangle, fill=black, minimum size=6pt, inner sep=1pt, draw}}
    \Vertex[x=\sx,y=0,L={F_i^{m_i,d_i}},Lpos=90,Ldist=0.2cm]{v2}

    \draw[thin, color=gray] (\ulx-1, \akvy) -- (\ulx+2, \akvy);
    \draw[thin, color=gray] (\ulx-1, \bkvy) -- (\ulx+2, \bkvy);
    \draw[thin, color=gray] (\ulx-1, \aly) -- (\ulx+2, \aly);
    \draw[thin, color=gray] (\ulx-1, \bly) -- (\ulx+2, \bly);
    \draw[thin, color=gray] (\ulx-1, \cly) -- (\ulx+2, \cly);
    \draw[thin, color=gray] (\ulx-1, \fkvy) -- (\ulx+2, \fkvy);

    \tikzset{VertexStyle/.style = {shape=rectangle, fill=black, minimum size=1pt, inner sep=1pt}}
    \Vertex[x=\ulx,y=\akvy,NoLabel]{ul1}
    \Vertex[x=\ulx,y=\bkvy,NoLabel]{ul2}
    \Vertex[x=\ulx,y=\aly,NoLabel]{ul3}
    \Vertex[x=\ulx,y=\bly,NoLabel]{ul4}
    \Vertex[x=\ulx,y=\cly,NoLabel]{ul5}
    \Vertex[x=\ulx,y=\fkvy,NoLabel]{ul6}

    \Vertex[x=\ulx+1.3,y=\akvy,NoLabel]{ulr1}
    \Vertex[x=\ulx+1.3,y=\bkvy,NoLabel]{ulr2}
    \Vertex[x=\ulx+1.3,y=\aly,NoLabel]{ulr3}
    \Vertex[x=\ulx+1.3,y=\bly,NoLabel]{ulr4}
    \Vertex[x=\ulx+1.3,y=\cly,NoLabel]{ulr5}
    \Vertex[x=\ulx+1.3,y=\fkvy,NoLabel]{ulr6}
    \foreach \unum in {1,2,...,6} {
        \Edge(ul\unum)(ulr\unum)
    };

    \node[draw, fill=black, regular polygon, regular polygon sides=3, minimum size = 8pt, inner sep = 1pt] at (\ulx+1.3,\akvy) {};
    \node[draw, fill=black, regular polygon, regular polygon sides=3, minimum size = 8pt, inner sep = 1pt] at (\ulx+1.3,\bkvy) {};
    \node[draw, fill=black, regular polygon, regular polygon sides=3, minimum size = 8pt, inner sep = 1pt] at (\ulx+1.3,\aly) {};
    \node[draw, fill=black, regular polygon, regular polygon sides=3, minimum size = 8pt, inner sep = 1pt] at (\ulx+1.3,\bly) {};
    \node[draw, fill=black, regular polygon, regular polygon sides=3, minimum size = 8pt, inner sep = 1pt] at (\ulx+1.3,\cly) {};
    \node[draw, fill=black, regular polygon, regular polygon sides=3, minimum size = 8pt, inner sep = 1pt] at (\ulx+1.3,\fkvy) {};

    \draw[color=black,thick,fill=black] (\ulx-0.3, \fkvy-0.3) rectangle (\ulx+0.3, \akvy+0.3);
    \draw[color=gray,thick] (\ulx+1, \fkvy-0.3) rectangle (\ulx+1.6, \akvy+0.3);
    \node at (\ulx,\akvy+1.5) {$(\u^{\otimes r_{i-1}+d_{i-1}+t_{i-1}})^{\dagger}$};
    \node at (\ulx+2,\fkvy-1.5) {$\u^{\otimes r_i+m_i+t_i}$};
    \draw[color=gray,->] (\ulx, \akvy+1.1) -- (\ulx, \akvy+0.5);
    \draw[color=gray,->] (\ulx+1.3, \fkvy-1.1) -- (\ulx+1.3, \fkvy-0.5);

    \draw[thin, color=gray] (\urx-0.8, \akvy) -- (\urx+2.1, \akvy);
    \draw[thin, color=gray] (\urx-0.8, \bkvy) -- (\urx+2.1, \bkvy);
    \draw[thin, color=gray] (\urx-0.8, \ary) -- (\urx+2.1, \ary);
    \draw[thin, color=gray] (\urx-0.8, \bry) -- (\urx+2.1, \bry);
    \draw[thin, color=gray] (\urx-0.8, \cry) -- (\urx+2.1, \cry);
    \draw[thin, color=gray] (\urx-0.8, \dry) -- (\urx+2.1, \dry);
    \draw[thin, color=gray] (\urx-0.8, \fkvy) -- (\urx+2.1, \fkvy);

    \Vertex[x=\urx,y=\akvy,NoLabel]{ur1}
    \Vertex[x=\urx,y=\bkvy,NoLabel]{ur2}
    \Vertex[x=\urx,y=\ary,NoLabel]{ur3}
    \Vertex[x=\urx,y=\bry,NoLabel]{ur4}
    \Vertex[x=\urx,y=\cry,NoLabel]{ur5}
    \Vertex[x=\urx,y=\dry,NoLabel]{ur6}
    \Vertex[x=\urx,y=\fkvy,NoLabel]{ur7}

    \Vertex[x=\urx+1.3,y=\akvy,NoLabel]{url1}
    \Vertex[x=\urx+1.3,y=\bkvy,NoLabel]{url2}
    \Vertex[x=\urx+1.3,y=\ary,NoLabel]{url3}
    \Vertex[x=\urx+1.3,y=\bry,NoLabel]{url4}
    \Vertex[x=\urx+1.3,y=\cry,NoLabel]{url5}
    \Vertex[x=\urx+1.3,y=\dry,NoLabel]{url6}
    \Vertex[x=\urx+1.3,y=\fkvy,NoLabel]{url7}
    \foreach \unum in {1,2,...,7} {
        \Edge(ur\unum)(url\unum)
    };

    \node[draw, fill=black, regular polygon, regular polygon sides=3, minimum size = 8pt, inner sep = 1pt] at (\urx+1.3,\akvy) {};
    \node[draw, fill=black, regular polygon, regular polygon sides=3, minimum size = 8pt, inner sep = 1pt] at (\urx+1.3,\bkvy) {};
    \node[draw, fill=black, regular polygon, regular polygon sides=3, minimum size = 8pt, inner sep = 1pt] at (\urx+1.3,\ary) {};
    \node[draw, fill=black, regular polygon, regular polygon sides=3, minimum size = 8pt, inner sep = 1pt] at (\urx+1.3,\bry) {};
    \node[draw, fill=black, regular polygon, regular polygon sides=3, minimum size = 8pt, inner sep = 1pt] at (\urx+1.3,\cry) {};
    \node[draw, fill=black, regular polygon, regular polygon sides=3, minimum size = 8pt, inner sep = 1pt] at (\urx+1.3,\dry) {};
    \node[draw, fill=black, regular polygon, regular polygon sides=3, minimum size = 8pt, inner sep = 1pt] at (\urx+1.3,\fkvy) {};

    \draw[color=black,thick,fill=black] (\urx-0.3, \fkvy-0.3) rectangle (\urx+0.3, \akvy+0.3);
    \draw[color=gray,thick] (\urx+1, \fkvy-0.3) rectangle (\urx+1.6, \akvy+0.3);
    \node at (\urx,\akvy+1.5) {$(\u^{\otimes r_{i}+d_{i}+t_{i}})^{\dagger}$};
    \node at (\urx+2,\fkvy-1.5) {$\u^{\otimes r_{i+1}+m_{i+1}+t_{i+1}}$};
    \draw[color=gray,->] (\urx, \akvy+1.1) -- (\urx, \akvy+0.5);
    \draw[color=gray,->] (\urx+1.3, \fkvy-1.1) -- (\urx+1.3, \fkvy-0.5);

    \draw[color=blue!80,thick] (\ulx+0.65, \fkvy-0.95) rectangle (\urx+0.65, \akvy+0.95);
\end{tikzpicture}
        \caption{A visualization of a gadget with signature matrix $V_i^f$ after inserting the identity 
            $(\u^{\otimes k})^{\dagger} \u^{\otimes k}$ on either side, with $r_i = 2$, $m_i = 3$, $d_i = 4$, $t_i = 1$.
            The `gadget' inside the blue box has signature matrix equal to the $i$th factor
            of the product in \eqref{eq:assocgadgetproduct}. Vertices assigned $\u$ are drawn as triangles.}
        \label{fig:insertu}
    \end{figure}

    Now \eqref{eq:ftogcsp} gives
    \[
        \u^{\otimes r_i+m_i+t_i} V_i^f (\u^{\otimes r_i+d_i+t_i})^{\dagger} = V_i^g
    \]
    for $2 \leq i \leq p-1$. The $m=0$ and $d=0$ cases of \autoref{lem:tensorcftog} give
    \[
        V_1^f (\u^{\otimes n})^{\dagger} = V_1^g
        \text{ and }
        \u^{\otimes m_p} V_p^f = V_p^g,
    \]
    respectively.
    Putting it all together, insert $I = (\u^{\otimes k})^{\dagger} \u^{\otimes k}$ between each factor in
    the product in the RHS of \eqref{eq:holantdecompf}. We can visualize this as inserting the
    `gadget' $(\u^{\otimes k})^{\dagger} \u^{\otimes k}$ between every two gadgets on the RHS of
    \eqref{eq:holantdecompf}, then apply associativity of multiplication in the style of the Holant theorem. See
    \autoref{fig:insertu}.
    By \eqref{eq:holantdecompf}, \eqref{eq:tensorinverse},
    conditions (i)-(iv), the three preceding equations, and \eqref{eq:holantdecompg}, we have
    \begin{align*}
        &\plholant_{\Omega}({\cal F}) \\
        &= V_1^f \left(\prod_{i=1}^{p-1} V_i^f\right) V_p^f \\
        &= V_1^f \left[(\u^{\otimes d_1})^{\dagger} \u^{\otimes d_1}\right] 
        \left(\prod_{i=2}^{p-1} V_i^f \left[(\u^{\otimes r_i+d_i+t_i})^{\dagger} \u^{\otimes r_i+d_i+t_i}\right]\right)
        V_p^f \\
        &= V_1^f (\u^{\otimes d_1})^{\dagger}
        \left(\prod_{i=2}^{p-1} \u^{\otimes r_i+m_i+t_i} V_i^f (\u^{\otimes r_i+d_i+t_i})^{\dagger} \right)
        \u^{\otimes m_p} V_p^f \numberthis \label{eq:assocgadgetproduct}\\
        &= V_1^g \left(\prod_{i=2}^{p-1} V_i^g\right) V_p^g \\
        &= \plholant_{\Omega'}(\u{\cal F}).
    \end{align*}
\end{proof}
    
\begin{corollary}
    Let $\fc$ and $\gc$ be compatible conjugate closed sets of constraint functions. If $\fc \cong_{qc} \gc$, then $Z(K) = Z(K_{\fc\to\gc})$ for every planar \#CSP$(\fc)$ instance $K$.
    \label{lem:forward}
\end{corollary}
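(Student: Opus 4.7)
The plan is to realize $Z(K)$ as a planar Holant value and then apply the quantum Holant theorem (\autoref{thm:quantumholant}) directly. By the construction at the end of \autoref{sec:holant}, every planar \#CSP$(\fc)$ instance $K$ has an associated $\plholant(\fc \mid \eq)$ signature grid $\Omega_K$ with $Z(K) = \plholant_{\Omega_K}(\fc \mid \eq)$. View $\plholant(\fc \mid \eq)$ as a special case of $\plholant(\fc \cup \eq)$ in which only bipartite signature grids (equality vertices on one side, $\fc$-vertices on the other) are considered.

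The key observation is that equality signatures are fixed points of the quantum holographic transformation. By the computation immediately preceding \eqref{eq:ue}, for any $q \times q$ quantum permutation matrix $\u$ and any $k$, we have $\u^{\otimes k} E_{q,k}^{k,0} = E_{q,k}^{k,0}$, i.e.\ $\u E_{q,k} = E_{q,k}$ as signatures. Hence $\u\,\eq = \eq$. By hypothesis, $\fc \cong_{qc} \gc$ via some $\u$, which by \autoref{def:generalqc} means exactly $\u\fc = \gc$. Therefore $\u(\fc \cup \eq) = \gc \cup \eq$.

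Next I would verify the hypotheses of \autoref{thm:quantumholant} for the signature set $\fc \cup \eq$. The set $\fc$ is CC by assumption, and $\eq$ consists of real-valued (indeed 0-1 valued) signatures and is trivially CC; their union is therefore CC. The compatibility between $\fc \cup \eq$ and $\gc \cup \eq$ follows from compatibility of $\fc$ and $\gc$ together with the pairing of each $E_{q,k}$ with itself, and the conjugation-respecting condition extends trivially since $\overline{E_{q,k}} = E_{q,k}$.

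Applying \autoref{thm:quantumholant} to $\Omega_K$ therefore yields
\[
    \plholant_{\Omega_K}(\fc \cup \eq) \;=\; \plholant_{\Omega'}\bigl(\u(\fc \cup \eq)\bigr) \;=\; \plholant_{\Omega'}(\gc \cup \eq),
\]
where $\Omega'$ is obtained from $\Omega_K$ by replacing each signature by its $\u$-image. Since equality vertices are unchanged and every constraint vertex assigned $F_i \in \fc$ is replaced by the corresponding $G_i \in \gc$, comparison with \autoref{def:compatible} gives $\Omega' = \Omega_{K_{\fc \to \gc}}$. Combining the displays yields $Z(K) = \plholant_{\Omega_K}(\fc \mid \eq) = \plholant_{\Omega_{K_{\fc \to \gc}}}(\gc \mid \eq) = Z(K_{\fc \to \gc})$, as desired. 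There is no real obstacle here: all of the difficulty has already been absorbed into the quantum Holant theorem, and the only nontrivial step is the $\u$-invariance of $\eq$, which is an immediate consequence of the row-sum axiom of a quantum permutation matrix.
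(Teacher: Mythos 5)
Your proposal is correct and follows essentially the same route as the paper: realize $Z(K)$ as $\plholant_{\Omega_K}(\fc \mid \eq)$, use the $\u$-invariance of the equality signatures (the computation behind \eqref{eq:ue}), and apply \autoref{thm:quantumholant} with the quantum permutation matrix $\u$ witnessing $\fc \cong_{qc} \gc$ to conclude $Z(K) = Z(K_{\fc\to\gc})$. The only difference is presentational: you spell out the view of the bipartite grid as a $\plholant(\fc \cup \eq)$ instance and check the CC/compatibility hypotheses explicitly, which the paper leaves implicit.
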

\begin{proof}
Let $K$ be a  planar \#CSP$(\fc)$ instance and 
let $\Omega_K \in \pn(0,0)$, $\Omega_{K_{\fc\to\gc}} \in \mathcal{P}_{\cal G}(0,0)$ be the
Holant signature grids satisfying $Z(K) = \plholant_{\Omega_K}(\fc \mid \eq)$ and
$Z(K_{\fc\to\gc}) = \plholant_{\Omega_{K_{\fc\to\gc}}}(\gc \mid \eq)$. By
\eqref{eq:ue} and \autoref{thm:quantumholant} with the quantum permutation matrix $\u$ defining $\fc \cong_{qc} \gc$, we have
\[
    Z(K) = \plholant_{\Omega_K}(\fc \mid \eq)
    = \plholant_{\Omega_{K_{\fc\to\gc}}}(\gc \mid \eq)
    = Z(K_{\fc\to\gc}).
\]
\end{proof}

\section{Planar \#CSP Equivalence Implies Quantum Isomorphism}
\label{sec:backward}
Henceforth, we assume all constraint function sets $\fc$ are
finite.
\subsection{Quantum Permutation Groups}
\label{sec:quantumpermutationgroups}
\begin{definition}[compact matrix quantum group (CMQG), fundamental representation]
    \label{def:cmqg}
    For $q \in \mathbb{N}$, a \emph{compact matrix quantum group} (CMQG) $\mathcal{Q}$ of order $q$ 
    is defined by a unital $C^*$-algebra $C(\mathcal{Q})$ and a $q \times q$ matrix 
    $\u = (u_{ij}) \in C(\mathcal{Q})^{[q]^2}$
    whose entries $u_{ij}$ generate $C(\mathcal{Q})$, satisfying
    \begin{enumerate}
        \item $\u$ and $\u^{\top}$ are invertible, and
        \item The \emph{comultiplication} map $\Delta: C(\mathcal{Q}) \to C(\mathcal{Q}) \otimes C(\mathcal{Q})$ defined by
            $\Delta(u_{ij}) = \sum_{k \in [q]} u_{ik} \otimes u_{kj}$ is a $*$-homomorphism.
    \end{enumerate}
    $\u$ is called the \emph{fundamental representation} of the CMQG.
\end{definition}
CMQGs were originally defined by Woronowicz in \cite{woronowicz_compact_1987}, where they were called
`compact matrix pseudogroups'. 
The following CMQG was introduced by Wang in \cite{wang_quantum_1998}.
\begin{definition}[$S_q^+$]
    \label{def:sqplus}
    The \emph{quantum symmetric group} $S_q^+$ is defined by the universal $C^*$-algebra $C(S_q^+)$
    generated by
    the entries of a $q \times q$ matrix $\u = (u_{ij})$ subject to the relations
    that make $\u$ a quantum permutation matrix.
\end{definition}

\begin{definition}[$\qut(X)$ \cite{banica_quantum_2005}]
    For an undirected, unweighted graph $X$, the \emph{quantum automorphism group} $\qut(X)$ of $X$
    is defined by the universal $C^*$-algebra $C(\qut(X))$ generated by the entries of $\u = (u_{ij})$
    subject to the following relations:
    \begin{itemize}
        \item $\u$ is a quantum permutation matrix, and
        \item $\u A_X = A_X\u$, where $A_X$ is the adjacency matrix of $X$.
    \end{itemize}
\end{definition}
If we add the condition that the entries of $\u$ commute, then
it is known that $C(\qut(X))$ is isomorphic to $C(\text{Aut}(X))$, the commutative algebra of continuous complex linear functionals on $\text{Aut}(X)$, the classical automorphism group of $X$.
This isomorphism maps $u_{ij}$ to the characteristic function
of the automorphisms sending vertex $i$ ot $j$.
Without the commutativity condition, $\qut(X)$ doesn't actually exist
as a group, but we still think of $C(\qut(X))$ as the algebra
of continuous complex linear functionals on it. Hence the
absence of commutativity is what makes things ``quantum."

Just as a graph $X$'s classical automorphism group is a subgroup of $S_{V(X)}$, 
if $X$ has $q$ vertices, $\qut(X)$ is a \emph{quantum subgroup} of $S_q^+$ ($\qut(X) \subseteq S_q^+$) in the sense that there is a surjective unital $*$-homomorphism from $C(S_q^+)$ to $C(\qut(X))$ mapping each entry of the
fundamental representation of $S_q^+$ to the corresponding entry of the fundamental representation
of $\qut(X)$. When the CMQG's in question are
presented as a universal $C^*$-algebra generated by
the entries of the fundemantal representation subject to certain relations (such as $S_q^+$ and $\qut(X)$), we can also say
CMQGs $\mathcal{Q}_1 \subseteq \mathcal{Q}_2$ if the relations imposed on $\mathcal{Q}_2$ are implied
by those imposed on $\mathcal{Q}_1$. 

\begin{definition}[Quantum permutation group]
    A CMQG $\mathcal{Q}$ is a \emph{quantum permutation group} if it is a quantum subgroup of $S_q^+$ for some $q$, in which
    case we call $\mathcal{Q}$ an \emph{order}-$q$ quantum permutation group.
    If the fundamental representation $\u$ of $\mathcal{Q}$ is indexed by a (finite) set $X$, we say
    $\mathcal{Q}$ \emph{acts} on $X$.
\end{definition}
The above discussion shows $\qut(X)$ is a quantum permutation group.
In this work, we generalize $\qut(X)$ to the quantum permutation group
$\qut(\fc)$ for any set $\fc$ of arbitrary-arity tensors
over $\c$.
By \autoref{lem:tensorcftog}, for $n=2$ and $F^{1,1} = G^{1,1} = A_X$, we can rewrite the second relation defining
$\qut(X)$ as $\u^{\otimes 2} A_X^{2,0} = A_X^{2,0}$. This, along with the fact that
a classical permutation matrix $P$ defines an automorphism of arity-$n$ tensor $F$ 
if and only if $P^{\otimes n}f = f$, motivates the following definition:

\begin{definition}[$\qut(\fc)$, $\qut(F)$]
    For a set $\fc$ of constraint functions with $|V(\fc)| = q$, 
    the \emph{quantum automorphism group} $\qut(\fc)$ of $\fc$
    is defined by the universal $C^*$-algebra $C(\qut(\fc))$ generated by the entries of $q \times q$ matrix $\u = (u_{ij})$
    satisfying the following conditions (recall that $f = F^{\arity(F),0}$ is
    the column vector form of $F$):
    \begin{itemize}
        \item $\u$ is a quantum permutation matrix, and
        \item $\u^{\otimes \arity(F)} f = f$ \text{ for every $F \in \fc$}.
    \end{itemize}
    For a single constraint function $F$, define $\qut(F) = \qut(\{F\})$.
\end{definition}
In the case that $F$ is symmetric, binary, and 0-1 valued, $\qut(F)$ coincides with $\qut(X)$,
where $F = A_X$.
We have yet to prove that $\qut(\fc)$ is actually a CMQG. We do that now.
\begin{proposition}
    $\qut(\fc)$ is a quantum permutation group for any set $\fc$
    of constraint functions.
\end{proposition}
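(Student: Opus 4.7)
The plan is to reduce the claim to two items: that $\qut(\fc)$ is a quantum subgroup of $S_q^+$ (which is immediate by how it is presented), and that it satisfies the two CMQG axioms from \autoref{def:cmqg}. Since the $C^*$-algebra $C(\qut(\fc))$ is defined as the universal $C^*$-algebra generated by the entries of $\u$ subject to all the quantum permutation matrix relations plus the extra relations $\u^{\otimes \arity(F)} f = f$ for $F \in \fc$, universality yields a surjective unital $*$-homomorphism $C(S_q^+) \twoheadrightarrow C(\qut(\fc))$ sending each generator to itself. So once $\qut(\fc)$ is shown to be a CMQG, it is automatically a quantum permutation group of order $q$.

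For the first CMQG axiom, $\u$ is a quantum permutation matrix so by \eqref{eq:tensorinverse} with $k = 1$, $\u$ is unitary and thus invertible with inverse $\u^{\dagger}$. The same holds for $\u^{\top}$: its entries are again projectors with rows and columns summing to $\one$, so it is unitary as well.

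For the comultiplication axiom, by the universal property of $C(\qut(\fc))$ it suffices to check that the elements $\Delta(u_{ij}) := \sum_{k} u_{ik} \otimes u_{kj} \in C(\qut(\fc)) \otimes C(\qut(\fc))$ satisfy every defining relation of $C(\qut(\fc))$. The quantum permutation matrix relations on the $\Delta(u_{ij})$ already hold in $C(S_q^+)$ by Wang \cite{wang_quantum_1998}, hence hold here since those relations only use the generators of a single copy of $C(\qut(\fc))$ tensored with the other. What remains is to verify that the matrix $\vq := (\Delta(u_{ij}))$ satisfies $\vq^{\otimes n} f = f$ for every $F \in \fc$ of arity $n$. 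Using $(a\otimes b)(c\otimes d) = ac \otimes bd$, one computes
\[
(\vq^{\otimes n})_{\vx, \vy} \;=\; \sum_{\vz} (\u^{\otimes n})_{\vx, \vz} \otimes (\u^{\otimes n})_{\vz, \vy},
\]
so that, treating scalar entries $f_{\vy} \in \c$ as $f_{\vy}\cdot \one$,
\[
(\vq^{\otimes n} f)_{\vx} \;=\; \sum_{\vz} (\u^{\otimes n})_{\vx, \vz} \otimes (\u^{\otimes n} f)_{\vz}
\;=\; \sum_{\vz} (\u^{\otimes n})_{\vx, \vz} \otimes f_{\vz}
\;=\; (\u^{\otimes n} f)_{\vx} \otimes \one \;=\; f_{\vx}\otimes\one,
\]
where the second equality uses the defining relation $\u^{\otimes n} f = f$ in the right tensor factor and the fourth uses it in the left factor.

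The only subtle step is the last calculation, where one must keep track of the noncommutativity and of the identification of scalars with scalar multiples of $\one$ in the tensor product algebra; once the matrix identity $(\vq^{\otimes n})_{\vx, \vy} = \sum_{\vz}(\u^{\otimes n})_{\vx, \vz}\otimes (\u^{\otimes n})_{\vz, \vy}$ is in hand, scalar-valuedness of $f$ means all rearrangements are legal and the two applications of the defining relation finish the proof.
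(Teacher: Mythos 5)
Your proposal is correct and follows essentially the same route as the paper: reduce to checking that the elements $\Delta(u_{ij})=\sum_k u_{ik}\otimes u_{kj}$ satisfy the defining relations of $C(\qut(\fc))$ (quantum permutation matrix relations plus $\vq^{\otimes n}f=f$), with the key tensor-factor computation identical to the paper's, and the subgroup claim handled by noting the relations of $S_q^+$ are among those of $\qut(\fc)$. The only difference is cosmetic: you spell out the invertibility of $\u$ and $\u^{\top}$ explicitly, which the paper leaves implicit.
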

\begin{proof}
    To show $\qut(\fc)$ is a CMQG, it suffices to show that the map $\Delta: C(\qut(\fc)) \to C(\qut(\fc)) \otimes C(\qut(\fc))$ given in
    \autoref{def:cmqg} is a $*$-homomorphism. To do this, we must show that the images 
    $v_{xy} = \Delta(u_{xy}) = \sum_{z \in V(\fc)}u_{xz} \otimes u_{zy}$ also satisfy the relations imposed
    on the preimages $u_{xy}$ -- namely that the matrix $\vc = (v_{xy}) \in 
    (C(\qut(\fc)) \otimes C(\qut(\fc)))^{V(\fc)^2}$ is a quantum permutation matrix, and $\vc^{\otimes \arity(F)}f = f$ for every
    $F \in \fc$.

    It is known (see e.g. \cite[{Theorem 2.4}]{banica_quantum_2005}), and one can readily see, that
    $\vc$ defined this way is a quantum permutation matrix.
    Let $F \in \fc$ have arity $n$, so $\u^{\otimes n}f = f$. For $\vx \in V(\fc)^n$,
    \begin{align*}
        (\vc^{\otimes n} f)_{\vx} 
        &= \sum_{\vy} \left(\sum_{z_1} u_{x_1z_1} \otimes u_{z_1y_1}\right)
            \ldots \left(\sum_{z_n} u_{x_nz_n} \otimes u_{z_ny_n}\right) f_{\vy} \\
        &= \sum_{\vz} \sum_{\vy} (u_{x_1z_1} \ldots u_{x_nz_n}) \otimes (u_{z_1y_1} \ldots u_{z_ny_n})
            f_{\vy} \\
        &= \sum_{\vz}  (u_{x_1z_1} \ldots u_{x_nz_n}) \otimes 
        \left(\sum_{\vy} u_{z_1y_1} \ldots u_{z_ny_n} f_{\vy} \right) \\
        &= \left(\sum_{\vz} u_{x_1z_1} \ldots u_{x_nz_n} f_{\vz}\right) \otimes \one \\
        &= f_{\vx} \one \otimes \one.
    \end{align*}
    This holds for every $F \in \fc$, so $\Delta$ is a $*$-homomorphism. Thus $\qut(\fc)$ is a CMQG.

    $\qut(\fc) \subseteq S_q^+$ because the relations defining $S_q^+$ are a subset of those defining
    $\qut(\fc)$.
\end{proof}

\begin{definition}[Orbit, Orbital \cite{lupini_nonlocal_2018}]
    For a quantum permutation group $\mathcal{Q}$ with fundamental representation $\u$
    acting on (finite) set $X$, define relations $\sim_1$ on $X$ and $\sim_2$ on $X \times X$ by
    \[
        x \sim_1 y \iff u_{xy} \neq 0, \qquad (x_1, x_2) \sim_2 (y_1, y_2) \iff u_{x_1y_1} u_{x_2y_2}
        \neq 0.
    \]
    Then $\sim_1$ and $\sim_2$ are equivalence relations, and their equivalence classes are known
    as the \emph{orbits} and \emph{orbitals}, respectively, of $\mathcal{Q}$.
\end{definition}

\begin{definition}[$C_{\mathcal{Q}}(m,d)$, $C_{\mathcal{Q}}$]
    For CMQG $\mathcal{Q}$ with $q \times q$ fundamental representation $\u$, let
    \[
         C_{\mathcal{Q}}(m,d) = \{M \in \c^{q^m \times q^d} \mid \u^{\otimes m} M
             = M \u^{\otimes d} \}
    \]
    be the space of $(m,d)$-\emph{intertwiners} of $\mathcal{Q}$, and
    $C_{\mathcal{Q}} = \bigcup_{m,d} C_{\mathcal{Q}}(m,d)$.
\end{definition}
By \autoref{lem:tensorcftog}, we have $F^{m,d} \in C_{\qut(\fc)}(m,d)$
for every $m,d \geq 0$ and $(m+d)$-ary $F \in \fc$.

The next lemma provides a key connection between the intertwiners
and orbits of $\qut(\fc)$.
\begin{lemma}[{\cite[{Lemma 3.7}]{lupini_nonlocal_2018}}]
    \label{lem:constantorbits}
    Let $\mathcal{Q}$ be an order-$q$ quantum permutation group and $v \in \c^q$. Then
    $v \in C_{\mathcal{Q}}(1, 0)$ if and only if $v$ is constant on the orbits of $\mathcal{Q}$
    (that is, if $x \sim_1 y$, then $v_x = v_y$).
\end{lemma}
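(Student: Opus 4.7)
The plan is a direct argument using only the defining properties of a quantum permutation matrix, in particular the row-sum identity $\sum_y u_{xy} = \one$ and the idempotent-like identity $u_{xy} u_{xz} = \delta_{yz} u_{xy}$ that follow from the definition. Writing the condition $v \in C_{\mathcal{Q}}(1,0)$ out explicitly, it amounts to the system of equations $\sum_{y} u_{xy} v_y = v_x \one$ for each $x \in [q]$, which I will analyze one equation at a time against the orbit equivalence relation $\sim_1$ defined by $x \sim_1 y \iff u_{xy} \neq 0$.

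For the easy direction $(\Leftarrow)$, I assume $v$ is constant on $\sim_1$-classes. Then for every $x$, the sum $\sum_y u_{xy} v_y$ only has nonzero contributions from $y$ with $u_{xy} \neq 0$, and for all such $y$ we have $v_y = v_x$. Pulling the common scalar out yields $\sum_y u_{xy} v_y = v_x \sum_y u_{xy} = v_x \one$, so $\u v = v \otimes \one$ as required.

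For the forward direction $(\Rightarrow)$, I fix $x$ and $y$ with $u_{xy} \neq 0$ and start from $\sum_z u_{xz} v_z = v_x \one$. Multiplying on the left by $u_{xy}$ and using the key relation $u_{xy} u_{xz} = \delta_{yz} u_{xy}$ collapses the left-hand side to $u_{xy} v_y$, giving $u_{xy} v_y = v_x u_{xy}$, i.e., $(v_y - v_x) u_{xy} = 0$. Since $u_{xy}$ is a nonzero projection in a $C^*$-algebra, a scalar annihilating it must be zero (faithfully representing the $C^*$-algebra, $u_{xy}$ has a unit vector in its range on which the scalar acts as multiplication), so $v_y = v_x$. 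This shows $v$ is constant on each $\sim_1$-class.

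The main subtlety, such as it is, is the last step: justifying that $(v_y - v_x)u_{xy} = 0$ with $u_{xy}$ a nonzero projector forces $v_y = v_x$ in the possibly noncommutative ambient $C^*$-algebra. Everything else is a direct application of the quantum permutation matrix identities already listed in the preliminaries, so I expect the proof to be short once this point is spelled out.
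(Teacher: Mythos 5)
Your proof is correct. Note that the paper does not prove this statement at all: it is quoted verbatim from Lupini--Man\v{c}inska--Roberson (\cite[Lemma 3.7]{lupini_nonlocal_2018}), so there is no internal proof to compare against; your argument is essentially the standard one for that cited lemma, using only $\sum_y u_{xy} = \one$ and $u_{xy}u_{xz} = \delta_{yz}u_{xy}$, both of which the paper lists among the defining/derived relations of a quantum permutation matrix. One small simplification: the step you flag as the ``main subtlety'' needs no $C^*$-machinery. From $u_{xy}v_y = v_x u_{xy}$ you get $(v_y - v_x)u_{xy} = 0$ because $v_x, v_y$ are complex scalars, and then $v_y = v_x$ follows from plain linear algebra --- if $\lambda a = 0$ with $\lambda \in \c \setminus \{0\}$, then $a = \lambda^{-1}(\lambda a) = 0$, contradicting $u_{xy} \neq 0$ --- so the appeal to a faithful representation and a unit vector in the range of the projection is unnecessary (though not wrong).
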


It is known (see \cite{banica_liberation_2009, planar}) that $C_{\q}$ is a 
\emph{tensor category with duals} (see also \autoref{sec:category}), 
in the sense that it satisfies the following properties:
\begin{enumerate}[label=(\roman*)]
    \item For fixed $m,d$, $C_{\q}(m,d)$ is a vector space over $\c$.
    \item $M \in C_{\q}(m,d), M' \in C_{\q}(d, w) \implies M \circ M' := M M'
        \in C_{\q}(m,w)$.
    \item $M \in C_{\q}(m,d), M' \in C_{\q}(m',d') \implies M \otimes M'
        \in C_{\q}(m+m',d+d')$.
    \item $M \in C_{\q}(m,d) \implies M^{\dagger} \in C_{\q}(d, m)$.
    \item $I = E^{1,1} \in C_{\q}(1,1)$.
    \item $E^{2,0} \in C_{\q}(2,0)$.
\end{enumerate}

\begin{remark}
    \label{rem:frobenius}
    We now remark on the relationship between \autoref{lem:rotategadget} and the invariance Lemmas 
    \ref{lem:tensorcftog} and \ref{lem:rotateinput}.
    For $F=G$, 
    if $F^{m,d} \in C_{\mathcal{Q}}$ for the quantum permutation group $\mathcal{Q}$ with fundamental representation $\u$, then 
    Lemmas \ref{lem:tensorcftog}-\ref{lem:conjugatecommute} assert that
    $F^{m_2,d_2}, (F^{(r)})^{m,d}, (F^\dagger)^{m,d} \in C_{\mathcal{Q}}$ as well, respectively, for
    any $r$ and $m_2+d_2 = m+d$.
    The proof of \autoref{lem:rotategadget} shows that $\f^{m_2,d_2}, (\f^{(r)})^{m,d} \in 
    \tcwdn{\ii, \e^{2,0}, \f^{m,d}}$. Taking signature matrices gives
    $F^{m_2,d_2}, (F^{(r)})^{m,d} \in \tcwdn{I, E^{2,0}, F^{m,d}}$, so, since $C_{\cal Q}$ is a 
    tensor category with duals, $F^{m_2,d_2}, (F^{(r)})^{m,d} \in C_{\mathcal{Q}}$ by properties
    (ii)-(vi).
    Similarly, by \eqref{eq:transposeequiv}, $(F^\dagger)^{m,d} \in C_{\mathcal{Q}}$ simply follows from 
    property (iv).

    The idea that $C_{\qut(\fc)}$ is invariant under changing the dimensions of its matrices is not new, and is an instance of a wider phenomenon called ``Frobenius duality'' in \cite{chassaniol_study_2019} and ``Frobenius reciprocity'' in 
    \cite{gromada_quantum_2021, banica_quantum_2005}. 
    See also \autoref{sec:category}.
\end{remark}

The following important result, referred to as \emph{Tannaka-Krein duality},
was proved in \cite{woronowicz_tannaka-krein_1988}, and expressed in the following form in \cite{chassaniol_study_2019} and elsewhere. See also
\autoref{sec:gadgetgroup}.
\begin{theorem}
    \label{thm:tannaka}
    The mapping $\q \mapsto C_{\q}$ induces a bijection between quantum permutation groups 
    acting on $[q]$ and tensor categories with duals $C$ satisfying $C_{S_q^+} \subseteq C$.
\end{theorem}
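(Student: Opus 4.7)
The plan is to exhibit an inverse construction $C \mapsto \mathcal{Q}_C$ and verify that it is mutually inverse with $\mathcal{Q} \mapsto C_{\mathcal{Q}}$. First I would confirm the forward direction is well-defined: properties (i)--(vi) listed just before the theorem show that $C_{\mathcal{Q}}$ is a tensor category with duals, and the inclusion $C_{S_q^+} \subseteq C_{\mathcal{Q}}$ follows because every $S_q^+$-intertwiner relation $\mathbf{u}^{\otimes m} M = M \mathbf{u}^{\otimes d}$ is automatically inherited by any quantum subgroup $\mathcal{Q} \subseteq S_q^+$.

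For the inverse, given a tensor category with duals $C \supseteq C_{S_q^+}$, I would define $C(\mathcal{Q}_C)$ to be the universal unital $C^*$-algebra generated by the $q^2$ entries of a matrix $\mathbf{u} = (u_{ij})$ subject to (a) the relations making $\mathbf{u}$ a quantum permutation matrix (which are available because $C$ contains $C_{S_q^+}$, and specifically $E^{2,0}, E^{0,2}, E^{1,2}, E^{2,1} \in C$ encode the quantum permutation matrix conditions via intertwiner equations, as in the proof of \autoref{lem:mclosure} and the discussion around $\qut(X)$), and (b) $\mathbf{u}^{\otimes m} M = M \mathbf{u}^{\otimes d}$ for every $M \in C(m,d)$. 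Then I would verify that the assignment $\Delta(u_{ij}) = \sum_k u_{ik} \otimes u_{kj}$ preserves all defining relations, using the same kind of computation that established $\qut(\fc)$ is a CMQG in the proposition preceding the theorem. This makes $\mathcal{Q}_C$ a quantum permutation group, and by construction it satisfies $C \subseteq C_{\mathcal{Q}_C}$.

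Showing $\mathcal{Q}_{C_{\mathcal{Q}}} = \mathcal{Q}$ is a universal property argument: both $C(\mathcal{Q})$ and $C(\mathcal{Q}_{C_{\mathcal{Q}}})$ are presented as universal $C^*$-algebras, and any relation defining one is satisfied in the other, yielding mutually inverse surjective $*$-homomorphisms that are identity on generators. The genuinely substantive equality is $C_{\mathcal{Q}_C} = C$, i.e., that passing through $\mathcal{Q}_C$ does not manufacture new intertwiners. The reverse inclusion $C_{\mathcal{Q}_C} \subseteq C$ is what I expect to be the main obstacle. To prove it, one argues by representation-theoretic reconstruction in the spirit of Woronowicz \cite{woronowicz_tannaka-krein_1988}: build out of $C$ a formal $*$-algebra of ``matrix coefficients'' of tensor powers of $\mathbf{u}$, show it admits a Haar state (using positivity features of the dagger and duality in $C$), complete to a $C^*$-algebra, and verify via a Peter--Weyl style decomposition that every intertwiner between tensor powers of the resulting fundamental representation actually lies in $C$. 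Any additional intertwiner would, by this decomposition, force new relations among the $u_{ij}$, contradicting universality.

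The delicate point throughout is that the definition of $\mathcal{Q}_C$ imposes only finitely (or countably) many relations, but $C$ has intertwiners in arbitrarily high arity; one must check that the relations in arity $m+d$ are all \emph{implied} by the universal presentation, which is precisely the content of Tannaka--Krein reconstruction and the reason we cite \cite{woronowicz_tannaka-krein_1988} rather than reproving it here.
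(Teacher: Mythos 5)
The paper does not prove \autoref{thm:tannaka} at all: it is imported as Woronowicz's Tannaka--Krein duality \cite{woronowicz_tannaka-krein_1988}, in the formulation of \cite{chassaniol_study_2019}, so there is no internal argument to compare against. Your outline is the standard reconstruction proof of exactly that cited result, and you correctly identify that the substantive step (that the category of intertwiners of the reconstructed quantum group is no larger than $C$) is the content of Woronowicz's theorem rather than something to reprove here. In that sense your proposal is consistent with the paper's treatment, just more explicit about what the black box contains.

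One caution on your middle step: you dispatch $\mathcal{Q}_{C_{\mathcal{Q}}} = \mathcal{Q}$ as ``a universal property argument'' in which ``any relation defining one is satisfied in the other.'' That is too quick. A quantum permutation group $\mathcal{Q}$ is an arbitrary quotient of $C(S_q^+)$; it is not presented by intertwiner-type relations a priori, so while the defining relations of $\mathcal{Q}_{C_{\mathcal{Q}}}$ do hold in $C(\mathcal{Q})$ (giving a surjection $C(\mathcal{Q}_{C_{\mathcal{Q}}}) \to C(\mathcal{Q})$), injectivity of that surjection --- equivalently, injectivity of $\mathcal{Q} \mapsto C_{\mathcal{Q}}$ --- is itself part of Woronowicz's duality and requires the same Haar-state/Peter--Weyl machinery you invoke for $C_{\mathcal{Q}_C} \subseteq C$. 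Your closing paragraph partially concedes this, but as written the bijectivity claim leans on a ``soft'' argument for a direction that is genuinely hard; if you intend to cite \cite{woronowicz_tannaka-krein_1988} for the whole statement (as the paper does), say so for both directions rather than only for $C_{\mathcal{Q}_C} = C$.
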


\subsection{Intertwiners and Signature Matrices}
The following lemma provides a concise characterization of the intertwiners of $\qut(\fc)$. \cite[{Proposition 5.5}]{chassaniol_study_2019}
proves the special case where $\qut(\fc)$ is $\qut(X)$ for a symmetric unweighted graph $X$.
\begin{lemma}
    \label{lem:qutfintertwiners}
    $C_{\qut(\fc)} = \tcwd{E^{1,0}, E^{1,2}, \{f \mid F \in \fc\}}$.
\end{lemma}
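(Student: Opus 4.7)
The plan is to prove the two inclusions separately, with the forward inclusion being essentially a verification and the reverse requiring Tannaka-Krein duality.

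For the forward inclusion $\tcwd{E^{1,0}, E^{1,2}, \{f \mid F \in \fc\}} \subseteq C_{\qut(\fc)}$, I will check each generator lies in $C_{\qut(\fc)}$ and invoke the tensor-category-with-duals closure properties of $C_{\qut(\fc)}$ listed just before \autoref{rem:frobenius}. The vector $E^{1,0}$ is the all-ones vector, and $(\u E^{1,0})_x = \sum_y u_{xy} = \one$ for any quantum permutation matrix $\u$, so $E^{1,0} \in C_{\qut(\fc)}(1,0)$. The matrix $E^{1,2}$ lies in $C_{\qut(\fc)}(1,2)$ by \autoref{lem:tensorcftog} applied to the computation preceding equation~\eqref{eq:ue} (with $F = G = E_3$). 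Finally, $f \in C_{\qut(\fc)}(n,0)$ is exactly the defining relation $\u^{\otimes n} f = f$ of $\qut(\fc)$. Since $C_{\qut(\fc)}$ is a vector space closed under $\circ$, $\otimes$, and $\dagger$, it contains all of $\tcwd{E^{1,0}, E^{1,2}, \{f \mid F \in \fc\}}$.

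The reverse inclusion will use Tannaka-Krein duality (\autoref{thm:tannaka}). Let $C := \tcwd{E^{1,0}, E^{1,2}, \{f \mid F \in \fc\}}$; by construction $C$ is a tensor category with duals. I will cite the known fact that $C_{S_q^+}$ is generated as a tensor category with duals by $E^{1,0}$ and $E^{1,2}$ alone (equivalently, by morphisms coming from non-crossing partitions; see e.g.\ \cite{banica_liberation_2009}), which gives $C_{S_q^+} \subseteq C$. \autoref{thm:tannaka} then supplies a unique quantum permutation group $\q$ acting on $[q]$ with $C_{\q} = C$, and it remains to identify $\q$ with $\qut(\fc)$.

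To establish $\q = \qut(\fc)$, note the correspondence $\q \mapsto C_{\q}$ is inclusion-reversing, so the already-proved inclusion $C \subseteq C_{\qut(\fc)}$ yields $\qut(\fc) \subseteq \q$. For the opposite containment, the fundamental representation $\u_{\q}$ of $\q$ satisfies exactly the relations built into $C$: because $E^{1,0}, E^{1,2} \in C_{\q}$ (together with their daggers $E^{0,1}, E^{2,1}$, and because $C_{S_q^+} \subseteq C$), $\u_{\q}$ is a quantum permutation matrix, and because $f \in C_{\q}(\arity(F), 0)$ for every $F \in \fc$, $\u_{\q}^{\otimes \arity(F)} f = f$. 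These are precisely the defining relations of $\qut(\fc)$, so the universal property of $C(\qut(\fc))$ provides a surjective unital $*$-homomorphism $C(\qut(\fc)) \twoheadrightarrow C(\q)$, giving $\q \subseteq \qut(\fc)$. Hence $\q = \qut(\fc)$ and $C_{\qut(\fc)} = C_{\q} = C$.

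The main obstacle is the invocation that $C_{S_q^+}$ itself is generated by $E^{1,0}$ and $E^{1,2}$ under the tensor-category-with-duals operations: this is standard in the $S_q^+$ literature (corresponding to the category of non-crossing partitions) but is the one non-elementary input. Once that fact is granted, Tannaka-Krein matches the category $C$ to a quantum permutation group whose defining relations have been \emph{designed} to coincide with those of $\qut(\fc)$, so the identification is essentially automatic.
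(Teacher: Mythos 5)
Your proof is correct and follows essentially the same route as the paper: verify that $E^{1,0}$, $E^{1,2}$, and each $f$ are intertwiners and use the tensor-category-with-duals closure for one inclusion, then invoke Tannaka--Krein duality plus the universal property of $C(\qut(\fc))$ (i.e.\ the quantum-subgroup relation $\q \subseteq \qut(\fc)$) for the other. The only cosmetic difference is that you apply \autoref{thm:tannaka} directly to the generated category (checking $C_{S_q^+}\subseteq C$ via the standard non-crossing-partition generation of $C_{S_q^+}$) and identify the resulting quantum group with $\qut(\fc)$, whereas the paper argues over every tensor category with duals containing the generators and uses that the generated category is their intersection; both hinge on the same two ingredients.
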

\begin{proof}
    It is known (see e.g. \cite{chassaniol_study_2019})
    that for a CMQG $\q$ with fundamental representation $\u$, 
    \[
        E^{1,0} \in C_{\q}(1,0) \text{ and } E^{1,2} \in C_{\q}(1,2) \iff \u \text{ is a quantum permutation matrix}.
    \]
    Also, by definition, we have $f \in C_{\q}(n,0) \iff \u^{\otimes n}f = f$. The backwards
    direction of each $\iff$, and the fact that $C_{\qut(F)}$ is a tensor category with duals,
    show $C_{\qut(F)} \supseteq \tcwd{E^{1,0}, E^{1,2}, \{f \mid F \in \fc\}}$. The other inclusion follows from
    Tannaka-Krein duality. Namely, any tensor category with duals containing $E^{1,0}, E^{1,2}$,
    and $f$ for each $F \in \fc$ is the intertwiner space $C_{\q}$ of some CMQG $\q$ whose fundamental 
    representation $\u'$ is a quantum permutation matrix satisfying $\u'^{\otimes n}f = f$ for every $F \in \fc$. Thus $\q$ is a quantum
    subgroup of $\qut(\fc)$, so $C_{\qut(\fc)} \subseteq C_{\q}$. 
    $\tcwd{E^{1,0}, E^{1,2}, \{f \mid F \in \fc\}}$ is the intersection of all tensor categories with duals containing
    $E^{1,0}, E^{1,2}$, and $\{f \mid F \in \fc\}$, so $C_{\qut(\fc)} \subseteq \tcwd{E^{1,0}, E^{1,2}, \{f \mid F \in \fc\}}$.
\end{proof}

Note that the RHS of \autoref{lem:qutfintertwiners} is the linear span of the signature matrices of the
gadgets in the RHS of \autoref{thm:generatepn}. This observation motivates the following definition.
\begin{definition}[$\qk(m,d)$]
    A \emph{planar $(m,d)$-quantum $\fc$-gadget} is a formal (finite) linear combination of gadgets in
    $\pn(m,d)$ with coefficients in $\c$. Let $\qk(m,d)$ be the collection of all
    planar $(m,d)$-quantum $\fc$-gadgets. Extend the signature matrix function $M$ linearly to
    $\qk(m,d)$.
\end{definition}
Recall from the start of \autoref{sec:decomposition} that in the context of graph homomorphisms 
$K \to X$, gadgets in $\mathcal{P}_{\{A_X\}}$ correspond to planar (bi-)labeled graphs. In this way,
$(m,d)$-quantum $\fc$-gadgets are a generalization of the $k$-labeled \emph{quantum graphs}
of \cite{freedman_reflection, lovasz, cai-lovasz}, but without the restriction that each vertex has
at most one label. 
Now, by \autoref{lem:gadgetmatrix}, applying $M$ to quantum $\fc$-gadgets composed of gadgets on the RHS of 
\autoref{thm:generatepn} yields the RHS of
\autoref{lem:qutfintertwiners}, so we have the following key
connection between the planar gadget decomposition and the intertwiners of $\qut(\fc)$.
\begin{theorem}
    \label{thm:intertwinersigmatrix}
    Let $\fc$ be conjugate closed. Then
    $C_{\qut(\fc)}(m,d) = \{M(\mathbf{Q}) \mid \mathbf{Q} \in \qk(m,d)\}$ for every
    $m,d \in \mathbb{N}$.
\end{theorem}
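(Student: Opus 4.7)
The plan is to observe that this theorem is essentially the ``categorification'' of \autoref{lem:qutfintertwiners} under the signature-matrix functor $M$, combined with the planar decomposition of \autoref{thm:generatepn}. The mismatch between the decomposition of $\pn$ under $\tcwdn{\cdot}$ (no sum) and the characterization of $C_{\qut(\fc)}$ under $\tcwd{\cdot}$ (with sum) is exactly what $\qk(m,d)$ fixes by adding formal $\c$-linear combinations of $\pn(m,d)$ gadgets.

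For the inclusion $\{M(\mathbf{Q}) \mid \mathbf{Q} \in \qk(m,d)\} \subseteq C_{\qut(\fc)}(m,d)$, I would take an arbitrary $\mathbf{Q} = \sum_i c_i \k_i$ with $\k_i \in \pn(m,d)$, apply \autoref{thm:generatepn} to express each $\k_i$ as a $\circ, \otimes, \dagger$-expression in $\e^{1,0}$, $\e^{1,2}$, and $\f^{n_F,0}$ for $F \in \fc$, and then use \autoref{lem:gadgetmatrix} to transfer this expression to the matrix side, obtaining $M(\k_i)$ as the analogous expression in $E^{1,0}, E^{1,2}, f$. By \autoref{lem:qutfintertwiners}, this matrix lies in $C_{\qut(\fc)}$; since $\k_i \in \pn(m,d)$ its shape is $q^m \times q^d$, so it lies in $C_{\qut(\fc)}(m,d)$. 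Linearity of $M$ and of $C_{\qut(\fc)}(m,d)$ as a vector space then gives $M(\mathbf{Q}) \in C_{\qut(\fc)}(m,d)$.

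For the reverse inclusion, I would start from $M \in C_{\qut(\fc)}(m,d)$ and apply \autoref{lem:qutfintertwiners} to write $M = \sum_i c_i M_i$ with each $M_i \in \tcwdn{E^{1,0}, E^{1,2}, \{f \mid F \in \fc\}}$. Matrices of different shapes live in different (direct-summand) spaces, so after collecting by shape I may assume every $M_i$ is itself $q^m \times q^d$. For each such $M_i$, expressed as a $\circ, \otimes, \dagger$-word in $E^{1,0}, E^{1,2}, f$, replace every occurrence of $E^{1,0}, E^{1,2}, f$ by the corresponding gadget $\e^{1,0}, \e^{1,2}, \f^{n_F,0}$ to obtain a gadget-word $\k_i$; by \autoref{lem:gadgetmatrix} this yields $M(\k_i) = M_i$, and the word is well-typed because the original matrix word was, so $\k_i \in \pn(m,d)$. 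Then $\mathbf{Q} = \sum_i c_i \k_i \in \qk(m,d)$ satisfies $M(\mathbf{Q}) = M$.

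The only nontrivial bookkeeping is the grading by $(m,d)$: one must verify that the generator-by-generator translation between the $\tcwd{\cdot}$-expression on the matrix side and the $\tcwdn{\cdot}$-expression on the gadget side preserves the shape $(m,d)$ at every intermediate step, and that formal sums in $\qk(m,d)$ only combine gadgets of a common shape. This will be the main (but mild) subtlety. Everything else is a purely formal consequence of \autoref{thm:generatepn}, \autoref{lem:gadgetmatrix}, and \autoref{lem:qutfintertwiners}.
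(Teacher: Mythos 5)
Your proposal is correct and follows essentially the same route as the paper: the paper also obtains this theorem by combining \autoref{thm:generatepn}, \autoref{lem:gadgetmatrix}, and \autoref{lem:qutfintertwiners}, observing that the RHS of \autoref{lem:qutfintertwiners} is precisely the linear span of the signature matrices of the gadgets generated in \autoref{thm:generatepn}. Your extra bookkeeping about the $(m,d)$-grading and about expanding sums through $\circ,\otimes,\dagger$ is exactly the (mild) verification the paper leaves implicit.
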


The next lemma is a quantum analogue of several similar classical results for the special case of 
graph homomorphism, including
\cite[{Lemma 2.4}]{lovasz}.
\begin{lemma}
    Let $\fc$ be conjugate closed. $x,y \in V(\fc)$ are in the same orbit of $\qut(\fc)$ if and only if
    $Z^x(K) = Z^y(K)$ for all 1-labeled planar \#CSP$(\fc)$ instances $K$.
    \label{lem:sameorbit}
\end{lemma}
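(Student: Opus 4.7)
The plan is to combine the intertwiner characterization of $\qut(\fc)$-orbits from \autoref{lem:constantorbits} with the identification between $C_{\qut(\fc)}(1,0)$ and signature vectors of planar $(1,0)$-quantum $\fc$-gadgets from \autoref{thm:intertwinersigmatrix}, and finally with the one-label equivalence $Z^x(K)=M(\k)_x$ in \eqref{eq:onelabelequiv}. Together these three facts say that the set of functions $V(\fc)\to\c$ of the form $x\mapsto Z^x(K)$, as $K$ ranges over 1-labeled planar \#CSP$(\fc)$ instances, linearly spans exactly the space of vectors that are constant on the orbits of $\qut(\fc)$.

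For the forward direction, I would take an arbitrary 1-labeled planar \#CSP$(\fc)$ instance $K$ and let $\k\in\pn(1,0)$ be its corresponding gadget. By \autoref{thm:intertwinersigmatrix}, $M(\k)\in C_{\qut(\fc)}(1,0)$, and by \autoref{lem:constantorbits}, $M(\k)$ is constant on the orbits of $\qut(\fc)$. If $x$ and $y$ lie in the same orbit, then by \eqref{eq:onelabelequiv} we conclude $Z^x(K)=M(\k)_x=M(\k)_y=Z^y(K)$.

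For the backward direction, I would argue by contrapositive: assuming $x$ and $y$ lie in distinct orbits, I will exhibit a linear combination of 1-labeled planar \#CSP$(\fc)$ instances that separates them. Let $O$ be the orbit of $x$ and consider the indicator vector $\mathbf{1}_O\in\c^{V(\fc)}$. It is constant on orbits, so by \autoref{lem:constantorbits} it lies in $C_{\qut(\fc)}(1,0)$. By \autoref{thm:intertwinersigmatrix}, $\mathbf{1}_O=M(\mathbf{Q})$ for some planar $(1,0)$-quantum $\fc$-gadget $\mathbf{Q}=\sum_i c_i\k_i$ with $\k_i\in\pn(1,0)$. Each $\k_i$ corresponds to a 1-labeled planar \#CSP$(\fc)$ instance $K_i$, and by \eqref{eq:onelabelequiv}, $M(\k_i)_z=Z^z(K_i)$. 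If $Z^x(K_i)=Z^y(K_i)$ for every $i$, then taking the $c_i$-linear combination yields $1=(\mathbf{1}_O)_x=\sum_i c_i Z^x(K_i)=\sum_i c_i Z^y(K_i)=(\mathbf{1}_O)_y=0$, a contradiction. Hence some $K_i$ satisfies $Z^x(K_i)\ne Z^y(K_i)$.

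The entire argument is short because the heavy lifting has already been done: the main conceptual step is \autoref{thm:intertwinersigmatrix}, which is precisely what turns the abstract condition ``$\mathbf{1}_O\in C_{\qut(\fc)}(1,0)$'' into a concrete combinatorial statement about planar 1-labeled instances. The only thing to double-check is that the linear combination representing $\mathbf{1}_O$ is finite and has complex coefficients, so that the separating inequality is genuinely derived from a finite collection of instances $K_i$; this is built into the definition of $\qk(1,0)$ as finite formal $\c$-linear combinations.
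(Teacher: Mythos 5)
Your proposal is correct and follows essentially the same route as the paper's own proof: both directions rest on exactly the same three ingredients (the identity $Z^z(K)=M(\k)_z$ from \eqref{eq:onelabelequiv}, \autoref{thm:intertwinersigmatrix}, and \autoref{lem:constantorbits}), and your separating argument via the orbit's indicator vector is precisely the paper's parenthetical argument for the backward direction.
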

\begin{proof}
    View $K$ as a gadget $\k \in \pn(1,0)$ with a single dangling
    edge attached to the equality vertex corresponding to the labeled variable (there is always an 
    embedding of the underlying multigraph such that this vertex is on the outer face).
    By \eqref{eq:onelabelequiv}, for every $z \in V(\fc)$, $Z^z(K) = (M(\k))_z$. Hence the
    condition $Z^x(K) = Z^y(K)$ for all 1-labeled planar \#CSP$(\fc)$ instances $K$ is equivalent to
    $M(\mathbf{Q})_x = M(\mathbf{Q})_y$ for every $\mathbf{Q} \in \qk(1,0)$, which in turn
    is equivalent to $v_x = v_y$ for every $\mathbf{v} \in C_{\qut(\fc)}(1,0)$ by 
    \autoref{thm:intertwinersigmatrix}. Now the result follows from \autoref{lem:constantorbits}
    (for if $x,y$ are in different orbits of $\qut(\fc)$ then the characteristic vector $\mathbf{v}$ 
    of $x$'s orbit is in $C_{\qut(\fc)}$ by \autoref{lem:constantorbits} and satisfies
    $v_x = 1 \neq 0 = v_y$).
\end{proof}

\subsection{Arity Reduction and Projective Connectivity}
The final two steps are the most involved. First, we prove a higher-dimensional analogue of \cite[{Theorem 4.5}]{lupini_nonlocal_2018},
which states that for connected (undirected, unweighted) graphs $X$ and $Y$, if there exist vertices
$x \in V(X)$ and $y \in V(Y)$ in the same orbit of the quantum automorphism group of the disjoint
union of $X$ and $Y$, then $X \cong_{qc} Y$. This theorem is proved using the concept of the
\emph{quantum orbital algebra} of a graph $X$, defined by the orbitals of $\qut(X)$.
A natural generalization of the quantum orbital algebra to $n$-dimensional tensors would require defining
an analogue of orbits and orbitals acting on $n$-tuples of vertices. However, as discussed in
\cite[{Section 5}]{lupini_nonlocal_2018}, it is unclear whether such ``higher order orbits'' are
even well-defined for $n > 2$. Thus our strategy will be to reduce the dimension of our tensor to 2 while
preserving its inclusion in $\qut(\fc)$, then use the orbital algebra to extract the required 
information about $\u$.

First, we must handle separately the case where $\fc$ and $\gc$ contain only unary $(n=1)$ functions. All
relevant results above, including \autoref{lem:forward}, apply to any set of constraint functions, including those with only unary functions.
By \cite[Theorem 1]{young2022equality}, any compatible $\fc$ and $\gc$ are \emph{classically} isomorphic
if and only if $Z(K) = Z(K_{\fc\to\gc})$ for every (not necessarily planar) $K$. However,
if $\fc$ is a set of unary constraint functions, every
\#CSP$(\fc)$ instance is planar, as the underlying graph is a disjoint union of star graphs. Since
quantum isomorphism is a relaxation of classical isomorphism, we have the following.
\begin{lemma}
    \label{lem:n1}
    Let $\fc$, $\gc$ be sets of unary constraint functions.
    If $Z(K) = Z(K_{\fc\to\gc})$ for every planar \#CSP$(\fc)$ instance $K$, then $\fc \cong_{qc} \gc$.
\end{lemma}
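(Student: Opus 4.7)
The plan is to reduce the lemma to the classical analogue, \cite[Theorem 1]{young2022equality}, and then to the trivial observation that a classical isomorphism is automatically a quantum isomorphism.

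First, I would observe that when $\fc$ consists only of unary constraint functions, every constraint vertex in the signature grid $\Omega_K$ associated to a \#CSP$(\fc)$ instance $K = (V,C)$ has degree 1. Hence each connected component of the underlying multigraph is a star graph centered at an equality vertex (representing a variable $v \in V$) with one leaf constraint vertex per constraint in $C$ involving $v$. Star graphs are planar, and disjoint unions of planar graphs are planar. So every \#CSP$(\fc)$ instance is automatically planar, and the hypothesis $Z(K) = Z(K_{\fc\to\gc})$ for every planar \#CSP$(\fc)$ instance is equivalent to the same condition for every \#CSP$(\fc)$ instance.

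Next, by \cite[Theorem 1]{young2022equality}, this equality of partition functions on all (nonplanar) \#CSP$(\fc)$ instances is equivalent to classical isomorphism $\fc \cong \gc$. That is, there exists $\varphi \in S_q$ (with $q = |V(\fc)|$) such that, writing $P \in \c^{q \times q}$ for the permutation matrix with $P_{x, \varphi(x)} = 1$, we have $P^{\otimes \arity(F_i)} f_i = g_i$ for every corresponding pair $F_i \in \fc$, $G_i \in \gc$.

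Finally, I would verify that $P$, viewed as a matrix with entries in the trivial unital $C^*$-algebra $\c$ (with $\one = 1 \in \c$), is a quantum permutation matrix. Each entry is either $0$ or $1$, and in either case satisfies $u_{ij}^2 = u_{ij} = u_{ij}^*$, so every entry is a projector. Every row and column of a classical permutation matrix sums to $1 = \one$. Thus $P$ is a quantum permutation matrix, and the identity $P^{\otimes \arity(F_i)} f_i = g_i$ witnesses $\fc \cong_{qc} \gc$ per \autoref{def:generalqc}.

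There is no real obstacle here; the content of the lemma is entirely in the cited classical result and the observation that unary-constraint instances are trivially planar. The only subtlety worth flagging is compatibility: since $\fc$ and $\gc$ are CC, \autoref{def:compatible} requires $F_i = \overline{F_j} \iff G_i = \overline{G_j}$, but this is already assumed in the statement and is preserved by any classical isomorphism, so nothing extra needs to be checked.
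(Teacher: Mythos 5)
Your proposal is correct and matches the paper's argument exactly: unary constraints give star-graph signature grids, so every instance is planar, the classical result of \cite{young2022equality} then yields $\fc \cong \gc$, and a classical permutation matrix is a quantum permutation matrix over the $C^*$-algebra $\c$. The extra verification you include that $P$ satisfies the quantum permutation relations is just a spelled-out version of the paper's remark that quantum isomorphism relaxes classical isomorphism.
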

\begin{corollary}
    Let $\fc$ and $\gc$ be sets of unary constraint functions.
    Then $\fc \cong_{qc} \gc$ if and only if $\fc \cong \gc$.
\end{corollary}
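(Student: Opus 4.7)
The reverse direction ($\fc \cong \gc \Rightarrow \fc \cong_{qc} \gc$) is immediate: a classical permutation matrix $P \in \c^{q \times q}$ realizing $\fc \cong \gc$ has entries in $\c \subset C^*$-algebra, satisfies the quantum permutation matrix axioms (each entry is $0$ or $1$, hence a self-adjoint projection, and rows and columns sum to $\one = 1$), and satisfies $P^{\otimes 1}f_i = Pf_i = g_i$ for every corresponding pair.

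For the forward direction, the plan is to combine \autoref{lem:forward} with the classical result of \cite{young2022equality} cited just above the statement. Assume $\fc \cong_{qc} \gc$. By \autoref{lem:forward}, $Z(K) = Z(K_{\fc\to\gc})$ for every planar \#CSP$(\fc)$ instance $K$. The key observation, noted in the preceding paragraph, is that when every signature in $\fc$ is unary, every \#CSP$(\fc)$ instance $K = (V, C)$ is automatically planar: the Holant signature grid $\Omega_K$ of $K$ has every constraint vertex of degree $1$, so $\Omega_K$ is a disjoint union of stars (one star per variable, with the equality vertex at its center and one leaf per constraint involving that variable), which is trivially planar. Hence $Z(K) = Z(K_{\fc\to\gc})$ for \emph{every} (not necessarily planar) \#CSP$(\fc)$ instance $K$, and [Young, 2022, Theorem 1] then yields $\fc \cong \gc$.

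There is essentially no obstacle here: the corollary is a direct consequence of (i) the already-proved forward direction of the main theorem (\autoref{lem:forward}), (ii) the trivial combinatorial observation that unary \#CSP instances are planar, and (iii) the cited classical dichotomy. The only substantive content has already been established elsewhere; this corollary merely records the collapse of the quantum and classical notions of isomorphism in the degenerate unary case, mirroring the content of \autoref{lem:n1} but going through the forward rather than the backward direction of \autoref{thm:result}.
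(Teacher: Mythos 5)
Your proposal is correct and follows essentially the same route the paper intends: the reverse direction is the trivial observation that a classical permutation matrix is a quantum permutation matrix, and the forward direction combines \autoref{lem:forward} with the fact that unary \#CSP instances are disjoint unions of stars (hence planar) and then invokes the cited classical result of \cite{young2022equality}. No gaps.
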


From this point on, assume $\fc$ and $\gc$ each contain at least one function with arity $n \geq 2$. We now recall some definitions which apply to all quantum permutation
groups.
\begin{definition}[Coherent algebra \cite{coherent}]
    \label{def:coherent}
    A \emph{coherent algebra} is a unital $\c$-algebra of complex matrices under usual matrix multiplication and 
    addition, closed under matrix transpose and Hadamard (entrywise) product $\bullet$, and containing the
    all-ones matrix.
\end{definition}
\cite[{Theorem 3.10}]{lupini_nonlocal_2018} states that, for any quantum permutation group $\mathcal{Q}$
acting on a set $X$,
if $O_1,\ldots,O_m \subseteq X \times X$ 
are the orbitals of $\mathcal{Q}$, then the span of the matrices $\{A^i \mid i \in [m]\}$ defined by
$A^i_{x,y} =  1$ if $(x,y) \in O_i$ and $A^i_{x,y} = 0$ otherwise
forms a coherent algebra, called the \emph{orbital algebra} of $\mathcal{Q}$. When $\mathcal{Q}$ is $\qut(X)$ for a graph $X$,
we call the orbital algebra of $\mathcal{Q}$ the
\emph{quantum orbital algebra} of $X$.
Observe that a matrix $M$ is in the orbital algebra of $\mathcal{Q}$ if and only if $M$ is constant on the orbitals of $\mathcal{Q}$ -- 
that is, if $(x_1,y_1)$ and $(x_2,y_2)$ are in the same orbital, then $M_{x_1,y_1} = M_{x_2,y_2}$.
\begin{theorem}[{\cite[{Theorem 3.12}]{lupini_nonlocal_2018}}]
    Let $\mathcal{Q}$ be a quantum permutation group acting on a set $X$, with fundamental representation
    $\u$, and let $M \in \c^{X^2}$. Then $\u M = M\u$ iff $M$ is in the orbital algebra
    of $\mathcal{Q}$.
    \label{thm:orbitalalgebra}
\end{theorem}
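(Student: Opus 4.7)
The theorem characterizes the $(1,1)$-intertwiner space $C_{\mathcal{Q}}(1,1)$ as the orbital algebra of $\mathcal{Q}$, a two-dimensional analog of \autoref{lem:constantorbits}. Indeed, by \autoref{lem:tensorcftog} the condition $\u M = M\u$ is equivalent to $\u^{\otimes 2} M^{2,0} = M^{2,0}$, so orbits arise from the fixed vectors of $\u$ while orbitals should arise from the fixed vectors of $\u^{\otimes 2}$, paralleling the one-dimensional picture. The plan is to prove the two inclusions of the iff separately.

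The forward direction (commuting implies constant on orbitals) should be a direct computation exploiting the projector relations of $\u$. Starting from $\sum_k u_{ik} M_{kj} = \sum_k M_{ik} u_{kj}$ with $i = x_1$ and $j = y_2$, I would left-multiply by $u_{x_1 y_1}$: the row relation $u_{x_1 y_1} u_{x_1 k} = \delta_{y_1 k}\, u_{x_1 y_1}$ collapses the left-hand sum to $u_{x_1 y_1} M_{y_1 y_2}$. Then right-multiplying by $u_{x_2 y_2}$ and invoking the column relation $u_{k y_2} u_{x_2 y_2} = \delta_{k x_2}\, u_{k y_2}$ collapses the right-hand sum to $M_{x_1 x_2} u_{x_1 y_1} u_{x_2 y_2}$. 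Since entries of $M$ are scalars in $\c$, hence central in $C(\mathcal{Q})$, the identity rearranges to
\[
    (M_{y_1 y_2} - M_{x_1 x_2}) \, u_{x_1 y_1} u_{x_2 y_2} = 0.
\]
Because orbitals are by definition the equivalence classes of the relation $u_{x_1 y_1} u_{x_2 y_2} \neq 0$, this forces $M_{x_1 x_2} = M_{y_1 y_2}$ whenever $(x_1, x_2) \sim_2 (y_1, y_2)$, placing $M$ in the orbital algebra.

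The reverse direction is the main obstacle: I need to show every characteristic matrix $A^i$ of an orbital $O_i$ commutes with $\u$, yet the definition of orbitals only tells us which individual products $u_{x_1 y_1} u_{x_2 y_2}$ vanish and gives no direct handle on the \emph{sums} $\sum_{k : (k,m) \in O_i} u_{jk}$ and $\sum_{k : (j,k) \in O_i} u_{km}$ that must be compared. My plan is a dimension-count via Woronowicz's Haar state $h$ on $C(\mathcal{Q})$: the averaging map $\Phi(M) = (\mathrm{id} \otimes h)(\u M \u^{\dagger})$ is a conditional expectation of $M_q(\c)$ onto $C_{\mathcal{Q}}(1,1)$, and its trace can be computed from Schur-orthogonality-type identities for $h$ applied to the matrix coefficients $u_{ij}$, yielding $\dim C_{\mathcal{Q}}(1,1) = $ (number of orbitals of $\mathcal{Q}$). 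Combined with the forward inclusion $C_{\mathcal{Q}}(1,1) \subseteq$ (orbital algebra) into a space of equal dimension, this forces equality. An abstract alternative is to show $C_{\mathcal{Q}}(1,1)$ is itself a coherent algebra in the sense of \autoref{def:coherent}; closure under Hadamard product is the subtle step, but it can be extracted from the tensor-category structure of $C_{\mathcal{Q}}$ together with the comultiplication $\Delta$, after which the atoms of the algebra must coincide with the $\{A^i\}$ by an orbital-partition argument.
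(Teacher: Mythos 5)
First, note that the paper itself does not prove this statement: it is imported verbatim from \cite[Theorem 3.12]{lupini_nonlocal_2018}, so there is no internal proof to compare against and your proposal must stand on its own. Your forward direction does: multiplying the $(x_1,y_2)$ entry of $\u M = M\u$ on the left by $u_{x_1y_1}$ and on the right by $u_{x_2y_2}$, and using the row and column relations, gives exactly $(M_{y_1y_2}-M_{x_1x_2})\,u_{x_1y_1}u_{x_2y_2}=0$, which is the statement that $M$ is constant on the orbitals; this is essentially the same computation as in the proof of \autoref{lem:guuh}.

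The reverse direction is where the content of the theorem lies, and both of your proposed routes have a gap at exactly that point. The dimension count is circular: the identity $\dim C_{\mathcal{Q}}(1,1)=\#\{\text{orbitals}\}$ is the quantum orbit-counting lemma, which in \cite{lupini_nonlocal_2018} is a \emph{consequence} of this theorem; Schur orthogonality gives Haar values of products of matrix coefficients of irreducibles, and converting those into a count of orbitals already presupposes that the fixed-point space of $\u\otimes\overline{\u}$ is spanned by the orbital characteristic matrices. The coherent-algebra route stalls in the same place: closure of $C_{\mathcal{Q}}(1,1)$ under Schur product is actually easy ($M\bullet N=E^{1,2}(M\otimes N)(E^{1,2})^{\dagger}$, so it follows from the tensor-category properties, no comultiplication needed), but the forward inclusion only tells you that each $0$-$1$ atom of this coherent algebra is a \emph{union} of orbitals, and nothing in the coherent-algebra axioms prevents an atom from containing several orbitals; ``the atoms must coincide with the $A^i$'' is precisely the claim to be proved. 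The missing idea is to apply the averaging map you already wrote down to the orbital matrices themselves. Entrywise, $\Phi(M)_{x_1x_2}=\sum_{y_1,y_2}h(u_{x_1y_1}u_{x_2y_2})M_{y_1y_2}$ lies in $C_{\mathcal{Q}}(1,1)$ for every $M$ by invariance of the Haar state $h$; traciality gives $h(u_{x_1y_1}u_{x_2y_2})=h\bigl((u_{x_2y_2}u_{x_1y_1})^{*}(u_{x_2y_2}u_{x_1y_1})\bigr)\ge 0$, with equality iff $u_{x_1y_1}u_{x_2y_2}=0$ wherever $h$ is faithful (this is also where the universal-versus-reduced subtlety must be addressed). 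Hence $\Phi(A^i)$ is supported exactly on the orbital $O_i$; since $\Phi(A^i)$ commutes with $\u$, your forward direction forces it to be constant on orbitals, so $\Phi(A^i)=cA^i$ with $c>0$ and therefore $A^i\in C_{\mathcal{Q}}(1,1)$. Without an argument of this kind your proposal does not establish the reverse implication; with it, you essentially recover the proof in \cite{lupini_nonlocal_2018}.
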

Ideally, we would prove an extension of \autoref{thm:orbitalalgebra} for tensors
$M \in \c^{X^n}$ for $n > 2$. However, such a result would require a higher-dimensional orbital algebra, hence higher-dimensional orbitals. As discussed above, such
objects are not known to exist. Instead, we take the dimensionality reduction approach below.

\begin{lemma}
    Let $F$ be an $n$-ary constraint function with $n > 2$
    and let $\u$ be the fundamental representation of $\qut(F)$.
    Define an arity-$(n-1)$ constraint function $F'$ by
    \[
        F'_{x_2\ldots x_n} = \sum_{x_1}F_{x_1x_2\ldots x_n}.
    \]
    Then $\u^{\otimes n-1}f' = f'$ (where $f'$ is the vectorization of $F'$).
    \label{lem:dimreduction}
\end{lemma}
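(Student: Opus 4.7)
My plan is to derive the statement directly from the defining identity $\u^{\otimes n}f = f$ by summing both sides in the appropriate coordinate. Unwinding definitions, for every $\vx = (x_1, \ldots, x_n) \in [q]^n$ we have
\[
    f_{x_1 \ldots x_n} = (\u^{\otimes n} f)_{x_1 \ldots x_n}
    = \sum_{y_1, \ldots, y_n} u_{x_1 y_1}\, u_{x_2 y_2} \cdots u_{x_n y_n}\, f_{y_1 \ldots y_n}.
\]
The idea is then to sum this identity over $x_1 \in [q]$. The right-hand side becomes $\sum_{x_1} f_{x_1 x_2 \ldots x_n} = f'_{x_2 \ldots x_n}$ by definition of $F'$, giving exactly the desired target.

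For the left-hand side, summing over $x_1$ isolates the factor $\sum_{x_1} u_{x_1 y_1}$, which is the leftmost factor in the noncommutative product $u_{x_1 y_1} u_{x_2 y_2} \cdots u_{x_n y_n}$. By the quantum permutation matrix column-sum axiom, $\sum_{x_1} u_{x_1 y_1} = \one$, and since $\one$ is the algebra unit it simply drops out (its position in the product does not matter). What remains is
\[
    \sum_{y_1, \ldots, y_n} u_{x_2 y_2} \cdots u_{x_n y_n}\, f_{y_1 \ldots y_n}
    = \sum_{y_2, \ldots, y_n} u_{x_2 y_2} \cdots u_{x_n y_n} \left(\sum_{y_1} f_{y_1 y_2 \ldots y_n}\right),
\]
and the inner sum is $f'_{y_2 \ldots y_n}$. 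Thus the left-hand side equals $(\u^{\otimes n-1} f')_{x_2 \ldots x_n}$, and combining with the right-hand side yields $\u^{\otimes n-1} f' = f'$ as claimed.

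There is no real obstacle here; the only subtlety to be mindful of is the noncommutativity of the $u_{ij}$, which is why I want to be explicit that the $x_1$ sum touches only the leftmost factor and that the resulting $\one$ commutes with the rest. The $y_1$ sum is likewise handled cleanly because $y_1$ appears only in $f_{y_1 \ldots y_n}$ and not in any $u$-factor. No induction, no tensor-category machinery, and no planarity is needed for this step; it is a direct consequence of the column-sum relation defining a quantum permutation matrix.
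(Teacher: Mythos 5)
Your proof is correct, and it is in fact a little leaner than the paper's. The paper proves this lemma by left-multiplying the entrywise identity $\sum_{\vy} u_{x_1y_1}\cdots u_{x_ny_n} f_{\vy} = f_{\vx}$ by an auxiliary factor $u_{x_1 z}$ and summing over both $x_1$ and $z$, so that the projector orthogonality relation $u_{x_1 z}u_{x_1 y_1} = \delta_{z y_1} u_{x_1 z}$ together with the row/column sums collapses the extra index; this is the same maneuver used in the paper's proof of the rotation lemma (\autoref{lem:rotateinput}). You instead sum the identity over $x_1$ directly and invoke only the column-sum axiom $\sum_{x_1} u_{x_1 y_1} = \one$. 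That shortcut is legitimate here precisely because $x_1$ indexes the \emph{leftmost} factor of the noncommutative product, so the sum factors out by distributivity alone, and you are right to flag that this is the only point where noncommutativity could have caused trouble; similarly the $y_1$-sum passes through the $u$-factors because $f_{\vy}$ is a scalar. So both arguments are direct coordinate computations from $\u^{\otimes n}f = f$, with the same endpoint $(\u^{\otimes n-1}f')_{x_2\ldots x_n} = f'_{x_2\ldots x_n}$; yours dispenses with the auxiliary multiplication and the projector relations, which is a small but genuine simplification, while the paper's version mirrors the technique it reuses elsewhere (e.g.\ for rotations), where the index being removed is not in the leftmost position and the extra factor is actually needed.
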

\begin{proof}
    We have $\u^{\otimes n}f = f$, so for all $\mathbf{x}$,
    \begin{equation}
        \sum_{\mathbf{y}} u_{x_1y_1} \ldots u_{x_ny_n} f_{\mathbf{y}} = f_{\mathbf{x}}.
        \label{eq:entrywiseuff}
    \end{equation}
    For any $x_1, z \in [q]$, left multiply \eqref{eq:entrywiseuff}  by $u_{x_1z}$, and then sum over all  values of $x_1$
    and $z$. 
    The right hand side sum is 
    \[\sum_{x_1}\left( \sum_{z} u_{x_1 z} \right) f_{\mathbf{x}}
    = \sum_{x_1} f_{\mathbf{x}} = f'_{x_2\ldots x_n}
    \]
    The left hand side sum is
    \begin{align*}
        \sum_{\mathbf{y}, z} \left(\sum_{x_1} u_{x_1 z}u_{x_1 y_1}\right)
         u_{x_2y_2}\ldots u_{x_ny_n} f_{\mathbf{y}}
        & = \sum_{y_2^n} u_{x_2y_2}\ldots u_{x_ny_n} \left(\sum_{z} f_{zy_2^n}\right) \\
        & = \sum_{y_2^n} u_{x_2y_2}\ldots u_{x_ny_n}  f'_{y_2\ldots y_n},
    \end{align*}
    hence
    $(\u^{\otimes n-1} f')_{x_2^n} = f'_{x_2^n}$.
\end{proof}

After $n-2$ applications of  \autoref{lem:dimreduction}, the resulting binary constraint function is the adjacency matrix of a $\c$-weighted graph.
We would like this $\c$-weighted graph to be connected in the following sense.
\begin{definition}[(Dis)connected $\c$-weighted graph]
    \label{def:connectedgraph}
    A $\c$-weighted graph $X$ is \emph{disconnected} if there exist $I, J \subseteq V(X)$ such that
    $I \cap J = \varnothing$, $I \cup J = V(X)$, and $X_{uv} = 0$ unless $u,v \in I$ or $u,v \in J$.
    $X$ is \emph{connected} if it is not disconnected.
\end{definition}
This corresponds to the usual definition of connected for undirected graphs $X$.
Instead of defining connectivity for higher-dimensional tensors via a generalization of 
\autoref{def:connectedgraph}, we give a definition motivated by \autoref{lem:dimreduction}.
\begin{definition}[Projectively connected]
    For $n \geq 2$, an $n$-ary constraint function $F$ is \emph{projectively connected}
    if the $\c$-weighted graph $X$ defined by
    \[
        X_{uv} = \sum_{x_1,x_2,\ldots,x_{n-2}} F_{x_1x_2\ldots x_{n-2}uv}
    \]
    is connected.
    \label{def:projectivelyconnected}
\end{definition}

\begin{definition}[$\oplus$]
\label{def:oplus-f-g}
    Let $F \in \c^{V(F)^n}, G \in \c^{V(G)^n}$ be constraint functions of
    the same arity $n$, but possibly different domain sizes $|V(F)|$ and $|V(G)|$.
    For $n \geq 2$, the \emph{direct sum} $F \oplus G \in \c^{(V(F) \sqcup V(G))^n}$ (where $\sqcup$
    denotes set disjoint union) of $F$ and $G$ is an
    arity-$n$ constraint function defined by
    \[
        (F \oplus G)_{\mathbf{x}} = \begin{cases} 
            F_{\mathbf{x}} & \mathbf{x} \in V(F)^n \\
            G_{\mathbf{x}} & \mathbf{x} \in V(G)^n \\
            0 & \text{otherwise}
        \end{cases}
    \]
    for $\mathbf{x} \in (V(F) \sqcup V(G))^n$.
    If $F$ and $G$ have arity $n = 1$, then define $F \oplus G \in \c^{(V(F) \sqcup V(G))^2}$ as a \emph{binary} function
    \[
        (F \oplus G)_{xy} = \begin{cases}
            F_x & x = y \in V(F) \\
            G_x & x = y \in V(G) \\
            0 & \text{otherwise}.
        \end{cases}
    \]
    For constraint function sets $\fc$ and $\gc$ of size $s$, define $\fc \oplus \gc = \{F_i \oplus G_i \mid i \in [s]\}$.
\end{definition}
Observe that for $n=2$, the direct sum corresponds to the disjoint union of the real-weighted
graphs whose adjacency matrices are $F$ and $G$.

Now we prove a generalization of \cite[{Theorem 4.5}]{lupini_nonlocal_2018}. 
\begin{lemma}
    Let $\fc$ and $\gc$ be conjugate closed constraint function sets with disjoint domains
    $V(\fc)$ and $V(\gc)$, respectively (not necessarily of the same size), and further assume that
    $\fc$ and $\gc$ contain a pair of corresponding projectively connected constraint functions
    (such functions have arity $> 1$).
    If there is some $\hat{x} \in V(\fc)$, $\hat{y} \in V(\gc)$ in the same orbit of $\qut(\fc \oplus \gc)$,
    then $\fc \cong_{qc} \gc$.
    \label{lem:sameorbitiso}
\end{lemma}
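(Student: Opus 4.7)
The plan is to extend the proof of \cite[Theorem 4.5]{lupini_nonlocal_2018} --- which handles the special case of a single pair of unweighted graphs (i.e.\ symmetric $0$-$1$ arity-$2$ constraint functions) --- to sets of arbitrary-arity complex-valued constraint functions. Let $\u$ denote the fundamental representation of $\qut(\fc \oplus \gc)$ acting on $V(\fc) \sqcup V(\gc)$, and let $F_i \in \fc, G_i \in \gc$ be a corresponding projectively connected pair of arity $n_i \geq 2$. First, I would reduce to a binary intertwiner: since $\u^{\otimes n_i}(f_i \oplus g_i) = f_i \oplus g_i$ by definition of $\qut(\fc \oplus \gc)$, iterating \autoref{lem:dimreduction} $n_i - 2$ times on $F_i \oplus G_i$ yields a binary constraint function fixed by $\u^{\otimes 2}$. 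A direct calculation shows that marginalization commutes with $\oplus$ (cross-terms vanish because the support of $F \oplus G$ lies in $V(F)^{n_i} \cup V(G)^{n_i}$), so the result equals $F_i'' \oplus G_i''$, which by projective connectivity is the block-diagonal matrix $A_X \oplus A_Y$ of two connected $\c$-weighted graphs $X$ on $V(\fc)$ and $Y$ on $V(\gc)$. By \autoref{lem:tensorcftog} this binary intertwiner commutes with $\u$ as a $(1,1)$-matrix, so by \autoref{thm:orbitalalgebra} it, along with all its polynomials and Hadamard powers, lies in the orbital algebra of $\qut(\fc \oplus \gc)$.

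Next I would extract a quantum permutation matrix $\vc$ from $\u$. Write $\u$ in block form
\[
\u = \begin{pmatrix} P & Q \\ R & S \end{pmatrix}
\]
with $P, Q, R, S$ the $V(\fc)\times V(\fc)$, $V(\fc)\times V(\gc)$, $V(\gc)\times V(\fc)$, and $V(\gc)\times V(\gc)$ blocks, respectively. The orbit hypothesis $u_{\hat x \hat y} \neq 0$ says some entry of $Q$ is nonzero. Adapting the argument of \cite{lupini_nonlocal_2018}, the orbital-algebra constraints from $A_X \oplus A_Y$ and its polynomials, combined with the connectivity of $X$ and $Y$, force a ``bipartite collapse'' of $\u$ after passing to the $C^*$-algebra quotient generated by $u_{\hat x \hat y}$ (equivalently, to a suitable corner $p \, C(\qut(\fc \oplus \gc)) \, p$): in this quotient the diagonal blocks $P$ and $S$ vanish, and the magic-square identities $\sum_z u_{xz} = \sum_z u_{zx} = \one$ then force $|V(\fc)| = |V(\gc)| =: q$ and turn $R$, viewed as a $q \times q$ matrix $\vc$ under any orbit-respecting identification $V(\fc) \cong V(\gc) \cong [q]$, into a quantum permutation matrix.

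Finally I would verify $\vc$ realizes $\fc \cong_{qc} \gc$. For any corresponding pair $F \in \fc$, $G \in \gc$ of arity $n$, restrict the identity $\u^{\otimes n}(f \oplus g) = f \oplus g$ to output indices $\vy \in V(\gc)^n$ and split the input sum into $\vz \in V(\fc)^n$ and $\vz \in V(\gc)^n$ parts: the $V(\gc)^n$ contribution vanishes because its factors $u_{y_i z_i}$ all lie in $S = 0$ in the quotient; the $V(\fc)^n$ contribution is exactly $(\vc^{\otimes n} f)_{\vy}$; and the right-hand side equals $g_{\vy}$. Hence $\vc^{\otimes n} f = g$ for every corresponding pair, giving $\fc \cong_{qc} \gc$.

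The main obstacle is the second step: establishing that a single nonzero cross-block entry $u_{\hat x \hat y}$, together with the orbital-algebra constraints imposed by the connected binary reduction, forces the diagonal blocks of $\u$ to vanish in a suitable quotient. This is where graph-theoretic connectivity interacts most subtly with the noncommutative $C^*$-algebraic structure of $\qut(\fc \oplus \gc)$, and where the binary symmetric $0$-$1$ argument of \cite{lupini_nonlocal_2018} must be most carefully generalized to accommodate arbitrary complex-valued intertwiners arising from higher-arity constraint functions.
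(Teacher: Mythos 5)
Your overall strategy matches the paper's: reduce the projectively connected pair to a binary intertwiner $Z = X \oplus Y$ via \autoref{lem:dimreduction}, place it (and matrices derived from it) in the orbital algebra via \autoref{lem:tensorcftog} and \autoref{thm:orbitalalgebra}, show the cross block of $\u$ is a quantum permutation matrix over a suitable subalgebra, and then verify that it intertwines each corresponding pair $F,G$. The problem is that the step you defer as ``the main obstacle'' is precisely the content of the lemma beyond \cite[Theorem 4.5]{lupini_nonlocal_2018}, and you give no argument for it. Concretely, one must produce, for each within-block pair $(x',x'')$, an orbital-algebra element that is nonzero at $(x',x'')$ but zero on all cross pairs $(x,y)$, so that $u_{x'x''}u_{xy}=0$; from this the row sums $p_x = \sum_{y} u_{xy}$ are shown to coincide with a single nonzero projection $p$, which forces $|V(\fc)|=|V(\gc)|$ and makes the cross block a quantum permutation matrix over the subalgebra with unit $p$. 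For complex-valued, possibly asymmetric $Z$ this is not a routine transcription of the $0$-$1$ case: walk sums can cancel and $Z_{uv}$, $Z_{vu}$ can cancel under naive symmetrization. The paper handles this by first forming $Z' = Z^{\bullet 4m} + (Z^{\bullet 4m})^{\top}$ with $m$ chosen (via an LCM of root-of-unity orders) so that symmetrization destroys no nonzero entries, and then applying a Vandermonde argument over Hadamard powers to find $d,w$ with $\bigl(((Z')^{\bullet d})^w\bigr)_{x'x''} \neq 0$ while all cross entries vanish. None of this appears in your proposal, so the ``bipartite collapse'' you invoke is asserted rather than proved.

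There is also a gap in your final verification as stated. You restrict $\u^{\otimes n}(f\oplus g) = f\oplus g$ to outputs in $V(\gc)^n$ and discard the $V(\gc)^n$-input contribution ``because $S=0$ in the quotient,'' but a corner $p\,C(\qut(\fc\oplus\gc))\,p$ is not a quotient and compression by $p$ is not multiplicative, so you cannot simply apply it termwise to that identity; you would instead need to kill those terms explicitly, e.g.\ by multiplying on the left by the cross-block row sum $\sum_{x\in V(\fc)} u_{y_1 x}$ and using row-orthogonality of $\u$, or avoid them altogether as the paper does: it works with the $(n-1,1)$ flattening, where extending the sum from $V(\gc)^{n-1}$ to $(V(\fc)\sqcup V(\gc))^{n-1}$ costs nothing because $F\oplus G$ vanishes on mixed tuples, and then applies the intertwiner relation $\u^{\otimes n-1}(F\oplus G)^{n-1,1} = (F\oplus G)^{n-1,1}\u$ to land on $F^{n-1,1}\u'$ directly. (The paper also treats the arity-$1$ members of $\fc,\gc$ separately, using the special binary definition of $\oplus$ for unary functions, which your sketch omits.) So the architecture is right, but the two places where the real work lives are missing or not yet sound.
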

\begin{proof}
    Let $\u$ be the fundamental representation of $\qut(\fc \oplus \gc)$, and let $F \in \c^{V(\fc)^n}$ 
    and $G \in \c^{V(\gc)^n}$ be the projectively connected constraint functions in $\fc$ and $\gc$,
    respectively. Define a $\c$-weighted graph $Z$ by
    \[
        Z_{uv} = \sum_{z_1,z_2,\ldots,z_{n-2} \in V(\fc) \sqcup V(\gc)} (F \oplus G)_{z_1z_2\ldots z_{n-2}uv}
    \]
    for $u,v \in V(Z) = V(\fc) \sqcup V(\gc) = V(\fc \oplus \gc)$.
    One can see that $Z$ is the disjoint union of $\c$-weighted graphs $X$ and $Y$ 
    on $V(X) = V(\fc)$ and $V(Y) = V(\gc)$, respectively, where
    \[
        X_{uv} = \sum_{z_1,z_2,\ldots,z_{n-2} \in V(\fc)} F_{z_1,z_2,\ldots, z_{n-2},u,v},\qquad
        Y_{uv} = \sum_{z_1,z_2,\ldots,z_{n-2} \in V(\gc)} G_{z_1,z_2,\ldots, z_{n-2},u,v},
    \]
    Since $F$ and $G$ are projectively connected, $X$ and $Y$ are connected. By \autoref{lem:dimreduction}, $\u^{\otimes 2}z = z$.
    Thus by \autoref{lem:tensorcftog} and \autoref{thm:orbitalalgebra}, $Z$ is in the orbital algebra of $\qut(\fc \oplus \gc)$.
    The orbital algebra, being a coherent algebra, is closed
    under $\bullet$ and $\top$, 
    so $Z^{\bullet m}$ and $(Z^{\bullet m})^{\top}$ are also in the orbital algebra for all
    $m \geq 0$. For each distinct pair $u,v \in V(Z)$, if there is some $m_{uv} \ge 1$ such that
    $Z_{uv}^{m_{uv}} = -Z_{vu}^{m_{uv}}$ and is nonzero, then there is a $p \geq 1$ such that 
    $Z_{uv} = \omega_{2m_{uv}}^p Z_{vu}$, where $\omega_r$ is the $r$th primitive root of unity. Letting $m$ be the
    LCM of all such $m_{uv}$, $Z' = Z^{\bullet 4m} + (Z^{\bullet 4m})^{\top}$ is in the orbital algebra; it  is
    the adjacency matrix
    of a symmetric $\c$-weighted graph, and satisfies 
    $Z_{uv} \neq 0 \implies Z'_{uv} \neq 0$. Hence the
    subgraphs $X'$ and $Y'$ of $Z'$ induced by $V(X), V(Y) \subseteq V(Z)$ are still connected, 
    and are now symmetric.

    Recall that $(Z'^w)_{uv}$ stores the sum of the products of the edge weights of all $w$-edge walks
    from $u$ to $v$ in $Z'$. Since $Z'$ is disconnected into $X'$ and $Y'$, we have 
    $(((Z')^{\bullet d})^w)_{xy} = (((Z')^{\bullet d})^w)_{yx} = 0$ for all $w,d \geq 1$ and $x \in V(X')=V(X), y \in V(Y')=V(Y)$. 
    Fix $x', x'' \in V(X')$. Since $X'$ is symmetric and connected, there is a $w \geq 1$-edge walk 
    $x' = x_1, x_2, \ldots, x_{w+1} = x''$ of vertices in $V(X')$ such that 
    \begin{equation}
        X'_{x'x_2}X'_{x_2x_3}\ldots X'_{x_{w-1}x_w}X'_{x_wx''} \neq 0.
        \label{eq:nonzerowalk}
    \end{equation}
    Suppose that for every $d \geq 1$,
    \begin{equation}
        (((Z')^{\bullet d})^w)_{x'x''} = \sum_{z_2,z_3,\ldots,z_{w} \in V(X)} \left(X'_{x'z_2}X'_{z_2z_3}\ldots X'_{z_{w-1}z_w}
        X'_{z_wx''}\right)^d = 0.
        \label{eq:walk}
    \end{equation}
    After collecting equal terms into equivalence classes, we obtain a Vandermonde system from
    \eqref{eq:walk} for all $d \leq D$ for an appropriate $D$. This implies every term in the sum in
    \eqref{eq:walk} is 0, contradicting \eqref{eq:nonzerowalk}. Thus there are some $w,d \geq 1$
    such that $(((Z')^{\bullet d})^w)_{x'x''} \neq 0$, but 
    $(((Z')^{\bullet d})^w)_{xy} = 0$ for every $x \in V(X'), y \in V(Y')$.
    The orbital algebra of $\qut(\fc \oplus \gc)$ contains $Z'$ and is a coherent algebra, so it contains $((Z')^{\bullet d})^w$.
    Thus, by the discussion after \autoref{def:coherent}, 
    $((Z')^{\bullet d})^w$ is constant on the orbitals
    of $\qut(\fc \oplus \gc)$. Hence for every $x, x', x'' \in V(X)$ and $y \in V(Y)$, 
    $(x', x'')$ and $(x,y)$ cannot be in the same orbital, so $u_{x'x''}u_{xy} = u_{xy}u_{x'x''} = 0$.

    Now we can mostly follow the proof of \cite[{Theorem 4.5}]{lupini_nonlocal_2018}. 
    For each $x \in V(X)$, let $p_x = \sum_{y \in V(Y)} u_{xy}$. For $x, x' \in V(X)$, we have
    \[
        p_x - p_xp_{x'} = p_x(1-p_{x'}) = \left(\sum_{y \in V(Y)} u_{xy}\right) \left(\sum_{x'' \in V(X)} u_{x'x''}\right) = 0
    \]
    by the last sentence of the previous paragraph, so $p_x = p_xp_{x'}$. Similarly,
    $p_{x'} = p_xp_{x'}$, so $p_x = p_{x'}$. If we define $p_y = \sum_{x \in V(X)} u_{xy}$, for $y \in V(Y)$, then a
    symmetric calculation shows $p_y = p_{y'}$ for any $y, y' \in V(Y)$. Finally, for $x \in V(X)$,
    $y \in V(Y)$,
    \[
        |V(X)| p_x = \sum_{x \in V(X)}p_x = \sum_{x \in V(X), y \in V(Y)} u_{xy}
        = \sum_{y \in V(Y)} p_y = |V(Y)| p_y.
    \]
    For any $x \in V(X)$, $y\in V(Y)$, we have
    \begin{align*}
        |V(X)|u_{xy} &= |V(X)|\sum_{y' \in V(Y)}u_{xy'}u_{xy} = (|V(X)|p_x) u_{xy} \\
        &= (|V(Y)|p_y) u_{xy} = |V(Y)|\sum_{x' \in V(X)}u_{x'y} u_{xy}
        = |V(Y)| u_{xy}.
        \numberthis\label{eq:q1q2}
    \end{align*}
    Specializing to $u_{\hat{x}\hat{y}} \neq 0$,
    we get  $|V(X)| = |V(Y)|$. Thus $p_x = p_y$, so $p_z$ does not depend
    on the choice of $z \in V(Z)$. Call this common element $p$.
    If $p_{\hat{x}} = 0$, then multiplying $p_{\hat{x}}$ by
    $u_{\hat{x}\hat{y}}$ would produce $u_{\hat{x}\hat{y}} = 0$,
    a contradiction. Hence
    $p_{\hat{x}} \neq 0$, so $p \neq 0$.

    Consider the $C^*$-subalgebra of $C(\qut(\fc \oplus \gc))$ generated by $u_{xy}$ for $x \in V(X), y \in V(Y)$. The first two
    equalities in \eqref{eq:q1q2} show $p u_{xy} = u_{xy}$, and
    a similar calculation shows $u_{xy} p = u_{xy}$.
    Thus $p$ is the identity in this subalgebra. 
    Consider the matrix $\u' = (u_{xy})$ for $x \in V(X) = V(\fc), y \in V(Y) = V(\gc)$ -- the top right quadrant of $\u$.
    Note that $\u'$ does not depend on the choice of
    $F$ from $\fc$ or $G$ from $\gc$.
    $\u'$ is square, as $|V(X)| = |V(Y)|$.
    Since $p$ is
    equal to each row- and column-sum of $\u'$ and $\u'$'s entries are taken from the quantum
    permutation matrix $\u$, $\u'$ is a quantum permutation matrix over the subalgebra.

    $\u$ is the fundamental representation of $\fc \oplus \gc$, so for any corresponding arity-$n$ $F \in \fc$ and $G \in \gc$ with $n \geq 2$, \autoref{lem:tensorcftog} with
    $m=n-1$ and $d=1$ gives $\u^{\otimes n-1} (F \oplus G)^{n-1,1} = (F \oplus G)^{n-1,1} \u$.
    For $x_1,\ldots,x_{n-1} \in V(\fc)$, and $y \in V(\gc)$, we have
    \begin{align*}
        ((\u')^{\otimes n-1} G^{n-1,1})_{x_1\ldots x_{n-1}, y}
        &= \sum_{z_1,\ldots,z_{n-1} \in V(\gc)} u_{x_1z_1}\ldots u_{x_{n-1}z_{n-1}} G_{z_1\ldots z_{n-1}y}\\
        &= \sum_{z_1,\ldots,z_{n-1} \in V(\fc) \sqcup V(\gc)} u_{x_1z_1}\ldots u_{x_{n-1}z_{n-1}} (F \oplus G)_{z_1\ldots z_{n-1}y}\\
        &= (\u^{\otimes n-1} (F \oplus G)^{n-1,1})_{x_1\ldots x_{n-1}, y} \\
        &= ((F \oplus G)^{n-1,1} \u)_{x_1\ldots x_{n-1}, y} \\
        &= \sum_{z \in V(\fc) \sqcup V(\gc)} (F \oplus G)_{x_1\ldots x_{n-1}z} u_{zy}\\
        &= \sum_{z \in V(\fc)} F_{x_1\ldots x_{n-1}z} u_{zy}\\
        &= (F^{n-1,1} \u')_{x_1 \ldots x_{n-1}, y}.
    \end{align*}
    Thus $(\u')^{\otimes n-1} G^{n-1,1} = F^{n-1,1} \u'$, so by \autoref{lem:tensorcftog},
    $(\u')^{\otimes n} g = f$. 
    
    If $n = 1$, recall $F \oplus G$ has arity 2, so we still
    have $\u (F \oplus G)^{1,1} = (F \oplus G)^{1,1} \u$, so
    for all $x \in V(\fc)$, $y \in V(\gc)$,
    \[
        u_{xy}g_y = \sum_{z \in V(\fc) \sqcup V(\gc)} u_{xz}
        (F \oplus G)_{zy} = \sum_{z \in V(\fc) \sqcup V(\gc)}
        (F \oplus G)_{xz} u_{zy} = f_xu_{xy}.
    \]
    Now, for $x \in V(\fc)$,
    \[
        (\u' g)_x
        = \sum_{y \in V(\gc)} u_{xy} g_y
        = \sum_{y \in V(\gc)} f_x u_{xy}  = f_x,
    \]
    so $\u' g = f$. Thus $\fc \cong_{qc} \gc$.
\end{proof}

\begin{lemma}
    \label{lem:backward}
    Let $\fc$ and $\gc$ be conjugate closed compatible sets of constraint functions.
    If $Z(K) = Z(K_{\fc\to\gc})$ for every planar \#CSP$(\fc)$ instance $K$, then $\fc \cong_{qc} \gc$.
\end{lemma}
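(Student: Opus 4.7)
The plan is to reduce to \autoref{lem:sameorbitiso}. I would first dispose of the unary case: if every function in $\fc$ (equivalently in $\gc$) is unary, then the incidence multigraph of any \#CSP$(\fc)$ instance is a disjoint union of stars and therefore automatically planar, so the hypothesis becomes $Z(K) = Z(K_{\fc\to\gc})$ for every \#CSP$(\fc)$ instance $K$. \autoref{lem:n1} then yields $\fc \cong_{qc} \gc$.

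Henceforth assume some function in $\fc$ has arity at least $2$. After relabeling one of the domains (which affects neither partition function values nor any quantum automorphism group up to isomorphism) I may take $V(\fc) \cap V(\gc) = \varnothing$, so that $\fc \oplus \gc$ is defined with vertex set $V(\fc) \sqcup V(\gc)$. To conclude $\fc \cong_{qc} \gc$ through \autoref{lem:sameorbitiso}, I need (a) some corresponding pair of projectively connected constraint functions in $\fc$ and $\gc$, and (b) some $\hat x \in V(\fc)$ and $\hat y \in V(\gc)$ lying in a common orbit of $\qut(\fc \oplus \gc)$.

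For (b), I would first establish the $\fc \oplus \gc$ analogue of \autoref{lem:sameorbit}: two elements $u, v \in V(\fc \oplus \gc)$ share an orbit of $\qut(\fc \oplus \gc)$ iff $Z^u(K) = Z^v(K)$ for every 1-labeled planar \#CSP$(\fc \oplus \gc)$ instance $K$. Because $F \oplus G$ vanishes whenever its inputs are split between $V(\fc)$ and $V(\gc)$, every such partition function factors over connected components of the incidence multigraph, and on each unlabeled component the original hypothesis forces the pure-$\fc$ and pure-$\gc$ contributions to coincide. Arguing by contradiction that no orbit met both sides would place the indicator vector $\mathbf{v}_\fc$ of $V(\fc)$ in $C_{\qut(\fc \oplus \gc)}(1,0)$ by \autoref{lem:constantorbits}; by \autoref{thm:intertwinersigmatrix}, $\mathbf{v}_\fc = M(\mathbf{Q})$ for some planar quantum $(\fc\oplus\gc)$-gadget $\mathbf{Q} \in \qk(1,0)$. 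Plugging $\mathbf{Q}$ into a larger instance and unpacking via the original hypothesis should yield a numerical contradiction, producing the required $\hat x, \hat y$.

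For (a), if a corresponding pair $F, G$ is already projectively connected, we are done. Otherwise, following Lov\'asz's real-weighted-graph technique, I would append to $\fc$ and $\gc$ a new matched pair of binary constraint functions $\tilde F, \tilde G$ built as $\c$-linear combinations of iterated Hadamard and matrix powers of the binary tensors obtained from $F, G$ via repeated application of \autoref{lem:dimreduction}. The orbital algebra of $\qut(\fc \oplus \gc)$ is closed under matrix product, transpose, and Hadamard product, so $\tilde F \oplus \tilde G$ remains a $(2,0)$-intertwiner; tuning the coefficients (the Vandermonde trick already used inside \autoref{lem:sameorbitiso}) makes the associated $\c$-weighted graph connected on each side. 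The main obstacle I expect is this step: I must argue that appending $\tilde F, \tilde G$ preserves the planar partition-function hypothesis, i.e.\ that the Lov\'asz combinations, while algebraically natural in $C_{\qut(\fc \oplus \gc)}$, can be realized by planar \#CSP gadgets so that \autoref{thm:intertwinersigmatrix} transfers the equality to the enlarged sets. This tension between the algebraic closure of the intertwiner space and the combinatorial restriction of planarity is the heart of the Lov\'asz adaptation.
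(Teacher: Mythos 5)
Your high-level skeleton (produce two domain elements in a common orbit of $\qut(\fc\oplus\gc)$ and invoke \autoref{lem:sameorbitiso}) is reasonable, and your step (b) can in fact be completed even though you only gesture at it: if no orbit met both $V(\fc)$ and $V(\gc)$, the indicator vector of $V(\fc)$ would be constant on orbits, hence lie in $C_{\qut(\fc\oplus\gc)}(1,0)$ and equal $\sum_\alpha c_\alpha M(\k_\alpha)$ for planar $1$-labeled gadgets $\k_\alpha$ by \autoref{thm:intertwinersigmatrix}; the hypothesis forces $\sum_{x\in V(\fc)}M(\k)_x=\sum_{y\in V(\gc)}M(\k)_y$ for every such gadget (split off the component containing the labeled vertex and use $Z(C^{\fc})=Z(C^{\gc})$), so summing the coordinates of the indicator vector over $V(\fc)$ and over $V(\gc)$ gives $|V(\fc)|=0$, a contradiction. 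That "numerical contradiction" is exactly the Man{\v c}inska--Roberson argument and does go through.

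The genuine gap is step (a). If the binary reduction of $F$ obtained by iterating \autoref{lem:dimreduction} is disconnected, then no $\c$-linear combination of iterated matrix products, transposes and Hadamard powers of it can ever be connected: with respect to the partition $I\sqcup J$ witnessing disconnectedness, every generator vanishes on all cross-block entries, and matrix product, Hadamard product, transpose and linear combination all preserve this vanishing, so every matrix you can build is again disconnected. Vandermonde-style coefficient tuning cannot create nonzero entries where every term is zero; it is only used in \autoref{lem:sameorbitiso} to detect an already-existing nonzero walk. So your recipe for $\tilde F,\tilde G$ fails precisely in the cases where a projectively connected pair is missing, and the second half of your acknowledged obstacle -- that adjoining $\tilde F,\tilde G$ preserves the planar partition-function hypothesis -- is asserted rather than proved. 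The paper takes a different route that solves (a) and (b) simultaneously: it adjoins apex domain elements $0_F,0_G$ and replaces each $F,G$ by $F',G'$ assigning a weight $\gamma\neq 0$ to tuples $(0_F,\ldots,0_F,c)$, which makes every non-unary $F',G'$ projectively connected; the canonically matched elements $0_F,0_G$ satisfy $Z^{0_F}(K)=Z^{0_G}(K)$ by a direct stratification over the set of variables assigned $0_F$ (so \autoref{lem:sameorbit} applies without any indicator-vector argument); and a final step recovers a quantum isomorphism of the original sets by deleting the apex row and column of $\u$, taking $\gamma$ large enough to force $u_{x\,0_F}=0$. If you want to keep your "append a new connected binary pair" idea instead, \autoref{sec:connectivity} shows the clean realization: adjoin the all-ones matrix $J=M(\e^{1,0}\circ\e^{0,1})$ to both sides; \autoref{lem:rob1} is precisely the missing transfer statement, and since $J$ is connected no Lov\'asz-type powers of $F$ are needed at all.
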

\begin{proof}
    Let $0_F$ and $0_G$ be new domain elements.
    For each $F \in \fc$, $G \in \gc$ with arity $n \geq 2$ define constraint functions $F'$ and $G'$ on
    $V(\fc) \sqcup \{0_F\}$ and $V(\gc) \sqcup \{0_G\}$, by letting, for
    $\vx \in V(\fc) \sqcup \{0_F\}$ or $\vx \in V(\gc) \sqcup \{0_G\}$, respectively,
    \begin{equation}
        \label{eq:fprimegprime}
        F'_{\vx} = \begin{cases}
            F_{\vx} & \vx \in V(\fc)^n \\
            \gamma & \vx = (0_F, 0_F, \ldots, 0_F, c),\ c \neq 0_F \\
            0 & \text{otherwise}
        \end{cases}
        , \qquad
        G'_{\vx} = \begin{cases}
            G_{\vx} & \vx \in V(\gc)^n \\
            \gamma & \vx = (0_G, 0_G, \ldots, 0_G, c),\ c \neq 0_G \\
            0 & \text{otherwise}
        \end{cases},
    \end{equation}
    for some fixed $\gamma \in \mathbb{R}\setminus \{0\}$.
    Defining a $\c$-weighted graph $X'$ by 
    $X'_{uv} = \sum_{x_1, x_2, \ldots, x_{n-2}}F'_{x_1x_2\ldots x_{n-2}uv}$, we see that
    $X'_{0_Fv} = \gamma \neq 0$ for every $v \in V(X') \setminus \{0_F\} = V(\fc)$.
    Thus $X'$ is connected. Defining $Y'$ from $G'$ similarly, 
    we see $F'$ and $G'$ are projectively connected.
    
    For each $F \in \fc$, $G \in \gc$ with arity 1, define
    $F'$ and $G'$ on $V(\fc) \sqcup \{0_F\}$ and $V(\gc) \sqcup \{0_G\}$, respectively, by
    \[
        F'_x = \begin{cases} F_x & x \in V(\fc) \\ 0 & x = 0_F
        \end{cases}, \qquad
        G'_x = \begin{cases} G_x & x \in V(\gc) \\ 0 & x = 0_G
        \end{cases}.
    \]
    Let $\fc' = \{F' \mid F \in \fc\}$ and $\gc' = \{G' \mid G \in \gc\}$, with 
    $V(\fc') = V(\fc) \sqcup \{0_F\}$
    and $V(\gc') = V(\gc) \sqcup \{0_G\}$.
    
    By assumption, $\fc$ and $\gc$ have at least one non-unary
    constraint function, so $\fc'$ and $\gc'$ contain a pair of corresponding projectively
    connected constraint functions.
    We now show that 
    \begin{equation}
        Z^{0_F}(K) = Z^{0_G}(K)
        \label{eq:1labeledequality}
    \end{equation}
    for every planar 1-labeled \#CSP$(\fc' \oplus \gc')$ instance $K = (V, C)$.
    Let $v_0$ be the labeled variable in $V$.
    If $K$ is not connected, then the components of $K$ that do not 
    contain $v_0$ contribute the same value to the partition function regardless of the value taken
    by $v_0$. 
    Hence, to establish~\eqref{eq:1labeledequality}
    we may assume $K$ is connected.
    As such, when $v_0$ takes value $0_F \in V(\fc')$, 
    all variables of $K$ must take values in $V(\fc')$
    (otherwise the assignment contributes 0 to the partition function, as every $F' \oplus G'$ takes value 0 unless all its inputs are in $V(\fc')$ or all its inputs are in $V(\gc')$). 
    We stratify the sum over assignments $\sigma: V \to V(\fc')$ on the left side of \eqref{eq:1labeledequality}
    by the set of variables $S = \sigma^{-1}(0_F) \subseteq V$ assigned $0_F$. 
    For all $\sigma$, $v_0 \in \sigma^{-1}(0_F)$.
    Let $\chi_S = 1$ if no unary constraints $F' \oplus G'$ 
    are applied to any variable in $S$, and if any variable in $S$ appears in a constraint
    of arity $n \ge 2$, then the first $n-1$ arguments of the constraint 
    are in $S$ and the $n$th argument is not.
    By the definition of $F'$, $\chi_S = 1$ captures the property that
    an assignment $\sigma$ with $\sigma^{-1}(0_F) = S$ does not contribute 0 to the partition
    function.
    We define $\chi_S = 0$ otherwise. 
    Suppose $\chi_S = 1$. 
    The remaining `free' variables $v \in V \setminus S$ satisfy $\sigma(v) \in V(\fc)$ and either 
    appear in no constraints with any variables in $S$, or appear as
    the last argument of a constraint with all other variables in $S$, 
    in which case the constraint takes value 
    $\gamma$.
    $F' \oplus G'$ for unary constraints $F'$ and $G'$ only receives inputs in $V(\fc)$ ($\chi_S = 1$ so such a constraint does not receive input $0_F$), so by the special definition of $\oplus$ for unary signatures (\autoref{def:oplus-f-g}), $F' \oplus G'$ is
    effectively the constraint
    $(x,y) \mapsto F_x$ if $x=y$, and 0 otherwise. Thus we may
    combine the two variables to which this constraint is applied and replace
    it with $F$ applied to the merged variable, without changing
    the partition function.
    
    Let $K^{\fc}_{V \setminus S}$ be the \#CSP$(\fc)$ instance formed by eliminating all variables in 
    $S$ and replacing every function $F' \oplus G'$ with $F$. In other words, $K^{\fc}_{V \setminus S}$ is the \#CSP$(\fc)$ instance corresponding to the subgraph of $K$'s signature grid induced by
    the vertices corresponding to variables in $V \setminus S$ and the constraints applied to only
    variables in $V \setminus S$, with the above function
    substitutions.
    Let $d_S$ be the number of constraints containing
    any variable in $S$ (each of which, as $\chi_S = 1$, is of the form $F'_{v_1 \ldots v_{n-1} u}$ for
    some $v_1,\ldots,v_{n-1} \in S$ and $u \in V \setminus S$). 

    Critically, $\chi_S$ is determined by $S$, and does not
    depend on individual assignments $\sigma$ which
    has $S$ as the preimage set of $0_F$. Thus
    for a fixed $S$ with $\chi_S = 1$, the remaining variables $V \setminus S$ take all values in $V(\fc') \setminus \{0_F\} = V(\fc)$ as $\sigma$ ranges over $\{\sigma \mid \sigma^{-1}(0_F) = S\}$. Hence the sum over all assignments $\sigma$ with
    $\sigma^{-1}(0_F) = S$ of the product of the constraints containing no variables in $S$ is
    $Z(K^{\fc}_{V \setminus S})$. The $d_S$ constraints containing variables in $S$ each contribute
    $\gamma$ regardless of the value of the free variable in the $n$th argument. So
    \[
        Z^{0_F} (K) = \sum_{S \subseteq V, S \ni v_0} \chi_S 
        \gamma^{d_S} Z(K^{\fc}_{V \setminus S}).
    \]
    The above reasoning depends only on the relationship between $0_F$ and $V(\fc)$, which by construction
    is exactly the same as the relationship between $0_G$ and $V(\gc)$. So we get a similar expression
    for $Z^{0_G} (K)$. $K^{\fc}_{V\setminus S}$ 
    and $K^{\gc}_{V\setminus S}$ are
    planar \#CSP$(\fc)$ and \#CSP$(\gc)$ instances, respectively, and by construction $K^{\gc}_{V\setminus S} = (K^{\fc}_{V\setminus S})_{\fc\to\gc}$. Thus by assumption we have
    \[
        Z^{0_F} (K) = 
        \sum_{S \subseteq V, S \ni v_0} \chi_S \gamma^{d_S} Z(K^{\fc}_{V \setminus S}) =
        \sum_{S \subseteq V, S \ni v_0} \chi_S \gamma^{d_S} Z(K^{\gc}_{V \setminus S}) =
        Z^{0_G} (K),
    \]
    proving \eqref{eq:1labeledequality}. Now \autoref{lem:sameorbit} asserts that $0_F$ and $0_G$
    are in the same orbit of $\qut(\fc' \oplus \gc')$, so by \autoref{lem:sameorbitiso}, $\fc' \cong_{qc} \gc'$.

    The final step is to extract a quantum isomorphism between $\fc$ and $\gc$ from the quantum permutation matrix
    $\u = (u_{uv})_{u \in V(\gc'), v \in V(\fc')}$ satisfying $\u^{\otimes n}f' = g'$ (vector forms of $F'$ and $G'$)
    for every $n$-ary $F \in \fc$ and corresponding $G \in \gc$. Define the matrix 
    $\widehat{\u} = (u_{uv})_{u \in V(\gc), v \in V(\fc)}$ (in other words, we eliminate row $0_G$ and
    column $0_F$ from $\u$). We must show that $\widehat{\u}$ is a quantum permutation matrix and that
    $\widehat{\u}^{\otimes n}f = g$. First, if $n \geq 2$, for $\vx \in V(\gc)^n$, we have
    \begin{align*}
        g_{\vx} &= g'_{\vx} = \sum_{\vy \in V(\fc')^n} u_{x_1y_1}\ldots u_{x_ny_n} f'_{\vy} \\
                 &= \sum_{\vy \in V(\fc)^n} u_{x_1y_1}\ldots u_{x_ny_n} f_{\vy}
        + \gamma \cdot u_{x_1 0_F} \ldots u_{x_{n-1} 0_F} \sum_{y_n \in V(\fc)} u_{x_n y_n} \\
                 &= \left(\widehat{\u}^{\otimes n} f\right)_{\vx}
        + \gamma \cdot u_{x_1 0_F} \ldots u_{x_{n-1} 0_F} (\one - u_{x_n 0_F}) \numberthis\label{eq:extraterm}.
    \end{align*}
    If $\vx$ does not satisfy $x_1 = x_2 = \ldots = x_{n-1} \neq x_n$, then 
    the second term of the RHS of \eqref{eq:extraterm} is 0, giving the desired
    $g_{\vx} = (\widehat{\u}^{\otimes n} f)_{\vx}$. If $x_1 = x_2 = \ldots = x_{n-1} \neq x_n$,
    then \eqref{eq:extraterm} becomes
    \begin{equation}
        g_{\vx} = (\widehat{\u}^{\otimes n} f)_{\vx} + \gamma \cdot u_{x_1 0_F}.
        \label{eq:takenorm}
    \end{equation}
    Since each $u_{xy}$ is a (self-adjoint) projector, we have $\|u_{x_10_F}\| \in \{0,1\}$ and
    $\|(\widehat{\u}^{\otimes n} f)_{\vx}\| \leq \sum_{\vy \in V(\fc)^n} |f_{\vy}|$. So if we choose
    \[
        \gamma > \sum_{\vy \in V(\fc)^n} |f_{\vy}| + \max_{\vx \in V(\gc)^n} |g_{\vx}|,
    \]
    then \eqref{eq:takenorm} is impossible unless $u_{x_10_F} = 0$, again giving 
    $g_{\vx} = (\widehat{\u}^{\otimes n} f)_{\vx}$. By applying the above to all $\vx \in V(\gc)^n$,
    we obtain $g = \widehat{\u}^{\otimes n} f$, as well as $u_{x_10_F} = 0$ for all $x_1 \in V(\gc)$.
     The latter implies $u_{0_G0_F} = \one$, so $u_{0_G y_1} = 0$ for all $y_1 \in V(\fc)$ as well. Thus the
    rows and columns of $\widehat{\u}$ still sum to $\one$, so $\widehat{\u}$ is a quantum permutation matrix.
    
    If $n = 1$, for any $x \in V(\gc)$,
    \[
        g_x = g'_x = \sum_{y \in V(\fc')} u_{xy} f'_y
        = \sum_{y \in V(\fc)} u_{xy} f_y,
    \]
    so $\widehat{\u} f = g$. Therefore $\fc \cong_{qc} \gc$.
\end{proof}
The proof of \autoref{lem:backward} was inspired by Lov\'asz's proof of \cite[{Corollary 2.6}]{lovasz}, which is roughly
the classical version of \autoref{thm:result}
restricted to positive-real-weighted graphs.
For unweighted graphs, one can use the complement to assume WLOG that both graphs are connected. This is the approach used
in \cite{planar}.
Since there is no concept of complement for weighted graphs, the proof of 
\cite[{Corollary 2.6}]{lovasz} \emph{makes} both graphs connected by adding new vertices, analogous to $0_F$ and $0_G$,
adjacent to every existing vertex. Then it recovers an isomorphism of the original graphs.
    
Now \autoref{thm:result} follows 
from \autoref{lem:forward}, \autoref{lem:n1}, and \autoref{lem:backward}.

\subsection{Gadget Quantum Groups}
\label{sec:gadgetgroup}
In this section, we introduce a broad generalization of the
graph-theoretic quantum groups defined in \cite{planar}.
To motivate graph-theoretic quantum groups, \cite{planar}
defines a family $\mathfrak{F}$ of bi-labeled graphs to be a
\emph{graph category} if $\mathfrak{F}$ contains $\mathbf{I}$ and
$\mathbf{M^{2,0}}$ (the bi-labeled graphs corresponding to the
gadgets $\mathbf{I} = \e^{1,1}$ and $\e^{2,0}$) and is closed
under bi-labeled graph $\otimes, \circ$, and ${\dagger}$. 
Analogously, for any set $\fc$ of signatures and family $\mathfrak{F}$ of gadgets in the context of $\holant(\fc)$,
say $\mathfrak{F}$ is a \emph{$\fc$-gadget category} if it
contains $\e^{1,1}$ and $\e^{2,0}$ and is closed under
gadget $\otimes, \circ$, and ${\dagger}$. Our \autoref{thm:generatepn}
shows that $\pn$ is a gadget category for any $\fc$.

As with bi-labeled graphs and their homomorphism matrices,
the equivalence between gadget and
signature matrix operations then implies that, for any
$\fc$-gadget category $\mathfrak{F}$, $C_{\mathfrak{F}} = \spn\{M(\k) \mid \k \in \mathfrak{F}\}$ is a tensor category with duals.
Recall the statement \autoref{rem:frobenius} that we may replace $\f$ in \autoref{lem:rotategadget} with any $\k \in \pn$. More generally,
any $\fc$-gadget category $\mathfrak{F}$ contains $\e^{1,1}$
and $\e^{2,0}$, the gadgets used in the proof of \autoref{lem:rotategadget}. The vertex in both $\e^{1,1}$ and $\e^{2,0}$ is assigned signature $E_2$, which we may ignore
(treat as an edge) when computing a signature matrix. Thus we
may apply \autoref{lem:rotategadget} to any $\k \in \mathfrak{F}$
in place of $\f$ to show that for any arity-$n$ tensor
$H$, viewed as the signature function of some gadget in $\mathfrak{F}$, we have $H^{m_1,d_1} \in C_{\mathfrak{F}}
\iff H^{m_2,d_2} \in C_{\mathfrak{F}}$ for every $m_1,d_1,m_2,d_2 \geq 0$, $m_1+d_1 = m_2+d_2 = n$.

A more general version of the Tannaka-Krein duality in \autoref{thm:tannaka}, as expressed
in \cite{banica_liberation_2009, planar} and elsewhere, states
that the map sending a CMQG to its intertwiner space induces a
bijection between sub-CMQGs of $O_q^+$ (defined in \cite{wang_free_1995} as the CMQG such that $C(O_q^+)$ is the
universal $C^*$-algebra generated by the entries of an orthogonal
matrix whose entries are projectors -- $S_q^+ \subset O_q^+$)
and all tensor categories with duals.
Thus for any set $\fc$ of signature functions and $\fc$-gadget
category $\mathfrak{F}$, $C_{\mathfrak{F}}$ is the intertwiner
space of some CMQG $\mathcal{Q} \subseteq O_q^+$. We call
such a CMQG $\mathcal{Q}$ a \emph{gadget quantum group}, inspired by the similarly-defined graph-theoretic quantum groups in \cite{planar}. When $\fc = \{A_X\} \cup \eq$ for a binary symmetric 0-1 valued $A_X$ and the defining
properties of $\mathfrak{F}$ depend only on the signature grid's
underlying graph, the CMQG whose intertwiner space is $C_{\mathfrak{F}}$ is a graph-theoretic quantum group
parameterized by the graph $X$ whose adjacency matrix is $A_X$.

\cite{planar} introduced graph-theoretic quantum groups as a
generalization of the \emph{easy quantum groups} defined in
\cite{banica_liberation_2009} by showing that the easy
quantum groups are precisely the graph-theoretic quantum groups
arising from graph categories of bi-labeled graphs with no
internal edges. Such bi-labeled graphs correspond to
$\holant(A_X \mid \eq)$ gadgets with no vertices assigned $A_X$.
In our framework, then, easy quantum groups are the gadget
quantum groups arising from $\fc$-gadget categories for
$\fc \subseteq \eq$. \cite{planar} notes that graph-theoretic
quantum groups, like easy quantum groups, have a combinatorial
structure (in the form of their intertwiner spaces, which by Tannaka-Krein duality uniquely determine the quantum group)
that facilitates their study, but are a richer class than easy
quantum groups, since, for any graph category, every graph $X$ gives rise to a new graph-theoretic quantum group. Gadget quantum groups have a
similar combinatorial structure to graph-theoretic quantum groups, but are an even richer class, since any finite set $\fc$
of signatures, rather than a single signature $A_X$, gives rise
to a new quantum group. 

\appendix
\section{Appendix: An Alternate Approach to Connectivity}
\label{sec:connectivity}
Recall that the proof of \autoref{lem:sameorbitiso} makes use of the orbitals of
$\qut(\fc \oplus \gc)$. This construction, as mentioned earlier, does not extend to dimensions higher
than 2. This is why we, via projective connectivity, effectively
require that $\fc$ and $\gc$ contain a \emph{binary} connected constraint function in the hypothesis
of \autoref{lem:sameorbitiso}. To satisfy this hypothesis, we ensure that the modified constraint
functions $F'$ and $G'$ in the proof of \autoref{lem:backward} are projectively connected.
In this appendix, we present a different construction, due to Roberson \cite{roberson}, which, 
rather than modify the existing constraint functions, adds new binary connected constraint 
functions to $\fc$ and $\gc$, while preserving quantum isomorphism. This removes the need for projective
connectivity entirely, simplifies the proof of \autoref{lem:sameorbitiso}, and could
simplify the proof of \autoref{lem:backward}, since $F'$ and $G'$ no longer have to
be projectively connected (though we still need $Z^{0_F}(K) = Z^{0_G}(K)$ for all planar 1-labeled $K$).
The alternate construction also makes use of two lemmas which should be of independent interest. 

First, we extend \autoref{def:compatible} to gadgets.
\begin{definition}[$\k_{\fc\to\gc}$]
    For compatible constraint function sets $\fc$ and $\gc$ and $\k \in \pn$, let $\k_{\fc\to\gc}
    \in {\cal P}_{\gc}$ be the \emph{corresponding} gadget formed by replacing each constraint signature $F_i \in \fc$
    with the corresponding $G_i \in \gc$. Extend this mapping linearly to $\qk$.
\end{definition}
The first lemma shows that, viewing intertwiners themselves as constraint functions,
we may add `corresponding' pairs of intertwiners (the signature matrices of corresponding quantum $\fc$ and
$\gc$-gadgets -- recall \autoref{thm:intertwinersigmatrix}) to $\fc$ and $\gc$, while preserving
equivalence on planar \#CSP instances.
\begin{lemma}[\cite{roberson}]
    \label{lem:rob1}
    Let $\fc$ and $\gc$ be compatible CC constraint function sets. Let $M_F \in C_{\qut(\fc)}(m,d)$ and
    $M_G \in C_{\qut(\gc)}(m,d)$ such that $M_F = M({\bf Q})$ and 
    $M_G = M({\bf Q}_{\fc \to \gc})$ for some quantum $\fc$-gadget ${\bf Q} \in \qk(m,d)$.
    Let $F$ and $G$ be the constraint functions satisfying $F^{m,d} = M_F$ and $G^{m,d} = M_G$, and
    let $\fc' = \fc\cup\{F, \overline{F}\}$ and $\gc' = \gc\cup\{G, \overline{G}\}$. Then
    $Z(K) = Z(K_{\fc\to\gc})$ for all planar \#CSP$(\fc)$ instances $K$ if and only if 
    $Z(K) = Z(K_{\fc'\to\gc'})$ for all planar \#CSP$(\fc')$ instances $K$.
\end{lemma}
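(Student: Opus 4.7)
The backward direction is immediate: every planar \#CSP$(\fc)$ instance $K$ is also a planar \#CSP$(\fc')$ instance, and since such a $K$ contains no constraints using $F$ or $\overline{F}$, one has $K_{\fc'\to\gc'} = K_{\fc\to\gc}$, so the conclusion $Z(K) = Z(K_{\fc\to\gc})$ is just the special case of $Z(K) = Z(K_{\fc'\to\gc'})$ restricted to $\fc$-only instances.

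For the forward direction, assume $Z(K) = Z(K_{\fc\to\gc})$ for every planar \#CSP$(\fc)$ instance $K$, and let $K$ be an arbitrary planar \#CSP$(\fc')$ instance. The plan is \emph{gadget substitution}: replace every constraint vertex of $K$ assigned $F$ by the quantum $\fc$-gadget $\mathbf{Q}$, and every vertex assigned $\overline{F}$ by the conjugate quantum gadget $\overline{\mathbf{Q}}$ obtained by conjugating every signature and every coefficient in $\mathbf{Q}$; since $\fc$ is CC this remains in $\qk$, and $M(\overline{\mathbf{Q}}) = \overline{M(\mathbf{Q})} = (\overline{F})^{m,d}$. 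The dangling edges of the substituted gadget are attached to the equality vertices of the original constraint's variables in cyclic order. When the orientation of the replaced vertex does not match the canonical $(m,d)$-split of $\mathbf{Q}$, we invoke the rotation/reflection closure of $C_{\qut(\fc)}$ as a tensor category with duals (\autoref{lem:tensorcftog}, \autoref{lem:rotateinput}, \autoref{lem:conjugatecommute}, together with \autoref{lem:rotategadget}, and \autoref{rem:frobenius}) to choose, via \autoref{thm:intertwinersigmatrix}, a quantum $\fc$-gadget in $\qk$ whose signature matrix is the appropriate flattening of the required rotation of $F$. Since gadget operations correspond to their matrix counterparts by \autoref{lem:gadgetmatrix} and $M$ extends linearly, each substitution preserves the partition function. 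Expanding the resulting quantum planar \#CSP$(\fc)$ instance as $\sum_i c_i K_i$ with each $K_i$ a planar \#CSP$(\fc)$ instance yields $Z(K) = \sum_i c_i Z(K_i)$.

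By hypothesis, $Z(K_i) = Z((K_i)_{\fc\to\gc})$ for every $i$. Performing the parallel substitution on $K_{\fc'\to\gc'}$ with $\mathbf{Q}_{\fc\to\gc}$ and $\overline{\mathbf{Q}}_{\fc\to\gc}$ produces $\sum_i c_i (K_i)_{\fc\to\gc}$; here the CC-compatibility hypothesis of \autoref{def:compatible}, namely $F_i = \overline{F_j} \iff G_i = \overline{G_j}$, is exactly what guarantees $\overline{\mathbf{Q}_{\fc\to\gc}} = \overline{\mathbf{Q}}_{\fc\to\gc}$, so the two substitution procedures produce matching quantum instances under $\fc\to\gc$. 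Linearity of $Z$ then gives $Z(K_{\fc'\to\gc'}) = \sum_i c_i Z((K_i)_{\fc\to\gc}) = \sum_i c_i Z(K_i) = Z(K)$, as desired.

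The main obstacle is the bookkeeping to verify that planarity is preserved throughout the substitution and that gadget rotations are correctly chosen so that the cyclic edge orderings around the replaced vertices match; beyond this, the argument is essentially an exercise in the linearity of the Holant construction and the CC hypotheses. The CC closure of $\fc$ and $\gc$ is precisely what allows $\overline{\mathbf{Q}}$ to lie in $\qk$, and the CC compatibility is precisely what makes the $\fc\to\gc$ operation commute with conjugation.
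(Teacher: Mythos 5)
Your proposal is correct and follows essentially the same route as the paper: substitute $\mathbf{Q}$ (resp.\ $\overline{\mathbf{Q}}$) for each vertex assigned $F$ (resp.\ $\overline{F}$), expand the resulting quantum signature grid linearly into planar \#CSP$(\fc)$ instances $K_i$, apply the hypothesis to each $K_i$, and observe that the parallel substitution on $K_{\fc'\to\gc'}$ yields exactly the instances $(K_i)_{\fc\to\gc}$. Your explicit attention to orientation/rotation bookkeeping and to conjugating coefficients as well as signatures in $\overline{\mathbf{Q}}$ is consistent with (and slightly more careful than) the paper's presentation, but does not change the argument.
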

\begin{proof}
    The backward direction is immediate. Let $K$ be a planar \#CSP$(\fc')$ instance, and let
    $\Omega_K$ be the corresponding $\plholant(\fc'\mid \eq)$ instance. Create a `quantum
    signature grid' $\widehat{\Omega}_K \in \mathfrak{Q}^{\cal P}_{\fc}(0,0)$ as follows.
    Replacing every instance of a vertex $v$ assigned
    $F$ in $\Omega_K$ by the equivalent quantum gadget $\vq \in \qk$, matching the cyclically-ordered
    dangling edges of each gadget to the cyclically-ordered incident edges of $v$ 
    (and contracting the edges between the adjacent equality vertices to preserve bipartiteness).
    Then similarly replace every vertex assigned $\overline{F}$ by the gadget $\overline{\vq}$ formed
    by conjugating every signature in $\vq$ (we have $\overline{\vq} \in \qk$ because $\fc$ is CC).
    Similarly create $\widehat{\Omega}_{K_{\fc'\to\gc'}} \in \mathfrak{Q}^{\cal P}_{\gc}(0,0)$ by 
    replacing every instance of a vertex assigned $G$ or $\overline{G}$ in $\Omega_{K_{\fc'\to\gc'}}$ 
    with $\vq_{\fc\to\gc}$ or $\overline{\vq_{\fc\to\gc}} \in \mathfrak{Q}^{\cal P}_{\gc}$, respectively.
    Then the index-$\alpha$ summand of $\widehat{\Omega}_K$ or 
    $\widehat{\Omega}_{K_{\fc'\to\gc'}}$ is a planar \#CSP$(\fc)$ instance 
    $K^{\alpha}_{\fc}$ or planar \#CSP$(\gc)$ instance $K^{\alpha}_{\gc}$, respectively, and furthermore
    $K^\alpha_{\gc} = (K^{\alpha}_{\fc})_{\fc\to\gc}$. Thus
    \begin{align*}
        Z(K) &= \plholant_{{\Omega}_K}(\fc' \mid \eq) \\
             &= \plholant_{\widehat{\Omega}_K}(\fc \mid \eq) \\
             &= \plholant_{\widehat{\Omega}_{K_{\fc'\to\gc'}}}(\gc \mid \eq)\\
             &= \plholant_{\Omega_{K_{\fc'\to\gc'}}}(\gc' \mid \eq)\\
             &= Z(K_{\fc'\to\gc'}).
    \end{align*}
\end{proof}

An alternate proof would also need the following lemma, which is equivalent to \autoref{lem:rob1} once
\autoref{thm:result} is proved.
\begin{lemma}[\cite{roberson}]
    \label{lem:rob2}
    Suppose $\fc$, $\gc$, $F$, $G$, $\fc'$, and $\gc'$ satisfy the hypotheses of \autoref{lem:rob1}.
    Then $\fc \cong_{qc} \gc \iff \fc' \cong_{qc} \gc'$.
\end{lemma}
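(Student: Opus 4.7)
The backward direction is immediate: restriction of any witness of $\fc' \cong_{qc} \gc'$ to $\fc$ and $\gc$ witnesses $\fc \cong_{qc} \gc$, since $\fc \subseteq \fc'$ and $\gc \subseteq \gc'$ with corresponding pairs preserved. I focus on the forward direction, assuming $\fc \cong_{qc} \gc$ via a quantum permutation matrix $\u$.

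The plan is to prove the following gadget-level intertwining identity: for every quantum $\fc$-gadget $\vq \in \qk(m,d)$,
\[
    \u^{\otimes m} M(\vq) = M(\vq_{\fc\to\gc}) \u^{\otimes d}.
\]
Specializing to the $\mathbf{Q}$ of the lemma's hypothesis yields $\u^{\otimes m} M_F = M_G \u^{\otimes d}$, and \autoref{lem:tensorcftog} converts this to $\u^{\otimes(m+d)} f = g$, proving $F \cong_{qc} G$ via $\u$. The identity can be established by induction on the construction of $\vq$ using \autoref{thm:generatepn}: the base cases $E^{1,0}$, $E^{1,2}$, and $f_i$ for $F_i \in \fc$ all satisfy the identity directly, via the row-sum property of $\u$, equation \eqref{eq:ue}, and the defining hypothesis $\u^{\otimes n_{F_i}} f_i = g_i$, respectively. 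For the inductive step, each of the three closure operations preserves the identity: $\circ$ by substitution, $\dagger$ by taking the conjugate transpose of both sides and rearranging via \eqref{eq:tensorinverse}, and $\otimes$ by observing that $\u^{\otimes m_1 + m_2} = \u^{\otimes m_1} \otimes \u^{\otimes m_2}$ (immediate from the ordered product definition of $\u^{\otimes n}$) and then applying \eqref{eq:tensordistribute} to the scalar-valued constraint function matrices. Linearity then lifts the identity to all of $\qk(m,d)$.

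To obtain $\overline{F} \cong_{qc} \overline{G}$, apply the identity to $\overline{\mathbf{Q}}$, the quantum $\fc$-gadget formed by conjugating every signature in $\mathbf{Q}$ (which lies in $\qk$ because $\fc$ is CC). A direct computation on Holant values gives $M(\overline{\mathbf{Q}}) = \overline{M_F}$, and the CC-compatibility condition $F_i = \overline{F_j} \iff G_i = \overline{G_j}$ ensures $\overline{\mathbf{Q}}_{\fc \to \gc} = \overline{\mathbf{Q}_{\fc \to \gc}}$, so $M(\overline{\mathbf{Q}}_{\fc\to\gc}) = \overline{M_G}$. The gadget-level identity then yields $\u^{\otimes m} \overline{M_F} = \overline{M_G} \u^{\otimes d}$, and \autoref{lem:tensorcftog} gives $\overline{F} \cong_{qc} \overline{G}$ via $\u$. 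Hence $\u$ witnesses $\fc' \cong_{qc} \gc'$.

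I expect the main obstacle to be the $\otimes$ step of the induction: because $\u$'s entries do not commute, one must carefully verify that $\u^{\otimes m_1+m_2} = \u^{\otimes m_1} \otimes \u^{\otimes m_2}$ holds as an identity of ordered products (the definition of $\u^{\otimes n}$ already fixes the left-to-right ordering, so this does hold), and that \eqref{eq:tensordistribute} applies in the intended direction when propagating $\u^{\otimes k}$ across a tensor product of matrices whose entries are scalars and therefore commute with those of $\u$. This is precisely the delicate point that causes the classical Holant theorem to fail in the quantum setting (as discussed in the proof of \autoref{thm:quantumholant}), and it must be handled with the same care here; once verified, the remainder of the induction is routine algebraic manipulation.
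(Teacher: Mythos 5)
Your proof is correct, and it reaches the paper's key identity by a somewhat different route. The paper establishes $\u^{\otimes m} M(\mathbf{Q}) (\u^{\otimes d})^{\dagger} = M(\mathbf{Q}_{\fc\to\gc})$ by re-running the proof of the quantum Holant theorem on a gadget with dangling edges: it uses the sequential chain decomposition from \autoref{thm:generatepn}, inserts $(\u^{\otimes k})^{\dagger}\u^{\otimes k}$ between consecutive factors, and absorbs the leftover boundary factors into the equality vertices carrying the dangling edges via $\u^{\otimes a}E^{a,b}(\u^{\otimes b})^{\dagger} = E^{a,b}$. You instead prove the equivalent one-sided relation $\u^{\otimes m} M(\mathbf{Q}) = M(\mathbf{Q}_{\fc\to\gc})\u^{\otimes d}$ by structural induction over the presentation $\pn = \tcwdn{\e^{1,0}, \e^{1,2}, \{\f^{n_F,0}\}}$, checking the generators and closure under $\circ$, $\otimes$, $\dagger$; this is in effect the two-set (``heteromorphism'') analogue of the closure properties that make $C_{\qut(\fc)}$ a tensor category with duals, and your use of the one-sided form in the $\otimes$ step is exactly right, since $(\u^{\otimes k})^{\dagger} \neq (\u^{\dagger})^{\otimes k}$ blocks distributing the conjugated form. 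Your route is more modular (it uses only the statement of \autoref{thm:generatepn}, not the internals of \autoref{thm:quantumholant}), while the paper's reuses machinery already in place and keeps the planarity/dangling-edge bookkeeping in one location. Two small points to tighten: (1) in the conjugate step, $\overline{\mathbf{Q}}$ should conjugate the complex \emph{coefficients} of the quantum gadget as well as its signatures, so that $M(\overline{\mathbf{Q}}) = \overline{M_F}$ holds for genuine linear combinations; (2) in the $\dagger$ case of your induction you implicitly need $(\k^{\dagger})_{\fc\to\gc} = (\k_{\fc\to\gc})^{\dagger}$, which is where the CC-compatibility condition $F_i = \overline{F_j} \iff G_i = \overline{G_j}$ enters; you invoke it only for the final conjugate-gadget step, but it should be cited inside the induction too. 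Neither point affects the validity of the argument.
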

\begin{proof}
    The backward direction is immediate. Let $\u$ be the quantum permutation matrix defining the quantum
    isomorphism between $\fc$ and $\gc$. It suffices to show 
    $\u^{\otimes m+d}f = g$, and $\u^{\otimes m+d} \overline{f} = \overline{g}$, 
    or equivalently, by \autoref{lem:tensorcftog}, 
    $\u^{\otimes m} M_F (\u^{\otimes d})^{\dagger} = M_G$ and 
    $\u^{\otimes m} \overline{M_F} (\u^{\otimes d})^{\dagger} = \overline{M_G}$.
    These identities follows from the proof of
    the quantum Holant theorem. Indeed, while the statement in
    \autoref{thm:quantumholant} only applies to signature grids (gadgets in $\qk(0,0)$), the proof
    may be easily modified to apply to $\vq \in \qk(m,d)$ as follows. After inserting 
    $(\u^{\otimes k})^{\dagger} \u^{\otimes k}$ between each pair of gadgets in the 
    \autoref{thm:generatepn} decomposition of (each summand of) $\vq$ and
    reassociating to convert every $\fc$ signature to the corresponding $\gc$ signature and fix
    the internal equality vertices, we must effectively apply an additional $\u$ or $\u^{\dagger}$ to each original dangling 
    edge of $\vq$ to fully transform each equality vertex with a dangling edge back to itself via
    $\u^{\otimes a} E^{a,b} (\u^{\otimes b})^{\dagger} = E^{a,b}$. Thus 
    $\u^{\otimes m} \vq (\u^{\otimes d})^{\dagger} = \vq_{\fc\to\gc}$, so 
    $\u^{\otimes m+d} f = g$. $\u^{\otimes m+d} \overline{f} = \overline{g}$ follows
    by the same logic applied to $\overline{\q}$, since $\fc$ and $\gc$ are compatible.
\end{proof}
Together, Lemmas \ref{lem:rob1} and \ref{lem:rob2} show that, to prove \autoref{thm:result}, we may
assume that $\fc$ and $\gc$ contain (the constraint functions created from) any intertwiners $M_F$
and $M_G$ which are the signature matrices of corresponding quantum $\fc$ and $\gc$-gadgets. 
In particular, we trivially have $(\e^{1,0} \circ \e^{0,1})_{\fc\to\gc} = (\e^{1,0} \circ \e^{0,1})$,
so we may assume $\fc$ and $\gc$ both contain $M(\e^{1,0} \circ \e^{0,1}) = J$, the all-1s matrix.
$J$ is a connected constraint function, so we may immediately apply \autoref{lem:sameorbitiso}. 
Moreover, $J \oplus J$ already satisfies the properties $((Z')^{\bullet d})^w$ was carefully constructed 
to have in the proof of \autoref{lem:sameorbitiso}: 
$(J \oplus J)_{x'x''} \neq 0$ and $(J \oplus J)_{xy} = 0$ for all $x,x',x'' \in V(\fc)$ and $y \in
V(\gc)$, and similarly for $\fc$ and $\gc$ swapped.

\section{The Complex-Weighted Graph Isomorphism Game}
\label{sec:game}
Quantum graph isomorphism was first defined in \cite{asterias} in
terms of a \emph{nonlocal game} called the \emph{graph isomorphism game}. Briefly, a two-player nonlocal game is played between a verifier and
two cooperating players, Alice and Bob. The verifier gives Alice and Bob inputs
$x_A$ and $x_B$ from finite sets $X_A$ and $X_B$, respectively.
After receiving their inputs, Alice and Bob respond with outputs
$y_A$ and $y_B$ from finite sets $Y_A$ and $Y_B$, respectively.
The verifier indicates whether Alice and Bob have won (1) or lost (0) by a predicate $V: X_A \times X_B \times Y_A \times Y_B \to \{0,1\}$.
We are concerned with whether Alice and Bob have a \emph{perfect strategy} - a strategy that wins the game with probability 1.
Alice and Bob can agree on a predetermined strategy, but cannot communicate after receiving their inputs. However, if allowed a quantum strategy, they can make measurements on a shared quantum
state after receiving their inputs. In particular, we consider
\emph{quantum commuting strategies} (the namesake of the `qc' in
$\cong_{qc}$). In this framework, Alice and
Bob share a state in a possibly infinite-dimensional Hilbert space $H$ and Alice's measurement operators -- positive bounded linear
operators on $H$ -- must commute with Bob's measurement operators.

\paragraph*{The $\c$-weighted graph isomorphism game}
Let $F \in \c^{V(F)^2}$, $G \in \c^{V(G)^2}$ be $\c$-weighted 
graphs. We define the nonlocal \emph{$(F,G)$-isomorphism game}
as follows. Let $X_A = X_B = Y_A = Y_B = V(F) \cup V(G)$.
Let $f_A = \{x_A, y_A\} \cap V(F)$ and $g_A = \{x_A,y_A\} \cap V(G)$, and define $f_B$ and $g_B$ similarly for Bob.
$f_A$ and $g_A$ are well-defined (assuming the players win) given 
condition (i) below.
The players win if and only if the following three conditions are satisfied:
\begin{enumerate}[label=(\roman*)]
    \item $x_A \in V(F) \iff y_A \in V(G)$ and $x_B \in V(F) \iff y_B \in V(G)$;
 \item $f_A = f_B \iff g_A = g_B$;   \item $F_{f_A f_B} = G_{g_A g_B}$.
\end{enumerate}
In other words, $V(x_A,y_A,x_B,y_B) = 1$ if $x_A,y_A,x_B,y_B$
satisfy (i)-(iii), and $V(x_A,y_A,x_B,y_B) = 0$ otherwise.
Say $F \cong_{qcg} G$ if Alice and Bob have a perfect quantum
commuting strategy for the $(F,G)$-isomorphism game. We will show $F \cong_{qcg} G \iff F \cong_{qc} G$.
We thank David Roberson \cite{roberson} for
several improvements to results and proofs in this section.

If $F$ and $G$ are symmetric and 0-1 valued, then the $(F,G)$-isomorphism game is the graph isomorphism
game defined in \cite{asterias} for the graphs whose adjacency
matrices are $F$ and $G$. Many of the results in \cite{asterias}
extend to the $\c$-weighted graph isomorphism game. First, since
Alice and Bob must define a common bijection $V(F) \to V(G)$ by conditions (i) and (ii), $F \cong G$ if and only if Alice and Bob have
a perfect \emph{classical} strategy for the $(F,G)$-isomorphism game.
Second, by (i) and (ii) and since any quantum strategy is
\emph{non-signalling}, $F \cong_{qcg} G \implies |V(F)| = |V(G)|$
~\cite[Lemma 4.1]{asterias}. Third, since the $\c$-weighted graph
isomorphism game is, like the graph game, a \emph{synchronous game}
(the players share the same input set, the same output set, and
$V(y,y', x,x) = 0$ for $y \neq y'$), we have the following extension of \cite[Theorem 5.14]{asterias}:
\begin{lemma}
    \label{lem:qcgclass}
    Let $F \in \c^{V(F)^2}$, $G \in \c^{V(G)^2}$. Then $F \cong_{qcg} G$ if and only if there is a $C^*$ algebra
    $\mathcal{A}$ which admits a faithful tracial state, and projectors
    $u_{fg} \in {\cal A}$ for $f \in V(F)$, $g \in V(G)$ such that
    \begin{enumerate}
        \item $\u = (u_{fg})_{f \in V(F), g \in V(G)}$ is a quantum permutation matrix.
        \item $u_{f_1g_1} u_{f_2g_2} = 0$ if $F_{f_1f_2} \neq G_{g_1g_2}$
    \end{enumerate}
\end{lemma}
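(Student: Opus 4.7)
The plan is to specialize the general characterization of perfect quantum commuting strategies for synchronous nonlocal games (due to Kim--Paulsen--Schafhauser, and used in \cite{asterias} to prove the unweighted analogue) to the specific win predicate of the $(F,G)$-isomorphism game. The $\c$-weighted game is still synchronous and still has the common input/output set $X = V(F) \cup V(G)$, so the general theorem directly applies: a perfect qc strategy exists iff there is a $C^*$-algebra ${\cal A}$ with a faithful tracial state, together with projections $\{e_{xy}\}_{x, y \in X} \subseteq {\cal A}$ such that (a) $\sum_{y} e_{xy} = \one$ for every $x$, (b) $e_{xy} e_{xy'} = 0$ for $y \neq y'$, and (c) whenever $V(x_A, y_A, x_B, y_B) = 0$ we have $e_{x_A y_A} e_{x_B y_B} = 0$.

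First I would extract the structure of $\u$ from these $e_{xy}$ using the three winning conditions. Condition (i) says that $V(x,y,x',y') = 0$ whenever $(x,y) \in V(F)^2 \cup V(G)^2$ (take $x_B = x$, $y_B = y$), so (c) with the synchronicity identity $e_{xy} = e_{xy}^2$ forces $e_{xy} = 0$ for such pairs. Hence only the $e_{fg}$ with $(f,g) \in V(F) \times V(G)$ or $(g,f) \in V(G) \times V(F)$ can be nonzero, and setting $u_{fg} := e_{fg}$ for $f \in V(F), g \in V(G)$, the row sum condition (a) reduces to $\sum_{g \in V(G)} u_{fg} = \one$, and a symmetric argument applied to the ``flipped'' block gives the column sums (alternatively, this follows from the marginal/non-signalling identities at the tracial level). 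Combined with (b), $\u = (u_{fg})$ is a quantum permutation matrix. Condition (iii) gives exactly property (2): if $F_{f_1 f_2} \neq G_{g_1 g_2}$, then $V(f_1, g_1, f_2, g_2) = 0$, so (c) yields $u_{f_1 g_1} u_{f_2 g_2} = 0$. Condition (ii), applied across the two diagonal blocks, will be needed only to verify that the ``other'' projections are redundant and can be recovered from $\u$; this is the routine but slightly fiddly part of the argument.

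For the converse, assume we are given $\mathcal{A}$, a faithful tracial state $\tau$, and a quantum permutation matrix $\u$ satisfying (2). I would define the full family of projections $e_{xy}$ by $e_{fg} = e_{gf} := u_{fg}$ for $f \in V(F), g \in V(G)$, and $e_{xy} := 0$ on the two diagonal blocks; this is a synchronous qc correlation via the standard GNS construction with respect to $\tau$. I then check directly that (i)--(iii) hold whenever the resulting outputs have nonzero joint probability: (i) and (ii) are immediate from the block structure and the fact that $\u$ is a quantum permutation matrix (rows and columns of $\u$ are sets of orthogonal projections summing to $\one$), and (iii) is exactly (2) contrapositively.

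The main obstacle is pinning down the synchronous-game characterization carefully enough to apply verbatim in the $\c$-weighted setting. The win predicate is still $\{0,1\}$-valued, so nothing fundamentally new is required, but one has to verify that the proof in \cite{asterias} of the unweighted version never uses $\{0,1\}$-valuedness of the adjacency matrix, only the fact that the predicate ``$F_{f_1 f_2} = G_{g_1 g_2}$'' is a finite disjunction over the (at most $|V(F)|^2 + |V(G)|^2$) possible common values of $F$ and $G$. This bookkeeping, together with the block-diagonal cleanup using condition (i), is the only place where the argument deviates from the one in \cite{asterias}, and I would spell it out in the final write-up.
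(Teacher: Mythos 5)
Your proposal takes essentially the same route as the paper, which offers no separate argument for this lemma beyond observing that the $(F,G)$-isomorphism game is synchronous and invoking the characterization of perfect quantum commuting strategies for synchronous games behind \cite[Theorem 5.14]{asterias} --- exactly the specialization and block bookkeeping you sketch, and your verification that conditions (i)--(iii) translate into the stated relations (and conversely) is sound. One point to tighten in a full write-up: the column sums of $\u$ do not follow from a ``symmetric argument'' or non-signalling alone --- they require the condition-(ii) consequence $e_{gf}=e_{fg}$ (obtained from $e_{fg}e_{gf'}=0$ for $f'\neq f$ and the flipped-block row sums), so that identification is load-bearing for item 1 rather than merely showing the remaining projections are redundant.
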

A \emph{state} is a linear functional $s: \mathcal{A} \to \c$
satisfying $s(\one) = 1$ and and $s(a^*a) \geq 0$ for all $a \in \mathcal{A}$, and a state is \emph{tracial} if $s(ab) = s(ba)$ for
all $a,b \in \mathcal{A}$, and is \emph{faithful} if $s(a^*a) = 0 \iff a = 0$. 

$F \cong_{qcg} G$ and $F \cong_{qc} G$ both imply
$|V(F)| = |V(G)|$, so let $V(F) = V(G) = [q]$. For graph isomorphism, it
is known~\cite[Theorem 2.5]{lupini_nonlocal_2018} (see also
\cite[Lemma 5.8]{asterias}) that condition 2 is equivalent to
$F \u = \u G$. We have the following extension for $\c$-weighted graphs $F$ and $G$:
\begin{lemma}
    \label{lem:guuh}
    Let $F \in \c^{[q]^2}$, $G \in \c^{[q]^2}$ and let $\u = (u_{fg})_{f,g \in [q]}$ be a quantum 
    permutation matrix. Then $F \u = \u G$ if and only if
    $F_{f_1f_2} \neq G_{g_1g_2} \implies u_{f_1g_1} u_{f_2g_2} = 0$.
\end{lemma}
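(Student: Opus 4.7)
The plan is a direct computation at the level of matrix entries, relying on the defining relations of a quantum permutation matrix, namely the projector identities $u_{ij} u_{kj} = \delta_{ik} u_{ij}$ and $u_{ij} u_{ik} = \delta_{jk} u_{ij}$ together with the row- and column-sum conditions $\sum_j u_{ij} = \sum_i u_{ij} = \one$, and on the fact that the scalar entries of $F$ and $G$ commute with every element of the underlying $C^*$-algebra.

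For the ($\Rightarrow$) direction, fix $f_1, f_2, g_1, g_2 \in [q]$ and begin with the $(f_1, g_2)$-entry of the assumed identity $F\u = \u G$:
\[
\sum_{f_2'} F_{f_1, f_2'}\, u_{f_2', g_2} \;=\; \sum_{g_1'} u_{f_1, g_1'}\, G_{g_1', g_2}.
\]
Multiply this equation on the left by $u_{f_1, g_1}$: the right side collapses via $u_{f_1, g_1} u_{f_1, g_1'} = \delta_{g_1, g_1'} u_{f_1, g_1}$ to $u_{f_1, g_1} G_{g_1, g_2}$. Then multiply on the right by $u_{f_2, g_2}$: the (previously untouched) left side collapses via $u_{f_2', g_2} u_{f_2, g_2} = \delta_{f_2, f_2'} u_{f_2, g_2}$ to $F_{f_1, f_2}\, u_{f_1, g_1} u_{f_2, g_2}$, while the right side becomes $G_{g_1, g_2}\, u_{f_1, g_1} u_{f_2, g_2}$ (scalars commute with $u_{f_2, g_2}$). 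Subtracting yields $(F_{f_1, f_2} - G_{g_1, g_2})\, u_{f_1, g_1} u_{f_2, g_2} = 0$, which is the desired implication.

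For the ($\Leftarrow$) direction, I would compute $(F\u)_{f_1, g_2} = \sum_{f_2} F_{f_1, f_2}\, u_{f_2, g_2}$, and insert a resolution of the identity on the left via $\one = \sum_{g_1} u_{f_1, g_1}$:
\[
(F\u)_{f_1, g_2} \;=\; \sum_{g_1, f_2} F_{f_1, f_2}\, u_{f_1, g_1} u_{f_2, g_2}.
\]
The hypothesis gives $F_{f_1, f_2}\, u_{f_1, g_1} u_{f_2, g_2} = G_{g_1, g_2}\, u_{f_1, g_1} u_{f_2, g_2}$ for each tuple (trivially if $F_{f_1, f_2} = G_{g_1, g_2}$; both sides are $0$ otherwise). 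Substituting and then collapsing the $f_2$-sum via $\sum_{f_2} u_{f_2, g_2} = \one$ produces $\sum_{g_1} u_{f_1, g_1}\, G_{g_1, g_2} = (\u G)_{f_1, g_2}$.

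The only real subtlety — and the point to be careful about — is choosing the correct side on which to insert each projector so that the two distinct annihilation relations apply as intended; because entries of $\u$ do not commute in general, reversing the order would convert $u_{f_1, g_1} u_{f_1, g_1'}$ into a different product. Once the multiplications are arranged correctly the computation is mechanical, and no further structure beyond the projector relations and row/column sums is needed.
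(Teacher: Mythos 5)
Your proof is correct, and both directions rest on exactly the same algebraic mechanism as the paper's argument: sandwiching the intertwining identity with projectors and collapsing via the orthogonality relations $u_{ij}u_{ik}=\delta_{jk}u_{ij}$, $u_{ij}u_{kj}=\delta_{ik}u_{ij}$ for one direction, and splitting the sum according to whether $F_{f_1f_2}=G_{g_1g_2}$ and collapsing via the row/column sums for the other. The only cosmetic difference is that the paper routes the computation through the vectorized form $(\u^{\dagger})^{\otimes 2} f = g$ together with \autoref{lem:tensorcftog}, whereas you work entrywise directly on $F\u = \u G$, which is self-contained and equally valid.
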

\begin{proof}
    The proof is a simple modification of the proof of \cite[Lemma 3.9]{lupini_nonlocal_2018}, which
    is the case $F = G$.
    Suppose $F_{f_1f_2} \neq G_{g_1g_2} \implies u_{f_1g_1} u_{f_2g_2} = 0$. Then
    for any $\vg \in [q]^2$,
    \[
        \left((\u^{{\dagger}})^{\otimes 2} f\right)_{\vg}
        = \sum_{\vf \in [q]^2} u_{f_1g_1}u_{f_2g_2} F_{\vf}
        = \sum_{\vf:\ F_{\vf} = G_{\vg}} u_{f_1g_1} u_{f_2g_2} G_{\vg}
        = G_{\vg}.
    \]
    Now by \autoref{lem:tensorcftog}, $F \u = \u G$.
    Conversely, suppose $(\u^{\dagger})^{\otimes 2}f = g$, or equivalently $\sum_{\vf} u_{f_1g_1}
    u_{f_2g_2} F_{\vf} = G_{\vg}$ for every $\vg$. Multiplying on the left by $u_{f' g_1}$ and on the
    right by $u_{f'' g_2}$ for any $f'$ and $f''$ 
    yields $F_{f'f''} u_{f' g_1} u_{f'' g_2} = G_{g_1g_2} u_{f' g_1} 
    u_{f''g_2}$. Hence, if $u_{f' g_1} u_{f'' g_2} \neq 0$, then $F_{f'f''} = G_{g_1g_2}$.
\end{proof}

Next, we extend \cite[Theorem 4.4]{lupini_nonlocal_2018} from graphs to
$\c$-weighted graphs. 
\begin{theorem}
    Let $F \in \c^{V(F)^2}$, $G \in \c^{V(G)^2}$. Then
    $F \cong_{qcg} G \iff F \cong_{qc} G$.
    \label{thm:qcg}
\end{theorem}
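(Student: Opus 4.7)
The plan is to reduce both conditions to the existence of a quantum permutation matrix $\u$ with $\u^{\otimes 2}f = g$ and an appropriate $C^*$-algebraic home. The only genuine difference between $\cong_{qcg}$ and $\cong_{qc}$ is that the former additionally demands the ambient $C^*$-algebra carry a faithful tracial state.

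$(\Rightarrow)$ Suppose $F \cong_{qcg} G$. Lemma~\ref{lem:qcgclass} produces a quantum permutation matrix $\u = (u_{fg})$ over some $C^*$-algebra $\mathcal{A}$ satisfying $u_{f_1 g_1} u_{f_2 g_2} = 0$ whenever $F_{f_1 f_2} \neq G_{g_1 g_2}$. Lemma~\ref{lem:guuh} rewrites this condition as $F \u = \u G$, and Lemma~\ref{lem:tensorcftog} (with $m = d = 1$) rewrites this in turn as $\u^{\otimes 2} f = g$. Hence $F \cong_{qc} G$.

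$(\Leftarrow)$ Suppose $F \cong_{qc} G$. The challenge is to realize the quantum isomorphism inside a $C^*$-algebra possessing a faithful tracial state; we route it through the quantum automorphism group of a suitable binary constraint function, which as a compact matrix quantum group of Kac type automatically supplies such a state. By Lemma~\ref{lem:forward}, $Z(K) = Z(K_{F \to G})$ for every planar $\#\mathrm{CSP}(\{F\})$ instance $K$. Now follow the construction in the proof of Lemma~\ref{lem:backward}: adjoin new domain elements $0_F, 0_G$ with weight $\gamma$ chosen large enough (as in that proof) to obtain projectively connected extensions $F', G'$; deduce $Z^{0_F}(K) = Z^{0_G}(K)$ for every planar $1$-labeled $\#\mathrm{CSP}(\{F'\} \oplus \{G'\})$ instance $K$; invoke Lemma~\ref{lem:sameorbit} to place $0_F$ and $0_G$ in a common orbit of $\qut(\{F'\} \oplus \{G'\})$; and apply (the proof of) Lemma~\ref{lem:sameorbitiso} to extract a quantum permutation matrix $\widehat{\u}$ from the top-right block of the fundamental representation of $\qut(\{F'\} \oplus \{G'\})$, with entries in $C(\qut(\{F'\} \oplus \{G'\}))$ and satisfying $\widehat{\u}^{\otimes 2} f = g$.

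Because the fundamental representation of $\qut(\{F'\} \oplus \{G'\})$ is a quantum permutation matrix (a magic unitary), $\qut(\{F'\} \oplus \{G'\})$ is a compact matrix quantum group of Kac type, so its Haar state descends to a faithful tracial state on the reduced $C^*$-algebra $C_r(\qut(\{F'\} \oplus \{G'\}))$. The canonical $\ast$-homomorphism from the universal to the reduced algebra sends $\widehat{\u}$ to a quantum permutation matrix still satisfying $\widehat{\u}^{\otimes 2} f = g$. Running Lemmas~\ref{lem:tensorcftog} and~\ref{lem:guuh} in reverse now recovers the zero-product condition of Lemma~\ref{lem:qcgclass}, which delivers $F \cong_{qcg} G$. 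The main obstacle is producing the faithful tracial state: attacking this purely $C^*$-algebraically from an arbitrary $\u$ witnessing $F \cong_{qc} G$ is awkward, but channeling the witness through $\qut(\{F'\} \oplus \{G'\})$ exploits the magic-unitary (hence Kac-type) structure of any quantum permutation group to produce a faithful tracial Haar state on the reduced algebra essentially for free.
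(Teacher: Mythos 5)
Your forward direction is the paper's own argument (Lemma~\ref{lem:qcgclass} plus Lemma~\ref{lem:guuh}, with the usual $\u$ versus $\u^{\dagger}$ bookkeeping), so there is nothing to add there. For the converse your architecture is right in spirit --- the crux really is to re-house the witness inside the algebra of a quantum permutation group, whose Haar state is tracial and faithful on the reduced algebra --- and your endgame (extract the block from the fundamental representation of a $\qut$ of a direct sum, pass to the reduced algebra, normalize on the corner with unit $p$) is exactly where the paper also lands. But your route to the ``same orbit'' statement is a detour through the main theorem: $F \cong_{qc} G \Rightarrow$ planar \#CSP equivalence (\autoref{lem:forward}) $\Rightarrow$ $Z^{0_F}=Z^{0_G}$ on $1$-labeled instances $\Rightarrow$ \autoref{lem:sameorbit} $\Rightarrow$ \autoref{lem:sameorbitiso}. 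The paper instead adapts the proof of Theorem~4.4 of \cite{lupini_nonlocal_2018} directly: from the given $\u$ with $\u F = G\u$ one builds a block quantum permutation matrix commuting with $F\oplus G$ (and with $J\oplus J$, the all-ones signature added via the Appendix~\ref{sec:connectivity} device in place of graph complements), hence a $*$-homomorphism out of the universal algebra $C(\qut(\cdot))$ witnessing that some $x\in V(F)$, $y\in V(G)$ lie in a common orbit --- no planar gadget machinery is needed at all.

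The concrete gap is the complex case, i.e.\ the theorem as stated. \autoref{lem:forward}, and more seriously \autoref{lem:sameorbit} (through \autoref{thm:intertwinersigmatrix} and \autoref{thm:generatepn}, where the $\dagger$ operation must stay inside the gadget class), are proved only for \emph{conjugate closed} sets; for $F\neq\overline{F}$ the sets $\{F\}$, $\{G\}$ and $\{F'\oplus G'\}$ you feed into these lemmas are not CC, so the chain you cite does not apply as stated. The paper is explicitly aware of this: the corollary following \autoref{thm:qcg} only ``rescues'' the planar-equivalence implication in the forward direction for non-CC binary signatures, precisely because the backward machinery does not survive unchanged. Your argument can be patched --- note that $\u F = G\u$ forces $\u\overline{F}=\overline{G}\u$ (scalars commute with the algebra), enlarge to the CC compatible sets $\{F,\overline F\}$, $\{G,\overline G\}$ and their primed versions, and rerun the construction --- but none of this is in your write-up, whereas the paper's direct Lupini-style route avoids the issue entirely. (A smaller point, common to both proofs and absorbed by the cited machinery of \cite{lupini_nonlocal_2018}: when passing to the reduced algebra you need the relevant orbit element, equivalently $h(p)$, to remain nonzero, and the restricted trace must be renormalized on the unital corner with unit $p$.) For real-valued $F,G$ your proposal goes through; for complex-valued $F,G$ it has a genuine, though repairable, gap.
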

\begin{proof}
    By \autoref{lem:guuh} and \autoref{lem:qcgclass}
    (applied to $\u^{\dagger}$, also a quantum permutation matrix),
    $F \cong_{qcg} G \implies F \cong_{qc} G$. 
    However, our definition of $\cong_{qc}$ did not assume that the 
    $C^*$-algebra over which $\u$ is defined admits a faithful tracial state. To show
    $F \cong_{qc} G \implies F \cong_{qcg} G$, it
    suffices to construct a faithful tracial state on
    the $C^*$-algebra $\mathcal{A}$ over which the quantum permutation matrix
    witnessing $F \cong_{qc} G$ is defined.
    Aside from a few small differences, the construction follows the proof of 
    \cite[Theorem 4.4]{lupini_nonlocal_2018} (this theorem restricted to unweighted graphs), 
    which doubles as the
    proof of \cite[Theorem 4.5]{lupini_nonlocal_2018}, the graph case of our \autoref{lem:sameorbitiso}.
    The proof in \cite{lupini_nonlocal_2018} assumes $F$ and $G$ are connected by taking the complement, 
    which is not possible for $\c$-weighted graphs. Instead, as in the alternate proof of 
    \autoref{lem:sameorbitiso} in \autoref{sec:connectivity}, we may assume
    $\fc = \{F\}$ and $\gc = \{G\}$ also contain the trivially connected all-1s signature of
    $\e^{1,0} \circ \e^{0,1}$ (the assumption in \autoref{lem:rob2} that $\fc$ and $\gc$ are CC is not
    necessary to add this signature). Now the proof of \cite[Theorem 4.4]{lupini_nonlocal_2018} goes
    through on these connected signature sets, where we use \autoref{lem:guuh} in place
    of the corresponding statement for unweighted graphs.
\end{proof}
Combining \autoref{thm:qcg} with \autoref{thm:result}, we have the following.
\begin{corollary}
    Let $F, G \in \r^{[q]^2}$.
    There is a perfect quantum commuting strategy for the
    $(F,G)$-isomorphism game if and only if
    $Z(K) = Z_{F \to G}(K)$ for every planar \#CSP$(F)$ instance $K$ 
    (equivalently, iff the $\r$-weighted graphs with adjacency matrices $F$ and $G$ admit the
    same number of real-weighted homomorphisms from all planar directed graphs). Furthermore, for
    $F,G \in \c^{[q]^2}$, if there is a perfect quantum commuting stategy for the $(F,G)$-isomorphism
    game, then $Z(K) = Z_{F \to G}(K)$ for every planar \#CSP$(F)$ instance $K$.
\end{corollary}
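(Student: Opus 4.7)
The corollary is essentially a direct composition of \autoref{thm:qcg} (which equates $\cong_{qcg}$ with $\cong_{qc}$ for $\c$-weighted graphs) with \autoref{thm:result} (which equates planar \#CSP partition-function invariance with $\cong_{qc}$), specialized to the singletons $\fc = \{F\}$ and $\gc = \{G\}$. My plan is to handle the real and complex cases separately.

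For the real case ($F, G \in \r^{[q]^2}$), since $F = \overline F$ and $G = \overline G$, the singletons $\{F\}$ and $\{G\}$ are trivially conjugate closed and compatible, so \autoref{thm:result} yields $F \cong_{qc} G$ if and only if $Z(K) = Z_{F \to G}(K)$ for every planar \#CSP$(F)$ instance $K$, and \autoref{thm:qcg} yields $F \cong_{qcg} G$ if and only if $F \cong_{qc} G$; chaining these gives the biconditional. The parenthetical rephrasing is the standard translation between a planar \#CSP$(A_X)$ instance and a weighted homomorphism count from the underlying planar directed multigraph into the $\r$-weighted graph $X$, as in \autoref{fig:gadget_vs_bilabeled}.

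For the complex case, suppose $F \cong_{qcg} G$. By \autoref{thm:qcg}, $F \cong_{qc} G$, so there is a $q \times q$ quantum permutation matrix $\u$ with $\u F = G\u$, equivalently $F\u^{\dagger} = \u^{\dagger} G$. Applying \autoref{lem:guuh} to $\u^{\dagger}$ (which is itself a quantum permutation matrix), this is equivalent to the combinatorial vanishing condition $u_{g_1 f_1} u_{g_2 f_2} = 0$ whenever $F_{f_1 f_2} \ne G_{g_1 g_2}$. This condition is manifestly invariant under replacing $F$, $G$ with $\overline F$, $\overline G$, since $F_{f_1 f_2} \ne G_{g_1 g_2}$ if and only if $\overline{F_{f_1 f_2}} \ne \overline{G_{g_1 g_2}}$. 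Hence the same $\u$ also satisfies $\u \overline F = \overline G \u$. Setting $\fc = \{F, \overline F\}$ and $\gc = \{G, \overline G\}$ (conjugate closed and compatible under the pairing $F \leftrightarrow G$, $\overline F \leftrightarrow \overline G$), the same $\u$ witnesses $\fc \cong_{qc} \gc$, so \autoref{lem:forward} produces $Z(K) = Z_{\fc \to \gc}(K)$ for every planar \#CSP$(\fc)$ instance $K$. Any planar \#CSP$(F)$ instance is in particular a planar \#CSP$(\fc)$ instance that does not invoke $\overline F$, so on such instances the right-hand side collapses to $Z_{F \to G}(K)$.

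The only real subtlety lies in the complex case: \autoref{lem:forward} cannot be applied to $\{F\}$ directly when $F \ne \overline F$, because that singleton is not conjugate closed. The key observation removing this obstacle is that the vanishing characterization extracted from the quantum commuting strategy depends only on the combinatorial pattern ``$F_{f_1 f_2} \ne G_{g_1 g_2}$'', not on the actual complex values, so the same $\u$ automatically intertwines $\overline F$ with $\overline G$ and we may enlarge to a conjugate-closed pair of sets at no cost.
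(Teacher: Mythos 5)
Your proof is correct, and for the complex case it takes a genuinely different route from the paper's. The paper also proves the real case by simply chaining \autoref{thm:qcg} with \autoref{thm:result}, but for the second statement it re-opens the machinery: it observes that for binary signatures a clockwise/counterclockwise reflection is just a rotation ($\f^{\top} = (\f^{(r)})^{d,m}$ with $d+m=2$), so the $\subseteq$ direction of \autoref{thm:generatepn} and the quantum Holant theorem (\autoref{thm:quantumholant}, via \autoref{lem:tensorcftog} and \autoref{lem:rotateinput}) survive for the non-conjugate-closed singleton $\{F\}$, i.e.\ \autoref{lem:forward} is ``rescued'' for arbitrary binary signatures. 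You instead keep those results as black boxes and enlarge to the conjugate-closed pair $\{F,\overline F\}$, $\{G,\overline G\}$, showing the same $\u$ intertwines the conjugates because the vanishing pattern in \autoref{lem:guuh} depends only on where $F_{f_1f_2}\neq G_{g_1g_2}$, which is conjugation-invariant; in fact for the $(1,1)$ flattening one can see this even more directly by conjugating $\u F = G\u$ entrywise and using $u_{ij}^*=u_{ij}$, since each term contains a single entry of $\u$ (this shortcut is special to arity $2$, which is why the paper's general setup demands conjugate closure). The paper's route yields the slightly stronger structural fact that the forward direction holds for any binary, not necessarily CC, signature sets; your route is shorter and more modular. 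One small point you should make explicit: compatibility of your enlarged sets requires $F=\overline F \iff G=\overline G$, but this follows at once from $\u F = G\u$, $\u\overline F=\overline G\u$ and the invertibility of $\u$, so it is a one-line check rather than a gap.
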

\begin{proof}
    It remains to show the second statement. In general, \autoref{thm:result} does not apply to sets
    of constraint functions which are not conjugate closed. However, since clockwise and counterclockwise
    orientations are equivalent for binary signatures, we can at least rescue \autoref{lem:forward} in
    this case. Namely, the $\subset$ inclusion of \autoref{thm:generatepn} still holds, since we can
    transpose any $\f$ gadget, switching between the two orientations, simply by pivoting its
    dangling edges: $\f^\top = (\f^{(r)})^{d,m}$ for some $r \in \{0,1\}$ and $d+m=2$.
    Then \autoref{thm:quantumholant} also applies because \autoref{lem:tensorcftog} and 
    \autoref{lem:rotateinput} give $\u^{\otimes d} \f^\top (\u^{\otimes m})^\dagger = \mathbf{G}^\top$.
\end{proof}

\paragraph*{The tensor isomorphism game.}
Given the results this work, it is natural to attempt a generalization of the results of this 
section to $n > 2$ -- a ``tensor isomorphism game'' with $n$ players. However, any such extension must
overcome the following observation by Roberson \cite{roberson}, that \autoref{lem:guuh} does not in 
general hold for $n > 2$. Recall the quantum symmetric group $S_q^+$ (i.e. $\qut(\varnothing)$) from
\autoref{def:sqplus}. By \autoref{lem:qutfintertwiners}, $C_{S_q^+} = \tcwd{E^{1,0}, E^{1,2}}$
(the span of the ``non-crossing partitions''). Thus
$f = (I \otimes E^{1,0} \otimes I) \circ E^{2,0} \in C_{S_q^+}$ 
(recall \autoref{lem:mclosure}), with $F_{abc} = \delta_{ac}$. For example, for $q \geq 3$, 
$F_{121} = 1 \neq 0 = F_{123}$. An $n=3$ version of \autoref{lem:guuh} would imply
$u_{11} u_{22} u_{13} = 0$, where $\u$ is the fundamental representation of $S_q^+$. However, this
is false for $q \geq 4$, since in this case $S_q^+$ has \emph{free orbitals} 
(see \cite[Theorem 2.5]{mccarthy2023tracing}).

\section{Planar Gadget Categories}
\label{sec:category}
In this section, we briefly discuss the relationship between Holant gadgets
and diagrammatic representations of monoidal categories, in particular pivotal dagger categories.
We hope to clarify the terminology \emph{tensor category with duals} from \autoref{sec:backward}.
For convenience, we specialize to gadgets in the context of $\plholant(\fc \mid \eq)$, but everything
in this section holds for gadgets in the context of any Holant problem.

Define a category $\gc_{\fc}$ as follows. The objects of $\gc_{\fc}$ are natural numbers, and
$\gc_{\fc}(k,\ell) = \pn(k,\ell)$ for every $k,\ell \in \mathbb{N}$ -- that is, a morphism from
$k$ to $\ell$ is a planar gadget with $k$ output and $\ell$ input dangling edges. Arrow composition
is gadget composition $\circ$, and the identity morphism is $1_k = \ii^{\otimes k}$. 
The gadget tensor product $\otimes$ 
makes $\gc_{\fc}$ a \emph{monoidal} category (a concept similar to that of a \emph{tensor category}).
For objects $k,\ell \in \mathbb{N}$, we define $k \otimes \ell = k + \ell$. Then if
$\k_1 \in \pn(k_1,\ell_1)$ and $\k_2 \in \pn(k_2,\ell_2)$ are morphisms $k_1 \to \ell_1$ and
$k_2 \to \ell_2$, $\k_1 \otimes \k_2 \in \pn(k_1 + k_2,\ell_1 + \ell_2)$ is a morphism
$k_1 \otimes k_2 \to \ell_1 \otimes \ell_2$.

The definition of $\gc_{\fc}$ follows that of the category $\mathcal{D}$ 
in \cite{temperley}, whose morphisms $k \to \ell$ are planar diagrams corresponding to elements of
a \emph{Temperley-Lieb algebra}, composed similarly to gadgets by merging the diagram `edges'. More
generally, monoidal categories are often depicted using such diagrams of edges, or \emph{strings},
composed in a similar fashion, with various properties (e.g. pivotal categories, dagger categories)
expressed graphically, as below \cite{selinger_survey_2011}. 

Recall the gadget conjugate transpose operation $\dagger$ from \autoref{def:gadgetops}.
Define $k^{\dagger} = k$ for
objects $k$, so we have $\k^{\dagger}: \ell^{\dagger} \to k^{\dagger}$ if $\k: k \to \ell$.
$\k^{\dagger}$ is commonly called the \emph{adjoint} of $\k$.
Diagramatically, $\dagger$ is simply a horizontal reflection, so $\dagger$ is involutive,
$(\k_1 \circ \k_2)^{\dagger} = \k_2^{\dagger} \circ \k_1^{\dagger}$, and $1_k^{\dagger} = 1_k$.
Hence $\dagger$ is a contravariant functor, making $\gc_{\fc}$ a \emph{dagger category}. 
Recall from \autoref{def:gadgetops} and the proofs of \autoref{thm:generatepn} and 
\autoref{thm:quantumholant} that $\dagger$ flips signatures' orientations 
between clockwise and counterclockwise. We think of $\dagger$'s entrywise conjugation effect as
corresponding to this reversal of orientation.

Our other key graphical gadget manipulation is pivoting dangling edges between output and input
using $\e^{2,0}$ and $\e^{0,2}$,
in the sense of \autoref{lem:rotategadget}.
\autoref{rem:frobenius} motivates the inclusion of $M(\e^{2,0}) = E^{2,0}$
in the definition of a tensor category with duals; indeed, in the context of monoidal/tensor categories,
$\e^{2,0}$ is the important \emph{counit} gadget $\e^{2,0} = \epsilon_1: 1 \otimes 1 \to 0$. 
More generally, 
$\epsilon_n: n \otimes n \to 0$ consists of $n$ nested copies of $\e^{2,0}$, and the \emph{unit} gadgets
$\eta_n = \epsilon_n^{\dagger}: 0 \to n \otimes n$ consist of $n$ nested copies of $\e^{0,2}$.
Clearly $\eta_n, \epsilon_n \in \pn$. The units and counits satisfy the following
identity, illustrated in \autoref{fig:unitcounit} for $n = 3$:
\begin{equation}
    (\eta_n \otimes 1_n) \circ (1_n \otimes \epsilon_n) = 1_n =
    (1_n \otimes \eta_n) \circ (\epsilon_n \otimes 1_n). 
    \label{eq:unitcounit}
\end{equation}

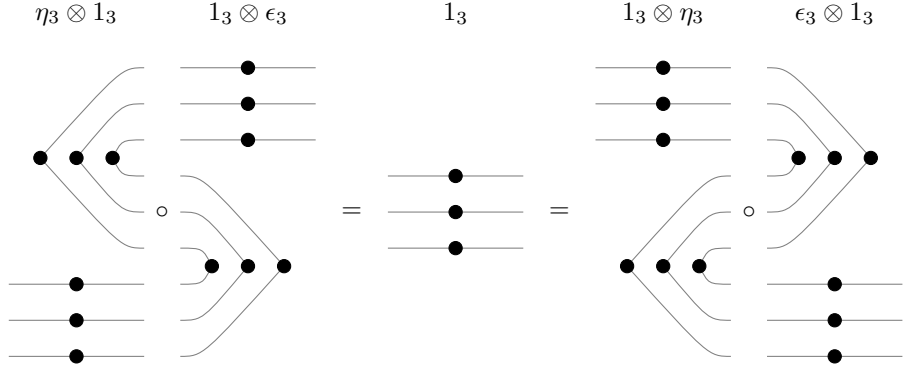
\begin{figure}[ht!]
    \center
    \begin{tikzpicture}[scale=.6]
    \GraphInit[vstyle=Classic]
    \SetUpEdge[style=-]
    \SetVertexMath
    \tikzset{VertexStyle/.style = {shape=circle, fill=black, minimum size=5pt, inner sep=1pt, draw}}

    \def\wirelen{1.5}
    \def\wiregap{0.8}
    \def\xa{0}
    \def\xb{\xa + 2*\wirelen + \wiregap}
    \def\xc{\xb + 2*\wirelen + 2*\wiregap}
    \def\xd{\xc + 2*\wirelen + 2*\wiregap}
    \def\xe{\xd + 2*\wirelen + \wiregap}

    \def\curvegap{0.8}

    \def\ygap{0.8}
    \def\ya{10}
    \def\yb{\ya - \ygap}
    \def\yc{\yb - \ygap}
    \def\yd{\yc - \ygap}
    \def\ye{\yd - \ygap}
    \def\yf{\ye - \ygap}
    \def\yg{\yf - \ygap}
    \def\yh{\yg - \ygap}
    \def\yi{\yh - \ygap}

    \node at (\xb-\wiregap/2, \ye) {$\circ$};
    \node at (\xc-\wiregap, \ye) {$=$};
    \node at (\xd-\wiregap, \ye) {$=$};
    \node at (\xe-\wiregap/2, \ye) {$\circ$};

    \node at (\xa+\wirelen, \ya+1.2) {$\eta_3 \otimes 1_3$};
    \node at (\xb+\wirelen, \ya+1.2) {$1_3 \otimes \epsilon_3$};
    \node at (\xc+\wirelen, \ya+1.2) {$1_3$};
    \node at (\xd+\wirelen, \ya+1.2) {$1_3 \otimes \eta_3$};
    \node at (\xe+\wirelen, \ya+1.2) {$\epsilon_3 \otimes 1_3$};

    \def\ymidhigh{\yc - \ygap/2}
    \def\ymidlow{\yf - \ygap/2}

    \draw[thin, color=gray] ({\xa + (2*\wirelen)}, \ya) 
        .. controls ({\xa + (2*\wirelen)-0.5}, \ya)
        .. (\xa+\wirelen-\curvegap, \ymidhigh) 
        .. controls ({\xa + (2*\wirelen)-0.5}, \yf) 
        .. ({\xa + (2*\wirelen)}, \yf);
    \draw[thin, color=gray] ({\xa + (2*\wirelen)}, \yb) 
        .. controls ({\xa + (2*\wirelen)-0.5}, \yb)
        .. (\xa+\wirelen, \ymidhigh) 
        .. controls ({\xa + (2*\wirelen)-0.5}, \ye) 
        .. ({\xa + (2*\wirelen)}, \ye);
    \draw[thin, color=gray] ({\xa + (2*\wirelen)}, \yc) 
        .. controls ({\xa + (2*\wirelen)-0.5}, \yc)
        .. (\xa+\wirelen+\curvegap, \ymidhigh) 
        .. controls ({\xa + (2*\wirelen)-0.5}, \yd) 
        .. ({\xa + (2*\wirelen)}, \yd);
    \Vertex[x=\xa+\wirelen-\curvegap,y=\ymidhigh,NoLabel]{v}
    \Vertex[x=\xa+\wirelen,y=\ymidhigh,NoLabel]{v}
    \Vertex[x=\xa+\wirelen+\curvegap,y=\ymidhigh,NoLabel]{v}

    \draw[thin, color=gray] (\xa, \yg) -- ({\xa + (2*\wirelen)}, \yg);
    \draw[thin, color=gray] (\xa, \yh) -- ({\xa + (2*\wirelen)}, \yh);
    \draw[thin, color=gray] (\xa, \yi) -- ({\xa + (2*\wirelen)}, \yi);
    \Vertex[x=\xa+\wirelen,y=\yg,NoLabel]{v}
    \Vertex[x=\xa+\wirelen,y=\yh,NoLabel]{v}
    \Vertex[x=\xa+\wirelen,y=\yi,NoLabel]{v}

    \draw[thin, color=gray] (\xb, \ya) -- ({\xb + (2*\wirelen)}, \ya);
    \draw[thin, color=gray] (\xb, \yb) -- ({\xb + (2*\wirelen)}, \yb);
    \draw[thin, color=gray] (\xb, \yc) -- ({\xb + (2*\wirelen)}, \yc);
    \Vertex[x=\xb+\wirelen,y=\ya,NoLabel]{v}
    \Vertex[x=\xb+\wirelen,y=\yb,NoLabel]{v}
    \Vertex[x=\xb+\wirelen,y=\yc,NoLabel]{v}

    \draw[thin, color=gray] (\xb, \yd) 
        .. controls ({\xb +0.5}, \yd)
        .. (\xb+\wirelen+\curvegap, \ymidlow) 
        .. controls ({\xb+0.5}, \yi) 
        .. (\xb, \yi);
    \draw[thin, color=gray] (\xb, \ye) 
        .. controls ({\xb +0.5}, \ye)
        .. (\xb+\wirelen, \ymidlow) 
        .. controls ({\xb+0.5}, \yh) 
        .. (\xb, \yh);
    \draw[thin, color=gray] (\xb, \yf) 
        .. controls ({\xb +0.5}, \yf)
        .. (\xb+\wirelen-\curvegap, \ymidlow) 
        .. controls ({\xb+0.5}, \yg) 
        .. (\xb, \yg);
    \Vertex[x=\xb+\wirelen-\curvegap,y=\ymidlow,NoLabel]{v}
    \Vertex[x=\xb+\wirelen,y=\ymidlow,NoLabel]{v}
    \Vertex[x=\xb+\wirelen+\curvegap,y=\ymidlow,NoLabel]{v}

    \draw[thin, color=gray] (\xc, \yd) -- ({\xc + (2*\wirelen)}, \yd);
    \draw[thin, color=gray] (\xc, \ye) -- ({\xc + (2*\wirelen)}, \ye);
    \draw[thin, color=gray] (\xc, \yf) -- ({\xc + (2*\wirelen)}, \yf);
    \Vertex[x=\xc+\wirelen,y=\yd,NoLabel]{v}
    \Vertex[x=\xc+\wirelen,y=\ye,NoLabel]{v}
    \Vertex[x=\xc+\wirelen,y=\yf,NoLabel]{v}

    \draw[thin, color=gray] (\xd, \ya) -- ({\xd + (2*\wirelen)}, \ya);
    \draw[thin, color=gray] (\xd, \yb) -- ({\xd + (2*\wirelen)}, \yb);
    \draw[thin, color=gray] (\xd, \yc) -- ({\xd + (2*\wirelen)}, \yc);
    \Vertex[x=\xd+\wirelen,y=\ya,NoLabel]{v}
    \Vertex[x=\xd+\wirelen,y=\yb,NoLabel]{v}
    \Vertex[x=\xd+\wirelen,y=\yc,NoLabel]{v}
    \draw[thin, color=gray] ({\xd + (2*\wirelen)}, \yd) 
        .. controls ({\xd + (2*\wirelen)-0.5}, \yd)
        .. (\xd+\wirelen-\curvegap, \ymidlow) 
        .. controls ({\xd + (2*\wirelen)-0.5}, \yi) 
        .. ({\xd + (2*\wirelen)}, \yi);
    \draw[thin, color=gray] ({\xd + (2*\wirelen)}, \ye) 
        .. controls ({\xd + (2*\wirelen)-0.5}, \ye)
        .. (\xd+\wirelen, \ymidlow) 
        .. controls ({\xd + (2*\wirelen)-0.5}, \yh) 
        .. ({\xd + (2*\wirelen)}, \yh);
    \draw[thin, color=gray] ({\xd + (2*\wirelen)}, \yf) 
        .. controls ({\xd + (2*\wirelen)-0.5}, \yf)
        .. (\xd+\wirelen+\curvegap, \ymidlow) 
        .. controls ({\xd + (2*\wirelen)-0.5}, \yg) 
        .. ({\xd + (2*\wirelen)}, \yg);
    \Vertex[x=\xd+\wirelen-\curvegap,y=\ymidlow,NoLabel]{v}
    \Vertex[x=\xd+\wirelen,y=\ymidlow,NoLabel]{v}
    \Vertex[x=\xd+\wirelen+\curvegap,y=\ymidlow,NoLabel]{v}

    \draw[thin, color=gray] (\xe, \ya) 
        .. controls ({\xe +0.5}, \ya)
        .. (\xe+\wirelen+\curvegap, \ymidhigh) 
        .. controls ({\xe+0.5}, \yf) 
        .. (\xe, \yf);
    \draw[thin, color=gray] (\xe, \yb) 
        .. controls ({\xe +0.5}, \yb)
        .. (\xe+\wirelen, \ymidhigh) 
        .. controls ({\xe+0.5}, \ye) 
        .. (\xe, \ye);
    \draw[thin, color=gray] (\xe, \yc) 
        .. controls ({\xe +0.5}, \yc)
        .. (\xe+\wirelen-\curvegap, \ymidhigh) 
        .. controls ({\xe+0.5}, \yd) 
        .. (\xe, \yd);
    \Vertex[x=\xe+\wirelen-\curvegap,y=\ymidhigh,NoLabel]{v}
    \Vertex[x=\xe+\wirelen,y=\ymidhigh,NoLabel]{v}
    \Vertex[x=\xe+\wirelen+\curvegap,y=\ymidhigh,NoLabel]{v}
    \draw[thin, color=gray] (\xe, \yg) -- ({\xe + (2*\wirelen)}, \yg);
    \draw[thin, color=gray] (\xe, \yh) -- ({\xe + (2*\wirelen)}, \yh);
    \draw[thin, color=gray] (\xe, \yi) -- ({\xe + (2*\wirelen)}, \yi);
    \Vertex[x=\xe+\wirelen,y=\yg,NoLabel]{v}
    \Vertex[x=\xe+\wirelen,y=\yh,NoLabel]{v}
    \Vertex[x=\xe+\wirelen,y=\yi,NoLabel]{v}

\end{tikzpicture}
    \caption{Illustrating \eqref{eq:unitcounit} for $n = 3$.}
    \label{fig:unitcounit}
\end{figure}

Define another contravariant functor $*$ as follows: $k^* = k$ for objects $k$ and, for morphism
$\k: k \to \ell$, let $\k^*: \ell^* \to k^*$ be the gadget defined by pivoting all output dangling edges
to become inputs and all input dangling edges to become outputs, using the procedure from
\autoref{lem:rotategadget}. Equivalently, 
\begin{align*}
    \k^* &= 
    (1_{k} \otimes \eta_{\ell}) \circ 
    (1_{k} \otimes \k \otimes 1_{\ell}) \circ 
    (\epsilon_{k} \otimes 1_{\ell}) \\
     &= (\eta_{\ell} \otimes 1_{k}) \circ
    (1_{\ell} \otimes \k \otimes 1_{k}) \circ 
    (1_{\ell} \otimes \epsilon_k).
    \numberthis\label{eq:sweep}
\end{align*}
See \autoref{fig:fstar}. The LHS of \eqref{eq:sweep} corresponds to pivoting the output and input
dangling edges around the bottom and top of the gadget, respectively, as shown in \autoref{fig:fstar}.
\begin{figure}[ht!]
    \center
    \begin{tikzpicture}[scale=0.8]
    \tikzset{VertexStyle/.style = {shape=circle, fill=black, minimum size=5pt, inner sep=1pt, draw}}
    \GraphInit[vstyle=Classic]
    \SetUpEdge[style=-]
    \SetVertexMath

    \def\wirelen{1}
    \def\xb{5}
    \def\xc{10}

    \node at (0, 2.8) {$\k$};
    \node at (\xb, 2.8) {$\k^{\dagger}$};
    \node at (\xc, 2.8) {$\k^*$};

    \draw[thin, color=gray] (-1-\wirelen,-0.4) .. controls (-1-\wirelen/2,-0.4) .. (-1,0);
    \draw[thin, color=gray] (-1-\wirelen,0.4) .. controls (-1-\wirelen/2,0.4) .. (-1,0);
    \draw[thin, color=gray] (-1-\wirelen,1) -- (-0.5,1);
    \draw[thin, color=gray] (1,0) -- (1+\wirelen,0);
    \draw[thin, color=gray] (1,1) -- (1+\wirelen,1);

    \draw[thin, color=gray] (\xc-1,0) .. controls (\xc-0.7,-1) and (\xc-0.5,-1.5) .. (\xc+2,-1.5);
    \draw[thin, color=gray] (\xc-1,0) .. controls (\xc-0.7, -1.5) and (\xc-0.5,-2) .. (\xc+2,-2);
    \draw[thin, color=gray] (\xc-0.5,1) .. controls (\xc-2.8, 0) and (\xc-1.5,-2.5) .. (\xc+2,-2.5);

    \draw[thin, color=gray] (\xc+1,0) .. controls (\xc+2.5,1) and (\xc+1,2.2) .. (\xc-2,2.2);
    \draw[thin, color=gray] (\xc+1,1) .. controls (\xc,1.4) and (\xc-1,1.7) .. (\xc-2,1.7);

    \foreach \xs/\num in {0/1,\xc/2} {

        \Vertex[x=-1+\xs,y=0,NoLabel]{a\num}
        \Vertex[x=-0.5+\xs,y=1,NoLabel]{b\num}
        \Vertex[x=1+\xs,y=0,NoLabel]{c\num}
        \Vertex[x=0+\xs,y=-1,NoLabel]{d\num}
        \Vertex[x=1+\xs,y=1,NoLabel]{e\num}
    };

    \foreach \xs/\num in {0/1,\xshift/2} {
        \Edges(a\num,b\num,c\num,d\num,a\num,c\num,e\num)
    };

    \draw[thin, color=gray] (\xb+1+\wirelen,-0.4) .. controls (\xb+1+\wirelen/2,-0.4) .. (\xb+1,0);
    \draw[thin, color=gray] (\xb+1+\wirelen,0.4) .. controls (\xb+1+\wirelen/2,0.4) .. (\xb+1,0);
    \draw[thin, color=gray] (\xb+1+\wirelen,1) -- (\xb+0.5,1);
    \draw[thin, color=gray] (\xb-1,0) -- (\xb-1-\wirelen,0);
    \draw[thin, color=gray] (\xb-1,1) -- (\xb-1-\wirelen,1);

    \Vertex[x=\xb+1,y=0,NoLabel]{a3}
    \Vertex[x=\xb+0.5,y=1,NoLabel]{b3}
    \Vertex[x=\xb-1,y=0,NoLabel]{c3}
    \Vertex[x=\xb,y=-1,NoLabel]{d3}
    \Vertex[x=\xb-1,y=1,NoLabel]{e3}

    \Edges(a3,b3,c3,d3,a3,c3,e3)
\end{tikzpicture}
    \caption{Illustrating $\k$, $\k^{\dagger}$, and $\k^*$.}
    \label{fig:fstar}
\end{figure}
The RHS of \eqref{eq:sweep} pivots the outputs and inputs around the top and bottom, respectively.
$\k^*$ is called the \emph{dual} of $\k$, hence the terminology ``tensor category with duals.'' $*$ is a rotation, so, unlike the reflection $\dagger$, 
it preserves
$\k$'s cyclic input order; we do not apply entrywise conjugation as with $\dagger$. We have
$1_k^* = 1_k$, $(\k_1 \circ \k_2)^* = \k_2^* \circ \k_1^*$, and $\k^{**} = \k$, so $*$, like $\dagger$,
is indeed an involutive contravariant functor. This fact, along with the identities \eqref{eq:unitcounit}
and \eqref{eq:sweep}, make $\gc_{\fc}$ a \emph{pivotal category}, hence a \emph{pivotal dagger category}
\cite{temperley}.

\printbibliography
\end{document}